\documentclass[11pt,a4paper]{article}

\usepackage{amsmath,amsthm,amsfonts,amssymb}
\usepackage[a4paper,margin=2cm]{geometry}
\usepackage{colonequals}
\usepackage{graphicx,tikz}
\usepackage{ifthen}
\usepackage{authblk}

\usepackage{algorithm2e}

\newboolean{ElectronicVersion}
\setboolean{ElectronicVersion}{true}

\ifthenelse{\boolean{ElectronicVersion}}{
    \usepackage[a4paper=true,pdftex,bookmarks,pagebackref,
	plainpages=false, 
        pdfpagelabels=true 
        ]{hyperref}}{}

\makeatletter
\newtheorem*{rep@theorem}{\rep@title}
\newcommand{\newreptheorem}[2]{%
\newenvironment{rep#1}[1]{%
 \def\rep@title{#2 \ref*{##1}}%
 \begin{rep@theorem}}%
 {\end{rep@theorem}}}
\makeatother

\usepackage{hyperref}
\hypersetup{
    bookmarksnumbered=true, 
    unicode=false, 
    pdfstartview={FitH}, 
    pdftitle={Multidimensional Quantum Walks, Recursion, and Quantum Divide and Conquer}, 
    pdfauthor={Stacey Jeffery and Galina Pass}, 
    pdfsubject={}, 
    pdfcreator={}, 
    pdfproducer={}, 
    pdfkeywords={}, 
    pdfnewwindow=true, 
    colorlinks=true, 
    linkcolor=blue, 
    citecolor=blue, 
    filecolor=blue, 
    urlcolor=blue 
}

\usepackage{color}
\definecolor{darkgreen}{rgb}{0,.5,0}
\definecolor{darkred}{rgb}{.7,.3,.3}
\definecolor{deepblue}{rgb}{0,.1,.7}

\newcommand{\eq}[1]{\hyperref[eq:#1]{(\ref*{eq:#1})}}
\renewcommand{\sec}[1]{\hyperref[sec:#1]{Section~\ref*{sec:#1}}}
\newcommand{\thm}[1]{\hyperref[thm:#1]{Theorem~\ref*{thm:#1}}}
\newcommand{\lem}[1]{\hyperref[lem:#1]{Lemma~\ref*{lem:#1}}}
\newcommand{\cor}[1]{\hyperref[cor:#1]{Corollary~\ref*{cor:#1}}}
\newcommand{\app}[1]{\hyperref[app:#1]{Appendix~\ref*{app:#1}}}
\newcommand{\tabl}[1]{\hyperref[tab:#1]{Table~\ref*{tab:#1}}}
\newcommand{\defin}[1]{\hyperref[def:#1]{Definition~\ref*{def:#1}}}
\newcommand{\fig}[1]{\hyperref[fig:#1]{Figure~\ref*{fig:#1}}}
\newcommand{\clm}[1]{\hyperref[clm:#1]{Claim~\ref*{clm:#1}}}
\newcommand{\conj}[1]{\hyperref[conj:#1]{Conjecture~\ref*{conj:#1}}}
\newcommand{\rem}[1]{\hyperref[rem:#1]{Remark~\ref*{rem:#1}}}
\newcommand{\examp}[1]{\hyperref[ex:#1]{Example~\ref*{ex:#1}}}

\newcommand{\thmthm}[2]{\hyperref[thm:#1]{Theorem~\ref*{thm:#1}} and~\hyperref[thm:#2]{\ref*{thm:#2}}}
\newcommand{\lemlem}[2]{\hyperref[lem:#1]{Lemma~\ref*{lem:#1}} and~\hyperref[lem:#2]{\ref*{lem:#2}}}

\newtheorem{theorem}{Theorem}[section]
\newtheorem{lemma}[theorem]{Lemma}

\newtheorem{corollary}[theorem]{Corollary}

\newtheorem{claim}[theorem]{Claim}
\newtheorem{remark}[theorem]{Remark}
\newtheorem{definition}[theorem]{Definition}

\newtheorem{problem}[theorem]{Problem}

\theoremstyle{definition} 
\newtheorem{example}[theorem]{Example}

\newenvironment{subroutine}[1][]
  {\begin{algorithm}[#1]}
  {\end{algorithm}}

\def\ket#1{{\lvert}#1\rangle}
\def\bra#1{{\langle}#1\rvert}

\def\braket#1#2{{{\langle}#1\vert}#2\rangle}
\def\abs#1{\left| #1 \right|}

\def\norm#1{\left\| #1 \right\|}

\def\w{{\sf w}}
\def\r{{\sf r}}

\RestyleAlgo{boxruled}

\title{Multidimensional Quantum Walks, Recursion,\\ and Quantum Divide \& Conquer}
\author[1,2]{Stacey Jeffery}
\author[2]{Galina Pass}
\affil[1]{CWI, Amsterdam} 
\affil[2]{QuSoft \& University of Amsterdam}

\begin{document}

\maketitle

\begin{abstract}
We introduce an object called a \emph{subspace graph} that formalizes the technique of multidimensional quantum walks. Composing subspace graphs allows one to seamlessly combine quantum and classical reasoning, keeping a classical structure in mind, while abstracting quantum parts into subgraphs with simple boundaries as needed. As an example, we show how to combine a \emph{switching network} with arbitrary quantum subroutines, to compute a composed function. 
As another application, we give a time-efficient implementation of quantum Divide \& Conquer when the sub-problems are combined via a Boolean formula. We use this to quadratically speed up Savitch's algorithm for directed $st$-connectivity.
\end{abstract}

\section{Introduction}\label{sec:intro}

There are a number of graphical ways of reasoning about how the steps or subroutines of a classical algorithm fit together. 
For example, it is natural to think of a (randomized) classical algorithm as a (randomized) decision tree (or branching program), where different paths are chosen depending on the input, as well as random choices made by the algorithm. A deterministic algorithm gives rise to a computation \emph{path}, a randomized algorithm to a computation \emph{tree}. The edges of a path or tree, representing steps of computation, might be implemented by some subroutine that is also realized by a path (or tree) -- we can abstract the subroutine's details by viewing it as an edge, or zoom in and see those details, as convenient. 
More generally, we often think of a classical randomized algorithm as a random walk on a (possibly directed) graph, where there may be multiple parallel paths from point $a$ to point $b$, with the cost of getting from $a$ to $b$ being derived from the expected length of these paths. 

This picture appears to break down for quantum algorithms, at least in the standard circuit model. A quantum circuit can be thought of as a path, with edges representing its steps, but it is not immediately clear how to augment this reasoning with subroutines. Consider calling subroutines with varying time complexities $\{T_i\}_i$ in superposition.
Even if the subroutines are all classical deterministic, in the standard quantum circuit model, we tend to incur a cost of $\max_i T_i$ if we call a superposition of subroutines, since we must wait for the slowest subroutine to finish before we can apply the next step of the computation. This problem was addressed in \cite{jeffery2022subroutines}, where the technique of multidimensional quantum walks~\cite{jeffery2022kDist} was used to show how to get an average in place of a max in several settings where a quantum algorithm calls subroutines in superposition: a general setting, as well as the setting of quantum walks. The intuition behind this work is that a quantum walk does keep the classical intuition of parallel paths representing a superposition of possible computations, and \emph{any} quantum algorithm can be viewed as some sort of a quantum walk on a simple underlying graph (something like a path), but with some additional structure associated with it. 

Multidimensional quantum walks, which we study more formally in this paper as an object than has been done previously, are valuable as a way of combining quantum and classical reasoning. A quantum algorithm can be abstracted as a graph with perhaps complicated internal structure, but a simple \emph{boundary} with an ``in'' and an ``out'' terminal, that can be seamlessly hooked into other graph-like structures, perhaps representing simple classical reasoning, such as a quantum random walk, or perhaps with their own complicated very quantum parts.

\paragraph{Subspace Graphs} While \cite{jeffery2022kDist} and \cite{jeffery2022subroutines} use similar techniques, it is not formally defined what a \emph{multidimensional quantum walk} is. We formally define an object called a \emph{subspace graph} (\defin{multid-QW}) that abstracts the structures in \cite{jeffery2022kDist} and \cite{jeffery2022subroutines}, as well as some other previous algorithmic techniques. A subspace graph is simply a graph with some subspaces associated with each edge and vertex, where the structure of the graph constrains how the spaces can overlap.  
Defining what we mean, precisely, by a multidimensional quantum walk (i.e. subspace graph) is the first step to developing a general theory of recursive constructions of subspace graphs. 

The recursive structure of subspace graphs is useful for composing quantum algorithms, but as a design tool, it is also convenient to be able to view a subspace graph in varying levels of abstraction. We can ``zoom out'' and view a complicated process as just a special ``edge'', or zoom in on that edge and understand its structure as an involved graph with additional structure. 

We cannot hope to be able to understand all quantum algorithms using purely classical ideas -- quantum computing is \emph{not} classical computing. But perhaps the next best thing is a way to seamlessly combine classical and quantum ideas, extending the classical intuition to its limits, and then employing quantum reasoning when needed, but with the possibility of abstracting out from it when needed as well, using a fully quantum form of abstraction.

In this work, we consider one specific kind of composition of subspace graphs called \emph{switch composition} -- another type is implicit in \cite{jeffery2022subroutines} -- but we would like to emphasize the potential for more general types of recursion, which we leave for future work. 

\paragraph{Time-Efficient Quantum Divide \& Conquer} A particular type of recursive algorithm is \emph{divide \& conquer}, in which a problem is broken into multiple smaller sub-problems, whose solutions, obtained by recursive calls, are combined into a solution for the original problem. As a motivating example, consider the recursively defined \emph{nand-tree function}. Let $f_{k,d}:\{0,1\}^{d^k}\rightarrow\{0,1\}$ be defined $f_{0,d}(x)=x$, and for $k\geq 1$,
$$f_{k,d}(x)=1-f_{k-1,d}(x^{(1)})\dots f_{k-1,d}(x^{(d)}),$$ 
where each $x^{(j)}\in\{0,1\}^{d^{k-1}}$, and $x=(x^{(1)},\dots,x^{(d)})$. There is a natural way to break an instance $x$ of $f_{k,d}$ into $d$ sub-problems $x^{(1)},\dots,x^{(d)}$ of $f_{k-1,d}$, and combine the solutions by taking the NAND (negated AND) of the $d$ sub-problem solutions. Grover's algorithm computes this NAND in $O(\sqrt{d})$ queries, so we might hope for a speedup by recursive calls to this quantum algorithm. Unfortunately, since we recurse to depth $k$, the constant in front of $\sqrt{d}$ is raised to the $k$-th power. This kills the quantum speedup completely when $d$ is constant (for example, the most common setting of $d=2$), and that is not even touching on the fact that we would seem to need to amplify the success probability of the subroutine, turning those constants into log factors. On the other hand, it is known \cite{reichardt2009GameTree} that $f_{k,d}$ can be evaluated in $O(\sqrt{d^k})$ quantum queries, even though our attempt to use classical divide-\&-conquer reasoning combined with the basic Grover speedup failed.

More recently, \cite{childs2022Divide} showed how to employ divide-\&-conquer reasoning in the study of quantum query complexity, in which one only counts the number of queries to the input.  They obtained their query upper bounds by composing \emph{dual adversary solutions}. The key to their results is that dual adversary solutions exhibit \emph{perfect} composition: no error, no log factors, not even constant overhead. However, their results were not constructive, as dual adversary solutions do not fully specify algorithms, and in particular, the time complexity analysis of their results was unknown.
In this work, we use the framework of subspace graphs to give a constructive time-complexity version of some of the query complexity results obtained in \cite{childs2022Divide}. In particular, we show (see \thm{strategy_1} for formal statement):
\begin{theorem}[Informal]\label{thm:divide-and-conquer-intro}
Let $\{f_{\ell,n}:D_{\ell,n}\rightarrow\{0,1\}\}_{\ell,n}$ be a family of functions.
Let $\varphi$ be a symmetric Boolean formula on $a$ variables, and suppose $f_{\ell,n} = \varphi(f_{\ell/b,n},\dots,f_{\ell/b,n})\vee f_{aux,\ell,n}$, for some $b>1$ and some auxiliary function $f_{aux,\ell,n}$ with quantum time complexity $T_{\mathrm{aux}}(\ell,n)$. Then the quantum time complexity of $f_{\ell,n}$ is $\widetilde{O}(T(\ell,n))$ for $T(\ell,n)$ satisfying:
$$T(\ell,n) \leq \sqrt{a}T(\ell/b,n)+T_{\mathrm{aux}}(\ell,n).$$
\end{theorem}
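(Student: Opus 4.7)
The plan is to construct, by induction on $\ell$, a subspace graph $G_{\ell,n}$ that computes $f_{\ell,n}$ with time complexity $\widetilde{O}(T(\ell,n))$, and then appeal to the general correspondence between subspace graphs and quantum algorithms developed earlier in the paper. The base case is any value of $\ell$ at which the recursion bottoms out, so that $f_{\ell,n}$ can be computed directly in the claimed time. For the inductive step I would first build a ``formula subspace graph'' $G_\varphi$ that computes the symmetric Boolean formula $\varphi$ on $a$ variables. Because $\varphi$ is symmetric, it admits a canonical span program (equivalently, a switching network) of size and query complexity $O(\sqrt{a})$, and translating this into a subspace graph gives me an outer shell with $a$ distinguished ``input edges'', each corresponding to one of the $a$ copies of $f_{\ell/b,n}$.

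Next, I would invoke switch composition, which is the main composition operator the paper sets up for subspace graphs. By the inductive hypothesis I have a subspace graph $G_{\ell/b,n}$ of complexity $T(\ell/b,n)$ computing $f_{\ell/b,n}$, and switch-composing a copy of $G_{\ell/b,n}$ into each of the $a$ input edges of $G_\varphi$ yields a subspace graph that computes $\varphi(f_{\ell/b,n},\dots,f_{\ell/b,n})$. The key property — precisely the reason the paper formalizes multidimensional quantum walks as subspace graphs — is that switch composition combines the ``outer'' $\sqrt{a}$ with the ``inner'' $T(\ell/b,n)$ multiplicatively rather than incurring an $a\cdot T(\ell/b,n)$ or $\sqrt{a}\cdot\max$ penalty; intuitively, each traversal of the outer walk only pays the average, not the maximum, cost of the inner subroutines, and symmetry of $\varphi$ makes this bookkeeping uniform across input slots.

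To handle the disjunction with $f_{aux,\ell,n}$, I would construct a subspace graph $G_{aux,\ell,n}$ of complexity $T_{\mathrm{aux}}(\ell,n)$ computing $f_{aux,\ell,n}$ (guaranteed by assumption), and then perform a second, trivial switch composition with a $2$-input OR, which in subspace-graph terms is essentially a length-$2$ path. Because OR is again symmetric and has only two inputs, this composition adds the two complexities rather than multiplying them, yielding the $+\,T_{\mathrm{aux}}(\ell,n)$ summand in the recurrence. Putting the three pieces together gives a subspace graph for $f_{\ell,n}$ of complexity $\sqrt{a}\,T(\ell/b,n)+T_{\mathrm{aux}}(\ell,n)$ at each level of recursion, and solving this recurrence (with recursion depth $\log_b\ell$) is a routine Master-theorem computation that absorbs only polylogarithmic overhead into the $\widetilde{O}$.

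The step I expect to be the main obstacle is the second one: verifying that switch composition really does preserve the $\sqrt{a}$ factor without blowing up the boundary structure of the inner subspace graphs, and that the polylogarithmic overhead coming from error reduction and amplitude-estimation-style subroutines at each of the $\log_b\ell$ levels remains only a polylog factor rather than compounding multiplicatively across the recursion — this is exactly the failure mode that naive Grover recursion for the nand-tree suffers from, and avoiding it requires that the boundary conditions of $G_{\ell/b,n}$ inherited from the inductive hypothesis match the input-edge interface of $G_\varphi$ cleanly, so that no amplification is needed at the quantum interface between levels.
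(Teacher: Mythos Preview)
Your proposal is essentially the paper's proof of the formal version (\thm{strategy_1}): build the subspace graph for $\varphi$ from the OR/AND switching networks, switch-compose the inductively obtained $G_{\ell/b,n}$ into its edges via the Boolean-composition lemma (\cor{boolean-comp}), then compose once more with a $2$-input OR against the subspace graph for $f_{\mathrm{aux}}$ (from \lem{alg-subspace-graph}), and convert to an algorithm only at the very end so that constant overheads do not compound across the recursion. One small slip: the switching network for a $2$-input OR is two \emph{parallel} edges (\fig{or-graph}), not a length-$2$ path---the path computes AND---but this does not affect your argument.
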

Our framework also handles the case where $f_{\ell,n}=\varphi(f_{\ell/b,n},\dots,f_{\ell/b,n},f_{\mathrm{aux},\ell,n})$ for any formula $\varphi$ on $a+1$ variables, but then some extra, somewhat complicated looking costs need to be accounted for, and there is also a scaling in the depth, although this is not an issue if the formula has been preprocessed to be balanced~\cite{bonet1994balancing}.  

Comparing this with the analogous classical statement, which would have $a$ instead of $\sqrt{a}$, we get an up to quadratic speedup over a large class of classical divide-\&-conquer algorithms. As an application, we show a quadratic speedup of Savitch's divide-\&-conquer algorithm for directed $st$-connectivity \cite{savitch1970relationships}. 

To achieve these results, it is essential that we compose subspace graphs, rather than algorithms. When we convert a subspace graph to a quantum algorithm, we get constant factors in the complexity, and these seem necessary without at least specifying what gateset we are working in. By first composing in the more abstract model of subspace graphs, and then only converting to a quantum algorithm at the end, we ensure these factors only come into the complexity once. This is similar to compositions done with other abstract models, such as span programs (of which dual adversary solutions are a special case)~\cite{reichardt2009span}, and transducers~\cite{belovs2023LasVegasTime}. 

\paragraph{Switching Networks} A switching network is a graph with Boolean variables associated with the edges that can switch the edges on or off. Originally used to model certain hardware systems, including automatic telephone exchanges, and industrial control equipment~\cite{shannon1938switchingNetworks}, 
a switching network has an associated function $f$ that is 1 if and only if two special vertices, $s$ and $t$, are connected by a path of ``on'' edges. 
Shannon~\cite{shannon1938switchingNetworks,shannon1949switchingNetworks} showed that series-parallel switching networks are equivalent to Boolean formulas, and 
Lee~\cite{lee1959switchingnetworks} showed that switching networks can model branching programs. These theoretical results have given this model a place in classical complexity theory, where they can be used to study classical space complexity and circuit depth (see~\cite{potechin2015thesis}).

Quantum algorithms for evaluating switching networks\footnote{Switching networks have never been directly referred to in prior work on quantum algorithms, as far as we are aware, but the ``$st$-connectivity problems'' referred to in \cite{jeffery2017stConnFormula} are, in fact, switching networks.} were given in~\cite{jeffery2017stConnFormula}, using a span program construction based heavily on~\cite{belovs2012stConn}. Tight analysis of these span programs for any switching network was completed in~\cite{jarret2018connectivity}.

From~\cite{jeffery2017stConnFormula,jarret2018connectivity}, we get a quantum algorithm for evaluating any switching network when we have query access to the edge variables. Let ${\cal R}_{s,t}(G(x))$ be the effective resistance (see \defin{resistance}) in the subgraph of ``on'' edges whenever $f(x)=1$, and whenever $f(x)=0$, let $F_x\subseteq E$ be the minimum weight $st$-cut-set (see \defin{cut-set}) consisting of only edges that are ``off''.
Then the time complexity of evaluating the switching network is
$$\sqrt{\max_{x\in f^{-1}(1)}{\cal R}_{s,t}(G(x)) \max_{x\in f^{-1}(0)}\sum_{e\in F_x}\w_e},$$
assuming we can implement a certain reflection related to the particular switching network in unit time. From this it follows that if we can query the variable associated with an edge $e$ in time $T_e$, we can evaluate the switching network in time 
$$\sqrt{\max_{x\in f^{-1}(1)}{\cal R}_{s,t}(G(x)) \max_{x\in f^{-1}(0)}\sum_{e\in F_x}\w_e}.\max_{e\in E}T_e.$$
Here we improve this to (see \thm{switching-network-comp}):
$$\sqrt{\max_{x\in f^{-1}(1)}{\cal R}_{s,t}(G(x)) \max_{x\in f^{-1}(0)}\sum_{e\in F_x}\w_eT_e^2}.$$

This is analogous to results of~\cite{jeffery2022subroutines}, which showed a similar statement, but for quantum walks rather than switching networks\footnote{Both this result, and~\cite{jeffery2022subroutines}, include \emph{variable-time quantum search}~\cite{ambainis2010VTSearch} as a special case.}.
It is also similar in flavour to the results of \cite[Chapter 7.2]{cornelissen2023thesis}, where \emph{span programs} are used to evaluate the edge variables of a switching network.
Because switching networks perfectly model Boolean formulas, without the constant overhead we get when we convert to a quantum algorithm, they can be used as a building block for our divide-\&-conquer results. 

\paragraph{Application to DSTCON} Quantum algorithms for $st$-connectivity on \emph{undirected graphs}, which are closely related to evaluating switching networks, are well studied~\cite{durr2004QQueryCompGraph,belovs2012stConn,jeffery2017stConnFormula,jarret2018connectivity,apers2022ustcon}, including quantum algorithms that achieve optimal time- and space-complexity simultaneously, in both the edge-list access model, and the adjacency matrix access model.\footnote{We do not make a distinction between various access models, because they can simulate one another in poly$(n)$ time and $\log(n)$ space, so our result, which includes a $2^{O(\log n)}$ term, is the same in all models.}
In contrast, quantum algorithms for \emph{directed} $st$-connectivity (\textsc{dstcon}) -- the problem of deciding if there is a directed path from $s$ to $t$ in a directed graph -- is less well understood. 
The algorithm of~\cite{durr2004QQueryCompGraph} also applies to directed graphs, deciding connectivity in $\widetilde{O}(n)$ time and space\footnote{In the edge-list access model.}. This algorithm has optimal time complexity, whereas its space complexity is far from optimal.

Directed $st$-connectivity, also called \emph{reachability}, is a fundamental problem in classical space complexity.
In particular, understanding if this problem can be solved in  $\log(n)$ space by a quantum algorithm would resolve the relationship between quantum logspace complexity and ${\sf NL}$, as $\textsc{dstcon}$ is ${\sf NL}$-complete. 

The best known classical (deterministic) space complexity of $\textsc{dstcon}$ is $O(\log^2(n))$, using Savitch's algorithm. We apply quantum divide \& conquer (\thm{divide-and-conquer-intro}) to give a quadratic speedup to Savitch's algorithm, achieving $2^{\frac{1}{2}\log^2(n)+O(\log(n))}$ time, while still maintaining $O(\log^2(n))$ space (see \thm{Savitch_quantum}).

\paragraph{Model of Computation} We work in the same model as \cite{jeffery2022kDist}, where we allow not only arbitrary quantum gates, but also assume subroutines are given via access to a unitary that applies the subroutine's $t$-th gate controlled on the value $t$ in some time register. This is possible, for example, with quantum random access gates. See~\sec{model} for details. 

\paragraph{Related and Future Work} While writing this manuscript, we became aware of an independent work that also achieves time-efficient quantum divide \& conquer~\cite{allcock2023divide} when either (1) $\varphi$ is an OR (equivalently, an AND) or (2) $\varphi$ is a minimum or maximum. OR is a Boolean formula, while minimum/maximum is not. In that sense, our results are incomparable. 
The framework of~\cite{allcock2023divide} also differs from our work in that they explicitly treat the complexity of computing sub-instances (the cost of the ``create'' step), whereas we assume sub-instances can simply be queried in unit cost. In one of the applications of~\cite{allcock2023divide}, this cost is not negligible, and is even the dominating cost, so our framework, as stated, would not handle this application. This is not an inherent limitation of our techniques -- it would be possible to take this cost into account in our framework as well.
Ref.~\cite{allcock2023divide} applies their framework to problems that are distinct from our applications.

We also mention that Ref.~\cite{childs2022Divide} analyzes the quantum query complexity of divide \& conquer where $\varphi$ is an arbitrary Boolean formula, as well as in settings where the function combining the sub-problems is more general.
While our techniques do apply to composing quantum algorithms for arbitrary functions (already studied in \cite{jeffery2022subroutines}), an issue is a poor scaling in the error of subroutines. If we start with a bounded-error quantum algorithm for some function, we need to amplify the success probability, as it will be called many times, incurring logarithmic factors. This becomes a serious problem if the function is called recursively to depth more than constant.
We get around this in the case of Boolean formulas by using a switching network construction (from which a quantum algorithm could be derived) rather than a bounded-error quantum algorithm for evaluating the formula. Our techniques would thus also readily apply to functions for which there is an efficient quantum algorithm derived from a switching network.
For more general functions, the solution might lie in a recent framework, \emph{transducers}~\cite{belovs2023LasVegasTime}, that allows for the composition of quantum algorithms without the need for success probability amplification. We leave further investigation for future work.

We can compare the model of transducers~\cite{belovs2023LasVegasTime} to the framework of subspace graphs introduced here. 
These models are related, in that they are both 
good
for analyzing the time complexity of composed quantum algorithms. 
A subspace graph gives rise to a unitary that is the product of two reflections, and this is a transducer, as discussed in \cite[Section 3.2]{belovs2023LasVegasTime}. Compared with the model of subspace graphs, transducers are more general, and cleaner, in particular how they are implemented as quantum algorithms. 
It is not known whether subspace graphs can reproduce the composition results of transducers in terms of error scaling. 
However, we feel that subspace graphs, with their additional structure, are still useful for designing and reasoning about quantum algorithms, especially those in which there is some classical intuition to hold onto. It seems likely that, once designed in the framework of subspace graphs, an algorithmic construction could be converted to the transducer framework to achieve any advantage that is missing for subspace graphs. We leave concrete exploration of the connection for future work. 

Aside from the directions already mentioned, the most interesting future direction is to study more general compositions of subspace graphs, and see to what extent we can employ classical reasoning to quantum algorithms, and to what extent this (hopefully gracefully) breaks down.

\section{Preliminaries}\label{sec:prelim}

\subsection{Model of Computation}\label{sec:model}

We will be talking about combining quantum subroutines, and here we describe the model of access for the subroutines that we will assume. 
A quantum algorithm or subroutine is a sequence of unitaries $U_1,\dots,U_T$ acting on some common space $H$. 
Given a set of subroutines $\{(U_1^i,\dots,U_{T_i}^i)\}_{i\in {\cal I}}$, all acting on a space $H$, but possibly running in different times, $T_i\leq T_{\max}$, we will assume the operator $\sum_{i\in {\cal I}}\ket{i}\bra{i}\otimes\sum_{t=1}^{{T}_{\max}}\ket{t}\bra{t}\otimes U_t^i$ can be implemented in ``unit cost''. We use ``unit cost'' to describe a cost we are willing to accept as a multiplicative factor on all complexities (for example, $O(1)$  or polylogarithmic in some natural variable). This assumption is only reasonable if the unitaries each have unit cost. It does not hold for strict gate complexity when $U_1^i,\dots,U_{{T}_{\max}}^i$ are arbitrary gates, but it does hold in the \emph{fully quantum} QRAM model if $U_1^i,\dots,U_{{T}_{\max}}^i$ are stored as a list of gates in classical memory to which a quantum computer has read-only superposition access (sometimes referred to as ``QRAM'' in previous literature), or 
if they satisfy certain uniformity conditions (see~\cite[Section~2.2]{jeffery2022kDist}).

\subsection{Graph Theory}

In this paper, a \emph{graph} will be an undirected graph, and we will only consider directed graphs in \sec{dstcon}. A graph $G=(V,E)$ has vertex set $V=V(G)$ and edge set $E=E(G)$. For $u\in V$, we let $E(u)$ be the subset of edges of $E$ that are incident to $u$. We assume for convenience that each edge has an orientation, and $E^\rightarrow(u)\subseteq E(u)$ are the edges oriented outwards from $u$, and $E^\leftarrow(u) = E(u)\setminus E^{\rightarrow}(u)$ those oriented inwards. Then for any $e\in E$, there is a unique $u\in V$ such that $e\in E^\rightarrow(u)$, and a unique $v\in V$ such that $e\in E^\leftarrow(v)$, and we can write $e=(u,v)$. 
We can associate non-negative \emph{weights} $\{\w_e\}_{e\in E}$ to the edges of $G$. 

\begin{definition}\label{def:resistance}
    A \emph{flow} on a weighted graph $G$ is a real-valued function $\theta$ on $E$. Define 
    $$\theta(u)\colonequals\sum_{e\in E^{\rightarrow}(u)}\theta(e)-\sum_{e\in E^\leftarrow(u)}\theta(e).$$
    We call $\theta$ a \emph{unit $st$-flow} if $\theta(s)=-\theta(t)=1$, and for all $u\in V\setminus\{s,t\}$, $\theta(u)=0$. The \emph{effective resistance} is defined
    $${\cal R}_{s,t}(G)\colonequals\min_{\mbox{unit flows }\theta}\sum_{e\in E}\frac{\theta(e)^2}{\w_e}.$$
\end{definition}

\begin{definition}\label{def:cut-set}
    An \emph{$st$-cut-set} is a set $F\subseteq E$ whose removal from $G$ would leave $s$ and $t$ in different connected components.
\end{definition}

\subsection{Function Composition and Boolean Formulas}\label{sec:formulas}

A Boolean formula, $\varphi$, on $N$ variables is a rooted tree with $N$ leaves, where each internal vertex is labelled by either $\vee$ (OR) or $\wedge$ (AND), and each leaf is labelled by a unique variable or its negation. The depth of a formula is the length of the longest path from the root to a leaf. We assume that for each internal vertex with $d$ children, the outgoing edges are labelled by some $d$-element set, without loss of generality, $[d]$, so that every vertex can be labelled by a string of the labels of its path from the root. That is: the root is labelled by the empty string, denoted $\emptyset$, and any other vertex is labelled by the label of its parent, concatenated with the label of the edge from its parent to it. We call the set of all strings labelling leaves $\Sigma$, so $|\Sigma|=N$. We let $\Sigma_i$ denote the set of strings in $\Sigma$ whose first letter is $i$. We let $\overline{\Sigma}$ denote the set of labels of all vertices, so it is the set of prefixes of strings in $\Sigma$. 

We label the variable associated with a leaf $\sigma\in \Sigma$ by $\sigma$ as well. A fixed setting of the variables $x\in\{0,1\}^\Sigma$ induces a value at each vertex of $\varphi$, as follows. If a leaf $\sigma$ is labelled by an unnegated variable, then its value is $x_\sigma$, and otherwise its value is $\neg x_\sigma$. If a vertex is not a leaf, if it is labelled by $\vee$, then its value is the OR of all the values of its children, whereas if it is labelled by $\wedge$, its value is the AND of all the values of its children. We let $\varphi(x)$ denote the value of the root of the tree. 

\begin{definition}\label{def:balanced}
We say a formula is \emph{symmetric} if for every internal node, the sub-trees of its children are identical aside from their leaves. 
\end{definition}

Note that symmetric formulas do not necessarily compute symmetric functions -- functions that only depend on the Hamming weight of the input.

\begin{definition}\label{def:balanced2}
    We say a (family of) formulas is balanced if there is a constant $c$ such that for every node, if its subtree has $N$ leaves, and it has $d$ children, then the sub-tree of each child has at most $cN/d$ leaves. 
\end{definition}

For any $f:\{0,1\}^\Sigma\rightarrow\{0,1\}$ and $\{f_\sigma:\{0,1\}^{m_\sigma}\rightarrow\{0,1\}\}_{\sigma\in \Sigma}$, we define the composed function $f\circ (f_\sigma)_{\sigma\in \Sigma}:\{0,1\}^{\sum_{\sigma\in\Sigma}m_\sigma}\rightarrow\{0,1\}$ by 
$$f\circ (f_\sigma)_{\sigma\in\Sigma} (x) = f(f_\sigma(x^\sigma))_{\sigma\in\Sigma}$$
where $x=(x^\sigma)_{\sigma\in\Sigma}$ is the $\sum_\sigma m_\sigma$-bit string obtained by concatenating the strings $\{x^\sigma\}_\sigma$, and $(f_\sigma(x^\sigma))_{\sigma\in\Sigma}$ is the $|\Sigma|$-bit string whose $\sigma$-th bit is $f_\sigma(x^\sigma)$. 
We can also write $f\circ (f_\sigma)_{\sigma\in F}$ for some $F\subset\Sigma$, in which case, we implicitly assume that $f_\sigma$ is the identity on $\{0,1\}$ for all $\sigma\in\Sigma\setminus F$.
For a Boolean formula $\varphi$ on $\Sigma$, we will also write $\varphi\circ (f_\sigma)_{\sigma\in \Sigma}$ as a short-hand for $f_\varphi\circ (f_\sigma)_{\sigma\in \Sigma}$, where $f_\varphi$ is the function computed by the formula. 

\subsection{Phase Estimation Algorithms}\label{sec:phase-estimation}

The technique of phase estimation dates back to~\cite{kitaev1996PhaseEst}, but we will use a specific application of this technique precisely defined in~\cite{jeffery2022kDist}, following a blueprint that had already been used many times in the quantum algorithms literature. 

\begin{definition}[Parameters of a Phase Estimation Algorithm]\label{def:phase-est-alg}
For an implicit input $x\in\{0,1\}^*$, fix a finite-dimensional complex inner product space $H$, a unit vector $\ket{\psi_0}\in H$, and sets of vectors
$\Psi^{\cal A},\Psi^{\cal B}\subset H$. We further assume that $\ket{\psi_0}$ is orthogonal to every vector in $\Psi^{\cal B}$.  Let $\Pi_{\cal A}$ be the orthogonal projector onto ${\cal A}=\mathrm{span}\{\Psi^{\cal A}\}$, and similarly for $\Pi_{\cal B}$.
\end{definition}
Let 
$U_{\cal AB}=(2\Pi_{\cal A}-I)(2\Pi_{\cal B}-I)$.
The algorithm defined by $(H,\ket{\psi_0},\Psi^{\cal A},\Psi^{\cal B})$ performs phase estimation of $U_{\cal AB}$ on initial state $\ket{\psi_0}$, to sufficient precision that by measuring the phase register and checking if the output is 0, we can distinguish between a \emph{negative case} and a \emph{positive case}.

\begin{definition}[Negative Witness]\label{def:neg-witness}
A \emph{negative witness} for $(H,\ket{\psi_0},\Psi^{\cal A},\Psi^{\cal B})$ is a vector $\ket{w_{\cal A}}\in {\cal A}$ such that $\ket{w_{\cal B}}\colonequals \ket{\psi_0}-\ket{w_{\cal A}}\in {\cal B}$. 
\end{definition}

\begin{definition}[Positive Witness]\label{def:pos-witness}
A \emph{positive witness} for $(H,\ket{\psi_0},\Psi^{\cal A},\Psi^{\cal B})$ is a vector $\ket{w}\in {\cal A}^{\bot}\cap {\cal B}^\bot$ such that $\braket{\psi_0}{w}\neq 0$.
\end{definition}

Note that a negative witness exists if and only if $\ket{\psi_0}\in {\cal A}+{\cal B}$, whereas a positive witness exists if and only if $\ket{\psi_0}$ has a non-zero component in $({\cal A}+{\cal B})^\bot = {\cal A}^\bot\cap {\cal B}^\bot$, which is if and only if $\ket{\psi_0}\not\in {\cal A}+{\cal B}$. The following theorem describes an algorithm for distinguishing these two cases.

\begin{theorem}[\cite{jeffery2022kDist}]\label{thm:phase-est-fwk}
Fix $(H,\ket{\psi_0},\Psi^{\cal A},\Psi^{\cal B})$ as in \defin{phase-est-alg}.
Suppose we can generate the state $\ket{\psi_0}$ in cost ${\sf S}$, and implement $U_{\cal AB}=(2\Pi_{\cal A}-I)(2\Pi_{\cal B}-I)$ in cost ${\sf A}$.

\noindent Let $c_+\in [1,50]$ be some constant, and let ${\cal C}_-\geq 1$ be a positive real number that may scale with $|x|$, such that we are guaranteed that one of the following holds:
\begin{description}
\item[Positive Condition:] There is a positive witness $\ket{w}$ s.t.~$\frac{|\braket{w}{\psi_0}|^2}{\norm{\ket{w}}^2}\geq \frac{1}{c_+}$. 
\item[Negative Condition:] There is a negative witness $\ket{w_{\cal A}}$ s.t.~$\norm{\ket{w_{\cal A}}}^2 \leq {\cal C}_-$.
\end{description}
Then there is a quantum algorithm that distinguishes these two cases with bounded error in time
$$O\left({\sf S}+\sqrt{{\cal C}_-}{\sf A}\right)$$
and space $O(\log\dim H + \log {\cal C}_-)$.
\end{theorem}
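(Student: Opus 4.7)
The plan is to exhibit the algorithm explicitly and then separately analyze the positive and negative cases, closely following the ``effective spectral gap'' style argument that has become standard in the span program / phase estimation literature.

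First I would describe the algorithm: prepare $\ket{\psi_0}$ (cost ${\sf S}$), run phase estimation of the unitary $U_{\cal AB}=(2\Pi_{\cal A}-I)(2\Pi_{\cal B}-I)$ on $\ket{\psi_0}$ to precision $\Theta = \Theta(1/\sqrt{{\cal C}_-})$ (cost $O(\sqrt{{\cal C}_-}\,{\sf A})$), measure the phase register, and accept iff the output is $0$. The space required is $O(\log\dim H)$ for the working register plus $O(\log(1/\Theta))=O(\log {\cal C}_-)$ for the phase register, which matches the claimed bound.

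For the positive case, I would use that any $\ket{w}\in {\cal A}^\bot\cap {\cal B}^\bot$ is a $+1$ eigenvector of both $2\Pi_{\cal A}-I$ and $2\Pi_{\cal B}-I$, hence of $U_{\cal AB}$. Writing the projection of $\ket{\psi_0}$ onto the $+1$ eigenspace of $U_{\cal AB}$ and using $|\braket{w}{\psi_0}|^2/\norm{\ket w}^2 \ge 1/c_+$, phase estimation outputs $0$ with probability at least $\Omega(1/c_+)=\Omega(1)$.

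For the negative case, the key observation is that $\ket{\psi_0}=\ket{w_{\cal A}}+\ket{w_{\cal B}}$ with $\ket{w_{\cal A}}\in {\cal A}$, $\ket{w_{\cal B}}\in {\cal B}$ (so $\ket{\psi_0}\in {\cal A}+{\cal B}$, which means no component lies in ${\cal A}^\bot\cap {\cal B}^\bot$), and additionally $\ket{\psi_0}\perp {\cal B}$ by assumption, so $\ket{\psi_0}$ has no component in ${\cal A}\cap {\cal B}$ either. Thus $\ket{\psi_0}$ is orthogonal to the entire $+1$ eigenspace of $U_{\cal AB}$. To rule out phases near $0$, I would invoke the effective spectral gap lemma: for any $\Theta>0$, if $\Pi_\Theta$ denotes the projector onto eigenspaces of $U_{\cal AB}$ with eigenvalue $e^{i\theta}$, $|\theta|\le \Theta$, then $\norm{\Pi_\Theta \ket{\psi_0}}\le \tfrac{\Theta}{2}\norm{\ket{w_{\cal A}}}\le \tfrac{\Theta}{2}\sqrt{{\cal C}_-}$. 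Choosing $\Theta$ to be a sufficiently small constant times $1/\sqrt{{\cal C}_-}$ makes this a small constant, so phase estimation to precision $\Theta$ outputs a non-zero estimate with high probability, and we reject.

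The main obstacle is establishing the effective spectral gap bound in the negative case at the right scale; once that lemma is in hand (it follows by Jordan's lemma together with a direct computation on each two-dimensional invariant subspace), everything else is bookkeeping. A secondary subtlety is tuning the precision $\Theta$ and the acceptance threshold so that the completeness error in the positive case and the soundness error in the negative case are both bounded below $1/3$ simultaneously, which is why we insist $c_+$ is a bounded constant and ${\cal C}_-\ge 1$; standard amplification then boosts the success probability with only a constant blowup absorbed in the big-$O$.
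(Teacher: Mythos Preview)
The paper does not actually prove this theorem: it is stated as a preliminary result imported from \cite{jeffery2022kDist}, with no proof given in the present paper. So there is no ``paper's own proof'' to compare against.

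That said, your sketch is the standard argument used in \cite{jeffery2022kDist} and its predecessors (e.g.\ the span program literature going back to Reichardt): phase estimation of $U_{\cal AB}$ on $\ket{\psi_0}$ to precision $\Theta(1/\sqrt{{\cal C}_-})$, with the positive case handled by the $+1$-eigenvector $\ket{w}\in{\cal A}^\bot\cap{\cal B}^\bot$ and the negative case handled by the effective spectral gap lemma applied to the decomposition $\ket{\psi_0}=\ket{w_{\cal A}}+\ket{w_{\cal B}}$. Your identification of the $+1$-eigenspace as $({\cal A}^\bot\cap{\cal B}^\bot)\oplus({\cal A}\cap{\cal B})$, and your use of $\ket{\psi_0}\perp{\cal B}$ to rule out overlap with ${\cal A}\cap{\cal B}$, are both correct and are exactly the ingredients needed. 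The proposal is sound.
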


The reason we assume that ${\cal A}$ and ${\cal B}$ are given by bases is that this extra structure allows us to implement the reflections around these spaces, which we describe shortly. For this reason, we refer to these given bases as \emph{working bases}.

\begin{definition}[Working Basis Generation]\label{def:working-basis}
    We say an orthonormal basis $\Psi=\{\ket{b_{\ell}}\}_{\ell\in L}\subset H_G$ can be generated in time $T$ if 
    \begin{enumerate}
        \item The reflection around the subspace $\mathrm{span}\{\ket{\ell}:\ell\in L\}$ of $H_G$ can be implemented in time $T$.
        \item There is a map that acts as $\ket{\ell}\mapsto\ket{b_{\ell}}$ for all $\ell\in L$ that can be implemented in time $T$.  
    \end{enumerate}
\end{definition}

\begin{claim}\label{clm:basis-implement-U}
    If $\Psi_{\cal A}$ and $\Psi_{\cal B}$ are working bases for ${\cal A}$ and ${\cal B}$, respectively, that can each be generated in time $T$, then $U=(2\Pi_{\cal A}-I)(2\Pi_{\cal B}-I)$ can be implemented in time $T$. 
\end{claim}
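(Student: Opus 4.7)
The plan is simple: unpack the two guarantees of \defin{working-basis} and use them to build the reflection around each working-basis subspace, then compose. By symmetry it suffices to implement $2\Pi_{\cal S}-I$ for a subspace ${\cal S}\in\{{\cal A},{\cal B}\}$ with working basis $\Psi_{\cal S}=\{\ket{b_\ell}\}_{\ell\in L}$ generated in time $T$; the claim for $U$ then follows by composing the two reflections.

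The key identity is that if $V$ is any unitary on $H$ acting as $\ket{\ell}\mapsto\ket{b_\ell}$ for every $\ell\in L$, and $\Pi_L$ is the orthogonal projector onto the ``label subspace'' $\mathrm{span}\{\ket{\ell}:\ell\in L\}$, then
\[2\Pi_{\cal S}-I \;=\; V\,(2\Pi_L-I)\,V^\dagger,\]
since conjugation by the unitary $V$ sends $\Pi_L$ to the projector onto $V\cdot\mathrm{span}\{\ket{\ell}:\ell\in L\}=\mathrm{span}\{\ket{b_\ell}:\ell\in L\}={\cal S}$. Condition~(1) of \defin{working-basis} supplies the reflection $2\Pi_L-I$ in time $T$, and condition~(2) supplies $V$, and hence $V^\dagger$, in time $T$, so the right-hand side can be executed in time $O(T)$. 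Multiplying the two such reflections together yields $U$ in time $O(T)$, matching the bound required by \thm{phase-est-fwk}.

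There is no real obstacle here; the claim is a bookkeeping observation that packages the two ingredients of \defin{working-basis} into precisely the reflection needed by the phase-estimation framework. The only minor point worth noting is that the ``map $\ket{\ell}\mapsto\ket{b_\ell}$'' in \defin{working-basis} can be taken as a genuine unitary on $H$, since $\{\ket{b_\ell}\}_{\ell\in L}$ is orthonormal and can therefore be extended (implicitly) to a unitary; any implementation of the partial isometry in time $T$ is already such a unitary, and that is all we use.
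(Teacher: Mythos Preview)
Your proof is correct and is exactly the paper's argument: apply $V^\dagger$, then the label-space reflection $2\Pi_L-I$, then $V$, to implement each reflection, and compose. The paper's proof is just the one-line version of what you wrote.
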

\begin{proof}
    Run the map $\ket{\ell}\mapsto\ket{b_{\ell}}$ in reverse, reflect around $\mathrm{span}\{\ket{\ell}:ell\in L\}$, and then run the map $\ket{\ell}\mapsto\ket{b_{\ell}}$. 
\end{proof}

\noindent We end by remarking that it is enough to have a working basis for the complement of ${\cal A}$ or ${\cal B}$.

\begin{corollary}\label{cor:dual-basis-gen}
If $\Psi$ is a working basis that can each be generated in time $T$, then there is a basis $\Psi'$ for $\mathrm{span}\{\Psi\}^\bot$ that can be generated in time $T$.    
\end{corollary}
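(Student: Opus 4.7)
The plan is to exhibit $\Psi'$ by leveraging the very unitary that is already provided by the generating procedure for $\Psi$. Let $U$ denote a unitary on $H_G$ that implements the map $\ket{\ell}\mapsto \ket{b_\ell}$ for $\ell \in L$; by hypothesis it runs in time $T$. Let $L'$ be the set of standard basis labels of $H_G$ that do not appear in $L$, and define $\ket{b'_{\ell'}}\colonequals U\ket{\ell'}$ for each $\ell'\in L'$. I will take $\Psi'\colonequals \{\ket{b'_{\ell'}}\}_{\ell'\in L'}$.

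First I would verify that $\Psi'$ is an orthonormal basis for $\mathrm{span}\{\Psi\}^\bot$. Orthonormality of the $\ket{b'_{\ell'}}$ is immediate since $U$ is unitary and the $\ket{\ell'}$ are orthonormal, and each $\ket{b'_{\ell'}}$ is orthogonal to every $\ket{b_\ell}$ because $\braket{b_\ell}{b'_{\ell'}}=\braket{\ell}{\ell'}=0$. Since $\{\ket{\ell}:\ell\in L\}\cup\{\ket{\ell'}:\ell'\in L'\}$ is an orthonormal basis for $H_G$, its image under $U$ is as well, so $\Psi\cup\Psi'$ is an orthonormal basis for $H_G$, meaning $\Psi'$ spans $\mathrm{span}\{\Psi\}^\bot$.

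Next I would verify the two items of \defin{working-basis} for $\Psi'$, each achievable in time $T$. For item~1, the reflection around $\mathrm{span}\{\ket{\ell'}:\ell'\in L'\}$ is $2\Pi_{L'}-I = -(2\Pi_L-I)$, since $\Pi_{L'}=I-\Pi_L$; it therefore agrees with the reflection around $\mathrm{span}\{\ket{\ell}:\ell\in L\}$ up to a global sign, and can be implemented by the same circuit in time $T$. For item~2, the map $\ket{\ell'}\mapsto \ket{b'_{\ell'}}$ for $\ell'\in L'$ is, by construction, the restriction of $U$ to $\mathrm{span}\{\ket{\ell'}:\ell'\in L'\}$, so applying $U$ in time $T$ realizes it.

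I do not foresee a real obstacle: the content is essentially the observation that a unitary sends a coordinate subspace and its orthogonal complement to its image and that image's orthogonal complement, and that reflecting around a coordinate subspace and reflecting around its coordinate complement differ only by a global phase. The only subtlety worth flagging in the write-up is making clear that the ``map'' in \defin{working-basis} refers to a unitary on all of $H_G$ extending the partial action on $\{\ket{\ell}:\ell\in L\}$, which is the minimal interpretation needed for the reflection in \clm{basis-implement-U} to make sense.
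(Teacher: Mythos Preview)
Your proposal is correct and takes essentially the same approach as the paper: define $\Psi'$ as the image under the given basis-generating unitary of the complementary label set $L'=Z\setminus L$. Your write-up is in fact more thorough than the paper's, explicitly verifying orthonormality, that $\Psi'$ spans the complement, and that the reflection for item~1 of \defin{working-basis} is obtained from the given one up to a global sign.
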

\begin{proof}
Suppose $\Psi=\{\ket{b_{\ell}}\}_{\ell\in L}$, and $H_G\equiv\mathrm{span}\{\ket{\ell}:\ell\in Z\}$ where $L\subseteq Z$.
Let $U_\Psi$ be the given map that acts as $\ket{\ell}\mapsto \ket{b_\ell}$ for $\ell\in L$, in time $T$. Then define
$\Psi'=\{\ket{b_\ell'}=U_\Psi\ket{\ell}\}_{\ell\in Z\setminus L}$. That is, $\Psi'$ is generated by $U_\Psi$. Since both bases are generated by the same map, their generation has the same time complexity.
\end{proof}

\section{Subspace Graphs for Multidimensional Quantum Walks}

\subsection{Subspace Graphs}

Multidimensional quantum walks were introduced as such in \cite{jeffery2022kDist}, although they generalize various quantum algorithms that have appeared previously, including \cite{szegedy2004QMarkovChainSearch,reichardt2011spanformulas,belovs2013ElectricWalks,belovs2013TimeEfficientQW3Distintness,ito2015approxSpan}.
We wish to consider very general kinds of composition of multidimensional quantum walks, and in order to be clear about the precise types of objects we are composing, we give a more formal definition than has appeared previously.
\begin{definition}[Subspace Graph]\label{def:multid-QW}
A \emph{subspace graph} consists of a (undirected) graph $G=(V,E)$, a boundary $B\subseteq V$, and the following subspaces of a space $H=H_{G}$:
\begin{description}
    \item[Edge and Boundary Spaces] We assume $H$ can be decomposed into a direct sum of spaces as follows: $H=\bigoplus_{e\in E}\Xi_e \oplus \bigoplus_{u\in B}\Xi_u$. 
    \item[Edge and Boundary Subspaces] For each $e\in E\cup B$, let $\Xi_e^{\cal A}$ and $\Xi_e^{\cal B}$ be subspaces of $\Xi_e$. These need not be orthogonal, and they may each be $\{0\}$, all of $\Xi_e$, or something in between.
    \item[Vertex Spaces and Boundary Space] For each $u\in V$, let ${\cal V}_u\subseteq \bigoplus_{e\in E(u)}\Xi_e$ be pairwise orthogonal spaces. Let ${\cal V}_B\subseteq\bigoplus_{u\in B}\Xi_u$. 
\end{description}
Then we define 
$${\cal A}_G=\bigoplus_{e\in E\cup B}\Xi_e^{\cal A}
\mbox{ and }
{\cal B}_G=\bigoplus_{u\in V}{\cal V}_u+ {\cal V}_B+\bigoplus_{e\in E\cup B}\Xi_e^{\cal B}.$$
\end{definition}
We will refer to a subspace graph as simply $G$, with the associated spaces implicit. 
Let us give some intuition for the meaning of the graph structure in a subspace graph. We will shortly see (in \sec{subspace-phase-est}) that there is a quantum algorithm associated with a subspace graph, which alternatively applies reflections around ${\cal A}_G$ and ${\cal B}_G$ to some initial state $\ket{\psi_0}$ to distinguish between the cases:
\begin{description}
    \item[Negative Case:] $\ket{\psi_0}\in {\cal A}_G+{\cal B}_G$
    \item[Positive Case:] $\ket{\psi_0}$ has a (large) component, called a \emph{positive witness} in $({\cal A}_G+{\cal B}_G)^\bot = {\cal A}_G^\bot\cap{\cal B}_G^\bot$
\end{description}
Then we can see all the vectors in ${\cal A}_G$ and ${\cal B}_G$ as linear constraints on the initial state -- it must have a large component orthogonal to all of them in the positive case. 
${\cal A}_G$ and ${\cal B}_G$ are each decomposed into sums of subspaces, representing subsets of constraints. The graph structure of $G$ restricts how the different sets of constraints are allowed to overlap -- the constraints associated with vertex $v$ only overlap constraints associated with edges that are incident to $v$. This graph structure can then be used to help analyse the algorithm. For example, a positive witness is related to a \emph{flow} on $G$.

\begin{remark}
In \cite{jeffery2022kDist} and \cite{jeffery2022subroutines}, spaces are only defined for each vertex, and the spaces associated with a pair of vertices must be orthogonal unless $u$ and $v$ are adjacent in $G$
(in the words of \cite{jeffery2022subroutines}, $G$ must be an \emph{overlap graph} of the spaces).
Here we have found it convenient to assume that $G$ has been augmented by putting a vertex in the middle of each edge, which is why we have a space for each vertex of $G$, and for each edge of $G$. This ensures, for example, that the graph is bipartite, which is required in \cite{jeffery2022kDist} and \cite{jeffery2022subroutines}. 
However, the important intuition about subspace graphs is that the locality structure of the graph constrains how the spaces are allowed to overlap.
\end{remark}

We fix some notation for talking about subspace graphs. For any $e\in E\cup B$, we let $\Pi_e$, $\Pi_e^{\cal A}$ and $\Pi_e^{\cal B}$ denote orthogonal projectors onto $\Xi_e$, $\Xi_e^{\cal A}$, and $\Xi_e^{\cal B}$, respectively. For any set $F\subseteq E\cup B$, we let $\Xi_F=\bigoplus_{e\in F}\Xi_e$, $\Xi_F^{\cal A}=\bigoplus_{e\in F}\Xi_e^{\cal A}$, and similarly for $\Xi_F^{\cal B}$. We let $\Pi_F$, $\Pi_F^{\cal A}$ and $\Pi_F^{\cal B}$ denote the orthogonal projectors onto $\Xi_F$, $\Xi_F^{\cal A}$ and $\Xi_F^{\cal B}$, respectively.

We will allow subspace graphs to implicitly depend on some input, and associate them with a computation. Specifically, if we fix some initial state $\ket{\psi_0}\in H_G$, then this, along with ${\cal A}_G$ and ${\cal B}_G$, define a phase estimation algorithm (\sec{phase-estimation}) that decides if $\ket{\psi_0}\in {\cal A}_G+{\cal B}_G$. With this in mind, we say a subspace graph that depends on some input $x$, \emph{computes a function $f$} (with respect to $\ket{\psi_0}$, which should be clear from context) if $f(x)=0$ if and only if $\ket{\psi_0}\in {\cal A}_G+{\cal B}_G$. 
We make this formal in \sec{subspace-phase-est}.

One motivating example from which we derive intuition is the special case of a quantum walk on a graph. We will see other examples in later sections, and our results will mainly be based on these, but we describe how quantum walks fit into this picture as an illustration. First, we define a special case for ${\cal V}_u$. While the definition of  subspace graph is independent of any weights of the graph $G$, we can always assume $G$ has associated edge weights from which we derive extra structure. 
\begin{definition}[Simple Vertex]\label{def:simple-vertex}
    Fix a subspace graph $G$ with associated edge weights $\{\w_e\}_{e\in E}$. For $u\in V$, define
    $$\ket{\psi_\star(u)}\colonequals \sum_{e\in E^\rightarrow(u)}\sqrt{\w_e}\ket{\rightarrow,e}+\sum_{e\in E^\leftarrow(u)}\sqrt{\w_e}\ket{\leftarrow,e}.$$
    A vertex $u\in V$ is \emph{simple} if for all $e\in E^\rightarrow(u)$, $\ket{\rightarrow,e}\in\Xi_e$, for all $e\in E^\leftarrow(u)$, $\ket{\leftarrow,e}\in \Xi_e$, and if $u\in V\setminus B$
    ${\cal V}_u =  \mathrm{span}\{\ket{\psi_\star(u)}\};$
    and if $u\in B$, either $\ket{\rightarrow,u}\in \Xi_u$ and 
    ${\cal V}_u =  \mathrm{span}\{\ket{\rightarrow,u}+\ket{\psi_\star(u)}\}$
    or $\ket{\leftarrow,u}\in \Xi_u$ and 
    ${\cal V}_u =  \mathrm{span}\{\ket{\leftarrow,u}+\ket{\psi_\star(u)}\}.$
\end{definition}

In our applications, we generally have two degrees of freedom associated with each edge, which we can think of as having subdivided each edge into two edges (each associated with a single degree of freedom). When we subdivide edge $e$ by putting a vertex in the middle of it - call that vertex $e$ - it now has two incident edges going into it: $\ket{\rightarrow,e}$ and $\ket{\leftarrow,e}$. To make the simple vertex definition agree with the orientation on the edges, we let a star state $\ket{\psi_\star(u)}$ overlap $\ket{\rightarrow,e}$ if $e$ is an outgoing edge of $u$, and $\ket{\leftarrow,e}$ if $e$ is an incoming edge of $u$.

\noindent To motivate this definition, consider the following example. 

\begin{example}[Quantum Walk on a Graph]\label{ex:multiQW-QW}
A subspace graph $G$, with associated edge weights, in which 
\begin{enumerate}
    \item all vertices are simple, 
    \item $\Xi_{E\cup B}^{\cal B}=\{0\}$, 
    \item for all $e\in E$, $\Xi_e^{\cal A} = \mathrm{span}\{\ket{\rightarrow,e}+\ket{\leftarrow,e}\}$,
    \item $B=\{s\}\sqcup M$, where for all $u\in M$, $\Xi_u^{\cal A}+\Xi_u^{\cal B}=\{0\}$,
\end{enumerate}
is a \emph{quantum walk} on the weighted graph $G=(V,E)$.
To motivate this terminology, the phase estimation algorithm in the next section implements a quantum walk on $G$, in the sense that it decides if $\exists t\in M$ in the same component as $s$\footnote{This is if we use $\ket{\psi_0}=\ket{s}$. More generally, we can let $B=S\sqcup M$ and let $\ket{\psi_0}=\sum_{s\in S}\sqrt{\sigma(s)}\ket{s}$ for \emph{initial distribution} $\sigma$ on $S$.}. 

Simple vertices are motivated by quantum walks, and related constructions, as follows. In a random walk on a weighted graph, in order to take a step from a vertex $u\in V$ to a neighbour, the random walker chooses an edge $e\in E(u)$ to traverse, with probability $\w_e/(\sum_{e'\in E(u)}\w_{e'})$. This is precisely the distribution obtained by measuring $\ket{\psi_\star(u)}/\norm{\ket{\psi_\star(u)}}$. The states $\ket{\psi_\star(u)}$ correspond to \emph{quantum walk states} -- alternating a reflection around these with a reflection around the states $\{\ket{\rightarrow,e}+\ket{\leftarrow,e}\}_{e\in E}$ implements a discrete-time quantum walk, as in \cite{szegedy2004QMarkovChainSearch,magniez2006SearchQuantumWalk,belovs2013ElectricWalks}.\footnote{In some previous works, it is assumed that the graph is bipartite, and then the walk is implemented by alternating reflections around the states $\ket{\psi_\star(u)}$ of the two parts of the bipartition. We are actually doing the same thing here, we have just ensured the graph is bipartite by inserting a vertex in the middle of each edge.} 
\end{example}

A related example is that of a \emph{switching network}, which, unlike quantum walks, will actually be one of the building blocks of our results in this paper. We first define what it means for an edge to be a switch.

\begin{definition}[Switch Edge]\label{def:switch-edge}
Fix a subspace graph $G$. We call an edge $e\in E$ a \emph{switch} (or \emph{switch edge}) if there is some value $\varphi(e)\in\{0,1\}$ associated with that edge (implicitly depending on the input), such that
\begin{align*}
\Xi_e &= \mathrm{span}\{\ket{\rightarrow,e},\ket{\leftarrow,e}\},\\
\Xi_e^{{\cal A}} &=\mathrm{span}\{\ket{\rightarrow,e}-(-1)^{\varphi(e)}\ket{\leftarrow,e}\}\quad\mbox{and}\quad
\Xi_e^{{\cal B}} =\mathrm{span}\{\ket{\rightarrow,e}+\ket{\leftarrow,e}\},
\end{align*}
and moreover, if $e\in E^\rightarrow(u)\cap E^\leftarrow(v)$, (that is, $e=(u,v)$), then
 ${\cal V}_u\cap \Xi_e = \mathrm{span}\{\ket{\rightarrow,e}\}$ and 
 ${\cal V}_v\cap \Xi_e = \mathrm{span}\{\ket{\leftarrow,e}\}$.
\end{definition}

The idea behind a switch edge is that if $\varphi(e)=0$, $\Xi_e^{\cal A}+\Xi_e^{\cal B}=\Xi_e$, and so $\Xi_e^\bot=\{0\}$. Recall that a \emph{positive witness} is some $\ket{w}\in {\cal A}_G^\bot\cap {\cal B}_G^\bot$ (\defin{pos-witness}). If $\Xi_e^{\bot}=\{0\}$, then $\ket{w}$ can have no overlap with $\Xi_e$, so $e$ is essentially blocked from use, so we say the edge is \emph{switched off}. In switching networks, defined shortly, the positive witness is an $st$-flow, and it is restricted to edges in the subgraph $G(x)$ of edges that are switched on (we discuss this more in \sec{switching-networks} -- in particular, see \eq{flow-witness}).

Next, we define some conditions that are satisfied by switching networks, as well as other examples considered in this paper.

\begin{definition}[Canonical $st$-boundary]\label{def:canonical-boundary}
We say a subspace graph $G$ has \emph{canonical $st$-boundary} if $B=\{s,t\}$, where:
\begin{itemize}
    \item $\Xi_s=\mathrm{span}\{\ket{s},\ket{\leftarrow,s}\}$, $\Xi_s^{\cal A}= \mathrm{span}\{\ket{s}+\ket{\leftarrow,s}\}$
     and $\Xi_s^{\cal B}=\{0\}$,
    where ${\cal V}_s$ only overlaps $\ket{\leftarrow,s}$.
    \item $\Xi_t=\mathrm{span}\{\ket{\rightarrow,t},\ket{t}\}$,  $\Xi_t^{\cal A}=\mathrm{span}\{\ket{\rightarrow,t}+\ket{t}\}$
     and $\Xi_t^{\cal B}=\{0\}$,
    where ${\cal V}_t$ only overlaps $\ket{\rightarrow,t}$.
\item $\ket{\leftarrow,s}+\ket{\rightarrow,t}\in {\cal B}_G$.
\end{itemize}
\end{definition}
\noindent Note that when $G$ has canonical $st$-boundary, we always have $\ket{s}+\ket{t}\in {\cal A}_G+{\cal B}_G$. 

As discussed in \sec{intro}, a switching network is actually a classical object, but we overload the term to describe a specific kind of subspace graph. This is justified by the fact that, evaluating such a subspace graph is equivalent to evaluating a switching network (see the discussion in \sec{switching-networks}).
\begin{definition}[Switching Network]\label{def:switching-network}
    A \emph{switching network} is a subspace graph with canonical $st$-boundary in which all edges are switches (\defin{switch-edge}), and all vertices are simple (\defin{simple-vertex}).
\end{definition}
Note that if a subspace graph is a switching network, then all its spaces are fixed by the graph structure $(V,E)$, and the associated weights. This is not true for subspace graphs more generally, and in \sec{alg-subspace-graph}, we will see an example of a subspace graph with other kinds of subspaces.

We discuss more properties of switching networks, including several examples, in \sec{switching-networks}. The examples we will detail in this paper will not be restricted to switching networks, but all, aside from some pedagogical examples, will have canonical boundary, and we will let some of the edges be switches.

\subsection{Phase Estimation Algorithm}\label{sec:subspace-phase-est}

A subspace graph, as in \defin{multid-QW}, gives rise to a phase estimation algorithm (see \sec{phase-estimation}) with spaces ${\cal A}_G$ and ${\cal B}_G$ as in \defin{multid-QW} and some initial state $\ket{\psi_0}\in\Xi_B\cap {\cal B}_G^\bot$. Though it is possible to consider more general states $\ket{\psi_0}$, throughout this paper, we will always assume $G$ has canonical $st$-boundary, and will let
\begin{equation}
    \ket{\psi_0}=\frac{1}{\sqrt{2}}(\ket{s}-\ket{t}).
\end{equation}

To fully specify a phase estimation algorithm, we need a pair of orthogonal bases for ${\cal A}_G$ and ${\cal B}_G$ that are easily generated, in order to implement their respective reflections. We therefore always assume we have such a basis pair associated with $G$, which we call the \emph{working bases}. For example, if $\Xi_e^{\cal B}=\{0\}$ for all $e$, then it suffices to have good bases for all the pairwise orthogonal spaces ${\cal V}_v$; and good bases for all the pairwise orthogonal spaces $\Xi_e^{\cal A}$. However, it is in our interest to put as much of $\Xi_e^{\cal A}+\Xi_e^{\cal B}$ into $\Xi_e^{\cal B}$ as we can manage while still maintaining a good working basis for ${\cal B}_G$. That is because of how the negative witness complexity is defined (\defin{neg-witness}, and also \defin{neg-graph-wit} below): only the part of $\ket{\psi_0}$ in ${\cal A}_G$ is counted in the complexity. We will see this in action in switching networks, in \sec{switching-networks}.

To facilitate our later constructions, we will make the following extra assumptions on the working bases.

\begin{definition}[Composable Bases]\label{def:composable-basis}
    Fix a subspace graph with canonical $st$-boundary, and some $\r\in\mathbb{R}_{>0}$. We say a pair of working bases $\Psi_{\cal A}$ and $\Psi_{\cal B}$ (see \defin{working-basis}) for ${\cal A}_G$ and ${\cal B}_G$ are $st$-composable, with scaling factor $\r$, if they satisfy the following assumptions: 
    \begin{enumerate}
        \item $\Psi_{{\cal A}}=\bigcup_{e\in E\cup B}\Psi_{{\cal A}}(e)$, where $\Psi_{{\cal A}}(e)$ is an orthonormal basis for $\Xi_e^{\cal A}$;
        \item $\Psi_{\cal B}=\Psi_{\cal B}^-\cup \bigcup_{e\in E\cup B}\Psi_{{\cal B}}(e)$, for some $\Psi_{\cal B}^-$, where $\Psi_{{\cal B}}(e)$ is an orthonormal basis for $\Xi_e^{\cal B}$; 
        \item $\ket{b_0}\in\Psi_{\cal B}^-$ satisfies $\ket{b_0}=\frac{1}{\sqrt{2}}(\ket{\leftarrow,s}+\ket{\rightarrow,t})$;
        \item $\ket{b_1}\in\Psi_{\cal B}^-$ satisfies $\ket{b_1}=\frac{\ket{\leftarrow,s}-\ket{\rightarrow,t}+\sqrt{\r}\ket{\bar{b}_1}}{\sqrt{2+\r}}$ for some unit vector $\ket{\bar{b}_1}$ that is orthogonal to $\ket{\leftarrow,s}$ and $\ket{\rightarrow,t}$;
        \item $\Psi_{\cal B}\setminus\{\ket{b_0},\ket{b_1}\}$ is orthogonal to $\ket{\leftarrow,s}$ and $\ket{\rightarrow,t}$. 
    \end{enumerate}
\end{definition}
\noindent We note that generally we will be able to choose the scaling factor $\r$ by applying a linear map to $H_G$ that scales $\Xi_E$ relative to $\Xi_B$. See also the remark after \thm{subspace-graph-to-alg}.

Now we can define the special form of subspace graph that will apply to all subspace graphs studied in this paper (aside from some examples mentioned informally):
\begin{definition}\label{def:st-composable-subspace-graph}
    We say a subspace graph $G$ is \emph{$st$-composable} if:
    \begin{enumerate}
        \item $G$ has canonical $st$-boundary (\defin{canonical-boundary});
        \item $G$ is equipped with composable working bases (\defin{composable-basis});
        \item for each $e\in E\setminus\overline{E}$, $\Xi_e^{\cal B}=\{0\}$, where $\overline{E}$ is the set of edges that are switches.
    \end{enumerate}
\end{definition}

To analyze a phase estimation algorithm, we need to exhibit positive and negative witnesses. 
A positive witness for a phase estimation algorithm (\defin{pos-witness}) is a vector $\ket{w}\in {\cal A}_G^\bot\cap {\cal B}^\bot_G$ such that $\braket{\psi_0}{w}\neq 0$. By scaling $\ket{w}$, we can make $\braket{\psi_0}{w}\neq 0$ take any scalar value, without impacting the complexity $c_+$ from \thm{phase-est-fwk}. This motivates the following definition.
\begin{definition}[Positive Witness for a Graph]\label{def:pos-graph-wit}
    We say $\ket{w}\in {\cal A}_G^\bot\cap {\cal B}^\bot_G$ is a positive witness for $G$ if it can be expressed as 
    $$\ket{w}=\ket{s}-\ket{\leftarrow,s}+\Pi_{E}\ket{w}+\ket{\rightarrow,t}-\ket{t}.$$
    We let $W_+(G)$ be an upper bound (over some implicit input) on the minimum $\norm{\ket{w}}^2$ of any positive witness for $G$.
    We call $\ket{\hat{w}}\colonequals \Pi_E\ket{w}$ the \emph{cropped witness}, and we let $\hat{W}_+(G)=W_+(G)-4$ be an upper bound on the minimum $\norm{\ket{\hat{w}}}^2$. 
\end{definition}
To parse the definition above, note that if $\braket{\psi_0}{w}=\sqrt{2}$, then we can always express $\ket{w}$ as 
$$\ket{w}=\sqrt{2}\ket{\psi_0}+(I-\ket{\psi_0}\bra{\psi_0})\ket{w}=\ket{s}-\ket{t}+(I-\ket{\psi_0}\bra{\psi_0})\ket{w},$$
but more than that, $\ket{w}$ is orthogonal to $\ket{s}+\ket{t}\in {\cal A}_G+{\cal B}_G$, so $(I-\ket{\psi_0}\bra{\psi_0})\ket{w}$ is orthogonal to both $\ket{s}-\ket{t}$ and $\ket{s}+\ket{t}$,
so we have:
$$\ket{w}=\ket{s}-\ket{t}+(I-\ket{s}\bra{s}-\ket{t}\bra{t})\ket{w}
=\ket{s}-\ket{\leftarrow,s}-\ket{t}+\ket{\rightarrow,t}+(I-\Pi_B)\ket{w},$$
since $\ket{w}$ must be orthogonal to $\ket{s}+\ket{\leftarrow,s}$ and $\ket{t}+\ket{\rightarrow,t}$. 
We consider $\ket{w}$ with its boundary $\Pi_B\ket{w}$ removed -- which we call a cropped witness -- because in \sec{composition}, when we compose subspace graphs, we will crop off their boundary spaces to save additive constants in the witness sizes, which become relevant when we have very deep recursive structures. Thus, we similarly define:
\begin{definition}[Negative Witness for a Graph]\label{def:neg-graph-wit}
    We say $\ket{w_{\cal A}}\in {\cal A}_G$ is a negative witness for $G$ if 
    $\ket{w_{\cal A}}-\ket{s}+\ket{t}\in {\cal B}_G$. We define the corresponding \emph{cropped negative witness} by $\ket{\hat{w}_{\cal A}}=\Pi_E\ket{w_{\cal A}}$.
\end{definition}
For the above, recall 
that a negative witness for a phase estimation algorithm is a vector $\ket{w_{\cal A}}\in {\cal A}$ such that $\ket{w_{\cal B}}\colonequals \ket{\psi_0}-\ket{w_{\cal A}}\in {\cal B}$. We have slightly modified the definition, by replacing $\ket{\psi_0}=\frac{1}{\sqrt{2}}(\ket{s}-\ket{t})$ with $\ket{s}-\ket{t}$, in order to save factors of $\sqrt{2}$ everywhere, but this changes very little. We will use the following later in the paper.
\begin{lemma}Assume $G$ has canonical $st$-boundary.
    Then $\ket{w_{\cal A}}\in {\cal A}_G$ is a negative witness for $G$ if and only if $\ket{\hat{w}_{\cal A}}+\ket{\leftarrow,s}-\ket{\rightarrow,t}\in {\cal B}_G$. 
\end{lemma}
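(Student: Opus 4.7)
The plan is to exploit the rigid structure of ${\cal A}_G$ and ${\cal B}_G$ at the canonical boundary to reduce both sides to the same vector identity. Since $\Xi_s^{{\cal A}}=\mathrm{span}\{\ket{s}+\ket{\leftarrow,s}\}$, $\Xi_t^{{\cal A}}=\mathrm{span}\{\ket{\rightarrow,t}+\ket{t}\}$, and $\ket{w_{\cal A}}\in{\cal A}_G=\bigoplus_{e\in E\cup B}\Xi_e^{{\cal A}}$, I would decompose $\ket{w_{\cal A}}=\alpha(\ket{s}+\ket{\leftarrow,s})+\beta(\ket{\rightarrow,t}+\ket{t})+\ket{\hat{w}_{\cal A}}$ for some scalars $\alpha,\beta$, where $\ket{\hat{w}_{\cal A}}=\Pi_E\ket{w_{\cal A}}\in\bigoplus_{e\in E}\Xi_e^{{\cal A}}$. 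Rewriting $-\ket{s}+\ket{t}$ as $-(\ket{s}+\ket{\leftarrow,s})+(\ket{\rightarrow,t}+\ket{t})+\ket{\leftarrow,s}-\ket{\rightarrow,t}$ yields the key identity
\[
\ket{w_{\cal A}}-\ket{s}+\ket{t} \;=\; (\alpha-1)(\ket{s}+\ket{\leftarrow,s}) + (\beta+1)(\ket{\rightarrow,t}+\ket{t}) + \bigl(\ket{\hat{w}_{\cal A}}+\ket{\leftarrow,s}-\ket{\rightarrow,t}\bigr).
\]

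For the forward direction, I would first establish that every vector in ${\cal B}_G$ has zero coefficient on $\ket{s}$ and on $\ket{t}$: the summand $\bigoplus_{e\in E\cup B}\Xi_e^{{\cal B}}$ contributes nothing in $\Xi_s$ or $\Xi_t$ because $\Xi_s^{{\cal B}}=\Xi_t^{{\cal B}}=\{0\}$; the interior vertex spaces ${\cal V}_u$ with $u\notin B$ lie inside $\Xi_E$; and ${\cal V}_s,{\cal V}_t$ meet $\Xi_s,\Xi_t$ only along $\ket{\leftarrow,s},\ket{\rightarrow,t}$ respectively, while the only part of ${\cal V}_B$ required by the canonical boundary definition is the generator $\ket{\leftarrow,s}+\ket{\rightarrow,t}$, which also avoids $\ket{s},\ket{t}$. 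Projecting the assumption $\ket{w_{\cal A}}-\ket{s}+\ket{t}\in{\cal B}_G$ onto $\ket{s}$ and $\ket{t}$ then forces $\alpha=1$ and $\beta=-1$, and the identity collapses to $\ket{w_{\cal A}}-\ket{s}+\ket{t}=\ket{\hat{w}_{\cal A}}+\ket{\leftarrow,s}-\ket{\rightarrow,t}$, placing the right-hand side in ${\cal B}_G$.

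The converse direction is then a clean substitution: with $\alpha=1,\beta=-1$ (the only choice for which $\ket{w_{\cal A}}$ can be a negative witness at all, by the previous paragraph), the identity reads $\ket{w_{\cal A}}-\ket{s}+\ket{t}=\ket{\hat{w}_{\cal A}}+\ket{\leftarrow,s}-\ket{\rightarrow,t}$, so membership of one side in ${\cal B}_G$ transfers verbatim to the other. The main (and essentially only) obstacle is the structural observation that no vector of ${\cal B}_G$ carries a $\ket{s}$- or $\ket{t}$-coefficient, which demands a careful unpacking of the clauses in \defin{canonical-boundary} constraining ${\cal V}_s$, ${\cal V}_t$, and ${\cal V}_B$. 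After that, the argument is a one-line projection followed by the rewrite above.
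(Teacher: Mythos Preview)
Your approach is essentially the same as the paper's: decompose $\ket{w_{\cal A}}$ along $\Xi_s^{\cal A}\oplus\Xi_t^{\cal A}\oplus\Xi_E^{\cal A}$, then use that ${\cal B}_G$ has no $\ket{s}$- or $\ket{t}$-component to pin down the boundary coefficients, after which both sides of the equivalence become literally the same vector. You are in fact more explicit than the paper about why $\alpha=1,\beta=-1$ is forced; the paper simply asserts $\ket{w_{\cal A}}=\ket{s}+\ket{\leftarrow,s}-\ket{t}-\ket{\rightarrow,t}+\ket{\hat w_{\cal A}}$ without spelling out the projection argument.

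One caution on the converse: your parenthetical ``the only choice for which $\ket{w_{\cal A}}$ can be a negative witness at all'' is circular as written, since in that direction you are not yet assuming $\ket{w_{\cal A}}$ is a negative witness. Strictly speaking, the converse as stated in the lemma fails for a general $\ket{w_{\cal A}}\in{\cal A}_G$ (e.g.\ add $2(\ket{s}+\ket{\leftarrow,s})$ to a genuine negative witness: same $\ket{\hat w_{\cal A}}$, but no longer a negative witness). The paper's proof shares this imprecision; the intended reading is that the correspondence is between cropped witnesses $\ket{\hat w_{\cal A}}$ and the specific lift $\ket{s}+\ket{\leftarrow,s}-\ket{t}-\ket{\rightarrow,t}+\ket{\hat w_{\cal A}}$, and with that reading your argument is complete.
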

\begin{proof}
    $\ket{w_{\cal A}}\in {\cal A}_G$ is a negative witness for $G$ if and only if $\ket{w_{\cal A}}=\ket{s}-\ket{t}+\ket{w_{\cal B}}$ for some $\ket{w_{\cal B}}\in {\cal B}_G$. Since the only vectors in ${\cal A}_G$ that overlap the boundary $\Xi_B$ are $\Xi_B^{\cal A}=\mathrm{span}\{\ket{s}+\ket{\leftarrow,s},\ket{t}+\ket{\rightarrow,t}\}$, it must be the case that 
    $$\ket{w_{\cal A}} = \ket{s}+\ket{\leftarrow,s} - \ket{t}-\ket{\rightarrow,t}+\underbrace{\Pi_E\ket{w_{\cal A}}}_{=\ket{\hat{w}_{\cal A}}}.$$
    Thus, $\ket{w_{\cal A}}$ is a negative witness if and only if 
    \begin{align*}
        &\ket{w_{\cal A}}-\ket{s}+\ket{t} \in {\cal B}_G\\
        \Leftrightarrow \;\; & \ket{\hat{w}_{\cal A}}+\ket{\leftarrow,s}-\ket{\rightarrow,t} \in {\cal B}_G.\qedhere
    \end{align*}
\end{proof}

In the next section, we describe what positive and negative witnesses look like for switching networks, and compare them with the witnesses for quantum walks on a graph (see \examp{multiQW-QW}).

Finally, we can define the complexity of a subspace graph by:
$${\cal C}(G)\colonequals \sqrt{{W}_+(G){W}_-(G)}.$$
In the case of subspaces graphs with canonical boundary, where we have defined cropped witnesses, we will often use
$$\hat{\cal C}(G)\colonequals \sqrt{\hat{W}_+(G)\hat{W}_-(G)},$$
which differs from ${\cal C}(G)$ by additive constants. 
Then it follows from \thm{phase-est-fwk} and \clm{basis-implement-U} that:
\begin{theorem}\label{thm:subspace-graph-to-alg}
    Let $G$ be a subspace graph that computes $f:\{0,1\}^n\rightarrow\{0,1\}$ with respect to $\ket{\psi_0}$, with working bases that can be generated in time $T$, and such that $\hat{W}_+(G)=O(1)$. Then there exists a quantum algorithm that decides $f$ in time complexity ${O}\left(T\sqrt{\hat{W}_-(G)}\right)$ and space $O(\log\dim H_G + \log \hat{W}_-(G))$.
\end{theorem}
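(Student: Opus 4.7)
The plan is to invoke \thm{phase-est-fwk} applied to the phase estimation algorithm $(H_G,\ket{\psi_0},\Psi_{{\cal A}},\Psi_{{\cal B}})$ that $G$ defines, checking each of its ingredients in turn: the cost ${\sf A}$ of implementing $U_{{\cal AB}}=(2\Pi_{{\cal A}}-I)(2\Pi_{{\cal B}}-I)$, the cost ${\sf S}$ of preparing $\ket{\psi_0}$, the constant $c_+$ controlling the positive case, and the bound ${\cal C}_-$ in the negative case. Because $G$ computes $f$ with respect to $\ket{\psi_0}=\frac{1}{\sqrt{2}}(\ket{s}-\ket{t})$, distinguishing the positive and negative conditions of \thm{phase-est-fwk} is precisely deciding $f$.

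The first two ingredients are essentially free. By \clm{basis-implement-U}, since the working bases $\Psi_{{\cal A}}$ and $\Psi_{{\cal B}}$ are each generated in time $T$, the unitary $U_{{\cal AB}}$ can be implemented in ${\sf A}=O(T)$. The initial state $\ket{\psi_0}$ is a balanced superposition of two standard basis vectors $\ket{s}\in\Xi_s$ and $\ket{t}\in\Xi_t$ of the boundary summand of $H_G$, so it can be prepared in ${\sf S}=O(1)$ time, which will be absorbed into the final bound.

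For the positive condition, I would take $\ket{w}$ to be a minimum-norm positive witness as in \defin{pos-graph-wit}. Since $\ket{s},\ket{\leftarrow,s},\ket{t},\ket{\rightarrow,t}$, and $\Pi_E\ket{w}$ lie in pairwise orthogonal summands of $H_G$, a direct calculation gives $\braket{\psi_0}{w}=\frac{1}{\sqrt{2}}(1-(-1))=\sqrt{2}$ and $\norm{\ket{w}}^2=4+\norm{\ket{\hat w}}^2\leq 4+\hat W_+(G)=O(1)$, so $|\braket{w}{\psi_0}|^2/\norm{\ket{w}}^2=\Omega(1)$, and we may take some $c_+\in[1,50]$. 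For the negative condition, \defin{neg-graph-wit} and \defin{neg-witness} differ only by a scalar: dividing a negative witness from \defin{neg-graph-wit} by $\sqrt{2}$ (since it uses $\ket{s}-\ket{t}$ instead of $\ket{\psi_0}$) yields a negative witness in the sense of \defin{neg-witness} with squared norm $\frac{1}{2}(4+\norm{\ket{\hat w_{{\cal A}}}}^2)$, so choosing the cropped negative witness that realizes $\hat W_-(G)$ gives ${\cal C}_-=\frac{1}{2}(4+\hat W_-(G))=O(\hat W_-(G))$.

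Substituting these into \thm{phase-est-fwk} yields an algorithm running in time $O({\sf S}+\sqrt{{\cal C}_-}\,{\sf A})=O(T\sqrt{\hat W_-(G)})$ and space $O(\log\dim H_G + \log {\cal C}_-)=O(\log\dim H_G+\log \hat W_-(G))$, as claimed. The only real obstacle is the bookkeeping: carefully tracking the $1/\sqrt{2}$ in $\ket{\psi_0}$, the additive constant $4$ that is dropped in passing to cropped witnesses, the $\sqrt{2}$ rescaling relating the modified and unmodified negative witness definitions, and verifying that the resulting $\Omega(1)$ lower bound on $|\braket{w}{\psi_0}|^2/\norm{\ket{w}}^2$ falls in the range $[1/50,1]$ required by \thm{phase-est-fwk}.
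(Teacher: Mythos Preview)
Your proposal is correct and follows essentially the same approach as the paper, which simply states that the theorem follows from \thm{phase-est-fwk} and \clm{basis-implement-U}. Your version is a careful expansion of that one-line justification, with the bookkeeping of the $\sqrt{2}$ factors and the additive constant $4$ handled correctly.
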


Given any subspace graph $G$, we can apply the linear map 
$$\Pi_B+\frac{1}{\hat{W}_+(G)}\Pi_E$$
to $H_G$ to get a new subspace graph $G'$, which will still have canonical boundary if $G$ did, and such that
$$\hat{W}_+(G')=1
\quad\mbox{and}\quad
\hat{W}_-(G')=\hat{W}_+(G)\hat{W}_-(G).$$
This justifies thinking of $\hat{\cal C}(G)$ as a complexity.
Such a scaling preserves the basis properties in \defin{composable-basis}, and simply scales $\r$. We can see this formally in \cor{scaling}.

\subsection{Example: Switching Networks}\label{sec:switching-networks}

We now discuss switching networks (\defin{switching-network}) in more detail, and give several examples. In the classical model of switching networks, a switching network is a graph with a variable $\varphi(e)$ associated with each edge $e$, and two terminals $s,t\in V$. A switching network computes a function $f:\{0,1\}^E\rightarrow\{0,1\}$ if for any $x\in\{0,1\}^E$, $f(x)=1$ if and only if $s$ and $t$ are connected in $G(x)$, the subgraph consisting only of edges $e$ such that $x_e=1$ if $\varphi(e)$ is a positive literal, and $x_e=0$ otherwise -- that is, edges labelled by literals that are true when the variables are set according to $x$. Here, we let ``switching network'' refer to a special kind of subspace graph, but this subspace graph implements the switching network, in the sense that the algorithm referred to in \thm{subspace-graph-to-alg} decides if $s$ and $t$ are connected in $G(x)$.

We start by discussing some properties of switching networks. A switching network always has:
\begin{multline}\label{eq:switching-network-cal-A}
{\cal A}_G=\mathrm{span}\left\{\ket{a_s}\colonequals\frac{1}{\sqrt{2}}(\ket{s}+\ket{\leftarrow,s}),\ket{a_t}\colonequals\frac{1}{\sqrt{2}}(\ket{t}+\ket{\rightarrow,t})\right\}\\
\cup\left\{\ket{a_{e}}\colonequals\frac{1}{\sqrt{2}}(\ket{\rightarrow,e}-(-1)^{\varphi(e)}\ket{\leftarrow,e}):e\in E\right\},
\end{multline}
so there is a natural choice of working basis $\Psi_{\cal A}$, and we immediately get the following:
\begin{lemma}\label{lem:cal-A-switching}
    If $G$ is a switching network and $\Psi_{\cal A}$ is the basis in \eq{switching-network-cal-A}, then $\Psi_{\cal A}$ satisfies the conditions of \defin{composable-basis}, and can be generated in one query to the unitary $\ket{e}\mapsto (-1)^{\varphi(e)}\ket{e}$ and $O(1)$ additional operations. 
\end{lemma}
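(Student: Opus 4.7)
The plan is to verify the two parts of the claim separately: first, that $\Psi_{\cal A}$ satisfies the parts of \defin{composable-basis} that concern the ${\cal A}$-basis, and second, that it is efficiently generated in the sense of \defin{working-basis}.

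For the structural part, only condition~1 of \defin{composable-basis} concerns $\Psi_{\cal A}$ (the remaining conditions constrain $\Psi_{\cal B}$). I would simply observe that the basis of \eq{switching-network-cal-A} is naturally indexed by $E \cup B = E \cup \{s,t\}$: set $\Psi_{\cal A}(e)=\{\ket{a_e}\}$ for each $e\in E$, and $\Psi_{\cal A}(s)=\{\ket{a_s}\}$, $\Psi_{\cal A}(t)=\{\ket{a_t}\}$ on the boundary. By \defin{switch-edge}, $\Xi_e^{\cal A}$ is the one-dimensional span of $\ket{\rightarrow,e}-(-1)^{\varphi(e)}\ket{\leftarrow,e}$, and by \defin{canonical-boundary}, $\Xi_s^{\cal A}$ and $\Xi_t^{\cal A}$ are the one-dimensional spans of $\ket{s}+\ket{\leftarrow,s}$ and $\ket{\rightarrow,t}+\ket{t}$, respectively. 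Each of $\ket{a_e},\ket{a_s},\ket{a_t}$ is a unit vector in the corresponding $\Xi_e^{\cal A}$, so each singleton is an orthonormal basis for its space, giving condition~1.

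For the generation, I need to exhibit, at $O(1)$ cost plus a single oracle query, (a) a reflection around $\mathrm{span}\{\ket{\ell}:\ell \in E\cup\{s,t\}\}$ and (b) a map $\ket{\ell}\mapsto\ket{b_\ell}$ realizing the basis. The reflection in (a) is input-independent and depends only on the label set, so it costs $O(1)$ in the natural encoding. The boundary maps $\ket{s}\mapsto\ket{a_s}$ and $\ket{t}\mapsto\ket{a_t}$ are input-independent unitaries on $O(1)$ qubits. For the edge labels, I would identify $\ket{\rightarrow,e}$ and $\ket{\leftarrow,e}$ with $\ket{e}\ket{\rightarrow}$ and $\ket{e}\ket{\leftarrow}$ respectively, and take the label $\ket{e}$ for $\Psi_{\cal A}(e)$ to be $\ket{e}\ket{0}$ for a fresh direction-register value $0$. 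Starting from $\ket{e}\ket{0}$, I would first apply a fixed rotation on the direction register to produce $\frac{1}{\sqrt{2}}\ket{e}(\ket{\rightarrow}-\ket{\leftarrow})$, and then make one query to the phase oracle $\ket{e}\mapsto(-1)^{\varphi(e)}\ket{e}$ controlled on the direction register being $\ket{\leftarrow}$. The resulting state is $\frac{1}{\sqrt{2}}\ket{e}\bigl(\ket{\rightarrow}-(-1)^{\varphi(e)}\ket{\leftarrow}\bigr)=\ket{a_e}$, as required.

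I do not expect any real obstacle: the argument is purely bookkeeping plus a single controlled phase query. The main care is matching the indexing of $\Psi_{\cal A}$ to the direct-sum decomposition $\Xi_{E\cup B}^{\cal A}=\bigoplus_{e\in E\cup B}\Xi_e^{\cal A}$, and verifying that one controlled call to the phase oracle is enough to produce the factor $(-1)^{\varphi(e)}$.
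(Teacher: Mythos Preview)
Your proposal is correct and matches what the paper intends: the paper states this lemma without proof, treating it as immediate from the explicit form of $\Psi_{\cal A}$ in \eq{switching-network-cal-A}, and your argument is exactly the natural unpacking of that claim. The only thing to note is that the paper gives no proof at all, so there is nothing further to compare against.
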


Turning to ${\cal B}_G$, a switching network always has:
\begin{multline}\label{eq:switching-network-cal-B}
{\cal B}_G=\mathrm{span}\{\ket{\psi_\star(u)}:u\in V\setminus\{s,t\}\}\cup\{\ket{\leftarrow,s}+\ket{\psi_\star(s)},\ket{\rightarrow,t}+\ket{\psi_\star(t)}\}\\
+\mathrm{span}\left\{\frac{1}{\sqrt{2}}(\ket{\rightarrow,e}+\ket{\leftarrow,e}):e\in E\right\}.
\end{multline}
The working basis for ${\cal B}_G$ is less obvious, since the first and second space are not orthogonal, but one choice is given by the following, which we state for intuition (the proof is a simple exercise).
\begin{lemma}
    If $G$ is a switching network,
    $$\Psi_{\cal B}=\Psi_{\cal B}^-\cup\left\{\ket{b_{e}}\colonequals\frac{1}{\sqrt{2}}(\ket{\rightarrow,e}+\ket{\leftarrow,e}):e\in E\right\}$$
    is an orthonormal basis for ${\cal B}_G$ whenever $\Psi_{\cal B}^-$ is an orthonormal basis for 
    $${\cal B}^-=\mathrm{span}\{\ket{\psi^-_\star(u)}:u\in V\setminus\{s,t\}\}\cup\{\ket{\leftarrow,s}+\ket{\psi^-_\star(s)},\ket{\rightarrow,t}+\ket{\psi^-_\star(t)}\},$$
    where
    $$\ket{\psi_\star^-(u)}=\sum_{e\in E^\rightarrow(u)}\frac{\sqrt{\w_e}}{2}(\ket{\rightarrow,e}-\ket{\leftarrow,e})+\sum_{e\in E^\leftarrow(u)}\frac{\sqrt{\w_e}}{2}(\ket{\leftarrow,e}-\ket{\rightarrow,e}).$$
\end{lemma}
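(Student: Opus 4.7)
The plan is to verify the three standard conditions: every vector of $\Psi_{\cal B}$ lies in ${\cal B}_G$, the set is orthonormal, and it spans ${\cal B}_G$. The backbone of all three checks is a single algebraic decomposition of the quantum walk star state into its ``symmetric'' and ``antisymmetric'' parts on each edge. Specifically, using $\ket{\rightarrow,e}=\tfrac{1}{\sqrt{2}}\ket{b_e}+\tfrac12(\ket{\rightarrow,e}-\ket{\leftarrow,e})$ and $\ket{\leftarrow,e}=\tfrac{1}{\sqrt{2}}\ket{b_e}-\tfrac12(\ket{\rightarrow,e}-\ket{\leftarrow,e})$, I would derive the identity
$$\ket{\psi_\star(u)}=\sum_{e\in E(u)}\sqrt{\w_e/2}\,\ket{b_e}+\ket{\psi^-_\star(u)},$$
valid for every $u\in V$, and the analogous identities for the boundary vectors $\ket{\leftarrow,s}+\ket{\psi_\star(s)}$ and $\ket{\rightarrow,t}+\ket{\psi_\star(t)}$ (which just pick up the extra boundary term unchanged, since it is orthogonal to every $\Xi_e$ with $e\in E$).

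With this identity in hand, the containment $\Psi_{\cal B}\subseteq {\cal B}_G$ is immediate: $\ket{b_e}\in {\cal B}_G$ by the definition of ${\cal B}_G$ for a switching network (see \eq{switching-network-cal-B}), and each generator of ${\cal B}^-$ can be written as a generator of ${\cal B}_G$ minus a linear combination of $\ket{b_e}$'s. For orthonormality, I would observe that each $\ket{b_e}$ has unit norm (since $\ket{\rightarrow,e},\ket{\leftarrow,e}$ are orthonormal in $\Xi_e$) and the $\ket{b_e}$'s are mutually orthogonal because they live in pairwise orthogonal edge spaces $\Xi_e$. The key orthogonality check is that each $\ket{b_e}$ is orthogonal to every vector in ${\cal B}^-$: the vector $\ket{\rightarrow,e}-\ket{\leftarrow,e}$ that appears in any $\ket{\psi^-_\star(u)}$ is orthogonal to $\ket{\rightarrow,e}+\ket{\leftarrow,e}=\sqrt 2\,\ket{b_e}$, and the boundary terms $\ket{\leftarrow,s},\ket{\rightarrow,t}$ live outside $\Xi_E$ and are therefore orthogonal to $\ket{b_e}$ as well. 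Combined with the assumption that $\Psi_{\cal B}^-$ is itself orthonormal, this gives orthonormality of the full set.

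For spanning, I would use the decomposition in both directions: (i) $\mathrm{span}(\Psi_{\cal B})\subseteq {\cal B}_G$ follows from the containment already shown; (ii) conversely, each generator $\ket{\psi_\star(u)}$, $\ket{\leftarrow,s}+\ket{\psi_\star(s)}$, $\ket{\rightarrow,t}+\ket{\psi_\star(t)}$ of ${\cal B}_G$ equals its $\ket{\psi^-_\star(\cdot)}$-counterpart (which lies in $\mathrm{span}(\Psi_{\cal B}^-)$) plus a combination of $\ket{b_e}$'s, so it lies in $\mathrm{span}(\Psi_{\cal B})$; and the remaining generators $\tfrac{1}{\sqrt2}(\ket{\rightarrow,e}+\ket{\leftarrow,e})=\ket{b_e}$ are trivially present.

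The main obstacle is really only bookkeeping: making sure the $s$ and $t$ boundary vectors are handled consistently, since ${\cal B}^-$ is defined with slightly different generators there (the boundary tokens $\ket{\leftarrow,s}$ and $\ket{\rightarrow,t}$ are attached to $\ket{\psi^-_\star(s)}$ and $\ket{\psi^-_\star(t)}$, but not to $\ket{b_e}$), and to be careful that the decomposition of $\ket{\psi_\star(u)}$ into $\ket{\psi^-_\star(u)}$ and a sum over $\ket{b_e}$ is orthogonal, so no hidden cross terms spoil the orthonormality argument.
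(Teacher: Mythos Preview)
Your proposal is correct and is exactly the elementary verification the paper has in mind; the paper does not actually give a proof of this lemma, merely stating that ``the proof is a simple exercise.'' Your key identity $\ket{\psi_\star(u)}=\sum_{e\in E(u)}\sqrt{\w_e/2}\,\ket{b_e}+\ket{\psi^-_\star(u)}$ is the natural way to carry out that exercise, and your handling of the boundary vectors and the three checks (containment, orthonormality, spanning) is complete.
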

For intuition, the space $\mathrm{\cal B}^-$ is the \emph{cut space} of $G$, which is the span of all states that are (correctly weighted) superpositions of ``edges'' -- where an edge is represented by $\ket{\rightarrow,e}-\ket{\leftarrow,e}$ -- that for a cut set (their removal leaves $G$ disconnected).

We give some intuition on the positive and negative witnesses of a switching network, which can also, to some extent, be used for intuition about the positive and negative witnesses for the more general subspace graphs considered in this paper. One can show that the orthogonal complement of ${\cal B}_G$ is the span of all unit $st$-flows (\defin{resistance}) of $G$, by which we mean, precisely, all states of the form
\begin{equation}\label{eq:flow-witness}
\ket{w}=\ket{s}-\ket{\leftarrow,s}+\underbrace{\sum_{e\in E}\frac{\theta(e)}{\sqrt{\w_e}}(\ket{\rightarrow,e}-\ket{\leftarrow,e})}_{\ket{\hat{w}}}+\ket{\rightarrow,t}-\ket{t},
\end{equation}
where $\theta$ is a unit $st$-flow. For every $e\in E$ such that $\varphi(e)=0$, we add $\ket{\rightarrow,e}-\ket{\leftarrow,e}$ to ${\cal A}_G+{\cal B}_G$, which adds the constraint on $\ket{w}\in {\cal A}_G^\bot\cap {\cal B}_G^\bot$ that $\theta(e)=0$. Thus, $\theta$ must be a unit flow on the graph $G(x)$. 

A negative witness exists if and only if there is no unit $st$-flow on $G(x)$, meaning $s$ and $t$ are not connected. In that case, there is always an $st$-cut-set of edges missing from $G(x)$, $F$. It is a simple exercise to show that 
$$\ket{w_{\cal A}}=\ket{s}+\ket{\leftarrow,s}+\sum_{e\in F}\sqrt{\w_e}(\ket{\rightarrow,e}-\ket{\leftarrow,e})-\ket{\rightarrow,t}-\ket{t}$$
is a cropped negative witness. However, not all witnesses have this form -- superpositions of cuts are also negative witnesses. Ref.~\cite{jarret2018connectivity} gives a tight analysis of negative witness for switching networks. 

We briefly remark here on the difference between switching networks and quantum walks. In a quantum walk, the positive witnesses are also of the form in \eq{flow-witness}, except that they are not restricted to a subgraph $G(x)$, but rather, can be on all of $G$. Negative witnesses for quantum walks are somewhat different, although they are also derived from $s$-$M$-cuts; they instead involve summing over all edges in the component containing $s$, which should not be connected to $M$ in the negative case. This is because in switching networks, ${\cal B}_G$ contains more, since $\Xi_E^{\cal B}$ is non-trivial, which makes finding a basis for ${\cal B}_G$ more difficult, but makes negative witnesses smaller.  

Finally, we give two examples of switching networks -- for computing the OR of $d$ bits (\sec{OR-gadget}) and the AND of $d$ bits (\sec{AND-gadget}). These examples will also be important building blocks for our later results.
Switching networks for Boolean formulas date back to work of Shannon~\cite{shannon1938switchingNetworks,shannon1949switchingNetworks}, and their analysis as subspace graphs was already implicit in the span program constructions of \cite{jeffery2017stConnFormula}. While \cite{jeffery2017stConnFormula} also analyzes the time complexity of evaluating switching networks, the reflection around ${\cal B}_G$ is done via a quantum walk on $G$, which results in a dependence on the relaxation time of $G$. In our examples, we improve on this by giving an orthonormal basis for ${\cal B}_G$ that can be used to reflect around it directly. Such a basis that is also efficient to generate might not be available for arbitrary switching networks, but in the case of series-parallel graphs, which correspond exactly to switching networks for Boolean formulas, our work implies that a good basis exists. 

\subsubsection{Switching Network for OR}\label{sec:OR-gadget}

In this section, we describe a switching network (\defin{switching-network}) that computes the OR of $d$ Boolean variables. This serves as a simple example, and will also be a building block in our divide-and-conquer application in \sec{D-and-C}. Formally, we prove the following.
\begin{lemma}\label{lem:OR-gadget}
    For any $d\geq 1$, and positive weights $\{\w_i\}_{i\in [d]}$, there is a switching network $G_{\textsc{or},d}$ that computes $\bigvee_{i=1}^d\varphi(e_i)$ with $\dim H_{G_{\textsc{or},d}}=2d+4$ such that:
    \begin{enumerate}
        \item $G_{\textsc{or},d}$ has $st$-composable working bases, with scaling factor $\r=2\sum_{i=1}^d\w_i$, that can be generated in $O(\log d)$ time, assuming the state proportional to $\sum_{i=1}^d\sqrt{\w_i}\ket{i}$ can be generated in time $O(\log d)$;
        \item if $\varphi(e_i)=1$ for some $i\in [d]$, $G_{\textsc{or},d}$ has cropped positive witness $\ket{\hat{w}}=\frac{1}{\sqrt{\w_i}}(\ket{\rightarrow,i}-\ket{\leftarrow,i})$; and
        \item if $\varphi(e_i)=0$ for all $i\in [d]$, $G_{\textsc{or},d}$ has cropped negative witness $\ket{\hat{w}_{\cal A}}=\sum_{i=1}^d\sqrt{\w_i}(\ket{\rightarrow,i}-\ket{\leftarrow,i})$.
    \end{enumerate}
\end{lemma}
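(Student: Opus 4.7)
The plan is to take $G_{\textsc{or},d}$ to be the parallel-edge switching network: $V=\{s,t\}$ with $d$ edges $e_1,\dots,e_d$ each oriented from $s$ to $t$, carrying weights $\w_i$ and switch variables $\varphi(e_i)$, and equipped with canonical $st$-boundary (\defin{canonical-boundary}), switch edges (\defin{switch-edge}), and simple vertices (\defin{simple-vertex}) at $s$ and $t$. The dimension $2d+4$ is then immediate from the two $2$-dimensional boundary spaces and the $d$ $2$-dimensional edge spaces. For item~(2), with some $\varphi(e_i)=1$ fixed, I would form $\ket{w}=\ket{s}-\ket{\leftarrow,s}+\frac{1}{\sqrt{\w_i}}(\ket{\rightarrow,i}-\ket{\leftarrow,i})+\ket{\rightarrow,t}-\ket{t}$ and verify orthogonality to each element of ${\cal A}_G$ and ${\cal B}_G$; the only non-routine checks are against $\ket{a_{e_i}}\propto\ket{\rightarrow,i}+\ket{\leftarrow,i}$ (which vanishes because $\varphi(e_i)=1$ flips the sign) and against the vector $\ket{b_1}$ introduced below, where the boundary and edge inner products cancel. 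For item~(3), with all $\varphi(e_i)=0$, the candidate $\ket{\hat{w}_{\cal A}}=\sum_i\sqrt{\w_i}(\ket{\rightarrow,i}-\ket{\leftarrow,i})$ equals $\sum_i\sqrt{2\w_i}\ket{a_{e_i}}$, placing the lifted $\ket{w_{\cal A}}$ in ${\cal A}_G$, and $\ket{\hat w_{\cal A}}+\ket{\leftarrow,s}-\ket{\rightarrow,t}$ is proportional to the basis element $\ket{b_1}\in{\cal B}_G$.

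The main obstacle is item~(1): exhibiting composable working bases. The ${\cal A}_G$ basis is handed to us by \lem{cal-A-switching}. For ${\cal B}_G$ I propose
$$\Psi_{\cal B}=\bigl\{\ket{b_{e_i}}=\tfrac{1}{\sqrt{2}}(\ket{\rightarrow,i}+\ket{\leftarrow,i})\bigr\}_{i=1}^d\cup\{\ket{b_0},\ket{b_1}\},$$
with $\ket{b_0}=\tfrac{1}{\sqrt{2}}(\ket{\leftarrow,s}+\ket{\rightarrow,t})$ and $\ket{b_1}=\tfrac{1}{\sqrt{2+\r}}\bigl(\ket{\leftarrow,s}-\ket{\rightarrow,t}+\sum_i\sqrt{\w_i}(\ket{\rightarrow,i}-\ket{\leftarrow,i})\bigr)$, the choice $\r=2\sum_i\w_i$ forced by normalization. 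Orthonormality is a quick calculation. Containment $\Psi_{\cal B}\subseteq{\cal B}_G$ follows by observing that each $\ket{b_{e_i}}$ spans $\Xi_{e_i}^{\cal B}$, $\ket{b_0}$ is proportional to the ${\cal V}_B$-generator $\ket{\leftarrow,s}+\ket{\rightarrow,t}$, and $\sqrt{2+\r}\ket{b_1}=(\ket{\leftarrow,s}+\ket{\psi_\star(s)})-(\ket{\rightarrow,t}+\ket{\psi_\star(t)})\in{\cal V}_s+{\cal V}_t$. The spanning claim reduces to a dimension match: $|\Psi_{\cal B}|=d+2$, and I would verify $\dim{\cal B}_G=d+2$ via the redundancy identity
$$(\ket{\leftarrow,s}+\ket{\psi_\star(s)})+(\ket{\rightarrow,t}+\ket{\psi_\star(t)})-(\ket{\leftarrow,s}+\ket{\rightarrow,t})=\sum_i\sqrt{\w_i}(\ket{\rightarrow,i}+\ket{\leftarrow,i})\in\Xi_E^{\cal B},$$
which exhibits the ${\cal V}_B$-generator as a combination of the others. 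Conditions~1--5 of \defin{composable-basis} then follow by inspection, with $\ket{\bar{b}_1}=\tfrac{1}{\sqrt{\sum_i\w_i}}\sum_i\sqrt{\w_i/2}(\ket{\rightarrow,i}-\ket{\leftarrow,i})$.

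Finally, for the $O(\log d)$ generation time, I would encode the $2d+4$ dimensions with $O(\log d)$ qubits: an edge register for $[d]$, a direction qubit for $\{\rightarrow,\leftarrow\}$, and a few boundary-label values. The maps realizing $\Psi_{\cal A}$ and $\{\ket{b_{e_i}},\ket{b_0}\}$ reduce to $O(\log d)$-controlled single-qubit rotations and one query to $\ket{i}\mapsto(-1)^{\varphi(e_i)}\ket{i}$. The only non-trivial step is preparing $\ket{b_1}$: by hypothesis I can prepare $\tfrac{1}{\sqrt{\sum_i\w_i}}\sum_i\sqrt{\w_i}\ket{i}$ in $O(\log d)$ time, then a direction-qubit Hadamard-with-sign produces $\ket{\bar{b}_1}$, and a constant-depth rotation of a sector qubit with angle determined by $\r$ combines it with the boundary superposition $\tfrac{1}{\sqrt{2}}(\ket{\leftarrow,s}-\ket{\rightarrow,t})$ in the correct proportions. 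The label-set reflections are standard in this encoding, yielding total generation time $O(\log d)$.
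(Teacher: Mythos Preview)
Your proposal is correct and follows essentially the same approach as the paper: the same parallel-edge graph, the same working basis $\Psi_{\cal B}=\{\ket{b_0},\ket{b_1}\}\cup\{\ket{b_{e_i}}\}_i$ with the same $\ket{b_1}$ and $\r$, and the same positive and negative witnesses. The only cosmetic differences are that the paper verifies $\mathrm{span}\,\Psi_{\cal B}={\cal B}_G$ by checking ${\cal V}_s,{\cal V}_t\subset\mathrm{span}\,\Psi_{\cal B}$ explicitly rather than by your dimension count, and that in this switching network there is no separate ${\cal V}_B$ space (the vector $\ket{\leftarrow,s}+\ket{\rightarrow,t}$ lies in ${\cal V}_s+{\cal V}_t+\Xi_E^{\cal B}$, exactly as your redundancy identity shows).
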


\begin{figure}
\centering
    \begin{tikzpicture}
    \draw[dashed] (-1,0)--(0,0);
    \draw plot [smooth] coordinates{(0,0) (.75,.7) (1.5,1) (2.25,.7) (3,0)}; 
    \draw plot [smooth] coordinates{(0,0) (.75,.5) (1.5,.7) (2.25,.5) (3,0)}; 
    \node at (1.5,-.3) {$\vdots$};
    \draw plot [smooth] coordinates{(0,0) (.75,-.7) (1.5,-1) (2.25,-.7) (3,0)}; 
    \draw[dashed] (3,0)--(4,0);

        \filldraw (0,0) circle (.075);  
        \filldraw (3,0) circle (.075); 
        \node at (0,.3) {$s$};
            \node at (1.5,1.25) {$e_1$};
            \node at (1.5,.4) {$e_2$};
            \node at (1.5,-1.25) {$e_d$};
            \node at (3,.3) {$t$};
    \end{tikzpicture}
    \caption{The graph $G_{\textsc{or}}$. The dashed lines represent dangling boundary ``edges''.}\label{fig:or-graph}
\end{figure}

We remark that if $\w_i=1$ for all $i$, then \lem{OR-gadget} implies $\hat{W}_+(G_{\textsc{or},d})=2$ and $\hat{W}_-(G_{\textsc{or},d})=2d$, so by \thm{subspace-graph-to-alg}, there is a quantum algorithm for evaluating $d$-bit OR with time complexity $\widetilde{O}(\sqrt{d})$, which is optimal. This is a good sanity check, but we will mostly be interested in this construction as a building block, rather than in its own right.

Let $G=G_{\textsc{or},d}$ be defined, as shown in \fig{or-graph} by:
$$V=\{s,t\}
\mbox{ and }
E=\{e_i:i\in [d]\}$$
where each $e_i$ has endpoints $s$ and $t$, so $E(s)=E^\rightarrow(s)=E$ and $E(t)=E^\leftarrow(t)=E$. Since $G$ is a switching network, it has boundary
$B=V=\{s,t\}.$
We will let the graph be weighted, with weights $\w_{e_i}=\w_i$. The only reason to let these vary in $i$ is for later when we replace an edge with a gadget by composition, but for the sake of intuition, the reader may wish to imagine $\w_i=1$ for all $i$.
Since $G$ is a switching network, every edge is a switch, which fixes the following spaces (we simplify notation by using $i$ to label the edge $e_i$):
\begin{align*}
\forall i\in[d],\;&\Xi_{e_i}=\mathrm{span}\{\ket{\rightarrow,i},\ket{\leftarrow,i}\},\\
&\Xi_{e_i}^{\cal A}=\mathrm{span}\{\ket{\rightarrow,i}-(-1)^{\varphi(e_i)}\ket{\leftarrow,i}\},\mbox{ and }\Xi_{e_i}^{\cal B}=\mathrm{span}\{\ket{\rightarrow,i}+\ket{\leftarrow,i}\}.
\end{align*}
Furthermore, as a switching network has canonical $st$-boundary, the following spaces are fixed:
\begin{align*}
&\Xi_{s}=\mathrm{span}\{\ket{s},\ket{\leftarrow,s}\},\quad
\Xi_{s}^{\cal A}=\mathrm{span}\{\ket{s}+\ket{\leftarrow,s}\},\mbox{ and }\Xi_{s}^{\cal B}=\{0\}\\
&\Xi_{t}=\mathrm{span}\{\ket{\rightarrow,t},\ket{t}\},\quad
\Xi_{t}^{\cal A}=\mathrm{span}\{\ket{\rightarrow,t}+\ket{t}\},\mbox{ and }\Xi_{t}^{\cal B}=\{0\}.
\end{align*}
Finally, since all vertices are simple, the following spaces are fixed:
\begin{align*}&{\cal V}_{s} = \mathrm{span}\Bigg\{\ket{\leftarrow,s}+\underbrace{\sum_{i=1}^d\sqrt{\w_i}\ket{\rightarrow,i}}_{\ket{\psi_\star(s)}}\Bigg\} \mbox{ and }
{\cal V}_{t} = \mathrm{span}\Bigg\{\underbrace{\sum_{i=1}^d\sqrt{\w_i}\ket{\leftarrow,i}}_{\ket{\psi_\star(t)}}+\ket{\rightarrow,t}\Bigg\}.
\end{align*}

Then we have:
\begin{equation}\label{eq:or-cal-A-B}
    \begin{split}
        {\cal A}_G &= \bigoplus_{e\in E\cup B}\Xi_e^{\cal A} = \mathrm{span}\{\ket{\rightarrow,i}-(-1)^{\varphi(e_i)}\ket{\leftarrow,i}:i\in [d]\}\cup\{\ket{s}+\ket{\leftarrow,s},\ket{t}+\ket{\rightarrow,t}\}\\
        \mbox{and }{\cal B}_G &= {\cal V}_s\oplus{\cal V}_t+\bigoplus_{e\in E}\Xi_e^{\cal B}
={\cal V}_s\oplus{\cal V}_t+\mathrm{span}\{\ket{\rightarrow,i}+\ket{\leftarrow,i}:i\in [d]\}.
    \end{split}
\end{equation}
By \lem{cal-A-switching}, since we assume we can query the values $\varphi(e)$ in unit cost, there is a working basis for $\Psi_{\cal A}$ satisfying the conditions of \defin{composable-basis} that can be generated in unit time. To complete the proof of the the first item of \lem{OR-gadget}, we prove the following.
\begin{lemma}
Let
\begin{multline*}
\Psi_{\cal B}= \left\{\ket{b_0}=\frac{1}{\sqrt{2}}(\ket{\leftarrow,s}+\ket{\rightarrow,t}),\ket{b_1}=\frac{\ket{\leftarrow,s}-\ket{\rightarrow,t}+\sqrt{\r}\ket{\bar{b}_1}}{\sqrt{2+\r}}\right\}\\
\cup\left\{\ket{b_{e_i}}=\frac{1}{\sqrt{2}}(\ket{\rightarrow,i}+\ket{\leftarrow,i}):i\in[d]\right\},
\end{multline*}
where
$$\r=2\sum_{i=1}^d\w_i\quad\mbox{ and }\quad\ket{\bar{b}_1}=\frac{1}{\sqrt{\r}}\sum_{i=1}^d\sqrt{\w_i}(\ket{\rightarrow,i}-\ket{\leftarrow,i}).$$
Then $\Psi_{\cal B}$ is an orthonormal basis for ${\cal B}_G$, and it satisfies all the conditions of an $st$-composable basis in \defin{composable-basis}. Furthermore, as long as we can generate the state proportional to $\sum_{i=1}^d\sqrt{\w_i}\ket{i}$ in time $O(\log d)$, $\Psi_{\cal B}$ can be generated in time $O(\log d)$.
\end{lemma}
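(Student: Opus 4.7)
The plan is to verify the claim in four stages: containment in ${\cal B}_G$, orthonormality, the composability conditions, and generation complexity.

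First, I would show that each proposed basis vector lies in ${\cal B}_G$. For $\ket{b_{e_i}}$ this is immediate since it spans $\Xi_{e_i}^{\cal B}$. For $\ket{b_0}$, observe that summing the two generators of ${\cal V}_s$ and ${\cal V}_t$ and then subtracting $\sum_{i}\sqrt{\w_i}(\ket{\rightarrow,i}+\ket{\leftarrow,i}) \in \bigoplus_i\Xi_{e_i}^{\cal B}$ cancels the edge parts and leaves exactly $\ket{\leftarrow,s}+\ket{\rightarrow,t}$, i.e.\ $\sqrt{2}\ket{b_0}$. For $\ket{b_1}$, subtracting the ${\cal V}_t$ generator from the ${\cal V}_s$ generator gives $\ket{\leftarrow,s}-\ket{\rightarrow,t}+\sum_i\sqrt{\w_i}(\ket{\rightarrow,i}-\ket{\leftarrow,i})$, whose norm squared equals $2+\r$, so after normalization this is precisely $\ket{b_1}$ (and the residual $\ket{\bar b_1}$ is a unit vector orthogonal to $\ket{\leftarrow,s},\ket{\rightarrow,t}$, as required by item~4 of \defin{composable-basis}).

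Next, I would check orthonormality directly. Each $\ket{b_{e_i}}$ has unit norm and lives in the two-dimensional edge space $\Xi_{e_i}$, so the $\ket{b_{e_i}}$ are mutually orthogonal and orthogonal to $\ket{b_0}$ (which lives in the boundary space $\Xi_B$). The inner product of $\ket{b_1}$ with $\ket{b_{e_i}}$ vanishes because the component of $\ket{b_1}$ on $\Xi_{e_i}$ is proportional to $\ket{\rightarrow,i}-\ket{\leftarrow,i}$, which is orthogonal to $\ket{\rightarrow,i}+\ket{\leftarrow,i}$. Finally, $\braket{b_0}{b_1}$ reduces to $\frac{1}{\sqrt{2(2+\r)}}(\braket{\leftarrow,s}{\leftarrow,s}-\braket{\rightarrow,t}{\rightarrow,t}) = 0$, and $\|\ket{b_1}\|^2 = (1+1+\r)/(2+\r)=1$. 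Since ${\cal B}_G$ is generated by $d+2$ vectors (the two star-state generators plus the $d$ edge vectors), and we have exhibited $d+2$ orthonormal vectors contained in ${\cal B}_G$, a dimension count shows $\Psi_{\cal B}$ is an orthonormal basis for ${\cal B}_G$.

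The conditions of \defin{composable-basis} then follow by inspection: take $\Psi_{\cal B}(e_i)=\{\ket{b_{e_i}}\}$ (which spans $\Xi_{e_i}^{\cal B}$), take $\Psi_{\cal B}(s)=\Psi_{\cal B}(t)=\emptyset$ (since $\Xi_s^{\cal B}=\Xi_t^{\cal B}=\{0\}$), and let $\Psi_{\cal B}^-=\{\ket{b_0},\ket{b_1}\}$. Items 3--5 are read off directly from the construction: $\ket{b_0}$ has the required form, the residual $\ket{\bar b_1}$ is supported on $\bigoplus_i\Xi_{e_i}$ and thus orthogonal to $\ket{\leftarrow,s},\ket{\rightarrow,t}$, and each $\ket{b_{e_i}}$ sits entirely in $\Xi_{e_i}$, hence is orthogonal to the $s,t$ boundary degrees of freedom.

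The only nontrivial step is the $O(\log d)$ generation bound, which I expect to be the main technical point. Indexing the basis elements by an $O(\log d)$-qubit label register (with two reserved labels $0,1$ for $\ket{b_0},\ket{b_1}$ and the remaining labels indexing $e_1,\dots,e_d$), the reflection around the span of labels is a straightforward reflection on a computational-basis subset. For the map $\ket{\ell}\mapsto\ket{b_\ell}$: on label $e_i$, we prepare $\frac{1}{\sqrt 2}(\ket{\rightarrow,i}+\ket{\leftarrow,i})$ by a single Hadamard-like gate on the direction qubit controlled on the edge label; on label $0$, we prepare $\frac{1}{\sqrt 2}(\ket{\leftarrow,s}+\ket{\rightarrow,t})$, which is a fixed two-dimensional state and thus uses $O(1)$ gates; and on label $1$, we must prepare $\ket{b_1}$, which is the only place weight-dependent state preparation is needed. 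Here I would invoke the hypothesis that $\sum_i\sqrt{\w_i}\ket{i}/\sqrt{\sum_i\w_i}$ can be prepared in time $O(\log d)$, together with $O(1)$ gates to inject the $\ket{\leftarrow,s}$ and $\ket{\rightarrow,t}$ components with the correct amplitudes dictated by the $\sqrt{2+\r}$ normalization, yielding the claimed $O(\log d)$ runtime overall.
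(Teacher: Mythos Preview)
Your proposal is correct and follows essentially the same approach as the paper: you verify $\Psi_{\cal B}\subset{\cal B}_G$ via the same decompositions of $\ket{b_0}$ and $\ket{b_1}$ in terms of ${\cal V}_s$, ${\cal V}_t$, and the $\Xi_{e_i}^{\cal B}$, check composability by inspection, and treat the generation cost exactly as the paper intends. The one minor difference is that where the paper shows the reverse containment ${\cal V}_s,{\cal V}_t\subset\mathrm{span}\,\Psi_{\cal B}$ by explicit decomposition, you instead establish orthonormality and invoke a dimension count ($\dim{\cal B}_G\le d+2$ from the generating set, $\ge d+2$ from the orthonormal family); both arguments are standard and equally valid.
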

\begin{proof}
Note that:
$$\ket{\leftarrow,s}+\ket{\rightarrow,t}=\underbrace{\ket{\leftarrow,s}+\sum_{i=1}^d\sqrt{\w_i}\ket{\rightarrow,i}}_{\in{\cal V}_s}+\underbrace{\sum_{i=1}^d\sqrt{\w_i}\ket{\leftarrow,i}+\ket{\rightarrow,t}}_{\in {\cal V}_t} -\sum_{i=1}^d\sqrt{\w_i}\underbrace{(\ket{\rightarrow,i}+\ket{\leftarrow,i})}_{\in\Xi_e^{\cal B}}$$
and
$$\ket{\leftarrow,s}-\ket{\rightarrow,t}+\sqrt{\r}\ket{\bar{b}_1} = 
\underbrace{\ket{\leftarrow,s}+\sum_{i=1}^d\sqrt{\w_i}\ket{\rightarrow,i}}_{\in{\cal V}_s}-\underbrace{\sum_{i=1}^d\sqrt{\w_i}\ket{\leftarrow,i}+\ket{\rightarrow,t}}_{\in {\cal V}_t}.$$
From there, it is simple to check that $\Psi_{\cal B}\subset {\cal B}_G$. It is also simple to see that $\Xi_e^{\cal B}\subset\mathrm{span}\Psi_{\cal B}$ for all $e$. Thus, we check ${\cal V}_s$ and ${\cal V}_t$. We have:
\begin{align*}
\ket{\leftarrow,s}+\sum_{i=1}^d\sqrt{\w_i}\ket{\rightarrow,i}
={}& \frac{1}{2}\left(\ket{\leftarrow,s}+\sum_{i=1}^d\sqrt{\w_i}\left(\ket{\rightarrow,i}-\ket{\leftarrow,i}\right)-\ket{\rightarrow,t}\right)\\
&\qquad+\frac{1}{2}\left(\ket{\leftarrow,s}+\ket{\rightarrow,t}\right)+\frac{1}{2}\sum_{i=1}^d\sqrt{\w_i}(\ket{\rightarrow,i}+\ket{\leftarrow,i}),
\end{align*}
showing that ${\cal V}_s\subset\mathrm{span}\Psi_{\cal B}$. A similar proof shows that ${\cal V}_t\subset\mathrm{span}\Psi_{\cal B}$.

It is clear by inspection that $\Psi_{\cal B}$ satisfies the properties of \defin{composable-basis}, so to complete the proof, it is a simple exercise to observe that the basis can be generated in $O(\log d)$ time. 
\end{proof}

\noindent To prove the second item of \lem{OR-gadget}, we show the following.
\begin{lemma}\label{lem:or-gadget-pos}
If $\varphi(e_i)=1$ for some $i$, then the following is a positive witness for $G$ (\defin{pos-graph-wit}):
$$\ket{w}=\ket{s}-\ket{\leftarrow,s}+\underbrace{\frac{1}{\sqrt{\w_i}}\left(\ket{\rightarrow,i}-\ket{\leftarrow,i}\right)}_{\ket{\hat{w}}}+\ket{\rightarrow,t}-\ket{t}.$$
\end{lemma}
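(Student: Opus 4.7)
The plan is to verify directly that the vector $\ket{w}$ meets every clause of \defin{pos-graph-wit}: (i) it has the required boundary decomposition $\ket{w}=\ket{s}-\ket{\leftarrow,s}+\Pi_E\ket{w}+\ket{\rightarrow,t}-\ket{t}$, which is immediate from the stated formula with $\ket{\hat{w}}=\frac{1}{\sqrt{\w_i}}(\ket{\rightarrow,i}-\ket{\leftarrow,i})$; (ii) $\ket{w}\in {\cal A}_G^\bot$; and (iii) $\ket{w}\in{\cal B}_G^\bot$. Using the explicit bases for ${\cal A}_G$ and ${\cal B}_G$ given by \eq{or-cal-A-B}, conditions (ii) and (iii) reduce to finitely many inner-product checks, each of which cancels. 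The only non-vacuous observation is that, since $\ket{\hat w}$ is supported on $\Xi_{e_i}$ alone, every constraint associated with an edge $e_j$ with $j\neq i$ is automatically satisfied, and every constraint associated with a simple vertex picks up exactly one contribution from $\ket{\hat w}$ balanced by one from the boundary part $-\ket{\leftarrow,s}$ or $\ket{\rightarrow,t}$.

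Concretely, for ${\cal A}_G^\bot$: the inner products with $\ket{s}+\ket{\leftarrow,s}$ and with $\ket{t}+\ket{\rightarrow,t}$ vanish by the cancellation $1-1=0$; for $j\neq i$, $\ket{w}\perp\Xi_{e_j}$, so the edge basis vector $\ket{\rightarrow,j}-(-1)^{\varphi(e_j)}\ket{\leftarrow,j}$ contributes zero; and for $j=i$, since $\varphi(e_i)=1$, the basis vector is $\ket{\rightarrow,i}+\ket{\leftarrow,i}$, whose inner product with $\ket{\hat w}$ is $\frac{1}{\sqrt{\w_i}}(1-1)=0$. For ${\cal B}_G^\bot$: the inner product with the generator of ${\cal V}_s$, namely $\ket{\leftarrow,s}+\sum_j\sqrt{\w_j}\ket{\rightarrow,j}$, equals $-1+\sqrt{\w_i}\cdot\tfrac{1}{\sqrt{\w_i}}=0$; symmetrically for ${\cal V}_t$; and for each $\ket{\rightarrow,j}+\ket{\leftarrow,j}\in\Xi_{e_j}^{\cal B}$, only $j=i$ is non-trivial and gives $\tfrac{1}{\sqrt{\w_i}}(1-1)=0$.

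Finally, one notes $\braket{\psi_0}{w}=\tfrac{1}{\sqrt{2}}(1-(-1))=\sqrt{2}\neq 0$, so $\ket{w}$ is indeed a positive witness, and $\Pi_E\ket{w}=\ket{\hat{w}}$ is the cropped positive witness, as claimed. There is no real obstacle: the only thing to be careful about is to use the correct sign of $\varphi(e_i)$ in the ${\cal A}_G$-basis for $e_i$, since it is precisely the assumption $\varphi(e_i)=1$ that flips the relative sign and makes $\ket{\hat{w}}$ orthogonal to $\Xi_{e_i}^{\cal A}$.
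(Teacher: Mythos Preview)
Your proof is correct and follows essentially the same approach as the paper's own proof: verify $\ket{w}\in{\cal A}_G^\bot\cap{\cal B}_G^\bot$ by checking orthogonality against each generator listed in \eq{or-cal-A-B}, noting in particular that the assumption $\varphi(e_i)=1$ is exactly what makes $\ket{\hat w}$ orthogonal to $\Xi_{e_i}^{\cal A}$. Your version is simply more explicit about each inner-product computation.
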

\begin{proof}
One can check that $\ket{w}\in {\cal A}_G^\bot\cap {\cal B}_G^\bot$, by verifying that it is orthogonal to each of the spaces in~\eq{or-cal-A-B}.
For orthogonality with 
$$\Xi_{e_i}^{\cal A}=\mathrm{span}\{\ket{\rightarrow,i} - (-1)^{\varphi(i)}\ket{\leftarrow,i}\},$$
we relied on the fact that $\varphi(i)=1$. Otherwise, we would have $\Xi_{e_i}^{\cal A}+\Xi_{e_i}^{\cal B}=\Xi_{e_i}$, and a positive witness must be orthogonal to $\Xi_e$, so the edge $e$ is effectively blocked. 
\end{proof}

\noindent Finally, to prove the third item of \lem{OR-gadget}, we show the following.

\begin{lemma}\label{lem:or-gadget-neg}
If $\varphi(e_i)=0$ for all $i$, then the following is a negative witness for $G$ (\defin{neg-graph-wit}):
\begin{align*}
\ket{w_{\cal A}} &=\ket{s}+\ket{\leftarrow,s}+\underbrace{\sum_{i=1}^d\sqrt{\w_i}\left(\ket{\rightarrow,i}-(-1)^{\varphi(i)}\ket{\leftarrow,i}\right)}_{\ket{\hat{w}_{\cal A}}}-\ket{\rightarrow,t}-\ket{t}.
\end{align*}
\end{lemma}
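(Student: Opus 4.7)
The plan is to verify the two defining properties of a negative witness (\defin{neg-graph-wit}) for the given $\ket{w_{\cal A}}$: namely, that $\ket{w_{\cal A}} \in {\cal A}_G$, and that $\ket{w_{\cal A}} - \ket{s} + \ket{t} \in {\cal B}_G$.

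First, I would check membership in ${\cal A}_G$ by reading off the explicit decomposition from \eq{or-cal-A-B}. The piece $\ket{s}+\ket{\leftarrow,s}$ lies in $\Xi_s^{\cal A}$; the piece $-\ket{\rightarrow,t}-\ket{t}$ lies in $\Xi_t^{\cal A}$; and for each $i \in [d]$, the summand $\sqrt{\w_i}(\ket{\rightarrow,i}-(-1)^{\varphi(e_i)}\ket{\leftarrow,i})$ is a scalar multiple of the generator of $\Xi_{e_i}^{\cal A}$. Summing these gives $\ket{w_{\cal A}}$, so $\ket{w_{\cal A}} \in {\cal A}_G$ regardless of the values $\varphi(e_i)$.

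The second, more substantive step uses the lemma proven immediately above the statement: since $G_{\textsc{or},d}$ has canonical $st$-boundary, it suffices to show that the cropped witness $\ket{\hat{w}_{\cal A}} + \ket{\leftarrow,s} - \ket{\rightarrow,t}$ lies in ${\cal B}_G$. Here the hypothesis $\varphi(e_i)=0$ for all $i$ is essential, because it collapses $\ket{\hat{w}_{\cal A}}$ to $\sum_{i=1}^d \sqrt{\w_i}(\ket{\rightarrow,i}-\ket{\leftarrow,i})$. I would then exhibit the target vector as a difference of the two star-state generators of ${\cal V}_s$ and ${\cal V}_t$: namely,
\[
\bigl(\ket{\leftarrow,s}+\ket{\psi_\star(s)}\bigr)-\bigl(\ket{\psi_\star(t)}+\ket{\rightarrow,t}\bigr) = \ket{\leftarrow,s}+\sum_{i=1}^d\sqrt{\w_i}(\ket{\rightarrow,i}-\ket{\leftarrow,i})-\ket{\rightarrow,t},
\]
using the explicit formulas $\ket{\psi_\star(s)}=\sum_i \sqrt{\w_i}\ket{\rightarrow,i}$ and $\ket{\psi_\star(t)}=\sum_i\sqrt{\w_i}\ket{\leftarrow,i}$ forced by the simple-vertex definition. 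This is exactly the required vector, and it sits in ${\cal V}_s + {\cal V}_t \subseteq {\cal B}_G$.

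I do not expect any real obstacle: the switching-network formalism has already fixed all the relevant subspaces, so the proof is essentially a bookkeeping verification. The only subtlety worth flagging is that, unlike in the positive-witness case (\lem{or-gadget-pos}), the condition $\varphi(e_i)=0$ is not used to avoid $\Xi_{e_i}^{\cal A}+\Xi_{e_i}^{\cal B} = \Xi_{e_i}$, but rather to ensure the cropped witness matches the flow $\sum_i\sqrt{\w_i}(\ket{\rightarrow,i}-\ket{\leftarrow,i})$ produced by the star-state difference; if some $\varphi(e_i)=1$ the sign flip would prevent cancellation against $\ket{\psi_\star(t)}$, and indeed no negative witness should exist in that case since $s$ and $t$ are then connected in $G(x)$.
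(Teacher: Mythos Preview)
Your proposal is correct and is essentially the same as the paper's proof: both verify $\ket{w_{\cal A}}\in{\cal A}_G$ termwise and then exhibit the ${\cal B}_G$-part as the difference of the two star-state generators of ${\cal V}_s$ and ${\cal V}_t$. The only cosmetic difference is that you invoke the cropped-witness characterization lemma to reduce to showing $\ket{\hat{w}_{\cal A}}+\ket{\leftarrow,s}-\ket{\rightarrow,t}\in{\cal B}_G$, whereas the paper writes out the full decomposition of $\ket{s}-\ket{t}$ directly; the underlying computation is identical.
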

\begin{proof}
Since $\varphi(i)=0$ for all $i\in [d]$, we have:
\begin{multline*}
\ket{s}-\ket{t} = \underbrace{\ket{s}+\ket{\leftarrow,s}}_{\in \Xi_{s}^{\cal A}}-\Bigg(\underbrace{\ket{\leftarrow,s}+\sum_{i=1}^d\sqrt{\w_i}\ket{\rightarrow,i}}_{\in {\cal V}_s\subset {\cal B}}\Bigg)+ \sum_{i=1}^d\sqrt{\w_i}\underbrace{(\ket{\rightarrow,i}-(-1)^{\varphi(i)}\ket{\leftarrow,i})}_{\in \Xi_{e_i}^{\cal A}}\\
+ \Bigg(\underbrace{\sum_{i=1}^d\sqrt{\w_i}\ket{\leftarrow,i}+\ket{\rightarrow,t}}_{\in {\cal V}_t\subset{\cal B}}\Bigg)-(\underbrace{\ket{\rightarrow,t}+\ket{t}}_{\in\Xi_t^{\cal A}}),
\end{multline*}
implying that $\ket{w_{\cal A}}$ is a negative witness. 
Note that this crucially relies on $\varphi(e_i)=0$ for all $i$, otherwise the above expression fails to hold. 
\end{proof}

\subsubsection{Switching Network for AND}\label{sec:AND-gadget}

In this section, we describe a switching network (\defin{switching-network}) that computes the AND of $d$ Boolean variables.
\begin{lemma}\label{lem:AND-gadget}
    For any $d\geq 1$, and positive weights $\{\w_i\}_{i\in d}$, there is a switching network $G_{\textsc{and},d}$ that computes $\bigwedge_{i=1}^d\varphi(e_i)$ with $\dim H_{G_{\textsc{and},d}}=2d+4$ such that:
    \begin{enumerate}
        \item $G_{\textsc{and},d}$ has $st$-composable working bases, with scaling factor $\r=2/\sum_{i=1}^d(1/\w_i)$, that can be generated in $O(\log d)$ time, assuming the state proportional to $\sum_{i=1}^d\frac{1}{\sqrt{\w_i}}\ket{i}$ can be generated in time $O(\log d)$;
        \item if $\varphi(e_i)=1$ for all $i\in [d]$, $G_{\textsc{and},d}$ has cropped positive witness $\ket{\hat{w}}=\sum_{i=1}^d\frac{1}{\sqrt{\w_i}}(\ket{\rightarrow,i}-\ket{\leftarrow,i})$; and
        \item if $\varphi(e_i)=0$ for some $i\in [d]$, $G_{\textsc{and},d}$ has cropped negative witness $\ket{\hat{w}_{\cal A}}=\sqrt{\w_i}(\ket{\rightarrow,i}-\ket{\leftarrow,i})$.
    \end{enumerate}
\end{lemma}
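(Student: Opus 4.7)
The plan is to construct $G_{\textsc{and},d}$ as the series (path) switching network dual to the parallel OR network from the previous section, and then verify the three claims by mirroring the OR proofs, with the crucial computation being a discrete Ohm's-law calculation that fixes the scaling factor $\r$.

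\textbf{Construction.} Let $V=\{v_0,v_1,\ldots,v_d\}$ with $s=v_0$, $t=v_d$, and $E=\{e_i:i\in[d]\}$ where $e_i=(v_{i-1},v_i)$ is oriented forward with weight $\w_i$. The boundary is $B=\{s,t\}$ with canonical $st$-boundary (\defin{canonical-boundary}); every edge is a switch (\defin{switch-edge}) and every vertex is simple (\defin{simple-vertex}). Counting two dimensions per edge and two per boundary vertex gives $\dim H_G = 2d+4$. As a switching network, ${\cal A}_G$ and ${\cal B}_G$ are forced; the only structural difference from the OR case is that for $1\le i\le d-1$, $\ket{\psi_\star(v_i)}=\sqrt{\w_i}\ket{\leftarrow,i}+\sqrt{\w_{i+1}}\ket{\rightarrow,i+1}$, linking consecutive edges.

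\textbf{Bases.} The basis $\Psi_{\cal A}$ is immediate from \lem{cal-A-switching}. For $\Psi_{\cal B}$, retain $\ket{b_0}=\frac{1}{\sqrt{2}}(\ket{\leftarrow,s}+\ket{\rightarrow,t})$ and $\ket{b_{e_i}}=\frac{1}{\sqrt{2}}(\ket{\rightarrow,i}+\ket{\leftarrow,i})$ as before. The key step is to choose ``Ohm's-law potentials'' $\alpha_i=1-\frac{2}{S}\sum_{j=1}^{i}\frac{1}{\w_j}$ where $S=\sum_j 1/\w_j$, so that $\alpha_0=1$, $\alpha_d=-1$, and $\alpha_{i-1}-\alpha_i=2/(S\w_i)$. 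Splitting each $\sqrt{\w_i}(\alpha_{i-1}\ket{\rightarrow,i}+\alpha_i\ket{\leftarrow,i})$ into symmetric and antisymmetric parts and cancelling the symmetric parts with $\ket{b_{e_i}}$ gives
\begin{equation*}
\sum_{i=0}^{d}\alpha_i\ket{\psi_\star(v_i)}-\sum_{i=1}^{d}\tfrac{\sqrt{\w_i}(\alpha_{i-1}+\alpha_i)}{\sqrt{2}}\ket{b_{e_i}}
=\ket{\leftarrow,s}-\ket{\rightarrow,t}+\tfrac{1}{S}\sum_{i=1}^{d}\tfrac{1}{\sqrt{\w_i}}(\ket{\rightarrow,i}-\ket{\leftarrow,i}).
\end{equation*}
Normalizing the antisymmetric tail yields $\sqrt{\r}\ket{\bar b_1}$ with $\r=2/S=2/\sum_i 1/\w_i$ and $\ket{\bar b_1}$ the (unit) normalized positive-witness direction, orthogonal to $\ket{\leftarrow,s}$ and $\ket{\rightarrow,t}$. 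Adjoin $d-1$ further orthonormal vectors spanning the remaining antisymmetric combinations of the internal star states; these are automatically orthogonal to $\ket{b_0},\ket{b_1}$, and every $\ket{b_{e_i}}$, so $\Psi_{\cal B}$ is a basis for ${\cal B}_G$ satisfying the five properties of \defin{composable-basis}. Generation in $O(\log d)$ time reduces to a fixed change of basis on the edge-label register prepared from $\sum_i \frac{1}{\sqrt{\w_i}}\ket{i}$, which is given by assumption.

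\textbf{Witnesses.} For the positive witness when all $\varphi(e_i)=1$, plug the uniform unit $st$-flow $\theta\equiv 1$ into \eq{flow-witness} to obtain $\ket{\hat w}=\sum_i\frac{1}{\sqrt{\w_i}}(\ket{\rightarrow,i}-\ket{\leftarrow,i})$; orthogonality to ${\cal A}_G$ uses $\Xi_{e_i}^{\cal A}=\mathrm{span}\{\ket{\rightarrow,i}+\ket{\leftarrow,i}\}$ (since $\varphi(e_i)=1$), and orthogonality to ${\cal B}_G$ is checked against each star state and each $\Xi_{e_i}^{\cal B}$, noting that consecutive star-state contributions telescope. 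For the negative witness when some $\varphi(e_i)=0$, the single edge $\{e_i\}$ is an $st$-cut-set, and we claim $\ket{w_{\cal A}}=\ket{s}+\ket{\leftarrow,s}+\sqrt{\w_i}(\ket{\rightarrow,i}-\ket{\leftarrow,i})-\ket{\rightarrow,t}-\ket{t}$ works; $\ket{w_{\cal A}}\in{\cal A}_G$ is immediate from the switch definition (with $\varphi(e_i)=0$), and $\ket{w_{\cal A}}-\ket{s}+\ket{t}\in{\cal B}_G$ follows by telescoping $\sum_{j=0}^{i-1}\ket{\psi_\star(v_j)}-\sum_{j=i}^{d}\ket{\psi_\star(v_j)}$ along the path and absorbing the leftover symmetric pieces into the $\ket{b_{e_j}}$'s, with the sign flip at vertex $v_{i-1}\to v_i$ producing exactly the $\sqrt{\w_i}(\ket{\rightarrow,i}-\ket{\leftarrow,i})$ term.

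\textbf{Main obstacle.} The only nonroutine step is fixing $\r$ in Step 2: the coefficients $\alpha_i$ must be chosen so that the telescoped antisymmetric residue has the prescribed norm $\sqrt{\r}$ with $\r=2/\sum_i 1/\w_i$. This is exactly the condition that $\{\alpha_i\}$ minimize $\sum_i \w_i(\alpha_{i-1}-\alpha_i)^2$ subject to $\alpha_0-\alpha_d=2$, which is a discrete Ohm's-law / effective-resistance calculation pinning down both the potentials and $\r$ uniquely. Everything else is bookkeeping adapted directly from \sec{OR-gadget}.
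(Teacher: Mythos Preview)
Your construction, witnesses, and working bases coincide with the paper's; the Ohm's-law potentials are a pleasant way to \emph{derive} $\ket{b_1}$ and $\r$, whereas the paper simply states them and then verifies. The only step you treat more loosely than the paper is confirming that $\Psi_{\cal B}$ spans all of ${\cal B}_G$ and can be generated in $O(\log d)$ time: the paper does both in one stroke by exhibiting the $3$-dimensional complement ${\cal B}_G^\bot=\mathrm{span}\{\ket{s},\ket{t},\;\ket{\leftarrow,s}-\ket{\rightarrow,t}-\tfrac{2}{\sqrt{\r}}\ket{\bar b_1}\}$ directly and invoking \cor{dual-basis-gen}.
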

As with the OR switching network in \sec{OR-gadget}, a corollary of this lemma is that there is a quantum algorithm for evaluating AND in optimal time $\widetilde{O}(\sqrt{d})$. 

\begin{figure}
\centering
    \begin{tikzpicture}
    \draw[dashed] (-1,0)--(0,0);
    \draw (0,0)--(3.5,0); \draw (4.5,0)--(5,0);
    \draw[dashed] (5,0)--(6,0);

        \filldraw (0,0) circle (.075);  \filldraw (1.5,0) circle (.075);  \filldraw (3,0) circle (.075); \node at (4,0) {$\dots$};  \filldraw (5,0) circle (.075);

        \node at (0,.3) {$u_0$};
            \node at (.75,.3) {$e_1$};
            \node at (1.5,.3) {$u_1$};
            \node at (2.25,.3) {$e_2$};
            \node at (3,.3) {$u_2$};
            \node at (5,.3) {$u_d$};
    \end{tikzpicture}
    \caption{The graph $G_{\textsc{and}}$. The dashed lines represent dangling boundary ``edges''.}\label{fig:and-graph}
\end{figure}

Let $G=G_{\textsc{and},d}$ be the graph in \fig{and-graph}, defined by
$$V=\{s=u_0,u_1,\dots,u_d=t\}
\mbox{ and }
E=\{e_i=(u_{i-1},u_i):i\in [d]\},$$
so $E(s)=E^{\rightarrow}(s)=\{e_1\}$, $E(t)=E^{\leftarrow}(t)=\{e_d\}$, and for all $i\in [d-1]$, $E^\leftarrow(u_i)=\{e_i\}$ and $E^\rightarrow(u_i)=\{e_{i+1}\}$. Since $G$ is a switching network, it has boundary $B=\{s,t\}$. We will let the graph be weighted, with weights $\w_{e_i}=\w_i$. Since $G$ is a switching network: every edge is a switch, which fixes $\Xi_{e_i}$, $\Xi_{e_i}^{\cal A}$ and $\Xi_{e_i}^{\cal B}$ for all $i\in [d]$; 
$G$ has canonical $st$-boundary, which fixes $\Xi_s$, $\Xi_s^{\cal A}$, $\Xi_s^{\cal B}$, $\Xi_t$, $\Xi_t^{\cal A}$, and $\Xi_t^{\cal B}$; and every vertex is simple, which fixes the spaces ${\cal V}_{u_i}$ for $i\in\{0,\dots,d\}$. In particular, letting $\w_0=\w_{d+1}=1$, $s=0$, $t=d+1$, and $i\in [d]$ label $e_i$, we must have:
\begin{align*}
\forall i\in \{0,\dots,d\},\; &{\cal V}_{u_i} = \mathrm{span}\left\{\sqrt{\w_i}\ket{\leftarrow,i}+\sqrt{\w_{i+1}}\ket{\rightarrow,i+1}\right\}.
\end{align*}
This fully defines
$${\cal A}_G=\bigoplus_{e\in E\cup B}\Xi_e^{\cal A}
\quad\mbox{ and }\quad
{\cal B}_G=\bigoplus_{i=0}^d{\cal V}_{u_i}+\bigoplus_{e\in E}\Xi_e^{\cal B}.$$
By \lem{cal-A-switching}, since we assume we can query the values $\varphi(e)$ in unit cost, there is a working basis for $\Psi_{\cal A}$ satisfying the conditions of \defin{composable-basis} that can be generated in unit time. To complete the proof of the the first item of \lem{AND-gadget}, we prove the following.
\begin{lemma}
    Let $\ket{b_2},\dots,\ket{b_d}$ be an orthonormal basis for 
    $$\mathrm{span}\{\ket{\rightarrow,i}-\ket{\leftarrow,i}\}\cap \mathrm{span}\left\{\ket{\bar{b}_1}=\frac{\sum_{i=1}^d\frac{1}{\sqrt{\w_i}}(\ket{\rightarrow,i}-\ket{\leftarrow,i})}{\sqrt{\sum_{i=1}^d\frac{2}{\w_i}}}\right\}^\bot.$$
Let 
    \begin{equation*}
        \Psi_{\cal B}^-=\left\{\ket{b_0}=\frac{1}{\sqrt{2}}(\ket{\leftarrow,s}+\ket{\rightarrow,t}),\ket{b_1}=\frac{\ket{\leftarrow,s}-\ket{\rightarrow,t}+\sqrt{\r}\ket{\bar{b}_1}}{\sqrt{2+\r}}\right\}\\
    \cup\left\{\ket{b_2},\dots,\ket{b_d}\right\}.
    \end{equation*}
Then 
    \begin{equation*}
        \Psi_{\cal B}=\Psi_{\cal B}^-\cup\left\{\ket{b_{e_i}}=\frac{1}{\sqrt{2}}(\ket{\rightarrow,i}+\ket{\leftarrow,i}):i\in [d]\right\}
    \end{equation*}
    is an orthonormal basis for ${\cal B}_G$ when $\r=2/{\sum_{i=1}^d\frac{1}{\w_i}}$, and it satisfies all the conditions of an $st$-composable basis in \defin{composable-basis}. Furthermore, as long as we can generate the state proportional to $\sum_{i=1}^d\frac{1}{\sqrt{\w_i}}\ket{i}$ in time $O(\log d)$, $\Psi_{\cal B}$ can be generated in time $O(\log d)$.
\end{lemma}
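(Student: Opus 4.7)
My plan is to follow the structure of the analogous lemma for $G_{\textsc{or},d}$: (a) show that every vector in $\Psi_{\cal B}$ belongs to ${\cal B}_G$, (b) check orthonormality, (c) count dimensions to conclude it spans ${\cal B}_G$, (d) verify the composable-basis conditions of \defin{composable-basis}, and (e) argue efficient generation. Step (a) is the only substantive step; everything else is a direct inspection, with the caveat that step (e) must make use of the assumed state-preparation routine to realize $\ket{\bar{b}_1}$.

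For step (a), the main obstacle is engineering the right telescoping identity. I would introduce the abbreviations $v_i := \sqrt{\w_i}\ket{\leftarrow,i}+\sqrt{\w_{i+1}}\ket{\rightarrow,i+1}$ for $i\in\{0,\dots,d\}$ (using the conventions $\ket{\leftarrow,0}=\ket{\leftarrow,s}$, $\ket{\rightarrow,d+1}=\ket{\rightarrow,t}$, $\w_0=\w_{d+1}=1$, so $v_i$ spans ${\cal V}_{u_i}$) and $w_i := \ket{\rightarrow,i}+\ket{\leftarrow,i}$ for $i\in[d]$ (spanning $\Xi_{e_i}^{\cal B}$). A one-line telescoping yields $\sum_{i=0}^d v_i - \sum_{i=1}^d \sqrt{\w_i}\, w_i = \ket{\leftarrow,s}+\ket{\rightarrow,t}$, placing $\ket{b_0}$ in ${\cal B}_G$. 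For the remaining elements of $\Psi_{\cal B}^-$, I would seek coefficients $c_0,\dots,c_d$ and $d_1,\dots,d_d$ with $\sum_i c_i v_i + \sum_i d_i w_i$ equal to the target vector; matching the $\ket{\rightarrow,i}$ and $\ket{\leftarrow,i}$ coefficients and subtracting eliminates $d_i$ and leaves a first-order recurrence on $c_i$, whose telescoping sum must be consistent with the boundary values read off from the $\ket{\leftarrow,s}$ and $\ket{\rightarrow,t}$ coefficients. For $\ket{b_1}$ the recurrence becomes $c_{i-1}-c_i = \r/\w_i$ with $c_0=1,c_d=-1$, which closes up precisely because $\r\sum_i 1/\w_i = 2$ --- this is what forces the choice $\r = 2/\sum_i 1/\w_i$. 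For $\ket{b_j}$ with $j\geq 2$, a generic antisymmetric flow $\sum_i \gamma_i(\ket{\rightarrow,i}-\ket{\leftarrow,i})$ has no boundary support ($c_0=c_d=0$), and the recurrence becomes $c_{i-1}-c_i=2\gamma_i/\sqrt{\w_i}$, which closes at zero exactly when $\sum_i \gamma_i/\sqrt{\w_i}=0$; this is precisely the orthogonality to $\ket{\bar{b}_1}$ built into the definition of the $\ket{b_j}$.

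The remaining steps are quick. For orthonormality (b), the $\ket{b_{e_i}}$ are symmetric on disjoint edges, while every vector in $\Psi_{\cal B}^-$ is either boundary-supported or antisymmetric on each edge, so the two groups are mutually orthogonal; within $\Psi_{\cal B}^-$, $\ket{b_0}\perp\ket{b_1}$ because $\ket{\leftarrow,s}+\ket{\rightarrow,t}\perp\ket{\leftarrow,s}-\ket{\rightarrow,t}$ and $\ket{\bar{b}_1}$ has no boundary support, and $\ket{b_0},\ket{b_1}\perp\ket{b_j}$ ($j\geq 2$) follows from the orthogonality of $\ket{b_j}$ to $\ket{\bar{b}_1}$ together with its lack of boundary support. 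For (c), a short argument shows $\bigoplus_{i=0}^d {\cal V}_{u_i}$ meets $\bigoplus_{i=1}^d \Xi_{e_i}^{\cal B}$ trivially (a vector in the intersection would force all $c_i$ equal, hence $c_0=0$), giving $\dim{\cal B}_G=(d+1)+d=2d+1=|\Psi_{\cal B}|$. The composable-basis conditions (d) follow by inspection from the prescribed forms of $\ket{b_0}$ and $\ket{b_1}$ and the absence of boundary support elsewhere. Finally, for (e), the only non-trivial map is the one producing $\ket{\bar{b}_1}$ on the flow space, which I would realize by combining the assumed $O(\log d)$-time preparation of $\sum_i \frac{1}{\sqrt{\w_i}}\ket{i}$ with a Hadamard on a direction qubit; extending this same unitary to the remaining labels of the flow register furnishes an orthonormal completion of $\mathrm{span}\{\ket{\rightarrow,i}-\ket{\leftarrow,i}\}\cap\ket{\bar{b}_1}^\bot$ which we are free to take as $\ket{b_2},\dots,\ket{b_d}$, and $\ket{b_0}$, $\ket{b_1}$, and the $\ket{b_{e_i}}$ need only constant-depth gadgets controlled on a few label bits.
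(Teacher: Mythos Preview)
Your proposal is correct, and your route for the main step differs from the paper's in an instructive way. For step~(a), the paper does \emph{not} write each $\ket{b_j}$ as an explicit combination of the $v_i$ and $w_i$. Instead, it identifies the three-dimensional orthogonal complement $\overline{\cal B}_G={\cal B}_G^\bot=\mathrm{span}\{\ket{s},\ket{t},\ket{\leftarrow,s}-\ket{\rightarrow,t}-\frac{2}{\sqrt{\r}}\ket{\bar b_1}\}$, verifies each element of $\Psi_{\cal B}$ is orthogonal to these three vectors, and then counts dimensions. Your constructive telescoping argument is a bit more work per basis vector but makes transparent \emph{why} the value $\r=2/\sum_i 1/\w_i$ is forced (it is exactly the closure condition $c_0-c_d=\sum_i\r/\w_i$) and why orthogonality to $\ket{\bar b_1}$ is the right constraint on the $\ket{b_j}$ for $j\ge 2$; the paper's approach hides this behind the single inner-product check $\braket{b_1}{\leftarrow,s}-\braket{b_1}{\rightarrow,t}-\tfrac{2}{\sqrt\r}\braket{b_1}{\bar b_1}=0$. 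For step~(e), the paper again exploits the small complement: it observes that a basis for $\overline{\cal B}_G$ can be generated in $O(\log d)$ time and then invokes \cor{dual-basis-gen} to obtain an efficiently generable basis for ${\cal B}_G$; your direct unitary-extension argument is essentially the content of that corollary, spelled out by hand. Both approaches implicitly commit to a \emph{particular} choice of $\ket{b_2},\dots,\ket{b_d}$, so your remark that ``we are free to take'' them as the images of the remaining labels is exactly right and matches how the paper reads the lemma.
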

\begin{proof}
First, it is easy to verify that $\Psi_{\cal B}$ is indeed orthonormal. For each $i\in[d]$, $\ket{\rightarrow,i}+\ket{\leftarrow,i}$ is orthogonal to $\Psi_{\cal B}^-\subset\mathrm{span}\{\ket{\leftarrow,s},\ket{\rightarrow,t}\}\cup\{\ket{\rightarrow,i}-\ket{\leftarrow,i}:i\in [d]\}$. Finally, $\Psi_{\cal B}^-$ is an orthonormal set because for $i\in \{2,\dots,d\}$, $\braket{\leftarrow,s}{b_i}=\braket{\rightarrow,t}{b_i}=\braket{\bar{b}_1}{b_i}=0$, and 
$\ket{b_0}$ and $\ket{b_1}$ are also orthogonal.

Next, observe
\begin{align*}
    {\cal B} &= \mathrm{span}
    \left\{
        \sqrt{\w_i}\ket{\leftarrow,i}+\sqrt{\w_{i+1}}\ket{\rightarrow,i+1}: i\in \{0,\dots,d\}
    \right\}
\cup \left\{
        \sqrt{\w_i}(\ket{\rightarrow,i}+\ket{\leftarrow,i}): i\in [d]
    \right\},
\end{align*}
from which it follows that $\dim{\cal B}\leq d+1+d = 2d+1$, and in fact, it is not difficult to see that $\dim{\cal B}=2d+1$. 

We next argue that $\overline{\cal B}_G={\cal B}_G^\bot$, where 
\begin{equation}\label{eq:overline-cal-B}
    \overline{\cal B}_G = \mathrm{span}\{\ket{s},\ket{t}\}\oplus  
    \mathrm{span}\left\{ 
         \ket{\leftarrow,s}-\ket{\rightarrow,t}-\frac{2}{\sqrt{\r}}\ket{\bar{b}_1}
    \right\}.
\end{equation}
We clearly have $\ket{s},\ket{t}\in {\cal B}_G^\bot$, and rewriting:
\begin{align*}
    \ket{\leftarrow,s}-\ket{\rightarrow,t}-\frac{2}{\sqrt{\r}}\ket{\bar{b}_1}&=\ket{\leftarrow,s}-\ket{\rightarrow,t}-\sum_{i=1}^d\frac{1}{\sqrt{\w_i}}(\ket{\rightarrow,i}-\ket{\leftarrow,i})\\
    &= \sum_{i=0}^{d}\left(\frac{1}{\sqrt{\w_i}}\ket{\leftarrow,i}-\frac{1}{\sqrt{\w_{i+1}}}\ket{\rightarrow,i+1}\right),
\end{align*}
where we have used $s=0$, $t=d+1$, and $\w_0=\w_{d+1}=1$,
we see that it is also in ${\cal B}_G^\bot$, so $\overline{\cal B}_G\subseteq{\cal B}_G^\bot$. Since $\dim\overline{\cal B}_G=3$, and
$$\dim{\cal B}_G^\bot = \underbrace{\dim\Xi_s+\sum_{i=1}^d\dim\Xi_{e_i}+\dim\Xi_t}_{H_G} - \dim{\cal B}
= 2+2d+2 - (2d+1) = 3,$$
(see \defin{switch-edge} and \defin{canonical-boundary}) we see that $\overline{\cal B}_G={\cal B}_G^\bot$.

Next, to check that $\Psi_{\cal B}\subseteq {\cal B}_G$, we just need to check that each element of $\Psi_{\cal B}$ is orthogonal to each of the three vectors in the definition of $\overline{\cal B}_G$. This will be sufficient, since $\Psi_{\cal B}$ is an orthonormal set with $|\Psi_{\cal B}|=2+d-1+d = 2d+1=\dim{\cal B}_G$. It is simple to see that everything in $\Psi_{\cal B}$ is orthogonal to $\ket{s}$ and $\ket{t}$. It is also simple to see that everything in $\Psi_{\cal B}\setminus\Psi_{\cal B}^-$ is orthogonal to all of $\overline{\cal B}_G$. 

We turn our attention to $\Psi_{\cal B}^-$. $\ket{b_0}$ is obviously orthogonal to $\overline{\cal B}_G$. For $\ket{b_1}$, we have:
\begin{align*}
    \bra{b_1}\left(\ket{\leftarrow,s}-\ket{\rightarrow,t}-\frac{2}{\sqrt{\r}}\ket{\bar{b}_1}\right)
    ={}& \frac{1}{\sqrt{2+\r}}\left(\bra{\leftarrow,s}-\bra{\rightarrow,t}+\sqrt{\r}\bra{\bar{b}_1}\right)\left(\ket{\leftarrow,s}-\ket{\rightarrow,t}-\frac{2}{\sqrt{\r}}\ket{\bar{b}_1}\right)\\
    ={}& \frac{1}{\sqrt{2+\r}}\left(2-\sqrt\r\frac{2}{\sqrt{\r}}\braket{\bar{b}_1}{\bar{b}_1}\right)
    =0.
\end{align*}
Finally, for any $i\in\{2,\dots,d\}$, we have:
\begin{align*}
    \bra{b_i}\left(\ket{\leftarrow,s}-\ket{\rightarrow,t}-\frac{2}{\sqrt{\r}}\ket{\bar{b}_1}\right)
    &= -\frac{2}{\sqrt{\r}}\braket{b_i}{\bar{b}_1}=0,
    \end{align*}
    since the $\ket{b_i}$ are all orthogonal to $\ket{\bar{b}_1}$.
    Thus $\Psi_{\cal B}\subseteq {\cal B}_G$, and so $\Psi_{\cal B}$ is an orthonormal basis for ${\cal B}_G$.

It is clear by inspection that $\Psi_{\cal B}$ satisfies the properties of \defin{composable-basis}, so to complete the proof, we need to argue that $\Psi_{\cal B}$ can be generated in $O(\log d)$ time. It is a simple exercise to observe that the basis in \eq{overline-cal-B} can be generated in $O(\log d)$ time, and then the result follows from \cor{dual-basis-gen}. 
\end{proof}

\noindent To prove the second item of \lem{AND-gadget}, we show the following.

\begin{lemma}
    If $\varphi(e_i)=1$ for all $i\in [d]$, then the following is a positive witness for $G$ (\defin{pos-graph-wit}): 
    $$\ket{w}=\ket{s}-\ket{\leftarrow,s}+\underbrace{\sum_{i=1}^{d}\frac{1}{\sqrt{\w_i}}(\ket{\rightarrow,i}-\ket{\leftarrow,i})}_{\ket{\hat{w}}}+\ket{\rightarrow,t}-\ket{t}.$$
\end{lemma}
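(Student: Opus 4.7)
The plan is to verify directly that the given $\ket{w}$ satisfies the two defining conditions of a positive witness (\defin{pos-graph-wit}): it already has the required boundary form $\ket{s}-\ket{\leftarrow,s}+\Pi_E\ket{w}+\ket{\rightarrow,t}-\ket{t}$ by inspection, so the whole content of the lemma is to check that $\ket{w}\in{\cal A}_G^\bot\cap {\cal B}_G^\bot$. Since ${\cal A}_G$ and ${\cal B}_G$ are each built from a small, explicit list of generating spaces, this reduces to a finite check of inner products.

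First I would record, once and for all, the generating subspaces of ${\cal A}_G$ and ${\cal B}_G$ that were fixed in the setup: $\Xi_s^{\cal A}=\mathrm{span}\{\ket{s}+\ket{\leftarrow,s}\}$, $\Xi_t^{\cal A}=\mathrm{span}\{\ket{\rightarrow,t}+\ket{t}\}$, each $\Xi_{e_i}^{\cal A}=\mathrm{span}\{\ket{\rightarrow,i}-(-1)^{\varphi(e_i)}\ket{\leftarrow,i}\}$, each $\Xi_{e_i}^{\cal B}=\mathrm{span}\{\ket{\rightarrow,i}+\ket{\leftarrow,i}\}$, and each ${\cal V}_{u_i}=\mathrm{span}\{\sqrt{\w_i}\ket{\leftarrow,i}+\sqrt{\w_{i+1}}\ket{\rightarrow,i+1}\}$ for $i\in\{0,\ldots,d\}$, using the convention $\w_0=\w_{d+1}=1$, $\ket{\leftarrow,0}=\ket{\leftarrow,s}$, $\ket{\rightarrow,d+1}=\ket{\rightarrow,t}$.

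Checking ${\cal A}_G^\bot$ is then immediate: against $\ket{s}+\ket{\leftarrow,s}$ the two boundary contributions $+1$ and $-1$ cancel, and similarly on the $t$ side; against $\Xi_{e_i}^{\cal A}$, the crucial fact is that $\varphi(e_i)=1$, so the relevant vector is $\ket{\rightarrow,i}+\ket{\leftarrow,i}$, and the edge contributions $+\tfrac{1}{\sqrt{\w_i}}$ and $-\tfrac{1}{\sqrt{\w_i}}$ cancel. (This is exactly the step that would fail if some $\varphi(e_i)=0$: one would then need $\ket{w}$ orthogonal to both $\ket{\rightarrow,i}-\ket{\leftarrow,i}$ and $\ket{\rightarrow,i}+\ket{\leftarrow,i}$, forcing no flow through $e_i$, which is incompatible with a unit $st$-flow on a path of switches.) Checking ${\cal B}_G^\bot$ is equally mechanical: against $\Xi_{e_i}^{\cal B}$ the same cancellation $\tfrac{1}{\sqrt{\w_i}}-\tfrac{1}{\sqrt{\w_i}}=0$ works, and against each vertex generator $\sqrt{\w_i}\ket{\leftarrow,i}+\sqrt{\w_{i+1}}\ket{\rightarrow,i+1}$ one gets $\sqrt{\w_i}\cdot(-\tfrac{1}{\sqrt{\w_i}})+\sqrt{\w_{i+1}}\cdot\tfrac{1}{\sqrt{\w_{i+1}}}=0$, with the boundary cases $i=0$ and $i=d$ handled by the conventions $\w_0=\w_{d+1}=1$.

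There is no real obstacle: the lemma is a one-line verification once the defining vectors of ${\cal A}_G$ and ${\cal B}_G$ are listed out. The only place where care is needed is in matching the boundary notation ($u_0=s$ with $\ket{\leftarrow,0}=\ket{\leftarrow,s}$, and similarly at $t$), and in noting explicitly why the hypothesis $\varphi(e_i)=1$ for all $i$ is used, namely to make $\Xi_{e_i}^{\cal A}$ equal $\mathrm{span}\{\ket{\rightarrow,i}+\ket{\leftarrow,i}\}$ so that $\ket{w}$ can carry nonzero flow through every edge.
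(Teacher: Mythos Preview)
Your proposal is correct and follows essentially the same approach as the paper: a direct check that $\ket{w}$ is orthogonal to each generating vector of ${\cal A}_G$ and ${\cal B}_G$, using $\varphi(e_i)=1$ to reduce $\Xi_{e_i}^{\cal A}$ to $\mathrm{span}\{\ket{\rightarrow,i}+\ket{\leftarrow,i}\}$. The only cosmetic difference is that the paper verifies orthogonality to ${\cal V}_{u_i}$ by rewriting $\ket{w}$ as a telescoping sum, whereas you compute the inner product directly; both amount to the same one-line cancellation.
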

\begin{proof}
Since $\varphi(e_i)=1$ for all $i\in [d]$, we have $\Xi_{e_i}^{\cal A}=\Xi_{e_i}^{\cal B}=\mathrm{span}\{\ket{\rightarrow,i}+\ket{\leftarrow,i}\}$, so clearly $\ket{w}$ is orthogonal to $\Xi_{e_i}^{\cal A}+\Xi_{e_i}^{\cal B}$ for each $i$. One can similarly verify by inspection that it is orthogonal to $\Xi_s^{\cal A}$, as well as $\Xi_{t}^{\cal A}$. This leaves only the spaces ${\cal V}_{u_i}$ for $i\in\{0,\dots,d\}$, which we can verify are orthogonal to $\ket{w}$ by rewriting it as:
$$\ket{w}=\ket{s}-\sum_{i=0}^{d}\left(\frac{1}{\sqrt{\w_i}}\ket{\leftarrow,i}-\frac{1}{\sqrt{\w_{i+1}}}\ket{\rightarrow,i+1}\right)-\ket{t}.$$
We used the fact that $\w_0=\w_{d+1}=1$, $0=s$ and $d+1=t$. 
Thus, $\ket{w}\in {\cal A}_G^\bot\cap{\cal B}_G^\bot$.
\end{proof}

\noindent Finally, to prove the third item of \lem{AND-gadget}, we show the following.
\begin{lemma}
    If $\varphi(e_i)=0$ for some $i\in [d]$, then the following is a negative witness for $G$ (\defin{neg-graph-wit}):
$$\ket{w_{\cal A}}=\ket{s}+\ket{\leftarrow,s}+\underbrace{{\sqrt{\w_j}}(\ket{\rightarrow,j}-\ket{\leftarrow,j})}_{\ket{\hat{w}}}-\ket{\rightarrow,t}-\ket{t}.$$
\end{lemma}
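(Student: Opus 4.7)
The plan is to verify the two conditions defining a negative witness (\defin{neg-graph-wit}): namely that $\ket{w_{\cal A}}\in{\cal A}_G$, and that $\ket{w_{\cal A}}-\ket{s}+\ket{t}\in{\cal B}_G$. Let $j\in[d]$ denote the index with $\varphi(e_j)=0$ (the witness is indexed by this $j$).

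For membership in ${\cal A}_G=\bigoplus_{e\in E\cup B}\Xi_e^{\cal A}$, we observe term by term: $\ket{s}+\ket{\leftarrow,s}\in\Xi_s^{\cal A}$ and $\ket{\rightarrow,t}+\ket{t}\in\Xi_t^{\cal A}$ by canonical $st$-boundary, while $\ket{\rightarrow,j}-\ket{\leftarrow,j}\in\Xi_{e_j}^{\cal A}$ because the switch condition with $\varphi(e_j)=0$ gives $\Xi_{e_j}^{\cal A}=\mathrm{span}\{\ket{\rightarrow,j}-\ket{\leftarrow,j}\}$. (This is exactly where the hypothesis that some $\varphi(e_i)=0$ is used: if $\varphi(e_j)=1$, the analogous vector would carry a $+$ sign and lie in $\Xi_{e_j}^{\cal B}$, not $\Xi_{e_j}^{\cal A}$.)

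For the second condition, compute
\[
\ket{w_{\cal A}}-\ket{s}+\ket{t}
=\ket{\leftarrow,s}+\sqrt{\w_j}\bigl(\ket{\rightarrow,j}-\ket{\leftarrow,j}\bigr)-\ket{\rightarrow,t},
\]
and split this into a ``left part'' $\ket{\leftarrow,s}+\sqrt{\w_j}\ket{\rightarrow,j}$ and a ``right part'' $-\sqrt{\w_j}\ket{\leftarrow,j}-\ket{\rightarrow,t}$. Each will be shown to lie in ${\cal B}_G$ by a telescoping sum that walks along the path of $G_{\textsc{and},d}$. Using the convention $\ket{\leftarrow,0}=\ket{\leftarrow,s}$, $\ket{\rightarrow,d+1}=\ket{\rightarrow,t}$, $\w_0=\w_{d+1}=1$, the simple-vertex states $\ket{v_i}\colonequals\sqrt{\w_i}\ket{\leftarrow,i}+\sqrt{\w_{i+1}}\ket{\rightarrow,i+1}$ lie in ${\cal V}_{u_i}\subseteq{\cal B}_G$, and the switch states $\ket{c_i}\colonequals\ket{\rightarrow,i}+\ket{\leftarrow,i}$ lie in $\Xi_{e_i}^{\cal B}\subseteq{\cal B}_G$. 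The identity
\[
\sum_{i=0}^{j-1}\ket{v_i}-\sum_{i=1}^{j-1}\sqrt{\w_i}\ket{c_i}=\ket{\leftarrow,s}+\sqrt{\w_j}\ket{\rightarrow,j}
\]
collapses by cancelling $\sqrt{\w_i}\ket{\leftarrow,i}$ against $\sqrt{\w_i}\ket{\rightarrow,i}$ at each intermediate vertex, placing the left part in ${\cal B}_G$. An entirely symmetric telescoping identity running from index $d$ down to $j$ handles the right part.

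The ``main obstacle'' is really just bookkeeping with the boundary conventions ($\w_0=\w_{d+1}=1$ and the identification of $\ket{\leftarrow,s},\ket{\rightarrow,t}$ with endpoint edge-states), but there is no conceptual difficulty: the whole argument mirrors the intuition that the single off edge $e_j$ forms an $st$-cut-set whose indicator, weighted by $\sqrt{\w_j}$, gives precisely the cropped negative witness $\ket{\hat{w}_{\cal A}}$, exactly as described after \eq{flow-witness}.
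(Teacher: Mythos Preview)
Your proof is correct and follows essentially the same approach as the paper: both verify the negative-witness condition by the same telescoping identity along the path $G_{\textsc{and},d}$, showing $\ket{\leftarrow,s}+\sqrt{\w_j}\ket{\rightarrow,j}\in{\cal B}_G$ (and symmetrically for the right half), and both use $\varphi(e_j)=0$ exactly to place $\ket{\rightarrow,j}-\ket{\leftarrow,j}$ in $\Xi_{e_j}^{\cal A}$. The only cosmetic difference is that you first check $\ket{w_{\cal A}}\in{\cal A}_G$ explicitly and then verify $\ket{w_{\cal A}}-\ket{s}+\ket{t}\in{\cal B}_G$, whereas the paper writes a single decomposition of $\ket{s}-\ket{t}$ with each summand labelled by its containing subspace.
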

\begin{proof}
First note that for any $j\in[d]$, using $\w_0=1$ and $\ket{\leftarrow,0}=\ket{\leftarrow,s}$:
\begin{align*}
    &\sum_{i=1}^{j}\underbrace{(\sqrt{\w_{i-1}}\ket{\leftarrow,i-1}+\sqrt{\w_i}\ket{\rightarrow,i})}_{\in {\cal V}_{u_{i-1}}\subset {\cal B}}-\sum_{i=1}^{j-1}\sqrt{\w_i}(\underbrace{\ket{\rightarrow,i}+\ket{\leftarrow,i}}_{\in\Xi_{e_i}^{\cal B}})\\
    ={}& \sqrt{\w_0}\ket{\leftarrow,0}+\sqrt{\w_j}\ket{\rightarrow,j}
    =\ket{\leftarrow,s}+\sqrt{\w_j}\ket{\rightarrow,j}
\end{align*}
is in ${\cal B}_G$. By a similar argument, $\sqrt{\w_j}\ket{\leftarrow,j}+\ket{\rightarrow,t}\in {\cal B}_G$.

Since $\varphi(e_i)=0$, $\ket{\rightarrow,i}-(-1)^{\varphi(e_i)}\ket{\leftarrow,i}=\ket{\rightarrow,i}-\ket{\leftarrow,i}$ is in $\Xi_{e_i}^{\cal A}$, we have:
\begin{multline*}
\ket{s}-\ket{t}=\underbrace{\ket{s}+\ket{\leftarrow,s}}_{\in\Xi_s^{\cal A}}-\underbrace{(\ket{\leftarrow,s}+\sqrt{\w_j}\ket{\rightarrow,j})}_{\in{\cal B}}
+\sqrt{\w_j}\underbrace{(\ket{\rightarrow,j}-\ket{\leftarrow,j})}_{\in \Xi_{e_j}^{\cal A}}\\
+\underbrace{\sqrt{\w_j}\ket{\leftarrow,j}+\ket{\rightarrow,t}}_{\in{\cal B}}-\underbrace{(\ket{\rightarrow,t}+\ket{t})}_{\in \Xi_t^{\cal A}},
\end{multline*}
implying that $\ket{{w}_{\cal A}}$ is a negative witness. 
\end{proof}

\subsection{Example: Classical Deterministic Algorithms}

In the next section, we will see how to turn a quantum algorithm into a subspace graph. As a warmup example, we first see how classical deterministic algorithms fit into this framework. The simplest example is a reversible classical algorithm, which we can model as an undirected random walk on a line. It is not very sensible to model a classical algorithm as a random walk on a line, because we incur a quadratic slow-down by letting the algorithm travel backwards sometimes, instead of always forwards, but when we move from random walks to quantum walks, it is a perfectly valid thing to do, since a quantum walker traverses a line in linear time, rather than quadratic.

A reversible classical deterministic algorithm can be described by the following objects.
\begin{itemize}
    \item A finite set ${\cal Z}$ of the algorithm's states;
    \item A sequence of permutations $\pi_1,\ldots,\pi_T:{\cal Z}\to {\cal Z}$ that depend (implicitly) on the input;
    \item An initial state $z_1\in {\cal Z}$;
    \item A partitioning of ${\cal Z}$ into accepting and rejecting states ${\cal Z}={\cal Z}_{\text{acc}}\sqcup {\cal Z}_{\text{rej}}$. 
\end{itemize}

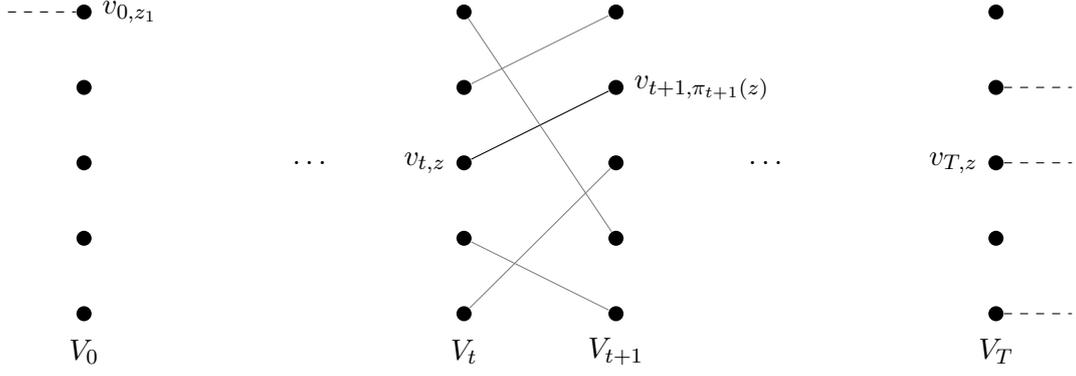
\begin{figure}
\centering
\begin{tikzpicture}

\foreach \i in {1,2,3,4,5} {
    \node[fill=black, circle, inner sep=2pt] (LL\i) at (-2,-\i) {};
}

\node at (1,-3) {$\dots$};

\foreach \i in {1,2,3,4,5} {
    \node[fill=black, circle, inner sep=2pt] (L\i) at (3,-\i) {};
}

\foreach \i in {1,2,3,4,5} {
    \node[fill=black, circle, inner sep=2pt] (R\i) at (5,-\i) {};
}

\node at (7,-3) {$\dots$};

\foreach \i in {1,2,3,4,5} {
    \node[fill=black, circle, inner sep=2pt] (RR\i) at (10,-\i) {};
}

\node[anchor=west] at (LL1.east) {$v_{0,z_1}$};
\node[anchor=east] at (L3.west) {$v_{t,z}$};
\node[anchor=west] at (R2.east) {$v_{t+1,\pi_{t+1}(z)}$};
\node[anchor=east] at (RR3.west) {$v_{T,z}$};

\draw[-] (L3) -- (R2);
\draw[-, gray] (L1) -- (R4);
\draw[-, gray] (L2) -- (R1);
\draw[-, gray] (L4) -- (R5);
\draw[-, gray] (L5) -- (R3);

\draw[dashed] (-3,-1)--(LL1);
\draw[dashed] (RR2)--(11,-2);
\draw[dashed] (RR3)--(11,-3);
\draw[dashed] (RR5)--(11,-5);

\node at (-2,-5.5) {$V_0$};
\node at (3,-5.5) {$V_t$};
\node at (5,-5.5) {$V_{t+1}$};
\node at (10,-5.5) {$V_T$};

\end{tikzpicture}
\caption{ The graph $G$ for a reversible classical deterministic algorithm with five algorithm states. Between each set $V_t$ and $V_{t+1}$, there is exactly one edge connected to $s=v_{0,z_1}$. The dashed lines represent dangling boundary ``edges''. }\label{fig:classical_algo_graph}
\end{figure}

From these, we define $G=(V,E)$ by specifying its vertices and edges:
\begin{align*}
V&=\bigsqcup_{\ell=0}^T V_\ell, \mbox{ where }V_\ell=\{v_{\ell,z}: z\in {\cal Z}\},\\
\mbox{and }E&=\{e_{\ell,z}=(v_{\ell,z},v_{\ell+1,\pi_{\ell+1}(z)}):0\le \ell \le T-1, z\in {\cal Z}\}.
\end{align*}
Note that the graph $G$ depends on the input because the edges are specified by the permutations. 
The boundary is defined $B=\{s\}\cup M$ where $s=v_{0,z_1}$ and $M=\{v_{T,z}:z\in{\cal Z}_{\text{acc}}\}$.

Next, we introduce a subspace graph structure on $G$ in order to run a quantum walk following \examp{multiQW-QW}. It is easy to see that the connected component containing $s$ is a line with one vertex in each of $V_{\ell}$, and the algorithm accepts if and only if the vertex in $V_T$ connected to $s$ is in $M$ -- that is, if and only if $s$ is connected to some vertex in $M$.
Thus the quantum walk will decide whether $\pi_T\circ\ldots\circ\pi_1(z_1)\in {\cal Z}_{\text{acc}}$.  See \fig{classical_algo_graph} for a visualization of $G$. We set all edge weights to be $1$, and then, following \examp{multiQW-QW}, we define boundary and edge spaces:
\begin{align*}
    \Xi_{s}&=\mathrm{span} \{ \ket{s},\ket{\leftarrow,s}\},\quad\Xi_{s}^{\cal A}=\mathrm{span} \{ \ket{s}+\ket{\leftarrow,s}\},\quad\mbox{and}\quad \Xi_{s}^{\cal B}=\{0\}\\
    \forall z\in {\cal Z}_{\text{acc}},\; \Xi_{v_{T,z}}&=\mathrm{span} \{ \ket{\rightarrow,v_{T,z}},\ket{v_{T,z}}\},\quad\mbox{and}\quad\Xi_{v_{T,z}}^{\cal A}=\Xi_{v_{T,z}}^{\cal B}=\{0\}\\
    \forall \ell\in\{0,\dots,T-1\},z\in{\cal Z},\; \Xi_{e_{\ell,z}}&=\mathrm{span}\{\ket{\rightarrow,\ell,z},\ket{\leftarrow,\ell,z}\},\\
    \Xi_{e_{\ell,z}}^{\cal A}&=\mathrm{span}\{\ket{\rightarrow,\ell,z}+\ket{\leftarrow,\ell,z}\},\;\mbox{ and }\Xi_{e_{\ell,z}}^{\cal B}=\{0\},
\end{align*}
and all vertices are simple (\defin{simple-vertex}), so:
\begin{align*}
    \forall z\in {\cal Z},\; {\cal V}_{v_{0,z}}&=\mathrm{span}\{\ket{\psi_\star(v_{0,z})}=\ket{\rightarrow,{0,z}}\},\\
    \forall t\in\{1,\dots,T-1\}, z\in{\cal Z},\; {\cal V}_{v_{\ell,z}}&=\mathrm{span}\{\ket{\psi_\star(v_{\ell,z})}=\ket{\leftarrow,{\ell-1,\pi_{\ell}^{-1}(z)}}+\ket{\rightarrow,{\ell,z}}\}\\
    \forall z\in {\cal Z},\; {\cal V}_{v_{T,z}}&=\mathrm{span}\{\ket{\psi_\star(v_{0,z})}=\ket{\leftarrow,{T-1,\pi_T^{-1}(z)}}\}.
\end{align*}
While the algorithm is represented by a path, as is the graph for AND in \sec{AND-gadget}, in contrast to \sec{AND-gadget}, we represent this path in terms of the local constraints $\ket{\leftarrow,\ell-1,\pi_\ell^{-1}(z)}+\ket{\rightarrow,\ell,z}$. That is because, unlike the path in \sec{AND-gadget}, this path is defined sequentially. These local constraints can be generated by calling the relevant permutation, which is one step of the algorithm, whereas one can verify that generating a basis like that in \sec{AND-gadget}, which includes a superposition over the whole path, or whole sequence of algorithm steps, would require running the whole algorithm.

Unlike the other examples in this paper, to get a quantum walk algorithm, following \examp{multiQW-QW}, we need $\ket{\psi_0}=\ket{s}$. We do not have $B=\{s,t\}$ (assuming $|{\cal Z}_{\text{acc}}|\neq 1$), so we cannot use canonical boundary here, but it would be possible to modify this subspace graph to have canonical boundary, as follows. We can essentially take two copies of $G$, $G^{\rightarrow}$ and $G^\leftarrow$, and identify the points $M$ in each of the two copies, so that the connected component of $s$ is a line that first goes through $G^\rightarrow$ to some vertex in $V_T$, and then, if it has reached a point in $M$, goes through $G^\leftarrow$ back to the second copy of $s$, which we could call $t$. This is like a classical algorithm that computes a final state, copies out the answer bit, and then uncomputes, which is also what we do for quantum algorithms in the next section, except that the answer bit goes into the phase. 

Another modification we could make to this subspace graph is as follows. We mentioned that a classical reversible algorithm is like a walk on a line, but $G$ is not a line, though the connected component containing $s$ always is. We can actually modify $G$ to get a subspace graph $G'$ that is a line of length $T$, where $v_{\ell}$ corresponds to the set $V_{\ell}$ in $G$, which better captures our intuition that a deterministic algorithm is a linear process. 
The edge spaces of $G'$ are obtained by combining those of $G$: $\Xi_{e_{\ell}}'=\bigoplus_{z\in{\cal Z}}\Xi_{e_{\ell,z}}$, and similarly for $\Xi_{e_{\ell}}^{'\cal A}$. The vertex spaces are similarly obtained: ${\cal V}_{v_\ell}'=\bigoplus_{z\in{\cal Z}}{\cal V}_{v_{\ell,z}}$.
This is essentially what we do for quantum algorithms in the next section.

\subsection{Example: Quantum Algorithms}\label{sec:alg-subspace-graph}

In this section, we describe a subspace graph from any quantum algorithm. This construction is essentially the same as the one in \cite{jeffery2022subroutines}, but we state it here explicitly as a subspace graph. While the construction in \cite{jeffery2022subroutines} applies to \emph{variable-time} quantum algorithms, for simplicity, we will consider algorithms that run for a fixed time $\sf T$. However, \lem{alg-subspace-graph} below, and all results that use it, also apply for variable-time algorithms, replacing $\sf T$ with the expected running time.

A quantum algorithm is a sequence of unitaries $U_1,\dots,U_{T}$ acting on the space
$${H}_{\cal Y}\otimes H_{\cal Z}=\mathrm{span}\{\ket{a}\ket{z}:a\in \{0,1\},z\in {\cal Z}\}.$$
We assume each unitary can be implemented in unit cost, and moreover, the unitary $\sum_{r=1}^T\ket{r}\bra{r}\otimes U_r$ can be implemented in unit cost.  

The algorithm may implicitly depend on some input $x\in\{0,1\}^n$, for example, by letting some of the unitaries be queries to $x$ (or in any other way, we don't care). 
We say the algorithm \emph{computes} some $f:\{0,1\}^n\rightarrow\{0,1\}$ if 
$$U_T\dots U_1\ket{0,0}\in\mathrm{span}\{\ket{f(x),z}:z\in{\cal Z}\}$$
on input $x$. We are assuming the algorithm has no error for simplicity. In some of our conclusions, we can always substitute an algorithm with error if the error is small enough that the algorithm is indistinguishable from an error-free one. In this section, we will prove the following.
\begin{lemma}\label{lem:alg-subspace-graph}
    From any quantum algorithm computing some $f:\{0,1\}^n\rightarrow\{0,1\}$ with no error in time $T$, and any positive weights $\{\alpha_r\}$ such that $\alpha_0=1$, we can derive an $st$-composable subspace graph (\defin{st-composable-subspace-graph}) $G$ that computes $f$ with $\dim H_G = O(|{\cal Z}|T)$,  $\hat{W}_+\leq 2\sum_{r=1}^{T}\frac{1}{\alpha_r}$ and $\hat{W}_-\leq 2\sum_{r=1}^T\alpha_r$; with $st$-composable working bases that have scaling factor $\r=2$ and can be generated in time $O(\log(T))$.
\end{lemma}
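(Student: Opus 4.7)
The plan is to build an $st$-composable subspace graph whose underlying graph is a path $s = v_0 - v_1 - \cdots - v_T = t$, with one edge $e_r = (v_{r-1}, v_r)$ per algorithm step and each layer carrying a full copy of the state space $H = H_{\cal Y} \otimes H_{\cal Z}$, so that the positive witness is literally the algorithm's trajectory $\ket{\psi_r} = U_r \cdots U_1 \ket{0,0}$ and the negative witness is its ``bulk'' symmetrization.

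Concretely, I would take $\Xi_{e_r} = \mathrm{span}\{\ket{\rightarrow, r, a, z}, \ket{\leftarrow, r, a, z} : (a,z) \in \{0,1\} \times {\cal Z}\}$ with weight $\alpha_r$, giving $\dim H_G = O(|{\cal Z}|T)$. Since these edges are not switches, \defin{st-composable-subspace-graph} forces $\Xi_{e_r}^{\cal B} = \{0\}$; I would set $\Xi_{e_r}^{\cal A} = \mathrm{span}\{\ket{\rightarrow, r, a, z} + \ket{\leftarrow, r, a, z} : (a,z)\}$, which on any positive witness forces the leftward coefficient to be the negative of the rightward, so the layer-$r$ flow takes the form $\frac{1}{\sqrt{\alpha_r}}(\ket{\rightarrow, r, \psi_r} - \ket{\leftarrow, r, \psi_r})$ for some $\ket{\psi_r} \in H$. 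I would then set ${\cal V}_{v_r} = \{\sqrt{\alpha_r}\ket{\leftarrow, r, \phi} + \sqrt{\alpha_{r+1}}\ket{\rightarrow, r+1, U_{r+1}\phi} : \phi \in H\}$ at each intermediate $v_r$, whose orthogonality to the witness forces $\ket{\psi_{r+1}} = U_{r+1}\ket{\psi_r}$; define ${\cal V}_s$ analogously so that $\ket{\psi_1} = U_1\ket{0,0}$; and let ${\cal V}_B = \mathrm{span}\{\ket{\leftarrow, s} + \ket{\rightarrow, t}\}$ give the canonical $st$-boundary (\defin{canonical-boundary}). For ${\cal V}_t$, after first cleaning up the algorithm (at a constant-factor overhead absorbed into the indexing of $\alpha_r$, using the $\alpha_0 = 1$ convention at the boundary) so that the accepting final state is a fixed computational basis vector, I would use a single constraint of the form $\ket{\rightarrow, t} + \sqrt{\alpha_T}\ket{\leftarrow, T, \mathrm{accept}}$ together with rank-$(|H|-1)$ constraints killing the remaining components.

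The cropped positive witness is then $\ket{\hat{w}} = \sum_r \frac{1}{\sqrt{\alpha_r}}(\ket{\rightarrow, r, \psi_r} - \ket{\leftarrow, r, \psi_r})$, of squared norm $2\sum_r 1/\alpha_r$ since each $\ket{\psi_r}$ is a unit vector; the cropped negative witness (in the $f(x) = 0$ case) is $\ket{\hat{w}_{\cal A}} = \sum_r \sqrt{\alpha_r}(\ket{\rightarrow, r, \psi_r} + \ket{\leftarrow, r, \psi_r})$, of squared norm $2\sum_r \alpha_r$, and the required membership $\ket{\hat{w}_{\cal A}} + \ket{\leftarrow, s} - \ket{\rightarrow, t} \in {\cal B}_G$ is obtained by telescoping the trajectory against ${\cal V}_s + \sum_r {\cal V}_{v_r} + {\cal V}_t + {\cal V}_B$, in the same style as the AND gadget argument in \sec{AND-gadget}. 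The composable basis conditions of \defin{composable-basis} with scaling factor $\r = 2$ drop out by direct inspection, with $\ket{b_0}$ and $\ket{b_1}$ coming from the $\ket{\leftarrow, s} \pm \ket{\rightarrow, t}$ boundary directions, and both $\Psi_{\cal A}$ and $\Psi_{\cal B}$ are generated in time $O(\log T)$ from the model's unit-cost access (\sec{model}) to the controlled unitary $\sum_r \ket{r}\bra{r} \otimes U_{r+1}$, together with $O(\log T)$ arithmetic on the time register.

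The main obstacle is the last step of the construction: designing ${\cal V}_t$ so that positive witness existence is equivalent to $f(x) = 1$, since for a general algorithm the accepting final state may be any vector in the entangled subspace $\mathrm{span}\{\ket{1, z} : z \in {\cal Z}\}$, and a single input-independent constraint of the form $\ket{\rightarrow, t} + \sqrt{\alpha_T}\ket{\leftarrow, T, \chi}$ cannot detect all such states. Cleaning up the algorithm (appending uncomputation so the final state lies on a specific computational basis vector) resolves this at a constant-factor cost absorbed into the weight assignment, so the stated bounds $2\sum_r 1/\alpha_r$ and $2\sum_r \alpha_r$ are preserved.
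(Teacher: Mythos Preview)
Your single-path construction is a genuinely different route from the paper's. The paper builds $G$ as \emph{two} parallel paths (one ``forward'' $\rightarrow$ and one ``backward'' $\leftarrow$) joined at the far end by a single \emph{reversal} edge $e_{\sf T}^{\leftrightarrow}$, with $s$ and $t$ being the two near endpoints. The key device is the edge subspace at the reversal point, $\Xi_{e_{\sf T}^{\leftrightarrow}}^{\cal A} = \mathrm{span}\{(\ket{\rightarrow} - (-1)^a\ket{\leftarrow})\ket{a,z}\ket{{\sf T}}\}$, which encodes the answer bit $a$ as a relative phase between the two paths. The positive witness is then the history state $(\ket{\rightarrow} + (-1)^{f(x)}\ket{\leftarrow})\sum_t \alpha_t^{-1/2}\ket{w^t}\ket{t}$, and the negative witness is the orthogonal combination $(\ket{\rightarrow} - (-1)^{f(x)}\ket{\leftarrow})\sum_t (-1)^t\sqrt{\alpha_t}\ket{w^t}\ket{t}$. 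This handles an arbitrary accepting state in $\mathrm{span}\{\ket{1,z}\}$ with no cleanup, and the two-path structure is also what makes the scaling factor come out to exactly $\r = 2$ (the $\ket{\bar{b}_1}$ term lives symmetrically on both paths).

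Your approach is viable in outline, but there is a concrete gap in the negative-witness step. When $f(x)=0$ and (after cleanup) $\psi_T = \ket{0,0}$, telescoping over ${\cal V}_s, {\cal V}_{v_1}, \ldots, {\cal V}_{v_{T-1}}$ together with the killing constraint $\ket{\leftarrow, T, 0,0}\in{\cal V}_t$ yields $\ket{\leftarrow,s}+\ket{\hat{w}_{\cal A}}$, \emph{not} $\ket{\leftarrow,s}+\ket{\hat{w}_{\cal A}}-\ket{\rightarrow,t}$. To produce the missing $-\ket{\rightarrow,t}$ you must subtract the constraint $\ket{\rightarrow,t}+\sqrt{\alpha_T}\ket{\leftarrow,T,1,0}\in{\cal V}_t$, which introduces a stray $-\sqrt{\alpha_T}\ket{\leftarrow,T,1,0}$; cancelling it forces you to add the entire ``phantom accepting trajectory'' $\phi_r = U_{r+1}^\dagger\cdots U_T^\dagger\ket{1,0}$ into $\ket{\hat{w}_{\cal A}}$, propagated all the way back to $s$ (where it terminates via your extra ${\cal V}_s$ constraints, since $\langle\phi_r|\psi_r\rangle=\langle 1,0|\psi_T\rangle=0$). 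This is fixable, but doubles $\norm{\ket{\hat{w}_{\cal A}}}^2$; together with the cleanup step it costs constant factors you have not accounted for, and in a single-path construction the value $\r=2$ does not simply ``drop out'' --- it depends on the weights at the two boundary-adjacent layers.
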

\noindent For example, we can set the weights all to 1, and then we get $\hat{W}_+$ and $\hat{W}_-$ both at most $2T$.

In order to remain consistent with \cite{jeffery2022subroutines} so that we can easily use results proven there, while still keeping the boundary of $G$ simple, we are going to assume that $U_1=-I$, at the expense of potentially making the algorithm one step longer. Thus, we will let ${\sf T}=T+1$, and the algorithm referred to in \lem{alg-subspace-graph} is actually $U_2,\dots,U_{\sf T}$. 

We define a subspace graph $G=(V,E)$ as follows. We will assume $\sf T$ is even, and let:
$$V=\{v_{2\ell-1}^{\rightarrow},v_{2\ell-1}^{\leftarrow}\}_{\ell=1}^{{\sf T}/2},
\quad
E=\{e_{2\ell}^{\rightarrow},e_{2\ell}^{\leftarrow}\}_{\ell=1}^{{\sf T}/2-1}\cup\{e_{\sf T}^{\leftrightarrow}\},\quad\mbox{and}\quad
B=\{s\colonequals v_1^{\rightarrow},t\colonequals v_1^{\leftarrow}\}$$
with $e_i^{\rightarrow}=(v_{i-1}^{\rightarrow},v_{i+1}^{\rightarrow})$ (and similarly for $e_i^{\leftarrow}$), and $e_{\sf T}^{\leftrightarrow}=(v_{{\sf T}-1}^{\rightarrow},v_{{\sf T}-1}^{\leftarrow})$.
\begin{figure}
\centering
    \begin{tikzpicture}
    \draw[dashed] (-1,1.5)--(0,1.5);
    \draw[dashed] (-1,0)--(0,0);
    \draw (0,1.5)--(3.5,1.5);
    \draw (4.5,1.5)--(5,1.5)--(5,0)--(4.5,0);
    \draw (0,0)--(3.5,0);

        \filldraw (0,1.5) circle (.075);  \filldraw (1.5,1.5) circle (.075);  \filldraw (3,1.5) circle (.075); \node at (4,1.5) {$\dots$};  \filldraw (5,1.5) circle (.075);

        \filldraw (0,0) circle (.075);  \filldraw (1.5,0) circle (.075);  \filldraw (3,0) circle (.075); \node at (4,0) {$\dots$};  \filldraw (5,0) circle (.075);

        \node at (0,1.75) {$v_1^{\rightarrow}$};
            \node at (.75,1.75) {$e_2^{\rightarrow}$};
            \node at (1.5,1.75) {$v_3^{\rightarrow}$};
            \node at (2.25,1.75) {$e_4^{\rightarrow}$};
            \node at (3,1.75) {$v_5^{\rightarrow}$};
            \node at (5,1.75) {$v_{{\sf T}-1}^{\rightarrow}$};

        \node at (5.5,.75) {$e_{\sf T}^{\leftrightarrow}$};

        \node at (0,.3) {$v_1^{\leftarrow}$};
            \node at (.75,.3) {$e_2^{\leftarrow}$};
            \node at (1.5,.3) {$v_3^{\leftarrow}$};
            \node at (2.25,.3) {$e_4^{\leftarrow}$};
            \node at (3,.3) {$v_5^{\leftarrow}$};
            \node at (5,.3) {$v_{{\sf T}-1}^{\leftarrow}$};
    \end{tikzpicture}
    \caption{The graph $G$. The dashed lines represent dangling boundary ``edges''.}
\end{figure}
We define:
\begin{equation}
    \begin{split}
        \forall \ell\in\{1,\dots,{\sf T}/2-1\},\; \Xi_{e_{2\ell}^{\rightarrow}} &\colonequals  \mathrm{span}\{\ket{\rightarrow}\ket{a,z}\ket{2\ell},\ket{\rightarrow}\ket{a,z}\ket{2\ell+1}:a\in \{0,1\},z\in{\cal Z}\}\\
        \forall \ell\in\{1,\dots,{\sf T}/2-1\},\; \Xi_{e_{2\ell}^{\leftarrow}} &\colonequals  \mathrm{span}\{\ket{\leftarrow}\ket{a,z}\ket{2\ell},\ket{\leftarrow}\ket{a,z}\ket{2\ell+1}:a\in \{0,1\},z\in{\cal Z}\}\\
        \Xi_{e_{\sf T}^{\leftrightarrow}} &\colonequals \mathrm{span}\{\ket{\rightarrow}\ket{a,z}\ket{{\sf T}},\ket{\leftarrow}\ket{a,z}\ket{{\sf T}}:a\in \{0,1\},z\in{\cal Z}\}\\
        \Xi_s=\Xi_{v_1^{\rightarrow}}&\colonequals  \mathrm{span}\{\ket{s}\colonequals \ket{\rightarrow}\ket{0,0}\ket{0},\ket{\leftarrow,s}\colonequals \ket{\rightarrow}\ket{0,0}\ket{1}\}\\
        \Xi_t=\Xi_{v_1^{\leftarrow}}&\colonequals  \mathrm{span}\{\ket{t}\colonequals \ket{\leftarrow}\ket{0,0}\ket{0},\ket{\rightarrow,t}\colonequals \ket{\leftarrow}\ket{0,0}\ket{1}\}.
    \end{split}
\end{equation}

Following \cite[Definition 3.1]{jeffery2022subroutines}, we define the following sets of vectors in $H_G=\Xi_E\oplus\Xi_B$ to use in the construction of our subspaces, for positive weights $\{\alpha_r\}_r$ such that $\alpha_0=\alpha_1=1$.
\begin{definition}\label{def:transition-states}
The \emph{forward transition states} are defined:
\begin{equation*}
\begin{split}
\forall r\in\{0,\dots,{\sf T}-1\},\;
\Psi_r^{\rightarrow}&\colonequals \left\{\ket{\psi_{a,z,r}^{\rightarrow}} \colonequals  \ket{\rightarrow}\left( \sqrt{\alpha_r}\ket{a,z}\ket{r} - \sqrt{\alpha_{r+1}}U_{r+1}\ket{a,z}\ket{r+1} \right): a\in\{0,1\},z\in {\cal Z}\right\}.
\end{split}
\end{equation*}
The \emph{backward transition states} are defined:
\begin{equation*}
\begin{split}
\forall r\in\{0,\dots,{\sf T}-1\},\;
\Psi_r^{\leftarrow}&\colonequals \left\{\ket{\psi_{a,z,r}^{\leftarrow}} \colonequals  \ket{\leftarrow}\left( \sqrt{\alpha_r}\ket{a,z}\ket{r} - \sqrt{\alpha_{r+1}}U_{r+1}\ket{a,z}\ket{r+1} \right): a\in\{0,1\},z\in {\cal Z}\right\}.
\end{split}
\end{equation*}
The \emph{reversal states} are defined:
\begin{equation*}
\Psi_{\sf T}^{\leftrightarrow}\colonequals \left\{ \ket{\psi_{a,z,{\sf T}}^{\leftrightarrow}} \colonequals  \sqrt{\alpha_{\sf T}}\left(\ket{\rightarrow}-(-1)^{a}\ket{\leftarrow}\right) \ket{a,z}\ket{{\sf T}}:a\in\{0,1\}, z\in {\cal Z}\right\}.
\end{equation*}
Finally, let:
$$\Psi_0=\bigcup_{\ell=0}^{{\sf T}/2-1}(\Psi_{2\ell}^{\rightarrow}\cup\Psi_{2\ell}^{\leftarrow})\cup\Psi_{\sf T}^{\leftrightarrow}
\quad\mbox{ and }\quad
\Psi_1=\bigcup_{\ell=0}^{{\sf T}/2}(\Psi_{2\ell-1}^{\rightarrow}\cup\Psi_{2\ell-1}^{\leftarrow}).
$$
\end{definition}
Note that since we assume $U_1=-I$, we have:
$$\ket{\psi_{0,0,0}^{\rightarrow}}=\ket{\rightarrow}\ket{0,0}\ket{0} - \ket{\rightarrow}U_1\ket{0,0}\ket{1}
= \ket{s}+\ket{\rightarrow}\ket{0,0}\ket{1}=\ket{s}+\ket{\leftarrow,s},$$
and similarly,
$$\ket{\psi_{0,0,0}^{\leftarrow}}=\ket{\leftarrow}\ket{0,0}\ket{0} +\ket{\leftarrow}\ket{0,0}\ket{1}
=\ket{t}+\ket{\rightarrow,t}.$$

Then we define:
\begin{equation}
    \begin{split}
        &{\cal V}_{v_1^{\rightarrow}} \colonequals  \mathrm{span}\{\ket{\psi_{0,0,1}^{\rightarrow}}\} \quad\mbox{ and }\quad 
        {\cal V}_{v_1^{\leftarrow}} \colonequals  \mathrm{span}\{\ket{\psi_{0,0,1}^{\leftarrow}}\}\\
        \forall \ell\in\{2,\dots,{\sf T}/2\}, \;& {\cal V}_{v_{2\ell-1}^{\rightarrow}}\colonequals \mathrm{span}\Psi_{2\ell-1}^{\rightarrow}
        \quad\mbox{ and }\quad {\cal V}_{v_{2\ell-1}^{\leftarrow}}\colonequals \mathrm{span}\Psi_{2\ell-1}^{\leftarrow}\\
        \forall \ell\in\{1,\dots,{\sf T}/2-1\},  \; &\Xi_{e_{2\ell}^{\rightarrow}}^{\cal A}\colonequals \mathrm{span}\Psi_{2\ell}^{\rightarrow}
        \quad\mbox{ and }\quad \Xi_{e_{2\ell}^{\leftarrow}}^{\cal A}\colonequals \mathrm{span}\Psi_{2\ell}^{\leftarrow}\\
        &\Xi_{e_{\sf T}^{\leftrightarrow}}^{\cal A}\colonequals \mathrm{span}\Psi_{\sf T}^{\leftrightarrow}\\
        &\Xi_s^{\cal A} \colonequals  \mathrm{span}\{\ket{\psi_{0,0,0}^{\rightarrow}}\}
        =\mathrm{span}\{\ket{s}+\ket{\leftarrow,s}\}\\
         &\Xi_t^{\cal A} \colonequals  \mathrm{span}\{\ket{\psi_{0,0,0}^{\leftarrow}}\}
         =\mathrm{span}\{\ket{t}+\ket{\rightarrow,t}\}\\
         &{\cal V}_B\colonequals\mathrm{span}\{\ket{\leftarrow,s}+\ket{\rightarrow,t}\}. 
    \end{split}
\end{equation}
For all $e\in E\cup B$, we will have $\Xi_e^{\cal B}=\{0\}$.
Note that the subspace graph we have just defined has canonical $st$-boundary (\defin{canonical-boundary}). 
As per \defin{multid-QW}, we have
\begin{equation}\label{eq:alg-cal-A}
    \begin{split}
{\cal A}_G&=\Xi_{s}^{\cal A}\oplus \Xi_{t}^{\cal A}\oplus\bigoplus_{\ell=1}^{{\sf T}/2-1}(\Xi_{e_{2\ell}^{\rightarrow}}^{\cal A}\oplus \Xi_{e_{2\ell}^{\leftarrow}}^{\cal A})\oplus \Xi_{e_{\sf T}^{\leftrightarrow}}^{\cal A}\\
{\cal B}_G&=\bigoplus_{\ell=1}^{{\sf T}/2}\left({\cal V}_{v_{2\ell-1}^{\rightarrow}}\oplus{\cal V}_{v_{2\ell-1}^{\leftarrow}}\right)+{\cal V}_B.
    \end{split}
\end{equation}

In \cite{jeffery2022subroutines}, the vectors in \defin{transition-states} form the working basis, which works perfectly fine, because since all $\Xi_e^{\cal B}$ are trivial, we have a decomposition of ${\cal B}_G$ into orthogonal spaces ${\cal V}_v$, so we can simply combine their bases. However, here, we will use a slightly different working basis for ${\cal B}_G$ to conform to \defin{working-basis}, as follows.

Let $\bar{\Psi}_r^{\rightarrow}$ contain normalized vectors.

$$\Psi_{{\cal A}} = \left\{\ket{\bar\psi_{0,0,0}^{\rightarrow}}=\frac{1}{\sqrt{2}}\left(\ket{s}+\ket{\leftarrow,s}\right),\ket{\bar\psi_{0,0,0}^\rightarrow}=\frac{1}{\sqrt{2}}\left(\ket{t}+\ket{\rightarrow,t}\right)\right\}\cup\bigcup_{\ell=1}^{{\sf T}/2-1}(\bar\Psi_{2\ell}^{\rightarrow}\cup\bar\Psi_{2\ell}^{\leftarrow})\cup\bar\Psi_{{\sf T}}^{\leftrightarrow}.$$

Note that
$$\ket{\psi_{0,0,1}^{\rightarrow}}=\ket{\rightarrow}\ket{0,0}\ket{1}-\ket{\rightarrow}U_2\ket{0,0}\ket{2} = \ket{\leftarrow,s}-\ket{\rightarrow}U_2\ket{0,0}\ket{2}$$
and similarly, 
$$\ket{\psi_{0,0,1}^{\leftarrow}}=\ket{\rightarrow,t}-\ket{\leftarrow}U_2\ket{0,0}\ket{2}.$$
Define:
\begin{equation*}
\begin{split}
    \ket{b_0} &\colonequals  \frac{1}{\sqrt{2}}\left(\ket{\psi_{0,0,1}^{\rightarrow}}+\ket{\psi_{0,0,1}^{\leftarrow}}+(\ket{\rightarrow}+\ket{\leftarrow})U_2\ket{0,0}\ket{2})\right)=\frac{1}{\sqrt{2}}(\ket{\leftarrow,s}+\ket{\rightarrow,t})\\
    \ket{b_1} &\colonequals  \frac{1}{2}\left(\ket{\psi_{0,0,1}^{\rightarrow}}-\ket{\psi_{0,0,1}^{\leftarrow}}\right)=\frac{1}{2}\left(\ket{\leftarrow,s}-\ket{\rightarrow,t} - (\ket{\rightarrow}-\ket{\leftarrow})U_2\ket{0,0}\ket{2}\right)\\
    \ket{b_2} &\colonequals  \frac{1}{\sqrt{2}}(\ket{\rightarrow}+\ket{\leftarrow})U_2\ket{0,0}\ket{2}
\end{split}
\end{equation*}
Then we have
\begin{equation*}
    \mathrm{span}\{\ket{b_0}-\ket{b_2},\ket{b_1}\} = \mathrm{span}\{\ket{\psi_{0,0,1}^{\rightarrow}},\ket{\psi_{0,0,1}^{\leftarrow}}\}={\cal V}_{v_1^{\rightarrow}}\oplus{\cal V}_{v_1^{\leftarrow}}.
\end{equation*}
Thus
$$\Psi_{{\cal B}} = \{\ket{b_0},\ket{b_1},\ket{b_2}\} \cup \bigcup_{\ell=2}^{{\sf T}/2}(\Psi_{2\ell-1}^{\rightarrow}\cup\Psi_{2\ell-1}^{\leftarrow})\cup\Psi_{{\sf T}}^{\leftrightarrow}$$
is an orthonormal basis for ${\cal B}_G$.

\noindent The following is given by a trivial modification of \cite[Lemma 3.3]{jeffery2022subroutines}.
\begin{lemma}\label{lem:alg-subspace-graph-basis}
    Assuming unit-time access to $\sum_{\ell=1}^{{\sf T}}\ket{\ell}\bra{\ell}\otimes U_{\ell}$, the working bases $\Psi_{{\cal A}}$ and $\Psi_{{\cal B}}$ can be generated in $O(\log({\sf T}))$ complexity.
\end{lemma}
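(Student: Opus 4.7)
The plan is to reduce to the result of \cite[Lemma 3.3]{jeffery2022subroutines}, which already handles essentially the same transition-state bases, and then deal with the few boundary vectors that are new to our construction. Concretely, \defin{working-basis} requires two things for each working basis: (i) a reflection around the span of the label set, and (ii) an implementable map taking each label to the corresponding basis vector. I would treat these for $\Psi_{\cal A}$ and $\Psi_{\cal B}$ separately.

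For $\Psi_{\cal A}$, the bulk of the basis consists of the normalized forward, backward, and reversal transition states indexed by $(a,z,r)$ for even $r$ (and the final $r={\sf T}$). The label-generation map is the standard trick: starting from $\ket{a,z,r}$, prepare the two-state superposition $\frac{1}{\sqrt{\alpha_r+\alpha_{r+1}}}(\sqrt{\alpha_r}\ket{r}-\sqrt{\alpha_{r+1}}\ket{r+1})$ in an ancilla time register by a controlled rotation (cost $O(\log{\sf T})$ since the rotation angle depends on $r$), then apply the given controlled unitary $\sum_{\ell}\ket{\ell}\bra{\ell}\otimes U_\ell$ conditioned on the ``$r+1$'' branch so that the corresponding term becomes $U_{r+1}\ket{a,z}\ket{r+1}$, and finally attach the direction register $\ket{\rightarrow}$ or $\ket{\leftarrow}$ (or the reversal superposition for $r={\sf T}$). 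The two boundary vectors $\frac{1}{\sqrt 2}(\ket{s}+\ket{\leftarrow,s})$ and $\frac{1}{\sqrt 2}(\ket{t}+\ket{\rightarrow,t})$ are prepared in $O(1)$ by a single Hadamard on the time-register bit distinguishing positions $0$ and $1$. The reflection around the label set is just a test of the time and direction registers against the allowed label pattern, which is done in $O(\log{\sf T})$.

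For $\Psi_{\cal B}$, the transition-state part at odd $r\geq 3$ is generated by exactly the same procedure. The new piece is $\{\ket{b_0},\ket{b_1},\ket{b_2}\}$, which replaces the two non-orthogonal vectors $\ket{\psi_{0,0,1}^\rightarrow},\ket{\psi_{0,0,1}^\leftarrow}$. Observing from the definitions that this triple is supported entirely on $\mathrm{span}\{\ket{\leftarrow,s},\ket{\rightarrow,t},(\ket{\rightarrow}\pm\ket{\leftarrow})U_2\ket{0,0}\ket{2}\}$, generating it reduces to: (a) preparing a $3$-dimensional orthonormal set on three fixed label states, which is a constant-size unitary; (b) applying, conditioned on the ``$r=2$'' branches, the controlled unitary $\sum_\ell\ket{\ell}\bra{\ell}\otimes U_\ell$ once, at unit cost. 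Both steps take $O(\log{\sf T})$ including the address arithmetic. The reflection around the label set of $\Psi_{\cal B}$ is again a classical predicate on the time and direction registers, implementable in $O(\log{\sf T})$.

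The only step that requires any care is verifying that the label-to-basis map for $\Psi_{\cal B}$ is actually unitary on its full label set --- i.e., that the three replacement vectors $\ket{b_0},\ket{b_1},\ket{b_2}$ can be produced by a map whose domain labels are disjoint from and do not interfere with those of the remaining transition states. This is straightforward because the remaining transition states live in time-register positions $\geq 3$, whereas $\{\ket{b_0},\ket{b_1},\ket{b_2}\}$ touches only positions $\{1,2\}$ together with the two boundary registers, so the two sub-procedures commute and can be combined by a controlled dispatch on $r$. Putting these pieces together, both bases can be generated in $O(\log{\sf T})$ time, matching the cost in \cite[Lemma 3.3]{jeffery2022subroutines}.
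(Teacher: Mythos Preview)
Your proposal is correct and takes the same approach as the paper: the paper's own proof is simply the one-line statement that this is a trivial modification of \cite[Lemma 3.3]{jeffery2022subroutines}, and your write-up spells out exactly what that modification entails (the standard controlled-rotation/controlled-$U_\ell$ generation of the transition states, plus the constant-size fix-up for the three replacement vectors $\ket{b_0},\ket{b_1},\ket{b_2}$). Your observation that the support of $\{\ket{b_0},\ket{b_1},\ket{b_2}\}$ is confined to time positions $1,2$ while the remaining odd-$r$ transition states in $\Psi_{\cal B}$ start at $r=3$ is precisely the ``triviality'' the paper is appealing to.
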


\noindent We will also use the following states from \cite[Definition 3.4]{jeffery2022subroutines} in the construction of our witnesses:
\begin{definition}[Algorithm States]\label{def:algorithm-states}
Define the following \emph{algorithm states} in $H_{G}$: 
\begin{align*}
\ket{w^{0}} &= \ket{0},\quad
\forall t\in [{\sf T}],\; \ket{w^t} = U_{t}\ket{w^{t-1}}.
\end{align*}
The \emph{positive history state} of the algorithm is defined:
\begin{equation*}
\ket{w_+} = (\ket{\rightarrow}+(-1)^{f(x)}\ket{\leftarrow})\sum_{t=0}^{{\sf T}}\frac{1}{\sqrt{\alpha_t}}\ket{w^t}\ket{t}.
\end{equation*}
The \emph{negative history state} of the algorithm is defined:
\begin{equation*}
\ket{w_-} = (\ket{\rightarrow}-(-1)^{f(x)}\ket{\leftarrow})\sum_{t=0}^{{\sf T}}{\sqrt{\alpha_t}}(-1)^t\ket{w^t}\ket{t}.
\end{equation*}
\end{definition}

\noindent We will use the following properties from \cite{jeffery2022subroutines}, using the fact that the algorithm has no error: 
\begin{claim}\label{clm:history-state-props}
\begin{enumerate}
\item $\norm{\ket{w_-}}^2 = 2\sum_{t=0}^{{\sf T}}\alpha_t$ (Corollary 3.7)
\item $\norm{\ket{w_+}}^2 = 2\sum_{t=0}^{{\sf T}}\frac{1}{\alpha_t}$ (Corollary 3.7)
\item $\ket{w_+}\in \mathrm{span}\Psi_0^\bot\cap\mathrm{span}\Psi_1^\bot$ 
(Claim 3.8).
\item $\ket{w_-}\in \mathrm{span}\Psi_0$ and $\ket{w_-}-\underbrace{(\ket{\rightarrow}-(-1)^{f(x)}\ket{\leftarrow})\ket{0}\ket{0}}_{\ket{s}-(-1)^{f(x)}\ket{t}}\in \mathrm{span}\Psi_1$ (Claim 3.9).
\end{enumerate}
\end{claim}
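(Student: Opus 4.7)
The plan is to verify each of the four items directly from the explicit formulas for $\ket{w_\pm}$ and the transition-state definitions. Throughout, I rely on three facts: unitarity of each $U_{r+1}$, the recursion $\ket{w^{r+1}}=U_{r+1}\ket{w^r}$, and the no-error hypothesis $\ket{w^{\sf T}}\in\mathrm{span}\{\ket{f(x),z}:z\in{\cal Z}\}$.

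Items 1 and 2 are quick norm calculations. The algorithm states $\ket{w^t}$ are unit vectors (being unitary images of $\ket{0,0}$) and the time registers $\ket{t}$ are mutually orthogonal, so all cross terms in $\|\ket{w_\pm}\|^2$ vanish. The factor of $2$ in each bound comes from $\|\ket{\rightarrow}\pm(-1)^{f(x)}\ket{\leftarrow}\|^2=2$, and what remains is precisely $\sum_t \alpha_t$ or $\sum_t 1/\alpha_t$.

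For item 3, I check that $\ket{w_+}$ is orthogonal to each element of $\Psi_0\cup\Psi_1$ in turn. Against a forward transition $\ket{\psi_{a,z,r}^{\rightarrow}}$, only the $\ket{\rightarrow}$-component of $\ket{w_+}$ survives, so the inner product reduces to
\[
\sqrt{\alpha_r}\tfrac{1}{\sqrt{\alpha_r}}\braket{a,z}{w^r}-\sqrt{\alpha_{r+1}}\tfrac{1}{\sqrt{\alpha_{r+1}}}\bra{a,z}U_{r+1}^\dagger\ket{w^{r+1}}=\braket{a,z}{w^r}-\braket{a,z}{w^r}=0,
\]
using $U_{r+1}^\dagger\ket{w^{r+1}}=\ket{w^r}$; backward transitions are identical. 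Against a reversal state $\ket{\psi_{a,z,\sf T}^{\leftrightarrow}}$ both directions contribute, producing a prefactor $1-(-1)^{a+f(x)}$ times $\braket{a,z}{w^{\sf T}}$: the no-error hypothesis kills the inner product unless $a=f(x)$, and for $a=f(x)$ the prefactor itself vanishes.

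Item 4 is the substantive part; my approach is telescoping. The key identity, obtained from the definition of transition states together with $U_{r+1}\ket{w^r}=\ket{w^{r+1}}$, is that for each $r$ and each direction $\square\in\{\rightarrow,\leftarrow\}$,
\[
\sum_{a,z}\braket{w^r}{a,z}\,\ket{\psi_{a,z,r}^{\square}}=\ket{\square}\bigl(\sqrt{\alpha_r}\ket{w^r}\ket{r}-\sqrt{\alpha_{r+1}}\ket{w^{r+1}}\ket{r+1}\bigr).
\]
To realise $\ket{w_-}$ inside $\mathrm{span}\Psi_0$, I sum this over even $r\in\{0,2,\dots,{\sf T}-2\}$, which telescopes to $\ket{\square}\sum_{t=0}^{{\sf T}-1}(-1)^t\sqrt{\alpha_t}\ket{w^t}\ket{t}$. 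I then append the reversal-state combination $\sum_{a,z}\braket{w^{\sf T}}{a,z}\ket{\psi_{a,z,\sf T}^{\leftrightarrow}}$, which by the no-error hypothesis equals $\sqrt{\alpha_{\sf T}}(\ket{\rightarrow}-(-1)^{f(x)}\ket{\leftarrow})\ket{w^{\sf T}}\ket{\sf T}$; this extends the telescoped sum to $t={\sf T}$ and simultaneously supplies the correct $\ket{\leftarrow}$-sign, gluing the two directions into exactly $\ket{w_-}$. For the second assertion, I do the same telescoping over odd $r\in\{1,3,\dots,{\sf T}-1\}$; here no reversal state is needed because the telescoping already reaches $t={\sf T}$, and the missing $t=0$ endpoints contribute exactly $\ket{\rightarrow}\ket{0,0}\ket{0}-(-1)^{f(x)}\ket{\leftarrow}\ket{0,0}\ket{0}=\ket{s}-(-1)^{f(x)}\ket{t}$, yielding $\ket{w_-}-(\ket{s}-(-1)^{f(x)}\ket{t})\in\mathrm{span}\Psi_1$. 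The only real obstacle is careful sign bookkeeping across the two telescopes and the boundary identifications $\ket{\rightarrow}\ket{0,0}\ket{0}=\ket{s}$, $\ket{\leftarrow}\ket{0,0}\ket{0}=\ket{t}$; the underlying content is elementary.
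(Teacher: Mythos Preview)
Your proof is correct; the telescoping argument for item~4 and the case split on $a=f(x)$ versus $a\neq f(x)$ for the reversal states in item~3 are exactly the right ideas, and your sign bookkeeping checks out (in particular, ${\sf T}$ even gives $(-1)^{\sf T}=1$, so the reversal contribution slots in with the correct sign). The paper itself does not give a proof of this claim but defers to \cite{jeffery2022subroutines}; your self-contained verification is the natural one and is essentially what that reference does.
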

Items (1) and (2) are rather obvious in the case of fixed time quantum algorithms that we are specializing to here in order to avoiding defining variable-time quantum algorithms. However, we remark that in \cite{jeffery2022subroutines}, the more general case of variable-time algorithms is considered, so ${\sf T}$ is allowed to be a random variable, and the expressions in (1) and (2) are replaced by expectations. The construction in this section, which again, is just putting the construction of \cite{jeffery2022subroutines} into the language of subspace graphs, would be the same in this more general case, and thus our composition results also hold for this more general kind of algorithm. 

We have the following corollaries, which, combined with \lem{alg-subspace-graph-basis}, prove \lem{alg-subspace-graph}.
\begin{corollary}[Positive analysis]
    When $f(x)=1$, $\ket{w_+}$ is a positive witness for $G$ (see \defin{pos-graph-wit}) with
    $$\norm{\ket{\hat{w}_+}}^2=2\sum_{\ell=2}^{{\sf T}}\frac{1}{\alpha_{\ell}}.$$
\end{corollary}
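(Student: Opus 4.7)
The plan is to verify two things: first, that $\ket{w_+}$ lies in ${\cal A}_G^\bot \cap {\cal B}_G^\bot$ and has the boundary shape required by \defin{pos-graph-wit}; second, that stripping off the boundary part gives the stated norm.

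For the orthogonality, I would first observe from \eq{alg-cal-A} that ${\cal A}_G = \mathrm{span}\Psi_0$ and that ${\cal B}_G = \mathrm{span}\Psi_1 + {\cal V}_B$, where $\Psi_0,\Psi_1$ are the transition/reversal state sets from \defin{transition-states}. Then items (3) of \clm{history-state-props} gives $\ket{w_+}\in\mathrm{span}\Psi_0^\bot\cap\mathrm{span}\Psi_1^\bot$ for free, so what remains is orthogonality to ${\cal V}_B=\mathrm{span}\{\ket{\leftarrow,s}+\ket{\rightarrow,t}\}$. Since $f(x)=1$, the electron factor in $\ket{w_+}$ is $\ket{\rightarrow}-\ket{\leftarrow}$. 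Using $\ket{\leftarrow,s}=\ket{\rightarrow}\ket{0,0}\ket{1}$ and $\ket{\rightarrow,t}=\ket{\leftarrow}\ket{0,0}\ket{1}$, together with $U_1=-I$ (so $\ket{w^1}=-\ket{0,0}$) and $\alpha_1=1$, a direct calculation gives $\braket{\leftarrow,s}{w_+}=-1$ and $\braket{\rightarrow,t}{w_+}=+1$, which sum to zero, establishing orthogonality to ${\cal V}_B$.

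Next, to put $\ket{w_+}$ in the form demanded by \defin{pos-graph-wit}, I would compute all four inner products of $\ket{w_+}$ with the boundary basis $\{\ket{s},\ket{\leftarrow,s},\ket{\rightarrow,t},\ket{t}\}$. Using $\alpha_0=\alpha_1=1$, $\ket{w^0}=\ket{0,0}$, and $U_1=-I$, these are $+1,-1,+1,-1$ respectively, so
\begin{equation*}
\Pi_B\ket{w_+}=\ket{s}-\ket{\leftarrow,s}+\ket{\rightarrow,t}-\ket{t},
\end{equation*}
exactly matching the required boundary. Hence $\ket{w_+}=\ket{s}-\ket{\leftarrow,s}+\Pi_E\ket{w_+}+\ket{\rightarrow,t}-\ket{t}$ with cropped witness $\ket{\hat{w}_+}=\Pi_E\ket{w_+}$, confirming that $\ket{w_+}$ is a positive witness for $G$.

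Finally, for the norm, item (2) of \clm{history-state-props} gives $\norm{\ket{w_+}}^2=2\sum_{t=0}^{\sf T}\frac{1}{\alpha_t}$, and the previous paragraph gives $\norm{\Pi_B\ket{w_+}}^2=4$. Since $\alpha_0=\alpha_1=1$, Pythagoras yields
\begin{equation*}
\norm{\ket{\hat{w}_+}}^2=\norm{\ket{w_+}}^2-\norm{\Pi_B\ket{w_+}}^2=2\sum_{t=0}^{\sf T}\frac{1}{\alpha_t}-4=2\sum_{\ell=2}^{{\sf T}}\frac{1}{\alpha_\ell},
\end{equation*}
as claimed. The only nontrivial step is the direct check that $\ket{w_+}$ is orthogonal to $\ket{\leftarrow,s}+\ket{\rightarrow,t}$; this is really the only place where the conventions $U_1=-I$ and $\alpha_0=\alpha_1=1$ are used, and without them one would not have the clean boundary form and the exact constant $-4$ in the norm identity. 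Everything else is either cited from \clm{history-state-props} or a bookkeeping exercise.
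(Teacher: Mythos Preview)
Your argument is essentially the same as the paper's: cite item (3) of \clm{history-state-props} for orthogonality to $\Psi_0$ and $\Psi_1$, then separately check orthogonality to $\ket{\leftarrow,s}+\ket{\rightarrow,t}$ using $f(x)=1$, $U_1=-I$, and $\alpha_0=\alpha_1=1$, and finish by reading off the cropped norm. One small correction: you assert ${\cal A}_G=\mathrm{span}\Psi_0$ and ${\cal B}_G=\mathrm{span}\Psi_1+{\cal V}_B$, but in fact these are only inclusions (since $\Psi_0^\rightarrow,\Psi_0^\leftarrow,\Psi_1^\rightarrow,\Psi_1^\leftarrow$ contain states indexed by all $(a,z)$, whereas $\Xi_s^{\cal A},\Xi_t^{\cal A},{\cal V}_{v_1^\rightarrow},{\cal V}_{v_1^\leftarrow}$ use only $(a,z)=(0,0)$); the paper is careful to write $\subseteq$ here, and that is all your argument needs.
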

\begin{proof}
    When $f(x)=1$, since $\alpha_0=\alpha_1=1$ and $\ket{w^0}=\ket{0,0}$ and $\ket{w^1}=U_1\ket{0,0}$,
        \begin{multline*}
    \ket{w_+} = \underbrace{\ket{\rightarrow}\ket{0,0}\ket{0}}_{\ket{e}}-\underbrace{\ket{\leftarrow}\ket{0,0}\ket{0}}_{\ket{t}} + \underbrace{\ket{\rightarrow}U_1\ket{0,0}\ket{1}}_{-\ket{\leftarrow,s}}-\underbrace{\ket{\leftarrow}U_1\ket{0,0}\ket{1}}_{-\ket{\rightarrow,t}}\\
     + \underbrace{(\ket{\rightarrow}+(-1)^{f(x)}\ket{\leftarrow})\sum_{\ell=2}^{{\sf T}}\frac{1}{\sqrt{\alpha_\ell}}\ket{w^\ell}\ket{\ell}}_{\ket{\hat{w}_+}}.
    \end{multline*}
    By inspection (see \eq{alg-cal-A} and \defin{transition-states}), we have ${\cal A}_G\subseteq \mathrm{span}\Psi_0$ and ${\cal B}_G\subseteq \mathrm{span}\Psi_1\cup\{\ket{b_0}\}$.  
    By \clm{history-state-props}, $\ket{w_+}\in\mathrm{span}\Psi_0^\bot\cap\mathrm{span}\Psi_1^\bot \subset {\cal A}_G^\bot\cap{\cal B}_G^\bot$, 
    and we can see that $\ket{w_+}$ is orthogonal to $\ket{b_0}=\frac{1}{\sqrt{2}}(\ket{\leftarrow,s}+\ket{\rightarrow,t})$.
    Thus $\ket{w_+}$ is a positive witness, and 
    $$\norm{\ket{\hat{w}_+}}^2 = \norm{(I-\Pi_B)\ket{w_+}}^2 = 2\sum_{\ell=2}^{{\sf T}}\frac{1}{\alpha_\ell}.$$
\end{proof}

\begin{corollary}[Negative analysis]
    When $f(x)=0$, $\ket{w_-}$ is a negative witness for $G$ (see \defin{neg-graph-wit}) with 
    $$\norm{\ket{\hat{w}_-}}^2 = 2\sum_{\ell=2}^{{\sf T}}\alpha_\ell.$$
    Thus $\hat{W}_-(G)\leq 2\sum_{\ell=2}^{{\sf T}}\alpha_\ell$. 
\end{corollary}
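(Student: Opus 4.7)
The plan is to directly verify the two conditions in \defin{neg-graph-wit} using the properties of the negative history state $\ket{w_-}$ collected in \clm{history-state-props}, and then compute the cropped norm by separating out the boundary contributions at times $t=0,1$.

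First, I would specialize $\ket{w_-}$ to the case $f(x)=0$, so that $\ket{w_-}=(\ket{\rightarrow}-\ket{\leftarrow})\sum_{t=0}^{{\sf T}}\sqrt{\alpha_t}(-1)^t\ket{w^t}\ket{t}$. Comparing the definitions of $\Psi_0$ and $\Psi_1$ from \defin{transition-states} with the spaces defined in \eq{alg-cal-A}, one reads off that $\mathrm{span}\Psi_0 = {\cal A}_G$ (the union over $\ell$ of $\Psi_{2\ell}^{\rightarrow}\cup\Psi_{2\ell}^{\leftarrow}$ together with $\Psi_{\sf T}^{\leftrightarrow}$ generates exactly $\Xi_s^{\cal A}\oplus\Xi_t^{\cal A}\oplus\bigoplus_\ell(\Xi_{e_{2\ell}^\rightarrow}^{\cal A}\oplus\Xi_{e_{2\ell}^\leftarrow}^{\cal A})\oplus\Xi_{e_{\sf T}^{\leftrightarrow}}^{\cal A}$), and likewise $\mathrm{span}\Psi_1\subseteq{\cal B}_G$ because each $\Psi_{2\ell-1}^{\rightarrow/\leftarrow}$ spans the corresponding ${\cal V}_{v_{2\ell-1}^{\rightarrow/\leftarrow}}$. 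Then item~(4) of \clm{history-state-props} gives immediately $\ket{w_-}\in\mathrm{span}\Psi_0={\cal A}_G$, and $\ket{w_-}-(\ket{s}-\ket{t})\in\mathrm{span}\Psi_1\subseteq{\cal B}_G$ (using $f(x)=0$). By \defin{neg-graph-wit}, $\ket{w_-}$ is therefore a negative witness for $G$.

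It remains to compute $\norm{\ket{\hat{w}_-}}^2=\norm{\Pi_E\ket{w_-}}^2$. Since $\ket{w_-}$ is a sum over $t$, and the only $t$ for which the summand lies in $\Xi_B=\Xi_s\oplus\Xi_t$ are $t=0,1$, I would isolate those two terms. For $t=0$, using $\alpha_0=1$ and $\ket{w^0}=\ket{0,0}$, the contribution is $(\ket{\rightarrow}-\ket{\leftarrow})\ket{0,0}\ket{0}=\ket{s}-\ket{t}$; for $t=1$, using $\alpha_1=1$ and $U_1=-I$, the contribution is $-(\ket{\rightarrow}-\ket{\leftarrow})U_1\ket{0,0}\ket{1}=\ket{\leftarrow,s}-\ket{\rightarrow,t}$. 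These four basis vectors are mutually orthogonal, so the boundary contributes norm-squared $4$ to $\norm{\ket{w_-}}^2$. Subtracting from the total norm given by item~(1) of \clm{history-state-props} yields $\norm{\ket{\hat{w}_-}}^2=2\sum_{t=0}^{\sf T}\alpha_t-4=2\sum_{t=2}^{\sf T}\alpha_t$, using $\alpha_0=\alpha_1=1$. Combined with $\hat{W}_-(G)\leq\norm{\ket{\hat{w}_-}}^2$, this establishes the claim.

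No step is genuinely difficult here; the mild obstacle is just bookkeeping to confirm that the $t=0,1$ components of $\ket{w_-}$ lie exactly in $\Xi_B$ (and nothing else in $\ket{w_-}$ does), and to verify the identifications $\ket{\rightarrow}\ket{0,0}\ket{0}=\ket{s}$, $\ket{\rightarrow}\ket{0,0}\ket{1}=\ket{\leftarrow,s}$, etc., from the definition of the edge and boundary spaces. Everything else is a direct invocation of \clm{history-state-props}, which has already been imported verbatim from \cite{jeffery2022subroutines}.
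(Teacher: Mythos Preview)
Your overall strategy matches the paper's, but there is a concrete error in the first step. You assert that $\mathrm{span}\,\Psi_0={\cal A}_G$ and $\mathrm{span}\,\Psi_1\subseteq{\cal B}_G$; neither holds. The sets $\Psi_0^{\rightarrow}$, $\Psi_0^{\leftarrow}$ each contain $2|{\cal Z}|$ vectors $\ket{\psi_{a,z,0}^{\rightarrow/\leftarrow}}$, whereas $\Xi_s^{\cal A}$ and $\Xi_t^{\cal A}$ are one-dimensional, spanned only by $\ket{\psi_{0,0,0}^{\rightarrow}}$ and $\ket{\psi_{0,0,0}^{\leftarrow}}$. Likewise $\Psi_1^{\rightarrow}$, $\Psi_1^{\leftarrow}$ are strictly larger than ${\cal V}_{v_1^\rightarrow}=\mathrm{span}\{\ket{\psi_{0,0,1}^\rightarrow}\}$ and ${\cal V}_{v_1^\leftarrow}$. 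So from item~(4) of \clm{history-state-props} alone you only get $\ket{w_-}\in\mathrm{span}\,\Psi_0\supsetneq{\cal A}_G$ and $\ket{w_-}-(\ket{s}-\ket{t})\in\mathrm{span}\,\Psi_1\not\subseteq{\cal B}_G$.

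The fix is exactly the ``mild bookkeeping'' you already do for the norm computation: because $\ket{w^0}=\ket{0,0}$ and $\ket{w^1}=U_1\ket{0,0}=-\ket{0,0}$, the $t=0$ and $t=1$ components of $\ket{w_-}$ have support only on the $(a,z)=(0,0)$ sector. Hence the part of $\ket{w_-}$ living in $\mathrm{span}(\Psi_0^\rightarrow\cup\Psi_0^\leftarrow)$ is actually in $\mathrm{span}\{\ket{\psi_{0,0,0}^\rightarrow},\ket{\psi_{0,0,0}^\leftarrow}\}=\Xi_s^{\cal A}\oplus\Xi_t^{\cal A}$, and similarly the part of $\ket{w_-}-(\ket{s}-\ket{t})$ living in $\mathrm{span}(\Psi_1^\rightarrow\cup\Psi_1^\leftarrow)$ lies in ${\cal V}_{v_1^\rightarrow}\oplus{\cal V}_{v_1^\leftarrow}$. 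The paper makes precisely this observation. Once you insert this argument, your proof is complete; the cropped-norm calculation is correct as written.
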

\begin{proof}
    By \clm{history-state-props}, we have $\ket{w_+}\in \mathrm{span}\Psi_0$. Though $\Psi_0$ is slightly bigger than ${\cal A}_G$, since we also have $\ket{w_+^0}=\ket{0,0}$ and $\ket{w_+^1}=-\ket{0,0}$, we can see that $\ket{w_+}\in {\cal A}_G$. Similarly, it is not difficult to conclude that when $f(x)=0$, $\ket{w_-}-(\ket{s}-\ket{t})\in {\cal B}_G$, since it's in $\Psi_1$ (\clm{history-state-props}), and its projection onto $\ket{0}$ or $\ket{1}$ in the last register is in $\mathrm{span}\Psi_0^{\rightarrow}\cup\Psi_0^\leftarrow\cup\Psi_1^\rightarrow\cup\Psi_1^\leftarrow$.
\end{proof}

\section{Recursion with Subspace Graphs}\label{sec:composition}

Subspace graphs lend themselves well to very general kinds of recursion. We can compose subspace graphs, say $G_1$, $G_2$ and $G_3$, as in \fig{arbitrary-comp}, by identifying some of the vertices on their boundaries. In full generality, this may result in a subspace graph that is difficult to analyze. For example, if we replace two parallel edges with subspace graphs derived from quantum algorithms, ``flow'' along these edges may have complex phases that interfere on the other side. Our nice classical intuition of flows breaks down in the fully general case, which is not surprising, since quantum algorithms are not classical. However, an important question is in which special cases, and to what extent, we can compose subspace graphs and keep enough classical intuition to analyze them. 

One special case is implicit in \cite{jeffery2022subroutines}, and here we present another special case, which we call \emph{switch composition}. Switch composition generalizes a very simple kind of recursion that can be done in switching networks. Since an $st$-path (or more generally, flow) is allowed to use an edge if and only if $\varphi(e)=1$, we can replace it with a switching network $G^e$ for some function $f_e$, which will have an $st$-path from one endpoint to the other if and only if $f_e(x)=1$. We can view this as removing $e$ from the graph, and then adding its endpoints to the boundary, to then be identified with the boundary $\{s,t\}$ of $G^e$. 
By applying this type of composition, the OR and AND switching networks in \sec{OR-gadget} and \sec{AND-gadget} can be combined to make switching networks for any Boolean formula~\cite{jeffery2017stConnFormula}.

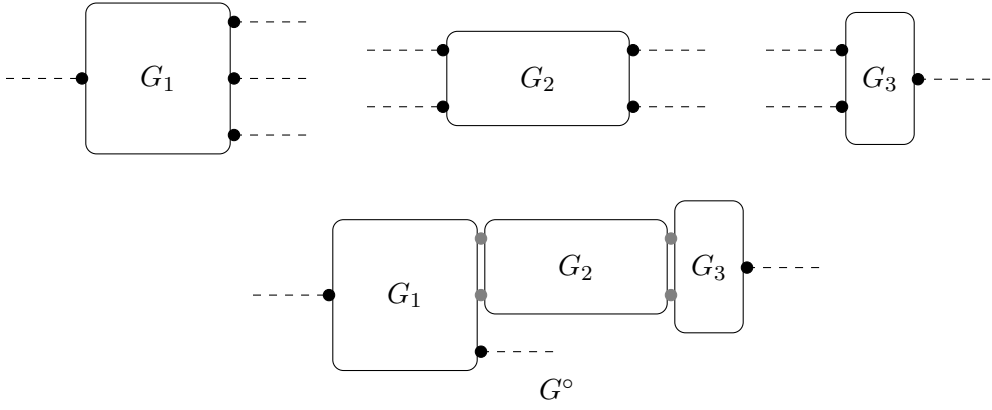
\begin{figure}
\centering
    \begin{tikzpicture}
\node at (0,0) {\begin{tikzpicture}
    \draw[dashed] (-1,0)--(0,0);
    \filldraw (0,0) circle (.075);
    \draw[rounded corners] (0.05,1) rectangle (1.95,-1);
    \node at (1,0) {$G_1$};
    \filldraw (2,.75) circle (.075); \draw[dashed] (2,.75)--(3,.75);
    \filldraw (2,0) circle (.075); \draw[dashed] (2,0)--(3,0);
    \filldraw (2,-.75) circle (.075); \draw[dashed] (2,-.75)--(3,-.75);
\end{tikzpicture}};

\node at (5,0) {\begin{tikzpicture}
    \draw[dashed] (1,.75)--(2,.75); \filldraw (2,.75) circle (.075); 
    \draw[dashed] (1,0)--(2,0); \filldraw (2,0) circle (.075); 
    \draw[rounded corners] (2.05,1) rectangle (4.45,-.25);
    \node at (3.25,.365) {$G_2$};
    \filldraw (4.5,.75) circle (.075); \draw[dashed] (4.5,.75)--(5.5,.75);
    \filldraw (4.5,0) circle (.075); \draw[dashed] (4.5,0)--(5.5,0);
\end{tikzpicture}};

\node at (9.5,0) {\begin{tikzpicture}
    \filldraw (4.5,.75) circle (.075); \draw[dashed] (3.5,.75)--(4.5,.75);
    \filldraw (4.5,0) circle (.075); \draw[dashed] (3.5,0)--(4.5,0);
    \draw[rounded corners] (4.55,1.25) rectangle (5.45,-.5);
    \node at (5,.365) {$G_3$};
    \filldraw (5.5,.365) circle (.075); \draw[dashed] (5.5,.365)--(6.5,.365);
\end{tikzpicture}};

\node at (5,-3) {\begin{tikzpicture}
    \draw[dashed] (-1,0)--(0,0);
    \filldraw (0,0) circle (.075);
    \draw[rounded corners] (0.05,1) rectangle (1.95,-1);
    \node at (1,0) {$G_1$};
    \filldraw (2,-.75) circle (.075); \draw[dashed] (2,-.75)--(3,-.75);

    \draw[rounded corners] (2.05,1) rectangle (4.45,-.25);
    \node at (3.25,.365) {$G_2$};
    \filldraw[gray] (2,.75) circle (.075); 
    \filldraw[gray] (2,0) circle (.075); 

    \draw[rounded corners] (4.55,1.25) rectangle (5.45,-.5);
    \node at (5,.365) {$G_3$};
    \filldraw[gray] (4.5,.75) circle (.075); 
    \filldraw[gray] (4.5,0) circle (.075); 
    \filldraw (5.5,.365) circle (.075); \draw[dashed] (5.5,.365)--(6.5,.365);

    \node at (3,-1.25) {$G^\circ$};
\end{tikzpicture}};
    
    \end{tikzpicture}
    \caption{Some perhaps complicated graphs $G_1$, $G_2$ and $G_3$, but we need only consider their boundaries, and can abstract their internal structure. To compose them, we identify points on their boundaries, yielding a new graph, $G^\circ$.}\label{fig:arbitrary-comp}
\end{figure}

\subsection{Switch Composition}\label{sec:switch-comp}

Here we investigate how to compose subspace graphs with specific properties that generalize switching networks -- $st$-composable subspace graphs (\defin{st-composable-subspace-graph}) -- into the switches of a graph $G$. Specifically, if $\overline{E}$ is the set of edges of $G$ that are switches, we will replace $e\in\overline{E}$ with an $st$-composable graph $G^e$. 
We can assume without loss of generality that we replace every $e\in\overline{E}$ with some $G^e$, since letting $G^e$ be the graph consisting of a single edge from $s$ to $t$ is just like not replacing $e$. 
In particular, we will prove the following theorem.
\begin{theorem}\label{thm:comp}
    Let $G$ be an $st$-composable subspace graph (\defin{st-composable-subspace-graph}) with $st$-composable working bases that can be generated in time $T$, with scaling factor $\r$. For each $e\in\overline{E}$, the set of switches of $G$, let $G^e$ be an $st$-composable subspace graph with $st$-composable working bases that can be generated in time at most $T'$, with scaling factor $\r^e$.  
    Then there exists an $st$-composable subspace graph $G^\circ$ with $\dim H_{G^\circ}\leq\dim H_G-2|\overline{E}|+\sum_{e\in\overline{E}}\dim H_{G^e}$
    such that:
    \begin{itemize}
        \item $G^\circ$ has $st$-composable working bases that can be generated in time $T+T'+O(1)$, with scaling factor~$\r$. 
        \item If $\ket{\hat{w}}$ is a cropped positive witness for $G$ (\defin{pos-graph-wit}), and for each $e\in\overline{E}$ such that $\braket{\rightarrow,e}{\hat{w}}\neq 0$, $\ket{\hat{w}^e}$ is a cropped positive witness for $G^e$, then 
        $$\ket{\hat{w}^\circ}=\sum_{e\in\overline{E}}\sqrt{\frac{\r^e}{2}}\braket{\rightarrow,e}{\hat{w}}\ket{\hat{w}^e}+\Pi_{E\setminus\overline{E}}\ket{\hat{w}}$$
        is a cropped positive witness for $G^\circ$.
        \item If $\ket{\hat{w}_{\cal A}}$ is a cropped negative witness for $G$ (\defin{neg-graph-wit}), and for each $e\in\overline{E}$ such that $\braket{\rightarrow,e}{\hat{w}_{\cal A}}\neq 0$, $\ket{\hat{w}^e_{\cal A}}$ is a cropped negative witness for $G^e$, then 
        $$\ket{\hat{w}^\circ_{\cal A}}=\sum_{e\in\overline{E}}\sqrt{\frac{2}{\r^e}}\braket{\rightarrow,e}{\hat{w}_{\cal A}}\ket{\hat{w}^e_{\cal A}}+\Pi_{E\setminus\overline{E}}\ket{\hat{w}_{\cal A}}$$
        is a cropped negative witness for $G^\circ$.
    \end{itemize}
\end{theorem}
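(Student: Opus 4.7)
The plan is to construct $G^\circ$ in three steps: (i) build the underlying space and subspace structure by deleting each switch edge $e = (u,v) \in \overline{E}$ of $G$ and gluing $G^e$ in its place; (ii) assemble the composable working bases of $G^\circ$ from those of $G$ and of each $G^e$; (iii) verify the two witness formulas by direct algebra. For the first step, at the level of graphs I identify $s^e$ with $u$ and $t^e$ with $v$; at the Hilbert-space level I discard the two outward boundary vectors $\ket{s^e}, \ket{t^e}$ of each $G^e$ (they are not needed after gluing), and identify $\ket{\leftarrow, s^e}$ with $\ket{\rightarrow,e}$ and $\ket{\rightarrow, t^e}$ with $\ket{\leftarrow,e}$. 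This gives
\[
H_{G^\circ} = \Xi_B \oplus \bigoplus_{e \in E\setminus\overline{E}} \Xi_e \oplus \bigoplus_{e \in \overline{E}} H^e_{\mathrm{int}},
\]
where $H^e_{\mathrm{int}}$ denotes $H_{G^e}$ with the two outward boundary dimensions of $B^e$ removed; the dimension bound follows. Non-switch edge/vertex subspaces of $G$ carry over verbatim, the switch subspaces $\Xi_e^{\cal A}, \Xi_e^{\cal B}$ are replaced by ${\cal A}_{G^e}, {\cal B}_{G^e}$ (transplanted through the identification), and canonical $st$-boundary is preserved because the unchanged boundary of $G$ serves as the boundary of $G^\circ$.

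For the composable working bases, I would keep the bases of $G$ on the non-switch parts and splice in the bases of each $G^e$ on the glued parts. The distinguished vectors $\ket{b_0^\circ}, \ket{b_1^\circ}$ of $G^\circ$ must be those inherited from $G$ (so that $G^\circ$ keeps $G$'s scaling factor $\r$), which forces me to discard $\ket{b_0^e}, \ket{b_1^e}$ from each piece and replace them with suitable linear combinations spanning the missing directions of ${\cal B}_{G^\circ}$. This rearrangement is where the factors $\sqrt{\r^e/2}$ and $\sqrt{2/\r^e}$ originate: \defin{composable-basis} pins the weight of $\ket{\bar b_1^e}$ inside $\ket{b_1^e}$ to $\sqrt{\r^e/(2+\r^e)}$, so relating the $\ket{\rightarrow,e}-\ket{\leftarrow,e}$ direction on the $G$ side to the $\ket{\bar b_1^e}$ direction on the $G^e$ side requires exactly these rescalings. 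Basis generation proceeds by first running $G$'s generator and then, controlled on being in a switch block $e$, running the generator of $G^e$, for a total of $T + T' + O(1)$.

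For the positive witness, I would check directly that the candidate $\ket{\hat w^\circ}$ lies in ${\cal A}_{G^\circ}^\perp \cap {\cal B}_{G^\circ}^\perp$ and has the correct boundary signature. Orthogonality to the non-switch parts of both spaces is inherited from $\ket{\hat w}$ being a positive witness for $G$, and orthogonality to the internal parts of each $G^e$ is inherited from $\ket{\hat w^e}$. The only nontrivial checks are at the gluing interfaces, and they reduce to verifying that the full expansion $\ket{w^e} = \ket{s^e} - \ket{\leftarrow, s^e} + \ket{\hat w^e} + \ket{\rightarrow, t^e} - \ket{t^e}$ of a positive witness for $G^e$ couples correctly to the $\ket{\rightarrow,e} - \ket{\leftarrow,e}$ coefficient of $\ket{\hat w}$; this is precisely what the factor $\sqrt{\r^e/2}$ achieves. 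The negative-witness verification is dual: I use the characterization $\ket{\hat w^e_{\cal A}} + \ket{\leftarrow,s^e} - \ket{\rightarrow,t^e} \in {\cal B}_{G^e}$ from the lemma following \defin{neg-graph-wit}, with the reciprocal scaling $\sqrt{2/\r^e}$.

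The main obstacle will be the bookkeeping of scaling factors at the gluing interface. Because \defin{composable-basis} treats $\ket{\leftarrow, s^e} - \ket{\rightarrow, t^e}$ asymmetrically relative to the internal vector $\ket{\bar b_1^e}$ via the scale $\r^e$, while the switch edge of $G$ is fully symmetric between $\ket{\rightarrow, e}$ and $\ket{\leftarrow, e}$, the $\sqrt{\r^e/2}$ and $\sqrt{2/\r^e}$ factors are unavoidable: omitting either would break orthogonality to the new $\ket{b_1^\circ}$ or change the effective scaling factor of $G^\circ$. The algebraic heart of the proof is showing that, with these factors in place, the assembled basis really does satisfy all five conditions of \defin{composable-basis} with scaling factor exactly $\r$, and that the orthogonality checks at the interface close up cleanly.
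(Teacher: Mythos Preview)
Your high-level plan is reasonable, but the gluing you describe does not match what is needed, and the gap shows up exactly where you predict it will not: in the orthogonality of $\ket{w^\circ}$ to the vertex spaces ${\cal V}_u$ of $G$.

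You propose to identify $\ket{\leftarrow,s^e}$ with $\ket{\rightarrow,e}$ and $\ket{\rightarrow,t^e}$ with $\ket{\leftarrow,e}$, keep these two directions in $H_{G^\circ}$, and let the vertex spaces ${\cal V}_u$ of $G$ ``carry over verbatim''. But take any $u\in V$ incident to a switch $e$ and any $\ket{\psi}\in{\cal V}_u$; by the switch-edge definition, $\Pi_e\ket{\psi}=c_e\ket{\rightarrow,e}$ for some scalar $c_e$. Since the original witness satisfies $\braket{\psi}{\hat w}=0$, you get $\bra{\psi}\Pi_{E\setminus\overline E}\ket{\hat w}=-c_e\braket{\rightarrow,e}{\hat w}$. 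Now the theorem's $\ket{\hat w^\circ}$ has \emph{no} component on $\ket{\rightarrow,e}=\ket{\leftarrow,s^e}$ (because $\ket{\hat w^e}$ is cropped), and $\ket{\psi}$ has no overlap with $\Xi_{E^e}$, so $\braket{\psi}{\hat w^\circ}=\bra{\psi}\Pi_{E\setminus\overline E}\ket{\hat w}=-c_e\braket{\rightarrow,e}{\hat w}\neq 0$ in general. Thus $\ket{w^\circ}\notin{\cal B}_{G^\circ}^\perp$ under your construction. Including ${\cal V}_{s^e}^e$ in ${\cal B}_{G^\circ}$ does not save this: one computes $(\bra{\leftarrow,s^e}+\bra{\psi_\star(s^e)})\ket{\hat w^\circ}=\sqrt{\r^e/2}\,\braket{\rightarrow,e}{\hat w}\cdot\braket{\psi_\star(s^e)}{\hat w^e}=\sqrt{\r^e/2}\,\braket{\rightarrow,e}{\hat w}\neq 0$ as well.

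The paper's fix is not an identification but a genuine change of the constraints at $u$: the switch directions $\ket{\rightarrow,e},\ket{\leftarrow,e}$ (and all four boundary directions of $G^e$) are \emph{discarded} from $H_{G^\circ}$, and ${\cal V}_u$ is replaced by $\widetilde\Lambda({\cal V}_u)$ where $\widetilde\Lambda$ sends $\ket{\rightarrow,e}\mapsto\frac{1}{\sqrt{\r^e}}\ket{\psi_\star(s^e)}$ (and similarly for $\ket{\leftarrow,e}$). At the basis level the analogous isometry $\Lambda$ sends $\frac{1}{\sqrt{2}}(\ket{\rightarrow,e}-\ket{\leftarrow,e})\mapsto\ket{\bar b_1^e}$, so $\Psi_{{\cal B}_{G^\circ}}^-=\Lambda(\Psi_{{\cal B}_G}^-)\cup\bigcup_e(\Psi_{{\cal B}_{G^e}}^-\setminus\{\ket{b_0^e},\ket{b_1^e}\})$. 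The positive-witness check then reduces to the single identity $\Lambda^\dagger\ket{\hat w^\circ}=\ket{\hat w}$, which follows from $\sqrt{\r^e}\,\braket{\bar b_1^e}{\hat w^e}=2$ (a direct consequence of $\ket{w^e}\perp\ket{b_1^e}$). This is where the $\sqrt{\r^e/2}$ actually comes from, not from an asymmetry between $\ket{\rightarrow,e}$ and $\ket{\leftarrow,e}$ as you suggest. The negative-witness argument is the dual: one shows $\ket{\hat w_{\cal A}^e}-\sqrt{\r^e}\ket{\bar b_1^e}\in{\cal B}_{G^e}\cap\{\ket{b_0^e},\ket{b_1^e}\}^\perp\subset{\cal B}_{G^\circ}$ and combines this with $\Lambda$ applied to $\ket{\hat w_{\cal A}}+\ket{\leftarrow,s}-\ket{\rightarrow,t}$.
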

It follows, using the observation that $\ket{\hat{w}}$ must only overlap switches that are on, and $\ket{\hat{w}_{\cal A}}$ must only overlap switches tht are off,
that if $G$ computes $f$ and each $G^e$ computes $f^e$, the function computed by $G^\circ$ is the composed function $f\circ(f_e)_{e\in\overline{E}}$ (see \sec{formulas}).

Before we prove \thm{comp}, we note the following useful corollary of \thm{comp} and \lem{OR-gadget}.
\begin{corollary}\label{cor:scaling}
    Let $G$ be an $st$-composable subspace graph with $st$-composable working bases that can be generated in time $T$ that computes $f$. Then there is an $st$-composable subspace graph $G^\circ$ with $st$-composable bases that can be generated in time $T+O(1)$ that computes $f$, and such that $\dim H_{G^\circ}=\dim H_G+O(1)$, $\hat{W}_+(G^\circ)\leq 1$ and $\hat{W}_-(G^\circ) \leq \hat{W}_+(G)\hat{W}_-(G)$.
\end{corollary}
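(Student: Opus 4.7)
The plan is to invoke the switch composition theorem (\thm{comp}) with the trivial one-edge OR switching network $G_{\textsc{or},1}$ of \lem{OR-gadget} as the outer graph, and with $G$ itself placed into its unique switch edge. Since $G_{\textsc{or},1}$ computes the OR of a single Boolean variable, which is the identity, the composed graph $G^\circ$ still computes $f$, inherits canonical $st$-boundary, and by \thm{comp} has $st$-composable working bases generated in $T+O(1)$ time and dimension $\dim H_{G^\circ} \leq \dim H_{G_{\textsc{or},1}}-2+\dim H_G = \dim H_G + O(1)$. The preparation hypothesis of \lem{OR-gadget} is trivially satisfied since the required ``superposition'' over one edge is just $\ket{1}$.

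The only nontrivial design choice is the weight of the sole edge of $G_{\textsc{or},1}$, which I will pick to be $\w_1 = \r\hat{W}_+(G)/2$, where $\r$ is the scaling factor of the given bases of $G$. With this choice, I plug the $G_{\textsc{or},1}$ witnesses from \lem{OR-gadget} and the witnesses of $G$ into the composition formulas of \thm{comp}. The cropped positive witness of $G_{\textsc{or},1}$ overlaps $\ket{\rightarrow,e}$ by $1/\sqrt{\w_1}$, so the composed positive witness satisfies
$$\norm{\ket{\hat{w}^\circ}}^2 = \tfrac{\r}{2\w_1}\,\norm{\ket{\hat{w}}}^2 \leq \tfrac{\r\,\hat{W}_+(G)}{2\w_1} = 1,$$
yielding $\hat{W}_+(G^\circ)\leq 1$. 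Symmetrically, the cropped negative witness of $G_{\textsc{or},1}$ overlaps $\ket{\rightarrow,e}$ by $\sqrt{\w_1}$, so
$$\norm{\ket{\hat{w}_{\cal A}^\circ}}^2 = \tfrac{2\w_1}{\r}\,\norm{\ket{\hat{w}_{\cal A}}}^2 \leq \tfrac{2\w_1}{\r}\,\hat{W}_-(G) = \hat{W}_+(G)\,\hat{W}_-(G).$$

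The main subtlety is selecting $\w_1$: the positive and negative witness norms after composition scale as $1/\w_1$ and $\w_1$ respectively, so balancing the positive bound at exactly $1$ forces $\w_1 = \r\hat{W}_+(G)/2$, after which the desired negative bound falls out automatically (the product $\hat{W}_+(G^\circ)\hat{W}_-(G^\circ)$ is essentially invariant under the rescaling). Everything else -- the $st$-composable structure of $G^\circ$, its scaling factor being inherited from the outer graph, and the fact that $G^\circ$ computes the same function as $G$ -- follows immediately from \thm{comp} and \lem{OR-gadget} with no additional work.
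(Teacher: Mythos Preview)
Your proof is correct and essentially identical to the paper's own argument: both compose $G$ into the single switch edge of $G_{\textsc{or},1}$ via \thm{comp}, choose the weight $\w_1=\r\hat{W}_+(G)/2$, and read off the witness bounds from the composition formulas. Your write-up is slightly more explicit about the dimension count and the trivial state-preparation hypothesis, but the method and the key weight choice match exactly.
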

\begin{proof}
    Let $G_{\textsc{or},1}$ be the graph from \lem{OR-gadget} with $d=1$, and $\w_1=\frac{\r}{2}\hat{W}_+(G)$, where $\r$ is the scaling factor of the basis for $G$. It is $st$-composable since it is a switching network. This graph has a single edge, $e_1$, and we will compose $G$ into this edge using \thm{comp}, so $G_{\textsc{or}}$ plays the role of $G$ in \thm{comp}, and $G$ plays the role of $G^{e_1}$. If $f(x)=1$, then $G$ has a cropped positive witness $\ket{\hat{w}^1}$, and using the positive witness from \lem{OR-gadget}, by \thm{comp}, $G^\circ$ has cropped positive witness 
    $$\ket{\hat{w}^\circ} = \sqrt{\frac{\r}{2}}\frac{1}{\sqrt{\w_1}}\ket{\hat{w}^1}=\frac{1}{\sqrt{\hat{W}_+(G)}}\ket{\hat{w}^1},$$
    so since $\norm{\ket{\hat{w}^1}}^2\leq \hat{W}_+(G)$, $\hat{W}_+(G^\circ)\leq 1$. 

    On the other hand, if $f(x)=0$, then $G$ has a cropped negative witness $\ket{\hat{w}_{\cal A}^1}$, and using the negative witness for \lem{OR-gadget}, by \thm{comp}, $G^\circ$ has cropped negative witness
    $$\ket{\hat{w}_{\cal A}^\circ}=\sqrt{\frac{2}{\r}}\sqrt{\w_1}\ket{\hat{w}_{\cal A}^1}=\sqrt{\hat{W}_+(G)}\ket{\hat{w}_{\cal A}^1},$$
    so $\hat{W}_-(G^\circ) \leq \hat{W}_+(G)\hat{W}_-(G)$.

    To complete the proof, by \lem{OR-gadget}, the basis for $G_{\textsc{or}}$ can be generated in unit time, and thus by \thm{comp}, the basis for $G^\circ$ can be generated in time $T+O(1)$.
\end{proof}

\noindent In the remainder of this section, we prove \thm{comp}.

\vskip10pt

\begin{figure}
\centering
    \begin{tikzpicture}
\node at (0,-1) {$G$};
\node at (0,0) {\begin{tikzpicture}
    \draw[dashed] (-1,0)--(0,0);
    \node at (0,.25) {$s$};
    \filldraw (0,0) circle (.075);

    \filldraw (1,1) circle (.075);
    \filldraw (2,0) circle (.075);
    \filldraw (4,1) circle (.075);    
    \filldraw (4,-1) circle (.075);

    \draw[orange] (2,0)--(4,1);
    \draw[blue] (0,0)--(2,0);
    \draw (0,0)--(1,1)--(2,0);
    \draw (4,1)--(5,0)--(4,-1)--(2,0);

\node at (1,-.25) {$e_1$};
\node at (3,.75) {$e_2$};

    \draw[dashed] (5,0)--(6,0);
    \node at (5,.25) {$t$};
    \filldraw (5,0) circle (.075);
\end{tikzpicture}};

\node at (6.25,-1.5) {$G^{e_1}$};
\node at (6.25,0) {\begin{tikzpicture}
    \fill[fill=blue!20,rounded corners] (0,.15) rectangle (2,-2);
    
    \draw[dashed] (-1,0)--(0,0);
    \node at (0,.25) {$s$};
    \filldraw (0,0) circle (.075);

    \filldraw[blue] (1,-1) circle (.075);
    \filldraw[blue] (.5,-1.7) circle (.075);
    \filldraw[blue] (1.5,-1.7) circle (.075);

    \draw[blue] (0,0)--(1,-1);
    \draw[blue] (0,0)--(.5,-1.7);
    \draw[blue] (.5,-1.7)--(1,-1)--(1.5,-1.7)--(2,0);
    \draw[blue] (2,0)--(1,-1);
    \draw[blue] (.5,-1.7)--(1.5,-1.7);
    
    \draw[dashed] (2,0)--(3,0);
    \node at (2,.25) {$t$};
    \filldraw (2,0) circle (.075);
\end{tikzpicture}};

\node at (11,-1.5) {$G^{e_2}$};
\node at (11,0) {\begin{tikzpicture}
    \fill[fill=orange!20,rounded corners] (2,0) rectangle (4,2);
    
    \draw[dashed] (1,0)--(2,0);
    \node at (1.8,.25) {$s$};
    \filldraw (2,0) circle (.075);

    \filldraw[orange] (3,1.7) circle (.075);
    \filldraw[orange] (2.5,.85) circle (.075);
    \filldraw[orange] (3,.5) circle (.075);
    \filldraw[orange] (3.5,1.25) circle (.075);

    \draw[orange] (2,0)--(4,1);
    \draw[orange] (2,0)--(2.5,.85)--(3,1.7)--(3.5,1.25)--(4,1);
    \draw[orange] (2.5,.85)--(3.5,1.25)--(3,.5);

    \draw[dashed] (4,1)--(5,1);
    \node at (4.15,1.25) {$t$};
    \filldraw (4,1) circle (.075);
\end{tikzpicture}};

\node at (5,-5.3) {$G^\circ$};
\node at (5,-4.3) {\begin{tikzpicture}

    \fill[fill=blue!20,rounded corners] (0,.15) rectangle (2,-2);
    
    \filldraw[blue] (1,-1) circle (.075);
    \filldraw[blue] (.5,-1.7) circle (.075);
    \filldraw[blue] (1.5,-1.7) circle (.075);

    \draw[blue] (0,0)--(1,-1);
    \draw[blue] (0,0)--(.5,-1.7);
    \draw[blue] (.5,-1.7)--(1,-1)--(1.5,-1.7)--(2,0);
    \draw[blue] (2,0)--(1,-1);
    \draw[blue] (.5,-1.7)--(1.5,-1.7);
    
    \fill[fill=orange!20,rounded corners] (2,0) rectangle (4,2);
    
    \filldraw[orange] (3,1.7) circle (.075);
    \filldraw[orange] (2.5,.85) circle (.075);
    \filldraw[orange] (3,.5) circle (.075);
    \filldraw[orange] (3.5,1.25) circle (.075);

    \draw[orange] (2,0)--(4,1);
    \draw[orange] (2,0)--(2.5,.85)--(3,1.7)--(3.5,1.25)--(4,1);
    \draw[orange] (2.5,.85)--(3.5,1.25)--(3,.5);

    \draw[dashed] (-1,0)--(0,0);
    \node at (0,.25) {$s$};
    \filldraw (0,0) circle (.075);

    \filldraw (1,1) circle (.075);
    \filldraw (2,0) circle (.075);
    \filldraw (4,1) circle (.075);    
    \filldraw (4,-1) circle (.075);

    \draw (0,0)--(1,1)--(2,0);
    \draw (4,1)--(5,0)--(4,-1)--(2,0);

    \draw[dashed] (5,0)--(6,0);
    \node at (5,.25) {$t$};
    \filldraw (5,0) circle (.075);

\end{tikzpicture}};
    
    \end{tikzpicture}
    \caption{An example of replacing edges $e_1$ and $e_2$ in $G$ with graphs $G^{e_1}$ and $G^{e_2}$ to obtain $G^\circ$. Boundary ``edges'' are represented by dashed lines. Switching networks already lend themselves to this type of recursion: we can replace an edge with a 2-terminal graph, and then the ``edge'' is traversable if and only if the terminals are connected. The difference with the more general recursion in \thm{comp} is that the subspace graphs used to replace edges (as well as other parts of $G$) might have more complicated structure than just their graph structure; for example, they might encode quantum algorithms like in \sec{alg-subspace-graph}. }\label{fig:switch-comp}
\end{figure}
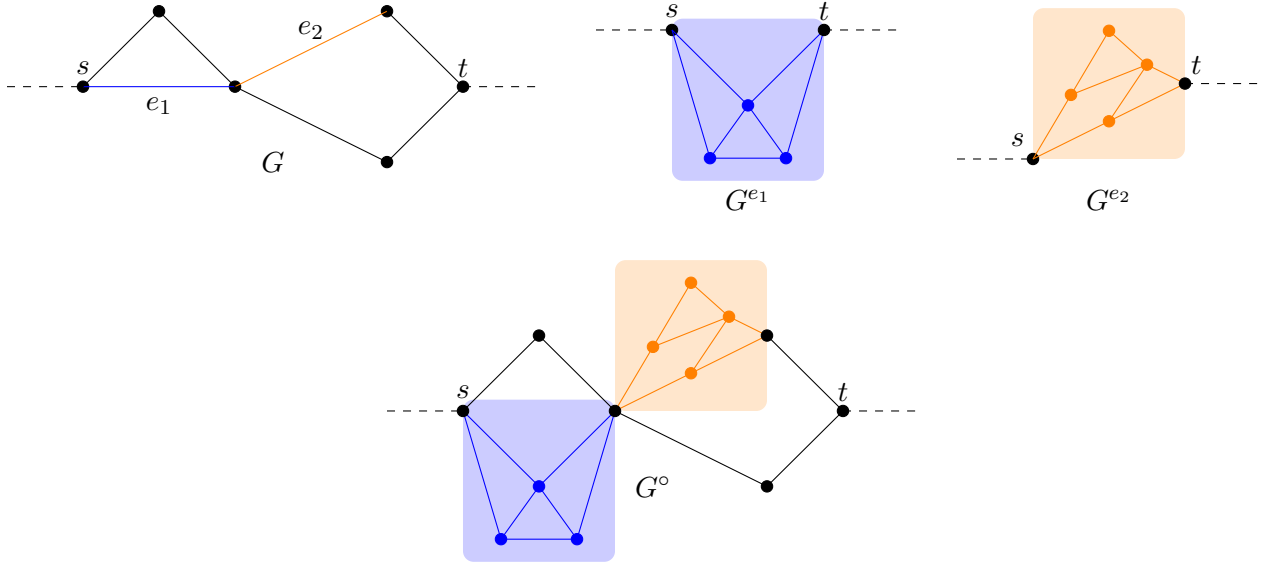

Informally, we obtain $G^\circ$ from $G$ by, for each $e\in\overline{E}$, removing the edge $e$, and identifying its endpoints with the vertices $s=s^e$ and $t=t^e$ of the graph $G^e$. This is illustrated in \fig{switch-comp}.
The subspaces associated with $G^\circ$ are mostly inherited from $G$ and $G^e$, except for those on the glued boundaries. There, intuitively, we replace $\ket{\rightarrow,e}$ with the edges incident to $s=s^e$ in $G^e$, and $\ket{\leftarrow,e}$ with the edges incident to $t=t^e$ in $G^e$. This is formalized by the map $\widetilde\Lambda$ defined in \eq{tilde-Lambda}.

\paragraph{Definition of the Subspace Graph $G^\circ$} We now formally define $G^\circ$ and its associated spaces. We will shortly define a working basis $\Psi_{{\cal B}_{G^\circ}}$ for ${\cal B}_{G^\circ}$, from working bases for ${\cal B}_{G}$ and ${\cal B}_{G^e}$. It is actually these bases that are most important, but by defining ${\cal B}_{G^\circ}$ and ${\cal A}_{G^\circ}$ as being built up from local spaces, we show that the graph structure and intuition are preserved by the composition. 

The composed graph $G^\circ$ has vertex and edge sets:
\begin{equation}
\begin{split}
    V^\circ &= V\sqcup \bigsqcup_{e\in\overline{E}}(V^e\setminus B^e) = V\sqcup \bigsqcup_{e\in\overline{E}}(V^e\setminus \{s^e,t^e\})\\
    E^\circ &= (E\setminus\overline{E})\sqcup\bigsqcup_{e\in\overline{E}}E^e
\end{split}    
\end{equation}
and boundary 
\begin{equation*}
B^\circ = B=\{s,t\}.
\end{equation*}
We identify $s^e$ and $t^e$ in $G^e$ with the endpoints of $e$ in $G$, so that the following edges are incident to $u\in V^\circ$:
\begin{equation}\label{eq:E-circ-u}
    \begin{split}
        E^\circ(u) = \left\{\begin{array}{cc}
            E^e(u) & \mbox{if }u\in V^e\setminus\{s^e,t^e\} \\
            (E(u)\setminus\overline{E})\sqcup\bigsqcup_{e\in\overline{E}\cap E^\rightarrow(u)}E^e(s^e)\sqcup\bigsqcup_{e\in\overline{E}\cap E^\leftarrow(u)}E^e(t^e) & \mbox{if }u\in V.
        \end{array}\right.
    \end{split}
\end{equation}
Recall that we assume for convenience that each edge has an orientation, and $E^\rightarrow(u)\subseteq E(u)$ are the edges oriented outwards from $u$, and $E^\leftarrow(u) = E(u)\setminus E^{\rightarrow}(u)$ those oriented inwards. This is simply to decide which of the endpoints of $e$ is identified with $s^e$ and which with $t^e$.

We define the edge spaces and edge subspaces of $G^\circ$ from those of $G$ and $G^e$ as follows:
\begin{equation}
    \Xi_{e'}^{\circ} = \left\{\begin{array}{ll}
        \Xi_{e'} & \mbox{if }e'\in (E\setminus\overline{E})\cup B \\
         \Xi_{e'}^e & \mbox{if }e'\in E^e,
    \end{array}\right.
\end{equation}
and similarly for $\Xi_e^{\circ\cal A}$ and $\Xi_e^{\circ\cal B}$. Note that since $G$ has canonical $st$-boundary, so does $G^\circ$. 
 Furthermore, since the switches and non-switches of $G^\circ$ are:
$$
\overline{E}^\circ = \bigsqcup_{e\in\overline{E}}\overline{E}^e
\mbox{ and }
E^\circ\setminus\overline{E}^\circ = (E\setminus \overline{E})\sqcup\bigsqcup_{e\in\overline{E}}(E^e\setminus\overline{E}^e),
$$
we inherit the property that for all $e\in E^\circ\setminus\overline{E}^\circ$, $\Xi_e^{\circ\cal B}=\{0\}$.

Next we define the vertex spaces. First, for each $e\in \overline{E}$, we are assuming that ${\cal V}_{s^e}^e$ and ${\cal V}_{t^e}^e$ can be written:
\begin{equation}
    \begin{split}
        {\cal V}_{s^e}^e &= \mathrm{span}\{\ket{\leftarrow,s^e}+\ket{\psi_\star(s^e)}\}\\
        {\cal V}_{t^e}^e &= \mathrm{span}\{\ket{\rightarrow,t^e}+\ket{\psi_\star(t^e)}\},
    \end{split}
\end{equation}
for some $\ket{\psi_\star(s^e)}\in \Xi_{E^e(s^e)}^e$ and $\ket{\psi_\star(t^e)}\in \Xi_{E^e(t^e)}^e$.
Then define a linear map $\widetilde\Lambda$ on $H_G$ by: 
\begin{equation}\label{eq:tilde-Lambda}
\widetilde\Lambda = \sum_{e\in\overline{E}}\frac{1}{\sqrt{\r^e}}(\ket{\psi_\star(s^e)}\bra{\rightarrow,e}+\ket{\psi_\star(t^e)}\bra{\leftarrow,e})+\Pi_{E\setminus\overline{{E}}}+\Pi_B,
\end{equation}
and let
\begin{equation}
    {\cal V}_u^{\circ} = \left\{\begin{array}{ll}
        \widetilde\Lambda({\cal V}_u) & \mbox{if }u\in V \\
         {\cal V}_u^e & \mbox{if }u\in V^e\setminus\{s^e,t^e\}.
    \end{array}\right.
\end{equation}
The following lemma shows that $G^\circ$ is a well-defined subspace graph.
\begin{lemma}
    $\widetilde\Lambda ({\cal V}_u) \subseteq \Xi_{E^\circ(u)}^{\circ}$.
\end{lemma}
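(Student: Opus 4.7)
The statement is purely linear, so I will prove it by taking an arbitrary $\ket{v}\in {\cal V}_u$, decomposing it along the orthogonal decomposition $H_G=\Xi_B\oplus\bigoplus_{e\in E}\Xi_e$, and tracking each component through $\widetilde\Lambda$. Concretely, I write $\ket{v}=\Pi_B\ket{v}+\sum_{e\in E(u)}\Pi_e\ket{v}$ and use that the three blocks making up $\widetilde\Lambda$ in \eq{tilde-Lambda} act on mutually orthogonal parts of $H_G$: the identity on $\Xi_{E\setminus\overline{E}}$ and on $\Xi_B$, and the map $\ket{\rightarrow,e}\mapsto \frac{1}{\sqrt{\r^e}}\ket{\psi_\star(s^e)}$, $\ket{\leftarrow,e}\mapsto \frac{1}{\sqrt{\r^e}}\ket{\psi_\star(t^e)}$ on each switch block $\Xi_e$ with $e\in\overline{E}$. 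Each summand can then be handled independently.

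\textbf{Case analysis.}
(i) If $e\in E(u)\setminus\overline{E}$, then $\widetilde\Lambda\Pi_e\ket{v}=\Pi_e\ket{v}\in\Xi_e=\Xi_e^\circ$, and $e\in E^\circ(u)$ by the first summand of \eq{E-circ-u}.
(ii) If $e\in\overline{E}\cap E^\rightarrow(u)$, then \defin{switch-edge} pins $\Pi_e\ket{v}$ to a scalar multiple of $\ket{\rightarrow,e}$, so $\widetilde\Lambda\Pi_e\ket{v}$ is a scalar multiple of $\ket{\psi_\star(s^e)}$. Since $\ket{\psi_\star(s^e)}\in\Xi_{E^e(s^e)}^e$ and $E^e(s^e)\subseteq E^\circ(u)$ by \eq{E-circ-u}, this piece lies in $\Xi_{E^\circ(u)}^\circ$.
(iii) The case $e\in\overline{E}\cap E^\leftarrow(u)$ is mirror-symmetric, using $\ket{\leftarrow,e}$ and $\ket{\psi_\star(t^e)}\in\Xi_{E^e(t^e)}^e$.
(iv) The piece $\Pi_B\ket{v}$, which is nonzero only when $u\in B$, is fixed by $\widetilde\Lambda$ and lives in $\Xi_u=\Xi_u^\circ$, absorbed into the target under the convention $\Xi_F=\bigoplus_{f\in F}\Xi_f$ for $F\subseteq E^\circ\cup B^\circ$. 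Adding the four contributions gives $\widetilde\Lambda\ket{v}\in\Xi_{E^\circ(u)}^\circ$.

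\textbf{Main obstacle.}
The lemma is mostly bookkeeping once the right structural fact is identified: the only substantive input is the switch-edge hypothesis, which forces the component of $\ket{v}$ on each switch $e\in E(u)$ to lie along the correct ``outgoing'' ray $\ket{\rightarrow,e}$ (or ``incoming'' ray $\ket{\leftarrow,e}$). I expect this to be the one step needing care, as the rest follows immediately from the definitions of $\widetilde\Lambda$, $E^\circ(u)$ in \eq{E-circ-u}, and the membership $\ket{\psi_\star(s^e)}\in\Xi_{E^e(s^e)}^e$, $\ket{\psi_\star(t^e)}\in\Xi_{E^e(t^e)}^e$ guaranteed by the $st$-composability of each $G^e$.
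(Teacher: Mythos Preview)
Your proposal is correct and follows essentially the same approach as the paper: decompose along the edges in $E(u)$, note that $\widetilde\Lambda$ acts as the identity on the non-switch pieces, and for each switch $e\in\overline{E}\cap E(u)$ invoke the clause of \defin{switch-edge} that pins $\Pi_e\ket{v}$ to $\mathrm{span}\{\ket{\rightarrow,e}\}$ or $\mathrm{span}\{\ket{\leftarrow,e}\}$, whence $\widetilde\Lambda$ sends it into $\Xi^e_{E^e(s^e)}$ or $\Xi^e_{E^e(t^e)}$. One small remark: your case~(iv) is vacuous, since by \defin{multid-QW} one has ${\cal V}_u\subseteq\bigoplus_{e\in E(u)}\Xi_e$, so $\Pi_B\ket{v}=0$ for every $\ket{v}\in{\cal V}_u$; the paper's proof simply begins by citing this containment and never needs to discuss a boundary component.
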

\begin{proof}
We have ${\cal V}_u\subseteq \Xi_{E(u)}$, so
$$\widetilde\Lambda({\cal V}_u) = \widetilde\Lambda\left(\Xi_{E(u)\setminus\overline{E}}\cap {\cal V}_u\right)\oplus\bigoplus_{e\in\overline{E}\cap E(u)}\widetilde\Lambda\left(\Xi_{e}\cap {\cal V}_u\right),$$
We have $\widetilde\Lambda\left(\Xi_{E(u)\setminus\overline{E}}\cap {\cal V}_u\right)=\Xi_{E(u)\setminus\overline{E}}\cap {\cal V}_u$, since $\widetilde\Lambda$ fixes $\Xi_{E\setminus\overline{E}}$.
Recall from \defin{switch-edge} that for $e\in\overline{E}\cap E(u)$
$$\Xi_{e}\cap {\cal V}_u=\left\{\begin{array}{ll}
\mathrm{span}\{\ket{\rightarrow,e}\} & \mbox{if }e\in E^{\rightarrow}(u)\\
\mathrm{span}\{\ket{\leftarrow,e}\} & \mbox{if }e\in E^{\leftarrow}(u).
\end{array}\right.$$
Thus, 
$$\widetilde\Lambda\left(\Xi_{e}\cap {\cal V}_u\right) = \left\{\begin{array}{ll}
\mathrm{span}\{\widetilde\Lambda\ket{\rightarrow,e}\}=\mathrm{span}\{\ket{\psi_\star(s^e)}\}\subseteq\Xi_{E^e(s^e)}^e & \mbox{if }e\in E^{\rightarrow}(u)\\
\mathrm{span}\{\widetilde\Lambda\ket{\leftarrow,e}\}=\mathrm{span}\{\ket{\psi_\star(t^e)}\}\subseteq\Xi_{E^e(t^e)}^e & \mbox{if }e\in E^{\leftarrow}(u).
\end{array}\right.$$
Thus
$$\widetilde\Lambda({\cal V}_u) \subseteq \Xi_{E(u)\setminus\overline{E}}\oplus\bigoplus_{e\in\overline{E}\cap E^\rightarrow(u)}\Xi_{E^e(s^e)}^e\oplus\bigoplus_{e\in\overline{E}\cap E^\leftarrow(u)}\Xi_{E^e(t^e)}^e=\Xi_{E^{\circ}(u)}$$
by \eq{E-circ-u}.
\end{proof}

From these definitions, we conclude:
\begin{equation}\label{eq:cal-A-comp}
\begin{split}
    {\cal A}_{G^\circ} &= \bigoplus_{e\in E^\circ}\Xi_e^{\circ\cal A}\oplus \bigoplus_{u\in B^\circ}\Xi_{u}^{\circ\cal A}
    = \bigoplus_{e\in E\setminus\overline{E}}\Xi_e^{\cal A}\oplus \bigoplus_{e\in \overline{E}}\bigoplus_{e'\in E^e}\Xi_{e'}^{e\cal A}
    \oplus \Xi_s^{\cal A}\oplus\Xi_t^{\cal A}\\
    &=\bigoplus_{e\in E\setminus\overline{E}}\Xi_e^{\cal A}\oplus \bigoplus_{e\in \overline{E}}\bigoplus_{e'\in E^e}\Xi_{e'}^{e\cal A}
    \oplus \Xi_s^{\cal A}
 \end{split}
\end{equation}
and similarly
\begin{equation}\label{eq:cal-B-comp}
\begin{split}
    {\cal B}_{G^\circ} &= \bigoplus_{u\in V^\circ}{\cal V}_u^\circ+ \bigoplus_{e\in E\setminus\overline{E}}\Xi_e^{\cal B}\oplus \bigoplus_{e\in \overline{E}}\bigoplus_{e'\in E^e}\Xi_{e'}^{e\cal B}
    \oplus \Xi_s^{\cal B}\oplus\Xi_t^{\cal B}\\
    &= \bigoplus_{u\in V}\widetilde\Lambda({\cal V}_u)\oplus\bigoplus_{e\in\overline{E}}\bigoplus_{u\in V^e\setminus\{s,t\}}{\cal V}_u^e+ \oplus \bigoplus_{e\in \overline{E}}\bigoplus_{e'\in \overline{E}^e}\Xi_{e'}^{e\cal B}.
\end{split}
\end{equation}

\paragraph{Working Bases for $G^\circ$}

By the assumptions of \defin{composable-basis}, we have a working basis $\Psi_{{\cal A}_G}$ for ${\cal A}_G$ of the form:
$$\Psi_{{\cal A}_G}=\bigcup_{e\in E\cup B}\Psi_{{\cal A}_G}(e),$$
where $\Psi_{{\cal A}_G}(e)$ is an orthonormal basis for $\Xi_e^{\cal A}$, and similarly, for each $e\in\overline{E}$, we have a working basis $\Psi_{{\cal A}_{G^e}}$ for ${\cal A}_{G^e}$ of the form:
$$\Psi_{{\cal A}_{G^e}}=\bigcup_{e'\in E^e\cup B^e}\Psi_{{\cal A}_{G^e}}(e').$$
Since the spaces $\Xi_e^{\circ\cal A}$ are simply inherited from the corresponding spaces in $G$ and the $G^e$, their bases can be as well. 
Specifically, we define a working basis 
$$\Psi_{{\cal A}_{G^\circ}}=\bigcup_{e\in E^\circ\cup B^\circ}\Psi_{{\cal A}_{G^\circ}}(e)=\bigcup_{e\in (E\setminus\overline{E})\cup B}\Psi_{{\cal A}_{G^\circ}}(e)\cup \bigcup_{e\in \overline{E}}\bigcup_{e'\in E^e}\Psi_{{\cal A}_{G^\circ}}(e').$$
for ${\cal A}_{G^\circ}$, where for all $e\in E\setminus\overline{E}$, $\Psi_{{\cal A}_{G^\circ}}(e)=\Psi_{{\cal A}_{G}}(e)$, and for all $e\in \overline{E}$ and $e'\in E^e$, $\Psi_{{\cal A}_{G^\circ}}(e')=\Psi_{{\cal A}_{G^e}}(e')$. Then we have the following.

\begin{lemma}[Working Basis for ${\cal A}$]
    $\Psi_{{\cal A}_{G^\circ}}$ can be generated in time $T+T'+O(1)$. 
\end{lemma}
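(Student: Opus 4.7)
The plan is to verify the two conditions of \defin{working-basis} for $\Psi_{{\cal A}_{G^\circ}}$ by essentially ``routing'' each basis label to the subroutine that generates the corresponding basis vector, using the superposition-of-subroutines access assumed in \sec{model}. Let $L$ be the label set of $\Psi_{{\cal A}_G}$, partitioned as $L=L_{E\setminus\overline{E}}\sqcup L_B\sqcup L_{\overline{E}}$ according to which set of edges or boundary vertices each label belongs to; for each $e\in\overline{E}$, let $L^e$ be the label set of $\Psi_{{\cal A}_{G^e}}$. By the construction, the label set of $\Psi_{{\cal A}_{G^\circ}}$ is
\[
L^\circ = L_{E\setminus\overline{E}}\sqcup L_B\sqcup \bigsqcup_{e\in\overline{E}}L^e,
\]
and on each piece the map $\ket{\ell}\mapsto \ket{b_\ell^\circ}$ agrees with either the given map for $G$ (on $L_{E\setminus\overline{E}}\sqcup L_B$) or the given map for $G^e$ (on $L^e$).

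First I will construct the map $\ket{\ell}\mapsto\ket{b_\ell^\circ}$. Using $O(1)$ operations, we can compute a flag in an ancilla indicating which of the three regions the label lies in, and, if it lies in some $L^e$, extract the index $e\in\overline{E}$. Conditioned on the flag taking the first value, we invoke the map of $\Psi_{{\cal A}_G}$, which costs $T$. Conditioned on the flag indicating $e\in\overline{E}$, we invoke the map of $\Psi_{{\cal A}_{G^e}}$ in superposition over $e$, which by the subroutine model of \sec{model} costs $T'+O(1)$ since each $\Psi_{{\cal A}_{G^e}}$ can be generated in time at most $T'$. Finally, we uncompute the flag in $O(1)$. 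The total cost is $T+T'+O(1)$.

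Second I will build the reflection around $\mathrm{span}\{\ket{\ell}:\ell\in L^\circ\}$. This decomposes as a reflection about $\mathrm{span}\{\ket{\ell}:\ell\in L_{E\setminus\overline{E}}\sqcup L_B\}$ inside the $G$-labels together with, for each $e\in\overline{E}$, a reflection about $\mathrm{span}\{\ket{\ell}:\ell\in L^e\}$ inside the $G^e$-labels. The first one can be obtained from the $G$-reflection by additionally negating the switch-edge labels $L_{\overline{E}}$ a second time (an $O(1)$ phase operation), and the second is, by assumption, available at cost $T'$ in superposition over $e$ by the same subroutine access. Combining the two using the same flag as above yields the global reflection in time $T+T'+O(1)$.

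The main subtlety — and the only real content beyond bookkeeping — is that the replacements $G^e$ vary with $e$, so we are not calling a single subroutine but a superposition of subroutines; the justification that this still costs only $T'+O(1)$ is precisely the unit-cost subroutine assumption from \sec{model}, which in turn relies on the $T'$ bound holding uniformly for all $e\in\overline{E}$. Everything else is straightforward partitioning of labels together with the decomposition \eq{cal-A-comp}.
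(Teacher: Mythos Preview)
Your proposal is correct and follows essentially the same approach as the paper's proof: partition the label set according to whether a label comes from $G$ (excluding the removed switch edges) or from one of the $G^e$, and route to the corresponding basis-generation subroutine using a flag register and the subroutine access model of \sec{model}. If anything, you supply more detail than the paper does, in particular spelling out the reflection and the role of the uniform $T'$ bound when calling the $G^e$ maps in superposition.
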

\begin{proof}
Assume without loss of generality that for $e\in \overline{E}$, $\Psi_{{\cal A}_G}(e)$, which is one-dimensional, has label set $L(e)=\{e\}$.
We let 
$$L_{\cal A}^\circ = (\{0\}\times(L\setminus\overline{E}))\sqcup\bigsqcup_{e\in\overline{E}}\{e\}\times L_{\cal A}^e,$$ so to map $\ket{\ell}\mapsto\ket{b_{\ell}^\circ}$ for $\ell=(\ell_0,\ell_1)\in L^\circ$, we simply need to apply the relevant basis generation map depending on the value of $\ell_0\in\{0\}\cup\overline{E}$. We can check if $\ell=(\ell_0,\ell_1)\in L_{\cal A}^\circ$ by membership in the relevant set, controlled on $\ell_0$. For example, if $\ell_0=0$, we check if $\ell_1\in L$, and $\ell_1\not\in\overline{E}$.
\end{proof}

By the assumptions of \defin{composable-basis}, we have a working basis $\Psi_{{\cal B}_G}$ for ${\cal B}_G$ of the form
$$\Psi_{{\cal B}_G}= \Psi_{{\cal B}_G}^-\cup\left\{\frac{1}{\sqrt{2}}(\ket{\rightarrow,e}+\ket{\leftarrow,e}):e\in \overline{E}\right\},$$
and similarly, for each $e\in \overline{E}$, we have a working basis $\Psi_{{\cal B}_{G^e}}$ for ${\cal B}_{G^e}$ of the form
$$\Psi_{{\cal B}_{G^e}}= \Psi_{{\cal B}_{G^e}}^-\cup\left\{\frac{1}{\sqrt{2}}(\ket{\rightarrow,e'}+\ket{\leftarrow,e'}):e'\in \overline{E}^e\right\},$$
where $\Psi_{{\cal B}_{G^e}}^-\subseteq H_{G^e}^-\colonequals\Xi_{B^e\cup E^e\setminus\overline{E}^e}\oplus\mathrm{span}\{\ket{\rightarrow,e}-\ket{\leftarrow,e}:e\in\overline{E}^e\}$ contains two distinguished vectors, $\ket{b_0^e}$ and $\ket{b_1^e}$. From these, we define the following isometry on $H_G^-$:
\begin{equation}\label{eq:Lambda}
\Lambda \colonequals  \Pi_{E\setminus\overline{E}}+\Pi_B+\sum_{e\in\overline{E}}\frac{1}{\sqrt{2}}\ket{\bar{b}_1^e}(\bra{\rightarrow,e}-\bra{\leftarrow,e}).    
\end{equation}
Then we define:
\begin{equation}\label{eq:cal-B-basis}
    \Psi_{{\cal B}_G^\circ}=\underbrace{\Lambda(\Psi_{{\cal B}_G}^-)\cup\bigcup_{e\in\overline{E}}(\Psi_{{\cal B}_{G^e}}^-\setminus\{\ket{b_0^e},\ket{b_1^e}\})}_{\equalscolon \Psi_{{\cal B}_{G^\circ}}^-} \cup \Bigg\{\underbrace{\frac{1}{\sqrt{2}}(\ket{\rightarrow,e'}+\ket{\leftarrow,e'})}_{\Psi_{{\cal B}_{G^\circ}}(e'),\mbox{ spans }\Xi_{e'}^{e\cal B}}:e'\in \overline{E}^\circ=\bigcup_{e\in\overline{E}}\overline{E}^e\Bigg\}.
\end{equation}
    In \app{locality-of-basis}, we prove that $\Psi_{{\cal B}_{G^\circ}}$ is indeed a basis for ${\cal B}_{G^\circ}$. Again, this is actually not so important -- we could also have defined ${\cal B}_{G^\circ}$ from the basis, in which case, we would not need the assumption, from canonical $st$-boundary, that the spaces ${\cal V}_{s^e}^e$ and ${\cal V}_{t^e}^e$ are one-dimensional. 
However, then it would not be obvious that any graph structure remains, which we feel is beneficial for maintaining intuition when working with complicated objects. 

\begin{lemma}
    $\Psi_{{\cal A}_{G^\circ}}$ and $\Psi_{{\cal B}_{G^\circ}}$ are $st$-composable bases as in \defin{composable-basis}.
\end{lemma}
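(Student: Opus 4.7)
The plan is to verify the five conditions of \defin{composable-basis} for $(\Psi_{{\cal A}_{G^\circ}},\Psi_{{\cal B}_{G^\circ}})$, taking for granted that $\Psi_{{\cal B}_{G^\circ}}$ is an orthonormal basis for ${\cal B}_{G^\circ}$ (this is handled in \app{locality-of-basis}). Condition~1 for $\Psi_{{\cal A}_{G^\circ}}$ is immediate from the construction, since it is defined as a disjoint union $\bigcup_{e\in E^\circ\cup B^\circ}\Psi_{{\cal A}_{G^\circ}}(e)$ with each $\Psi_{{\cal A}_{G^\circ}}(e)$ inherited from an orthonormal basis of $\Xi_e^{\cal A}$ in either $G$ or some $G^e$. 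Condition~2 for $\Psi_{{\cal B}_{G^\circ}}$ is read off \eq{cal-B-basis}: the pieces $\Psi_{{\cal B}_{G^\circ}}(e')=\{\frac{1}{\sqrt{2}}(\ket{\rightarrow,e'}+\ket{\leftarrow,e'})\}$ for $e'\in\overline{E}^\circ$ span $\Xi_{e'}^{\circ\cal B}$, while for non-switch edges $\Xi_{e'}^{\circ\cal B}=\{0\}$.

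For Conditions~3 and~4, the key observation is that $\Lambda$ acts as the identity on $\Xi_B$. Hence $\Lambda(\ket{b_0})=\ket{b_0}=\frac{1}{\sqrt{2}}(\ket{\leftarrow,s}+\ket{\rightarrow,t})$, yielding Condition~3, and $\Lambda(\ket{b_1})=\frac{1}{\sqrt{2+\r}}(\ket{\leftarrow,s}-\ket{\rightarrow,t}+\sqrt{\r}\,\Lambda\ket{\bar{b}_1})$. To match Condition~4 I only need $\Lambda\ket{\bar{b}_1}$ to be a unit vector orthogonal to $\ket{\leftarrow,s}$ and $\ket{\rightarrow,t}$. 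A short calculation shows that $\Lambda$ is an isometry on $H_G^-$: it fixes $\Xi_{B\cup(E\setminus\overline{E})}$ pointwise, and it maps each $\ket{\rightarrow,e}-\ket{\leftarrow,e}$ (of norm $\sqrt{2}$) to $\sqrt{2}\ket{\bar{b}_1^e}\in\Xi_{E^e}^e$, with distinct $e\in\overline{E}$ giving pairwise orthogonal images. Since $\ket{\bar{b}_1}$ was already orthogonal to every vector in $\Xi_B$ (it is the part of $\ket{b_1}\in{\cal B}_G$ orthogonal to $\ket{\leftarrow,s}$ and $\ket{\rightarrow,t}$, and ${\cal B}_G$ has no component on $\ket{s}$ or $\ket{t}$), the image $\Lambda\ket{\bar{b}_1}$ lies in $\Xi_{E^\circ}$, has unit norm, and is orthogonal to $\Xi_B$, yielding Condition~4 with scaling factor still equal to $\r$.

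For Condition~5, I split $\Psi_{{\cal B}_{G^\circ}}\setminus\{\ket{b_0},\Lambda(\ket{b_1})\}$ into three families. For $v\in\Lambda(\Psi_{{\cal B}_G}^-\setminus\{\ket{b_0},\ket{b_1}\})$, the preimage is orthogonal to $\ket{\leftarrow,s}$ and $\ket{\rightarrow,t}$ by Condition~5 for $\Psi_{{\cal B}_G}$; $\Lambda$ fixes the $\Xi_B$-part and pushes the rest into $\Xi_{E^\circ}$, so $v$ inherits the orthogonality. Each element of $\Psi_{{\cal B}_{G^e}}^-\setminus\{\ket{b_0^e},\ket{b_1^e}\}$ lies in $\Xi_{E^e\setminus\overline{E}^e}^e\oplus\mathrm{span}\{\ket{\rightarrow,e''}-\ket{\leftarrow,e''}:e''\in\overline{E}^e\}\subseteq\Xi_{E^\circ}$ (using Condition~5 for $\Psi_{{\cal B}_{G^e}}$, together with the fact that vectors in ${\cal B}_{G^e}$ have no component on $\ket{s^e}$ or $\ket{t^e}$), so these are also orthogonal to $\Xi_B$. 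Finally, the pieces $\frac{1}{\sqrt{2}}(\ket{\rightarrow,e'}+\ket{\leftarrow,e'})$ for $e'\in\overline{E}^\circ$ sit entirely in $\Xi_{E^\circ}$.

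The only mildly nontrivial computation is the isometry claim for $\Lambda$; all remaining steps are direct structural checks. The main conceptual obstacle was front-loaded into the definition of $\Psi_{{\cal B}_{G^\circ}}$ in \eq{cal-B-basis}, where wrapping $\Psi_{{\cal B}_G}^-$ with $\Lambda$ is precisely what preserves the boundary structure required by Conditions~3--5 once the old boundary ``edge'' vectors $\ket{\rightarrow,e}-\ket{\leftarrow,e}$ are replaced by their realizations $\sqrt{2}\ket{\bar{b}_1^e}$ inside $G^e$.
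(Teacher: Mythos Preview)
Your proof is correct and follows essentially the same approach as the paper's: both verify Conditions~3 and~4 by computing $\Lambda(\ket{b_0})$ and $\Lambda(\ket{b_1})$ using that $\Lambda$ fixes $\Xi_B$, and dispatch Condition~5 by observing the remaining basis vectors have no $\Xi_B$-component. You are more thorough than the paper, which simply asserts that $\Lambda$ is an isometry and says Condition~5 is ``easy to verify,'' whereas you explicitly justify the isometry on $H_G^-$ (needed so that $\Lambda\ket{\bar b_1}$ is a unit vector) and spell out the three-family case split for Condition~5.
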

\begin{proof}
The first two properties are clear, so we check properties 3-5. Since $\Psi_{{\cal B}_G}$ is composable, $\Psi_{{\cal B}_G}^-$ contains $\ket{b_0}=\frac{1}{\sqrt{2}}(\ket{\leftarrow,s}+\ket{\rightarrow,t})$, so $\Psi_{{\cal B}_{G^\circ}}^-$ contains 
$$\ket{b_0^\circ}=\Lambda(\ket{b_0}) = \frac{1}{\sqrt{2}}(\ket{\leftarrow,s}+\ket{\rightarrow,t}),$$
establishing property 3 of \defin{composable-basis}. Similarly, $\Psi_{{\cal B}_G}^-$ contains 
$\ket{b_1}=\frac{1}{\sqrt{2+\r}}(\ket{\leftarrow,s}-\ket{\rightarrow,t}+\sqrt{\r}\ket{\bar{b_1}})$, so $\Psi_{{\cal B}_{G^\circ}}^-$ contains 
$$\ket{b_1^\circ}=\Lambda(\ket{b_1}) = \frac{1}{\sqrt{2+\r}}(\ket{\leftarrow,s}-\ket{\rightarrow,t}+\sqrt{\r}\Lambda\ket{\bar{b_1}}),$$
establishing property 4 of \defin{composable-basis}. Finally, it is easy to verify that the remaining vectors of $\Psi_{{\cal B}_{G^\circ}}$ are orthogonal to $\ket{\leftarrow,s}$ and $\ket{\rightarrow,t}$, establishing property 5.
\end{proof}

\begin{lemma}
If the working basis $\Psi_{{\cal B}_G}$ has time complexity $T$, and for each $e\in\overline{E}$, the working basis $\Psi_{{\cal B}_{G^e}}$ has time complexity $T'$, then $\Psi_{{\cal B}_{G^\circ}}$ has time complexity $T+T'+O(1)$.
\end{lemma}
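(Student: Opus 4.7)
The plan is to verify both conditions of \defin{working-basis} for $\Psi_{{\cal B}_{G^\circ}}$ -- the reflection around $\mathrm{span}\{\ket{\ell}:\ell\in L^\circ_{\cal B}\}$ and the generating map $\ket{\ell}\mapsto\ket{b^\circ_\ell}$ -- in time $T+T'+O(1)$. From~\eq{cal-B-basis}, I partition the label set as
$L^\circ_{\cal B} = (\{0\}\times L_{{\cal B}_G}^-)\sqcup\bigsqcup_{e\in\overline{E}}(\{e\}\times(L_{{\cal B}_{G^e}}^-\setminus\{\ell_0^e,\ell_1^e\}))\sqcup(\{\star\}\times\overline{E}^\circ)$, where $L_{{\cal B}_G}^-$ is the label set of $\Psi^-_{{\cal B}_G}$ and $\ell_0^e,\ell_1^e\in L_{{\cal B}_{G^e}}^-$ are the labels of $\ket{b_0^e},\ket{b_1^e}$. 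A membership test on $L^\circ_{\cal B}$ reduces to three membership tests, handled in cost $T$ (for $G$), $T'$ via the QRAM-controlled access of \sec{model} uniform over $e$ (for $G^e$), and $O(1)$ (for the switch part), yielding the reflection in time $T+T'+O(1)$.

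For the generating map, I would branch on the type register. The type-$\star$ case sends $\ket{e'}\mapsto\frac{1}{\sqrt{2}}(\ket{\rightarrow,e'}+\ket{\leftarrow,e'})$ in $O(1)$ local gates. The type-$e$ case invokes $\sum_{e\in\overline{E}}\ket{e}\bra{e}\otimes U_{\Psi_{{\cal B}_{G^e}}}$ at cost $T'$ to produce $\ket{b_\ell^{G^e}}$. The type-$0$ case first calls $U_{\Psi_{{\cal B}_G}}$ at cost $T$ to obtain $\ket{b_\ell}\in H_G^-$, then applies the isometry $\Lambda$ of~\eq{Lambda}. Since $\Lambda$ is the identity on $\Xi_{E\setminus\overline{E}}\oplus\Xi_B$, only its action on each $\ket{v_-^e}\colonequals\frac{1}{\sqrt{2}}(\ket{\rightarrow,e}-\ket{\leftarrow,e})$ requires work, sending it to $\ket{\bar{b}_1^e}$. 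Using $\ket{b_1^e}=\frac{1}{\sqrt{2+\r^e}}(\ket{\leftarrow,s^e}-\ket{\rightarrow,t^e}+\sqrt{\r^e}\ket{\bar{b}_1^e})$ from \defin{composable-basis}, the vector $\ket{\bar{b}_1^e}$ is recovered from $\ket{b_1^e}$ by a $2$-dimensional rotation whose angle depends only on the known scalar $\r^e$ -- an $O(1)$ local operation. I therefore implement $\Lambda$ by (i) rotating each $\ket{v_-^e}$ into an ancilla encoding $e$ together with the fixed label $\ell_1^e$, in $O(1)$, (ii) invoking $\sum_e\ket{e}\bra{e}\otimes U_{\Psi_{{\cal B}_{G^e}}}$ once to produce $\ket{b_1^e}$, and (iii) applying the $O(1)$ renormalisation rotation.

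The main point -- and the reason $T'$ appears only once in the final bound rather than twice -- is that the QRAM call required inside $\Lambda$ (for the type-$0$ branch) and the QRAM call of the type-$e$ branch act on disjoint values of the type register, so a single physical invocation of $\sum_e\ket{e}\bra{e}\otimes U_{\Psi_{{\cal B}_{G^e}}}$ serves both simultaneously. The main obstacle is to verify this sharing cleanly: one must check that the ancilla label $\ell_1^e$ introduced by the $\Lambda$ implementation (tagged by the type-$0$ flag) does not collide with the labels in $L_{{\cal B}_{G^e}}^-\setminus\{\ell_0^e,\ell_1^e\}$ used by the type-$e$ branch (tagged by the type-$e$ flag), which is immediate from the disjointness of the type tags. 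Combining the single $U_{\Psi_{{\cal B}_G}}$ call of cost $T$, the single shared QRAM call of cost $T'$, and $O(1)$ bookkeeping and local rotations, both components of \defin{working-basis} are realized in time $T+T'+O(1)$, proving the lemma.
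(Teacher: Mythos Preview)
Your proposal is correct and follows essentially the same approach as the paper: partition the label set into the three pieces coming from $\Psi_{{\cal B}_G}^-$, from each $\Psi_{{\cal B}_{G^e}}^-\setminus\{\ket{b_0^e},\ket{b_1^e}\}$, and from the switch edges $\overline{E}^\circ$; handle the switch part by a Hadamard, the $G^e$ part by the controlled basis map at cost $T'$, and the $G$ part by first applying $U_{\Psi_{{\cal B}_G}}$ and then implementing $\Lambda$ via $\frac{1}{\sqrt{2}}(\ket{\rightarrow,e}-\ket{\leftarrow,e})\mapsto\ket{\ell_1^e}\mapsto\ket{\bar{b}_1^e}$ using the $G^e$ basis map. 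Your explicit argument that the two invocations of the $G^e$ basis map can be merged into a single cost-$T'$ call (because they act on orthogonal type-tags) is a point the paper leaves implicit, so your write-up is if anything more complete on this detail.
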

\begin{proof}
We have 
$$L_\circ = \underbrace{L^-\sqcup\bigsqcup_{e\in\overline{E}}L_e^-\setminus\{0,1\}}_{L_\circ^-}\sqcup \overline{E}^{\circ}.$$
As always, for any label $e\in\overline{E}^\circ$, the mapping:
$$\ket{e}\mapsto \ket{b_e^\circ}=\frac{1}{\sqrt{2}}(\ket{\rightarrow,e}+\ket{\leftarrow,e})$$
has time complexity $1$ (just apply a Hadamard). 

For $\ell\in L^-_e\setminus\{0,1\}$ for some $e$, apply the map $\ket{\ell}\mapsto \ket{b_{\ell}^e}$ in cost $T'$.

For $\ell\in L^-$, apply the map $\ket{\ell}\mapsto \ket{b_{\ell}}$ in cost $T$, and then use the direct sum of basis maps for $\bar{\Psi}_{{\cal B}_{G^e}}^-$ to map:
\begin{equation*}
\frac{1}{\sqrt{2}}(\ket{\rightarrow,e}-\ket{\leftarrow,e})\mapsto \ket{1,e} \mapsto \ket{\bar{b}_1^e}.\qedhere\end{equation*}
\end{proof}

\noindent We complete the analysis by establishing positive and negative witnesses.

\begin{lemma}
Let $\ket{w}$ be a positive witness for $G$, so it necessarily only overlaps switches that are on. For every switch $e$ that is on, let $\ket{w^e}$ be a positive witness for $G^e$. Then
$$\ket{\hat{w}^{\circ}}=\sum_{e\in \overline{E}}\sqrt{\frac{\r^e}{2}}\braket{\rightarrow,e}{w}\ket{\hat{w}^e} + \Pi_{E\setminus \overline{E}}\ket{w}$$
is a positive witness for $G^\circ$.
\end{lemma}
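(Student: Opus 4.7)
The plan is to define the full positive witness
$$\ket{w^\circ}=\ket{s}-\ket{\leftarrow,s}+\ket{\hat{w}^\circ}+\ket{\rightarrow,t}-\ket{t},$$
and verify $\ket{w^\circ}\in{\cal A}_{G^\circ}^\bot\cap{\cal B}_{G^\circ}^\bot$; its boundary form then matches \defin{pos-graph-wit} with cropped part $\Pi_{E^\circ}\ket{w^\circ}=\ket{\hat{w}^\circ}$, as required.

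For orthogonality to ${\cal A}_{G^\circ}$, I check each summand in \eq{cal-A-comp}. The term $\ket{s}-\ket{\leftarrow,s}$ is orthogonal to $\Xi_s^{\cal A}=\mathrm{span}\{\ket{s}+\ket{\leftarrow,s}\}$, and analogously for $t$. For $e\in E\setminus\overline{E}$, the only portion of $\ket{w^\circ}$ in $\Xi_e$ is $\Pi_e\ket{w}$, so orthogonality to $\Xi_e^{\cal A}$ is inherited from $\ket{w}\in{\cal A}_G^\bot$. For $e\in\overline{E}$ and $e'\in E^e$, the $\Xi_{e'}^e$-component of $\ket{w^\circ}$ equals $\sqrt{\r^e/2}\braket{\rightarrow,e}{w}\,\Pi_{e'}\ket{\hat{w}^e}$, and orthogonality to $\Xi_{e'}^{e\cal A}$ is inherited from $\ket{w^e}\in{\cal A}_{G^e}^\bot$.

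For orthogonality to ${\cal B}_{G^\circ}$ (\eq{cal-B-comp}), the inner vertex spaces ${\cal V}_{u'}^e$ with $u'\in V^e\setminus\{s^e,t^e\}$ and the inner B-spaces $\Xi_{e'}^{e\cal B}$ with $e'\in\overline{E}^e$ both lie inside $\Xi_{E^e}^e$; the projection of $\ket{w^\circ}$ onto $\Xi_{E^e}^e$ is $\sqrt{\r^e/2}\braket{\rightarrow,e}{w}\ket{\hat{w}^e}$, which is orthogonal to these spaces because $\ket{w^e}$ is a positive witness for $G^e$. The main obstacle is orthogonality to the glued vertex spaces $\widetilde\Lambda({\cal V}_u)$ for $u\in V$. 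For $v\in{\cal V}_u$, I split $v=v^{\mathrm{ns}}+\sum_{e\in E^\rightarrow(u)\cap\overline{E}}\beta_e\ket{\rightarrow,e}+\sum_{e\in E^\leftarrow(u)\cap\overline{E}}\gamma_e\ket{\leftarrow,e}$ using the switch-edge structure from \defin{switch-edge}, apply $\widetilde\Lambda$ via \eq{tilde-Lambda}, and compute the inner product with $\ket{w^\circ}$. The key identities $\braket{\hat{w}^e}{\psi_\star(s^e)}=1$ and $\braket{\hat{w}^e}{\psi_\star(t^e)}=-1$ follow from $\ket{w^e}\perp {\cal V}_{s^e}^e,{\cal V}_{t^e}^e$ together with the canonical $\pm 1$ coefficients of $\ket{w^e}$ on $\ket{\leftarrow,s^e}$ and $\ket{\rightarrow,t^e}$. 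These let each inner-graph piece contribute $\pm(\beta_e/\sqrt{2})\braket{w}{\rightarrow,e}$ or $\mp(\gamma_e/\sqrt{2})\braket{w}{\rightarrow,e}$, the factor $1/\sqrt{\r^e}$ in $\widetilde\Lambda$ cancelling against the $\sqrt{\r^e/2}$ in $\ket{\hat{w}^\circ}$. Invoking the switch-on identity $\braket{w}{\leftarrow,e}=-\braket{w}{\rightarrow,e}$ (on edges with $\varphi(e)=1$, which are exactly those on which $\ket{w}$ can have nonzero overlap), together with $\braket{w}{v}=0$ from $v\in{\cal V}_u\subseteq{\cal B}_G$ and $\ket{w}\in{\cal B}_G^\bot$, collapses the sum to zero.

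The hard part of the argument is this last computation. The precise coefficient $\sqrt{\r^e/2}$ in the statement is forced by the requirement that the $\widetilde\Lambda$-induced scaling $1/\sqrt{\r^e}$ cancel \emph{uniformly} across every switch edge incident to $u$, so that the combination through $\braket{w}{v}=0$ actually closes. Any other scaling would leave a residual term obstructing orthogonality to $\widetilde\Lambda({\cal V}_u)$, so the form of $\ket{\hat{w}^\circ}$ is essentially the unique one compatible with this gluing construction.
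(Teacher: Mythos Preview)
Your approach diverges from the paper's in the treatment of ${\cal B}_{G^\circ}$: the paper verifies orthogonality against the working basis $\Psi_{{\cal B}_{G^\circ}}$ built with the isometry $\Lambda$ of \eq{Lambda}, and the crucial computation is $\Lambda^\dagger\ket{\hat w^\circ}=\ket{\hat w}$, obtained from the identity $\sqrt{\r^e}\braket{\bar b_1^e}{\hat w^e}=2$. You instead try to verify orthogonality against the ``local'' description \eq{cal-B-comp}, via $\widetilde\Lambda$.

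This is where your argument has a real gap. Your key identities $\braket{\hat w^e}{\psi_\star(s^e)}=1$ and $\braket{\hat w^e}{\psi_\star(t^e)}=-1$ are correct, and you rightly observe that each switch contribution to $\braket{\widetilde\Lambda v}{w^\circ}$ comes out as $\frac{1}{\sqrt 2}\beta_e\braket{\rightarrow,e}{w}$ (resp.\ $-\frac{1}{\sqrt 2}\gamma_e\braket{\rightarrow,e}{w}$). But the non-switch and boundary parts of $v$ are carried by $\Pi_{E\setminus\overline E}+\Pi_B$ in $\widetilde\Lambda$ with no rescaling, so they contribute $\braket{v^{\mathrm{ns}}}{w}$ with coefficient $1$, not $\tfrac{1}{\sqrt 2}$. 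Using $\braket{\leftarrow,e}{w}=-\braket{\rightarrow,e}{w}$ and $\braket{v}{w}=0$ one obtains
\[
\braket{\widetilde\Lambda v}{w^\circ}=\Bigl(1-\tfrac{1}{\sqrt 2}\Bigr)\braket{v^{\mathrm{ns}}}{w},
\]
which has no reason to vanish when $u$ has any non-switch incident edges or boundary overlap. In particular, for $u=s$ (where ${\cal V}_s$ always carries $\ket{\leftarrow,s}$) this is nonzero already in the simplest example $G=G_{\textsc{or},1}$ composed with $G^e=G_{\textsc{or},1}$: one checks $\bra{w^\circ}\widetilde\Lambda(\ket{\leftarrow,s}+\sqrt{\w_1}\ket{\rightarrow,e})=-1+\tfrac{1}{\sqrt 2}\neq 0$. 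So your ``collapse'' step does not close.

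The paper's route sidesteps this because $\Lambda$ sends $\tfrac{1}{\sqrt 2}(\ket{\rightarrow,e}-\ket{\leftarrow,e})$ to the \emph{unit} vector $\ket{\bar b_1^e}$, so it is an isometry on $H_G^-$ and the pullback $\Lambda^\dagger\ket{w^\circ}=\ket{w}$ holds exactly; orthogonality to $\Lambda(\Psi_{{\cal B}_G}^-)$ then follows directly from $\ket w\perp{\cal B}_G$. (Your computation in fact exposes a $\sqrt 2$ slip in the appendix: the two descriptions of ${\cal B}_{G^\circ}$ via $\Lambda$ and via $\widetilde\Lambda$ do not coincide as written. The lemma and its proof are stated relative to the working basis, where the paper's argument is correct.)
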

\begin{proof}
We need to show that $\ket{w^\circ}=\ket{s}+\ket{\leftarrow,s}+\ket{\hat{w}^{\circ}}+\ket{\rightarrow,t}+\ket{t}$ is in ${\cal A}_{G^\circ}^\bot\cap {\cal B}_{G^\circ}^\bot$. 
We start by showing that $\ket{w^\circ}$ is orthogonal to everything in ${\cal A}_{G^\circ}$. 
Orthogonality with $\Xi_s^{\cal A}=\mathrm{span}\{\ket{s}-\ket{\leftarrow,s}\}$ (see \defin{canonical-boundary}) is obvious. For $e\in E\setminus\overline{E}$, 
$$\Pi_e^{\cal A}\ket{w^\circ}=\Pi_e^{\cal A}\Pi_{E\setminus\overline{E}}\ket{w}=\Pi_e^{\cal A}\ket{w}=0$$
since $\ket{w}\in {\cal A}_G^\bot$ (because it is a positive witness). 
For all $e\in\overline{E}$ and $e'\in E^e$, 
$$\Pi_{e'}^{e\cal A}\ket{w^\circ} = \sqrt{\frac{\r^e}{2}}\braket{\rightarrow,e}{w}{\Pi_{e'}^{e\cal A}\ket{\hat{w}^e}}=\sqrt{\frac{\r^e}{2}}\braket{\rightarrow,e}{w}{\Pi_{e'}^{e\cal A}\ket{w^e}}=0$$
since $\ket{\hat{w}^e}=(I-\Pi_{B^e})\ket{w^e}$, and $\Pi_{e'}^{e\cal A}\ket{w^e}=0$ because $\ket{w^e}$ is a positive witness for $G^e$. Thus $\ket{w^\circ}\in {\cal A}_{G^\circ}^\bot$. 

By an identical proof, we can show that $\ket{w^\circ}$ is orthogonal to $\Xi_e^{\cal B}$ for all $e\in E\setminus\overline{E}$, and $\Xi_{e'}^{e\cal B}$ for all $e\in \overline{E}$ and $e'\in E^e$. Thus, it remains only to show orthogonality with $\Psi_{{\cal B}_{G^\circ}}^-$.

Note that since $\ket{b_1}\propto \ket{\leftarrow,s^e}-\ket{\rightarrow,t^e}+\sqrt{\r^e}\ket{\bar{b}_1^e}\in {\cal B}_{G^e}$, we must have
\begin{align*}
    \left(\bra{\leftarrow,s^e}-\bra{\rightarrow,t^e}+\sqrt{\r^e}\bra{\bar{b}_1^e}\right)\left(\ket{s^e}-\ket{\leftarrow,s^e}+\ket{\hat{w}^e}+\ket{\rightarrow,t^e}-\ket{t^e}\right) &= 0\\
     \sqrt{\r^e}\braket{\bar{b}_1^e}{\hat{w}^e} &= 2.
\end{align*}
Thus:
\begin{equation}
    \begin{split}
        \Lambda^\dagger\ket{{\hat{w}}^\circ} &= \sum_{e\in\overline{E}}\sqrt{\frac{\r^e}{2}}\braket{\rightarrow,e}{w}{\Lambda^\dagger\ket{\hat{w}^e}}+\Pi_{E\setminus\overline{E}}\ket{w}\\
        &= \sum_{e\in\overline{E}}\sqrt{\frac{\r^e}{2}}\braket{\rightarrow,e}{w}\braket{\bar{b}_1^e}{\hat{w}^e}\frac{1}{\sqrt{2}}\left(\ket{\rightarrow,e}-\ket{\leftarrow,e}\right)+\Pi_{E\setminus\overline{E}}\ket{w}\\
        &= \sum_{e\in\overline{E}}\sqrt{\frac{\r^e}{2}}\braket{\rightarrow,e}{w}\frac{2}{\sqrt{\r^e}}\frac{1}{\sqrt{2}}\left(\ket{\rightarrow,e}-\ket{\leftarrow,e}\right)+\Pi_{E\setminus\overline{E}}\ket{w}\\
        &= \sum_{e\in\overline{E}}\braket{\rightarrow,e}{w}\left(\ket{\rightarrow,e}-\ket{\leftarrow,e}\right)+\Pi_{E\setminus\overline{E}}\ket{w}\\
        &= \Pi_{\overline{E}}\ket{w}+\Pi_{E\setminus\overline{E}}\ket{w} = \ket{\hat{w}},
    \end{split}
\end{equation}
where we used the fact that $\braket{\rightarrow,e}{w}=-\braket{\leftarrow,e}{w}$. It follows that $\Lambda^\dagger\ket{w^\circ}=\ket{w}$, and thus
we have, for any $\ket{\psi}\in\Psi_{{\cal B}_G}^-$,
\begin{equation}
    \bra{\psi}\Lambda^\dagger\ket{w^\circ} = \braket{\psi}{w}=0,
\end{equation}
since $\ket{w}$ is a positive witness for $G$. To complete the proof, we show orthogonality with $\Psi_{{\cal B}_{G^e}}^-\setminus\{\ket{b_0^e},\ket{b_1^e}\}$ for any $e\in\overline{E}$. ${\cal B}_{G^e}$ is orthogonal to $\ket{s^e}$ and $\ket{t^e}$, and furthermore, for $i>1$, we have $\braket{\leftarrow,s^e}{b_i^e}=\braket{\rightarrow,t^e}{b_i^e}=0$. We have:
\begin{align*}
    \braket{b_i^e}{w^\circ}=\sqrt{\frac{\r^e}{2}}\braket{\rightarrow,e}{w}\braket{b_i^e}{\hat{w}^e}.
\end{align*}
Since $\ket{\hat{w}^e}$ is a cropped positive witness for $G^e$, we have:
$$0=\braket{b_i^e}{w}=\braket{b_i^e}{s}-\braket{b_i^e}{\leftarrow,s}+\braket{b_i^e}{\hat{w}^e}+\braket{b_i^e}{\rightarrow,t^e}-\braket{b_i^e}{t}=\braket{b_i^e}{\hat{w}^e},$$
completing the proof.
\end{proof}

\begin{lemma}
    Let $\ket{w_{\cal A}}$ be a negative witness for $G$ that only overlaps switches that are off (this is always true for optimal witnesses, because when a switch is on $\Xi_e^{\cal B}=\Xi_e^{\cal A}$, so might as well put any overlap with $\Xi_e$ into $\ket{w_{\cal B}}$). For every switch $e\in \overline{E}$ that is off, let $\ket{w_{\cal A}^e}$ be a negative witness for $G^e$. Then:
    $$\ket{\hat{w}_{\cal A}^\circ}=\sum_{e\in \overline{E}}\braket{\rightarrow,e}{w_{\cal A}}\sqrt{\frac{2}{\r^e}}{\ket{\hat{w}_{\cal A}^e}} + \Pi_{E\setminus \overline{E}}\ket{w_{\cal A}}$$
    is a negative witness for $G^\circ$.
\end{lemma}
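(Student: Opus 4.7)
The plan is to verify, using the lemma preceding the composition theorem, the two conditions equivalent to $\ket{w_{\cal A}^\circ}:=\ket{s}+\ket{\leftarrow,s}+\ket{\hat{w}_{\cal A}^\circ}-\ket{\rightarrow,t}-\ket{t}$ being a negative witness for $G^\circ$: (i)~$\ket{w_{\cal A}^\circ}\in {\cal A}_{G^\circ}$, and (ii)~$\ket{\hat{w}_{\cal A}^\circ}+\ket{\leftarrow,s}-\ket{\rightarrow,t}\in {\cal B}_{G^\circ}$.

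Condition (i) will be a bookkeeping exercise using the decomposition \eq{cal-A-comp}: the boundary terms lie in $\Xi_s^{\cal A}\oplus\Xi_t^{\cal A}$ by canonical $st$-boundary; $\Pi_{E\setminus\overline{E}}\ket{w_{\cal A}}\in\bigoplus_{e\in E\setminus\overline{E}}\Xi_e^{\cal A}$ because $\ket{w_{\cal A}}\in {\cal A}_G$; and each $\ket{\hat{w}_{\cal A}^e}=\Pi_{E^e}\ket{w_{\cal A}^e}\in\bigoplus_{e'\in E^e}\Xi_{e'}^{e{\cal A}}$ because $\ket{w_{\cal A}^e}\in {\cal A}_{G^e}$.

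For condition (ii), the strategy is to apply the isometry $\Lambda$ of \eq{Lambda} to the hypothesis $\ket{\hat{w}_{\cal A}}+\ket{\leftarrow,s}-\ket{\rightarrow,t}\in {\cal B}_G$ and then correct for the term-by-term discrepancy with our target. Since $\ket{w_{\cal A}}$ meets each switch $e$ only through $\Xi_e^{\cal A}=\mathrm{span}\{\ket{\rightarrow,e}-\ket{\leftarrow,e}\}$ (as $e$ is off), $\ket{\hat{w}_{\cal A}}+\ket{\leftarrow,s}-\ket{\rightarrow,t}\in H_G^-$, and having no overlap with any $(\ket{\rightarrow,e}+\ket{\leftarrow,e})/\sqrt{2}$, it lies in $\mathrm{span}\Psi_{{\cal B}_G}^-$. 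Using $\Lambda(\ket{\rightarrow,e}-\ket{\leftarrow,e})=\sqrt{2}\ket{\bar{b}_1^e}$, applying $\Lambda$ produces
$$\Pi_{E\setminus\overline{E}}\ket{w_{\cal A}}+\sum_{e\in\overline{E}}\sqrt{2}\braket{\rightarrow,e}{w_{\cal A}}\ket{\bar{b}_1^e}+\ket{\leftarrow,s}-\ket{\rightarrow,t}\in\mathrm{span}\Lambda\Psi_{{\cal B}_G}^-\subseteq{\cal B}_{G^\circ}.$$
Subtracting this from $\ket{\hat{w}_{\cal A}^\circ}+\ket{\leftarrow,s}-\ket{\rightarrow,t}$, it remains to show that $\sum_{e\in\overline{E}}\braket{\rightarrow,e}{w_{\cal A}}\bigl(\sqrt{2/\r^e}\ket{\hat{w}_{\cal A}^e}-\sqrt{2}\ket{\bar{b}_1^e}\bigr)$ lies in ${\cal B}_{G^\circ}$.

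For each term, subtracting $\sqrt{2+\r^e}\ket{b_1^e}=\ket{\leftarrow,s^e}-\ket{\rightarrow,t^e}+\sqrt{\r^e}\ket{\bar{b}_1^e}\in{\cal B}_{G^e}$ from the witness relation $\ket{\hat{w}_{\cal A}^e}+\ket{\leftarrow,s^e}-\ket{\rightarrow,t^e}\in{\cal B}_{G^e}$ and scaling by $\sqrt{2/\r^e}$ yields $\sqrt{2/\r^e}\ket{\hat{w}_{\cal A}^e}-\sqrt{2}\ket{\bar{b}_1^e}\in{\cal B}_{G^e}$. The main obstacle, which will require care, is upgrading this from membership in ${\cal B}_{G^e}$ to membership in ${\cal B}_{G^\circ}$: the former contains $\ket{b_0^e}$ and $\ket{b_1^e}$, which overlap the boundary of $G^e$ and therefore lie outside $H_{G^\circ}$, while the corresponding basis for ${\cal B}_{G^\circ}$ excludes them. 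To resolve this, I observe that by property~(4) of \defin{composable-basis}, $\ket{\bar{b}_1^e}$ is orthogonal to both $\ket{\leftarrow,s^e}$ and $\ket{\rightarrow,t^e}$, and by construction $\ket{\hat{w}_{\cal A}^e}$ is cropped, so the summand has zero inner product with each of these boundary vectors. Property~(5) then says that only $\ket{b_0^e},\ket{b_1^e}$ in $\Psi_{{\cal B}_{G^e}}$ can overlap these vectors, so expanding the summand in the basis $\Psi_{{\cal B}_{G^e}}$ and imposing these two zero inner products yields a $2\times 2$ linear system forcing the coefficients of $\ket{b_0^e}$ and $\ket{b_1^e}$ to vanish. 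Hence each summand lies in $\mathrm{span}(\Psi_{{\cal B}_{G^e}}\setminus\{\ket{b_0^e},\ket{b_1^e}\})\subseteq{\cal B}_{G^\circ}$, which completes the argument.
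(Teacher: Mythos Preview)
Your proposal is correct and follows essentially the same approach as the paper: apply $\Lambda$ to $\ket{\hat{w}_{\cal A}}+\ket{\leftarrow,s}-\ket{\rightarrow,t}\in\mathrm{span}\Psi_{{\cal B}_G}^-$, then show each residual term $\sqrt{2/\r^e}\ket{\hat{w}_{\cal A}^e}-\sqrt{2}\,\ket{\bar{b}_1^e}$ lies in ${\cal B}_{G^e}\cap\mathrm{span}\{\ket{b_0^e},\ket{b_1^e}\}^\bot\subseteq{\cal B}_{G^\circ}$ via orthogonality to $\ket{\leftarrow,s^e}$ and $\ket{\rightarrow,t^e}$. Your explicit verification of $\ket{w_{\cal A}^\circ}\in{\cal A}_{G^\circ}$ and your $2\times 2$ system phrasing of the last step are minor presentational differences from the paper, which omits the former and argues the latter by sequentially peeling off $\ket{b_0^e}$ then $\ket{b_1^e}$.
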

\begin{proof}
We will use the fact that if a switch is off, we always have $\Pi_e\ket{w_{\cal A}}=\braket{\rightarrow,e}{w_{\cal A}}(\ket{\rightarrow,e}-\ket{\leftarrow,e})$. This is true because a switch that is off always has $\Xi_e^{\cal A}=\mathrm{span}\{\ket{\rightarrow,e}-\ket{\leftarrow,e}\}$. In particular, this means that $\Pi_{\overline{E}}\ket{\hat{w}_{\cal A}}\in \Xi_{\overline{E}}^-$, so since $\ket{\hat{w}_{\cal A}}+\ket{\leftarrow,s}-\ket{\rightarrow,t}\in {\cal B}_G$, in particular, it is in ${\cal B}_G^-$. We have:
\begin{align*}
    \ket{\hat{w}_{\cal A}}+\ket{\leftarrow,s}-\ket{\rightarrow,t} &\in {\cal B}_G^-\\
    \Pi_{E\setminus\overline{E}}\ket{\hat{w}_{\cal A}} + \sum_{e\in \overline{E}}\braket{\rightarrow,e}{w_{\cal A}}(\ket{\rightarrow,e}-\ket{\leftarrow,e})+\ket{\leftarrow,s}-\ket{\rightarrow,t} &\in {\cal B}_G^-\\
    \Lambda\left(\Pi_{E\setminus\overline{E}}\ket{\hat{w}_{\cal A}} + \sum_{e\in \overline{E}}\braket{\rightarrow,e}{w_{\cal A}}(\ket{\rightarrow,e}-\ket{\leftarrow,e})+\ket{\leftarrow,s}-\ket{\rightarrow,t}\right) &\in {\cal B}_{G^\circ}\\
    \Pi_{E\setminus\overline{E}}\ket{\hat{w}_{\cal A}} + \sum_{e\in \overline{E}}\braket{\rightarrow,e}{w_{\cal A}}\sqrt{2}\ket{\bar{b}_1^e}+\ket{\leftarrow,s}-\ket{\rightarrow,t} &\in {\cal B}_{G^\circ}\\
    \ket{\hat{w}_{\cal A}^\circ}+\sum_{e\in \overline{E}}\braket{\rightarrow,e}{w_{\cal A}}\sqrt{\frac{2}{\r^e}}\left(\sqrt{\r^e}\ket{\bar{b}_1^e}-\ket{\hat{w}_{\cal A}^e}\right)+\ket{\leftarrow,s}-\ket{\rightarrow,t} &\in {\cal B}_{G^\circ}.
\end{align*}
We complete the proof by noting that for all $e\in\overline{E}$, $\sqrt{\r^e}\ket{\bar{b}_1^e}-\ket{\hat{w}_{\cal A}^e}\in {\cal B}_{G^\circ}$. To see this, we note:
    \begin{equation*}
        \ket{\hat{w}_{\cal A}^e} + \ket{\leftarrow,s}-\ket{\rightarrow,t} \in {\cal B}_{G^e}^-\cap \mathrm{span}\{\ket{b_0^e}\}^\bot,
    \end{equation*}
    which follows from the fact that $\ket{\hat{w}_{\cal A}^e}$ is a cropped negative witness, and so it is orthogonal to $\ket{\leftarrow,s^e}$ and $\ket{\rightarrow,t^e}$, and in particular, $\ket{b_0^e}=\frac{1}{\sqrt{2}}(\ket{\leftarrow,s^e}+\ket{\rightarrow,t^e})$. The only other basis vector of ${\cal B}_{G^e}^-$ that overlaps $\ket{\leftarrow,s^e}$ and $\ket{\rightarrow,t^e}$ is $\ket{b_1^e}\propto \ket{\leftarrow,s^e}-\ket{\rightarrow,t^e}+\sqrt{\r^e}\ket{\bar{b}_0^e}$, from which it follows:
    \begin{equation*}
        \ket{\hat{w}_{\cal A}^e} - \sqrt{\r^e}\ket{\bar{b}_0^e} \in {\cal B}_{G^e}^-\cap \mathrm{span}\{\ket{b_0^e},\ket{b_1^e}\}^\bot\subset {\cal B}_{G^\circ}. \qedhere
    \end{equation*}
\end{proof}

\subsection{Boolean Formula Composition}

In this section, we give a subspace graph computing $\varphi\circ (f_\sigma)_\sigma$ when $\varphi$ is a symmetric formula, given subspace graphs for each $f_\sigma$. Subspace graphs for formulas $\varphi$ can be obtained simply by composing the switching networks for OR and AND in \sec{OR-gadget} and \sec{AND-gadget} (see \cite{jeffery2017stConnFormula}). Our composition theorem, \thm{comp}, generalizes this simple switching network composition to more general subspace graphs with switches.

We assume that $\varphi$ is balanced (see \defin{balanced2}), as otherwise we get a factor in the depth. In the general balanced case, we must be able to generate certain superpositions efficiently, which is slightly complicated to describe (beginning with \defin{reflection-cost}). The special case where $\varphi$ is symmetric (see \defin{balanced}) is much simpler, as shown in \cor{boolean-comp}.

\begin{definition}\label{def:reflection-cost}
    Fix a Boolean formula $\varphi$ on $\{0,1\}^\Sigma$, and real values ${\sf C}_\sigma^+$ and ${\sf C}_\sigma^-$ for each $\sigma\in\Sigma$. Let ${\sf C}_\sigma\colonequals\sqrt{{\sf C}_\sigma^+{\sf C}_\sigma^-}$.

    For $\sigma\in\overline{\Sigma}\setminus\Sigma$, let $d_\sigma$ be the degree of $\sigma$, and $v(\sigma)\in\{\vee,\wedge\}$ indicate the gate labelling the node $\sigma$, so for example, if $v(\sigma)=\vee$, then the sub-formula rooted at $\sigma$ is an OR of $d_\sigma$ sub-formulas. Define: 
$${\sf C}_\sigma \colonequals\sum_{i=1}^d {\sf C}_{\sigma i},
\qquad 
{\sf C}_\sigma^+ = \left\{\begin{array}{ll}
\sum_{i=1}^{d_\sigma}{\sf C}_{\sigma i}^+&\mbox{if }v(\sigma)=\vee\\
\frac{{\sf C}_\sigma^2}{\sum_{i=1}^{d_\sigma}{\sf C}_{\sigma i}^+}&\mbox{if }v(\sigma)=\wedge,
\end{array}\right.
\quad\mbox{and}\quad
{\sf C}_\sigma^- = \left\{\begin{array}{ll}
\frac{{\sf C}_\sigma^2}{\sum_{i=1}^{d_\sigma}{\sf C}_{\sigma i}^-}&\mbox{if }v(\sigma)=\vee\\
\sum_{i=1}^{d_\sigma}{\sf C}_{\sigma i}^-&\mbox{if }v(\sigma)=\wedge.
\end{array}\right.$$
Then note that
$${\sf C}_\sigma = \sqrt{{\sf C}_{\sigma}^+{\sf C}_\sigma^-},
\quad\mbox{and also}\quad 
{\sf C}={\sf C}_\emptyset = \sqrt{\sum_{\sigma\in\Sigma}{\sf C}_\sigma^2}.$$

    We say the values $\{{\sf C}_\sigma^+,{\sf C}_\sigma^-\}_{\sigma\in\Sigma}$ incur logarithmic reflection cost if for every $\sigma\in\overline{\Sigma}$ such that $v(\sigma)=\vee$, a state proportional to $\sum_{i\in [d_\sigma]}\sqrt{{\sf C}_{\sigma i}^+}\ket{i}$ can be generated in complexity $O(\log d_\sigma)$, and for all $\sigma\in\overline{\Sigma}$ such that $v(\sigma)=\wedge$, a state proportional to $\sum_{i\in [d_\sigma]}\sqrt{{\sf C}_{\sigma i}^-}\ket{i}$ can be generated in complexity $O(\log d_\sigma)$.
\end{definition}

\begin{lemma}\label{lem:boolean-comp-gen}
    Let $\varphi$ be a balanced formula (\defin{balanced2}) on $\{0,1\}^\Sigma$.    
    Let $\{f_\sigma\}_{\sigma\in\Sigma}$ be Boolean functions, and for each $\sigma\in \Sigma$, let $G_\sigma$ be an $st$-composable subspace graph (\defin{st-composable-subspace-graph}) computing $f_\sigma$, with working bases that can be generated in time at most $T$ with scaling factor $\r^\sigma$, and $\log\dim H_{G_\sigma}\leq S$. 
    Suppose that for each $\sigma\in\Sigma$, ${\sf C}_\sigma^+$ is a known upper bound on $\r^\sigma\hat{W}_+(G_\sigma)$, and ${\sf C}_\sigma^-$ is a known upper bound on $\hat{W}_-(G_\sigma)/\r^\sigma$, so ${\sf C}_\sigma\colonequals\sqrt{{\sf C}_\sigma^+{\sf C}_\sigma^-}$ is a known upper bound on $\hat{\cal C}(G_\sigma)$. Suppose the values $\{{\sf C}_\sigma^+,{\sf C}_\sigma^-\}_{\sigma\in\Sigma}$ incur logarithmic reflection cost (see \defin{reflection-cost}).
    
    Then there is an $st$-composable subspace graph $G^\circ$ computing $\varphi\circ (f_\sigma)_{\sigma\in \Sigma}$ with 
    $\log\dim H_{G^\circ}=S+O(\log|\Sigma|)$ and 
    $$\hat{\cal C}(G^\circ)^2\leq \sum_{\sigma\in\Sigma}{\sf C}_\sigma^2.$$
    Furthermore, the working bases of $G^\circ$ can be generated in time $T+O(\log |\Sigma|)$.
\end{lemma}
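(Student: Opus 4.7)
The plan is to build $G^\circ$ by bottom-up recursion on the formula tree, using the OR-gadget of \lem{OR-gadget} and the AND-gadget of \lem{AND-gadget} as the combining primitives, and \thm{comp} to plug each child subspace graph into a switch edge of its parent gadget. For each $\sigma\in \overline{\Sigma}$ I would produce an $st$-composable subspace graph $G_\sigma^\circ$ computing the sub-formula rooted at $\sigma$, carrying the inductive invariant
$$\r^\sigma \hat{W}_+(G_\sigma^\circ) \le {\sf C}_\sigma^+ \qquad\mbox{and}\qquad \hat{W}_-(G_\sigma^\circ)/\r^\sigma \le {\sf C}_\sigma^-,$$
where $\r^\sigma$ is the scaling factor of $G_\sigma^\circ$. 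The leaf case $G_\sigma^\circ = G_\sigma$ is the hypothesis of the lemma, and at the root the two invariants multiply to $\hat{\cal C}(G^\circ)^2 \le {\sf C}_\emptyset^+ {\sf C}_\emptyset^- = \sum_\sigma {\sf C}_\sigma^2$, which is the target bound.

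For the inductive step at $\sigma$ with $v(\sigma) = \vee$, I would weight the OR-gadget's edges by $\w_i = {\sf C}_{\sigma i}^+$, giving scaling factor $\r^\sigma = 2\sum_i \w_i = 2{\sf C}_\sigma^+$, and apply \thm{comp} to plug each $G_{\sigma i}^\circ$ into the switch $e_i$. Substituting the gadget witnesses from \lem{OR-gadget} into the composition formulas of \thm{comp} and applying the inductive bounds yields a cropped positive witness of squared norm at most $\r^{\sigma i}\hat{W}_+(G_{\sigma i}^\circ)/(2\w_i)\le 1/2$ for any $i$ with $f_{\sigma i}(x)=1$, and a cropped negative witness of squared norm at most $\sum_i (2\w_i/\r^{\sigma i})\hat{W}_-(G_{\sigma i}^\circ) \le 2\sum_i {\sf C}_{\sigma i}^+ {\sf C}_{\sigma i}^- = 2\sum_i {\sf C}_{\sigma i}^2$. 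These give the invariants at $\sigma$ via the OR identities ${\sf C}_\sigma^+ = \sum_i {\sf C}_{\sigma i}^+$ and ${\sf C}_\sigma^2 = \sum_i {\sf C}_{\sigma i}^2$ implicit in \defin{reflection-cost}. The $v(\sigma) = \wedge$ case is dual: weight by $\w_i = 1/{\sf C}_{\sigma i}^-$, plug in the AND-gadget witnesses from \lem{AND-gadget}, and run the same arithmetic with the roles of $+$ and $-$ exchanged.

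For the working bases and dimensions, the above weight choices are exactly those for which the gadget-basis generation in \lem{OR-gadget}/\lem{AND-gadget} reduces to producing $\sum_i \sqrt{\w_i}\ket{i}$ or $\sum_i \ket{i}/\sqrt{\w_i}$, which the logarithmic reflection cost hypothesis delivers in $O(\log d_\sigma)$ time. Each application of \thm{comp} adds $O(\log d_\sigma)+O(1)$ to the basis time on top of the children's cost, so summing along a root-to-leaf path of the balanced formula telescopes to $T + O(\sum_{\sigma\text{ on path}} \log d_\sigma) = T + O(\log |\Sigma|)$; the dimension bound follows similarly from the $\dim H_{G^\circ}$ estimate in \thm{comp}. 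The main delicate point is the scaling-factor bookkeeping: the $\sqrt{\r^{\sigma i}/2}$ and $\sqrt{2/\r^{\sigma i}}$ factors in \thm{comp}'s witness formulas must exactly cancel the inductive slack between $\hat{W}_\pm(G_{\sigma i}^\circ)$ and ${\sf C}_{\sigma i}^{\pm}$, which is what forces the specific weight choices above; once these are set, the ${\sf C}_\sigma^\pm$ recursions in \defin{reflection-cost} are precisely what make the invariant telescope up the tree.
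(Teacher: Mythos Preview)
Your proposal is correct and essentially identical to the paper's own proof: the paper also inducts on the depth of $\varphi$, maintains precisely the two invariants $\r^\sigma\hat{W}_+(G_\sigma^\circ)\le {\sf C}_\sigma^+$ and $\hat{W}_-(G_\sigma^\circ)/\r^\sigma\le {\sf C}_\sigma^-$, composes the children into the OR/AND gadgets via \thm{comp}, and chooses weights $\w_i\propto {\sf C}_{\sigma i}^+$ (OR) and $\w_i\propto 1/{\sf C}_{\sigma i}^-$ (AND) so that the $\sqrt{\r^{\sigma i}/2}$ factors cancel against the inductive bounds. Your weights differ from the paper's by a harmless factor of $2$, and your basis-time and dimension telescoping along a root-to-leaf path is exactly the paper's argument with the constants suppressed.
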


\begin{proof}[Proof of \lem{boolean-comp-gen}]
Let $c$ be a constant such that for each node in $\varphi$, if its subtree has $N$ leaves, and it has $d$ children, then the sub-tree of each child has at most $cN/d$ leaves (this exists because $\varphi$ is balanced), and let $a$ be a constant such that the bases referred to in \lem{OR-gadget} and \lem{AND-gadget} can both be generated in complexity at most $a\log d$, and let $a'$ be a constant such that the cost of generating the basis of $G^\circ$ in \thm{comp} is $T+T'+a'$. We will show by induction on the depth $D$ of $\varphi$ that there exists an $st$-composable subspace graph $G^\circ$ computing $\varphi\circ (f_\sigma)_\sigma$ with composable bases with scaling factor $\r^\circ$ such that:
\begin{enumerate}
    \item $\r^\circ\hat{W}_+(G^\circ)\leq {\sf C}^+$
    \item $\frac{\hat{W}_-(G^\circ)}{\r^\circ}\leq {\sf C}^-$
    \item $\log\dim H_{G^\circ} \leq \log|\Sigma|+(2+\log c)D+S$
    \item the bases can be generated in cost at most $T+(a+a')\log(c^D|\Sigma|)$.
\end{enumerate}
Since $\varphi$ is balanced, $D=O(\log |\Sigma|)$, so $$(a+a')\log(c^D|\Sigma|)=(a+a')\log|\Sigma|+(a+a')D\log c = O(\log|\Sigma|).$$

Without loss of generality, we can assume that none of the leaves of $\varphi$ is negated, because if it is, we can just push this negation into $f_\sigma$. The proof will be by induction on the depth of $\varphi$.

    \paragraph{Base Case:} If $D=0$, then $\varphi\circ (f_\sigma)_{\sigma\in \Sigma} = f_\sigma$ for some $\sigma$. Then we just take $G^\circ=G_\sigma$, so $\r^\circ=\r^\sigma$. We have, by assumption, that $\r^\sigma \hat{W}_+(G_\sigma) \leq {\sf C}_\sigma^+={\sf C}^+$ (when the depth is 0) and $\hat{W}_-(G_\sigma)/\r^\sigma\leq {\sf C}_\sigma^-={\sf C}^-$. We also have, by assumption, that the working bases of $G_\sigma$ can be generated in time $T$, and $\log\dim H_{G_\sigma}\leq S$, completing the base case.

\paragraph{OR Case:} First, suppose $\varphi$ has depth $D>0$, with $\varphi=\bigvee_{i=1}^d\varphi_i$ for some $d$ and some formulas $\varphi_i$ of depth $D-1$. Let 
$G=G_{\textsc{or},d}$ be the subspace graph from \lem{OR-gadget}, which is $st$-composable, because it is a switching network, and let
$G_{\varphi_1},\dots,G_{\varphi_d}$ be the $st$-composable subspace graphs whose existence is promised by the induction hypothesis. For all $i\in [d]$, we will compose the graphs $G_{\varphi_i}$ into the switch edge $e_i$ of $G$ using \thm{comp}, to obtain $G^\circ$. If $\varphi_i(x)=1$ for some $i$, then $G_{\varphi_i}$ must have some cropped positive witness $\ket{\hat{w}^i}$, and using the cropped positive witness $\ket{\hat{w}}$ from \lem{OR-gadget}, by \thm{comp}, $G^\circ$ has a cropped positive witness
$$\ket{\hat{w}^\circ}=\sqrt{\frac{\r^i}{2}}\frac{1}{\sqrt{\w_i}}\ket{\hat{w}^i},$$
where $\r^i$ is the scaling factor of the basis for $G_{\varphi_i}$. 
Thus, if we choose 
$$\w_i={{\sf C}_i^+}/2,
\quad\mbox{so}\quad
\r^\circ=2\sum_{i=1}^d\w_i = {\sum_{i=1}^d{\sf C}_i^+}
$$ 
we get
$$\hat{W}_+(G^\circ)\leq \frac{\r^i}{2\w_i}\hat{W}_+(G_{\varphi_i})
\leq \frac{{\sf C}_i^+}{{\sf C}_i^+}
= 1,
\quad\mbox{and so}\quad 
\r^\circ \hat{W}_+(G^\circ) \leq \sum_{i=1}^d{\sf C}_i^+ = {\sf C}^+
$$
using the induction hypothesis, which says that $\r^i\hat{W}_+(G_{\varphi_i})\leq {\sf C}_i^+$. 

On the other hand, suppose for all $i\in [d]$, $\varphi_i(x)=0$ and so there exists a cropped negative witness $\ket{\hat{w}_{\cal A}^i}$ for $G_{\varphi_i}$. Then using the cropped negative witness from \lem{OR-gadget}, by \thm{comp}, we have a cropped negative witness for $G^\circ$:
$$\ket{\hat{w}^\circ_{\cal A}}=\sum_{i=1}^d\sqrt{\frac{2}{\r^i}}\sqrt{\w_i} \ket{\hat{w}^i_{\cal A}}.$$
Thus:
$$\hat{W}_-(G^\circ)\leq \sum_{i=1}^d\frac{2\w_i}{\r^i}\hat{W}_-(G_{\varphi_i})\leq\sum_{i=1}^d{{\sf C}_i^+}{{\sf C}_i^-}=\sum_{i=1}^d{\sf C}_i^2 = {\sf C}^2,$$
by the induction hypothesis, which says that $\frac{\hat{W}_-(G_{\varphi_i})}{\r^i}\leq {\sf C}_i^-$. 

By the induction hypothesis, the cost of generating the bases for any $G_{\varphi}$ is at most:
$$T'\colonequals (a+a')\log\left(c^{D-1} c|\Sigma|/d\right)=(a+a')\log\left(c^{D} |\Sigma|/d\right).$$
Thus, by \thm{comp}, since the bases of $G$ can be generated in cost at most $T=a\log d$ -- assuming the state proportional to $\sum_i\sqrt{\w_i}\ket{i}\propto \sum_i\sqrt{{\sf C}_i^+}\ket{i}$ can be generated in $O(\log d)$ time --
we can generate the bases of $G^\circ$ in cost 
$$T+T'+a' = (a+a')\log(c^{D-1}(c|\Sigma|/d)) + a\log d+a' \leq (a+a')\log(c^D|\Sigma|),$$
as needed. 

Similarly, by \thm{comp}, we have
\begin{align*}
\dim H_{G^\circ}&\leq \dim H_G-2d+\sum_{i=1}^d \dim H_{G_{\varphi_i}}\\
&\leq 2d+4-2d+d2^{\log (|\Sigma_i|)+(2+\log c)(D-1)+S}\leq 4d2^{\log (|\Sigma_i|)+(2+\log c)(D-1)+S}
\end{align*}
by the induction hypothesis and \lem{OR-gadget}. Thus,
$$\log \dim H_{G^\circ}\leq 2+\log d+\log \frac{c|\Sigma|}{d}+(2+\log c)(D-1)+S=\log |\Sigma|+(2+\log c)D+S.$$

\paragraph{AND Case:} Next, suppose $\varphi=\bigwedge_{i=1}^d\varphi_i$ for some formulas of depth $D-1$. Let 
$G=G_{\textsc{and},d}$ be the graph from \lem{AND-gadget}, which is a switching network, and hence $st$-composable. 
Let
$G_{\varphi_1},\dots,G_{\varphi_d}$ be the $st$-composable subspace graphs whose existence is promised by the induction hypothesis. We will compose the graphs $G_{\varphi_i}$ into the switch edge $e_i$ of $G$ to obtain $G^\circ$. If $\varphi_i(x)=1$ for all $i$, then there exist cropped positive witnesses $\ket{\hat{w}^i}$ for each $G_{\varphi_i}$, so using the cropped positive witness for $G$ from \lem{AND-gadget}, by \thm{comp}, $G^\circ$ has cropped positive witness
$$\ket{\hat{w}^\circ}=\sum_{i=1}^d\sqrt{\frac{\r^i}{2}}\frac{1}{\sqrt{\w_i}}\ket{\hat{w}^i}.$$
Thus, setting 
$$\w_i=\frac{1}{2{\sf C}_i^-}, 
\quad\mbox{so}\quad
\frac{1}{\r^\circ} = {\sum_{i=1}^d\frac{1}{2\w_i}}
= \sum_{i=1}^d{{\sf C}_i^-},$$
we have:
$$\hat{W}_+(G^\circ)\leq \sum_{i=1}^d\frac{\r^i}{2\w_i}\hat{W}_+(G_{\varphi_i})\leq\sum_{i=1}^d{\sf C}_i^-{\sf C}_i^+
={\sum_{i=1}^d{\sf C}_i^2}={{\sf C}^2}
,$$
by the induction hypothesis, which says that $\r^i\hat{W}_+(G_{\varphi_i})\leq {\sf C}_i^+$. 

On the other hand, suppose $\varphi_i(x)=0$ for some $i\in [d]$, so there exists a cropped negative witness $\ket{\hat{w}_{\cal A}^i}$ for $G_{\varphi_i}$. Then using the cropped negative witness from \lem{AND-gadget}, by \thm{comp}, we have a cropped
negative witness for $G^\circ$:
$$\ket{\hat{w}^\circ_{\cal A}}=\sqrt{\frac{2}{\r^i}}\sqrt{\w_i} \ket{\hat{w}^i_{\cal A}}.$$
Thus, we have:
$$\hat{W}_-(G^\circ)\leq \frac{2\w_i}{\r^i} \hat{W}_-(G_{\varphi_i}) \leq \frac{{\sf C}_i^-}{{\sf C}_i^-}=1,
\quad\mbox{and so}\quad
\frac{\hat{W}_-(G^\circ)}{\r^\circ} \leq \sum_{i=1}^d{\sf C}_i^- = {\sf C}^-,$$
using the induction hypothesis, which says that $\frac{\hat{W}_-(G_{\varphi_i})}{\r^i}\leq {\sf C}_i^-$.

A similar argument about the basis efficiency  and dimension of $H_{G^\circ}$ as in the OR case completes the proof.
\end{proof}

We will prove a more easily-applied corollary for the case where $\varphi$ is symmetric. We will use the following two lemmas.

\begin{lemma}\label{lem:sym-C}
    Let $\varphi$ be a symmetric formula (see \defin{balanced}) of depth $D$, and for $D'\in [D]$, let $d_{D'}$ be the out-degree of the nodes whose subtrees have depth $D'$. Assume without loss of generality that the nodes whose children are leaves are $\wedge$ gates.
    Suppose for all $\sigma\in\Sigma$, ${\sf C}^-_\sigma = {\sf L}^-$ is a known upper bound on $\hat{W}_-(G_\sigma)/r^\sigma$, independent of $\sigma$; and ${\sf C}_\sigma^+$ is a known upper bound on $\hat{W}_+(G_\sigma)\hat{W}_-(G_\sigma)$. Let $\{{\sf C}_\sigma^+,{\sf C}_\sigma^-,{\sf C}_\sigma\}_{\sigma\in \overline{\Sigma}}$ be as in \defin{reflection-cost}.
    Then for all $\sigma\in\overline{\Sigma}\setminus\Sigma$ whose subtrees have depth $D'$, 
    $${\sf C}_\sigma^+ = \frac{{\sf C}_\sigma^2}{{\sf L}^-\prod_{j=1}^{\lceil D'/2\rceil}d_{2j-1}}
    \quad\mbox{and}\quad
    {\sf C}_\sigma^- = {\sf L}^-{\prod_{j=1}^{\lceil D'/2\rceil}d_{2j-1}}$$
\end{lemma}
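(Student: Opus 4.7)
The plan is to prove the lemma by induction on $D'$, using the recursion from \defin{reflection-cost} together with the symmetry of $\varphi$. Symmetry guarantees that every internal node's children are pairwise indistinguishable, so the $d_{D'}$ children of $\sigma$ all carry identical ${\sf C}^+$ and ${\sf C}^-$ values, and the recursion reduces at each step to a single-child expression multiplied by the out-degree. Under the WLOG alternation assumption, with the depth-$1$ nodes being $\wedge$, the parity of $D'$ pins down $v(\sigma)$: $\wedge$ when $D'$ is odd and $\vee$ when $D'$ is even.

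For the base case $D' = 1$, node $\sigma$ is a $\wedge$-gate whose $d_1$ children are all leaves with ${\sf C}^-_{\sigma i} = {\sf L}^-$. The $\wedge$ recursion gives ${\sf C}^-_\sigma = \sum_i {\sf C}^-_{\sigma i} = d_1 {\sf L}^- = {\sf L}^- \prod_{j=1}^1 d_{2j-1}$, matching the claim, and ${\sf C}^+_\sigma = {\sf C}^2_\sigma / {\sf C}^-_\sigma$ is immediate.

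For the inductive step, I would assume the claim for all subtree depths strictly less than $D'$ and consider $\sigma$ of subtree depth $D'$; its children have subtree depth $D'-1$, so their ${\sf C}^\pm$ values are provided by the inductive hypothesis. If $D'$ is odd then $v(\sigma) = \wedge$ and the recursion yields ${\sf C}^-_\sigma = d_{D'} {\sf C}^-_{\sigma 1}$; the identities $\lceil D'/2 \rceil = \lceil (D'-1)/2 \rceil + 1$ and $2\lceil D'/2 \rceil - 1 = D'$ show that the target product $\mathsf{L}^- \prod_{j=1}^{\lceil D'/2 \rceil} d_{2j-1}$ picks up exactly the new factor $d_{D'}$, completing this case. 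If $D'$ is even then $v(\sigma) = \vee$ and the recursion yields ${\sf C}^-_\sigma = {\sf C}^2_\sigma / (d_{D'} {\sf C}^-_{\sigma 1})$; since $\lceil D'/2 \rceil = \lceil (D'-1)/2 \rceil$ for even $D'$, the target value is unchanged from the child's, so a cancellation has to happen at this step. I would substitute ${\sf C}^2_\sigma = {\sf C}^+_\sigma {\sf C}^-_\sigma$ (using the $\vee$ recursion ${\sf C}^+_\sigma = d_{D'} {\sf C}^+_{\sigma 1}$) together with the inductive expressions for both ${\sf C}^+_{\sigma 1}$ and ${\sf C}^-_{\sigma 1}$, verifying directly that the ratio collapses to ${\sf C}^-_{\sigma 1}$. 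In either case, ${\sf C}^+_\sigma = {\sf C}^2_\sigma / {\sf C}^-_\sigma$ then gives the claimed expression.

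The main obstacle will be the $\vee$ case: the quadratic factor ${\sf C}^2_\sigma$ and the linear sum $\sum_i {\sf C}^-_{\sigma i}$ must cancel precisely, so that ${\sf C}^-$ is preserved across an OR layer. This relies on tracking both ${\sf C}^+_{\sigma 1}$ and ${\sf C}^-_{\sigma 1}$ through the induction (not only ${\sf C}^-$), because the $\vee$ step effectively propagates the child's ${\sf C}^+$ into the parent's ${\sf C}^-$; by contrast, the $\wedge$ case is just a sum and is essentially bookkeeping about the range of the product index.
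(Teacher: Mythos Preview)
Your overall strategy---induction on the subtree depth $D'$, with the parity of $D'$ determining whether the node is $\wedge$ or $\vee$---matches the paper's proof, and your base case and odd (\/$\wedge$) case are fine. The gap is in the even (\/$\vee$) case.

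You assert that ``the $d_{D'}$ children of $\sigma$ all carry identical ${\sf C}^+$ and ${\sf C}^-$ values,'' and in particular use ${\sf C}^+_\sigma = d_{D'}\,{\sf C}^+_{\sigma 1}$. This is not justified. Symmetry of the formula makes the children's \emph{subtrees} identical, but the leaf values ${\sf C}^+_\tau$ (for $\tau\in\Sigma$) are allowed to vary from leaf to leaf---only the leaf values ${\sf C}^-_\tau={\sf L}^-$ are assumed constant. What the induction hypothesis actually gives you is that all children share the same ${\sf C}^-_{\sigma i}$ (this is precisely the quantity ${\sf L}^-\prod_j d_{2j-1}$ you are proving is level-dependent only); the children's ${\sf C}^+_{\sigma i} = {\sf C}_{\sigma i}^2/{\sf C}^-_{\sigma i}$ still depend on ${\sf C}_{\sigma i}^2$, which can differ across $i$. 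Your substitution ${\sf C}_\sigma^2 = {\sf C}_\sigma^+ {\sf C}_\sigma^-$ into the formula for ${\sf C}_\sigma^-$ is also circular as written, since ${\sf C}_\sigma^-$ appears on both sides.

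The clean way through (and what the paper does) is to switch the roles in the even case: compute ${\sf C}^+_\sigma$ first via the $\vee$ recursion ${\sf C}^+_\sigma=\sum_i {\sf C}^+_{\sigma i}$, which needs no equality assumption. By the induction hypothesis each ${\sf C}^+_{\sigma i}={\sf C}_{\sigma i}^2/\big({\sf L}^-\prod_{j}d_{2j-1}\big)$, so summing and using ${\sf C}_\sigma^2=\sum_i {\sf C}_{\sigma i}^2$ gives ${\sf C}^+_\sigma = {\sf C}_\sigma^2/\big({\sf L}^-\prod_j d_{2j-1}\big)$, and then ${\sf C}^-_\sigma = {\sf C}_\sigma^2/{\sf C}^+_\sigma$ yields the claim. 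With that correction your argument becomes the paper's.
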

\begin{proof}
First note that we can assume without loss of generality that the nodes with $D'$ odd are labelled by $\wedge$, and $D'$ even are labelled by $\vee$.

    We will prove the statement by induction. Since ${\sf C}_\sigma = \sqrt{{\sf C}_\sigma^+{\sf C}_\sigma^-}$, it is sufficient to prove the statement about ${\sf C}_\sigma^-$. 

    \paragraph{Base Case:} Let $\sigma\in\Sigma$, so $D'=0$. Then, indeed, ${\sf C}_\sigma^-={\sf L}^-={\sf L}^-\prod_{j=1}^0d_{2j-1}$. 

    \paragraph{Odd Case:} Let $D'>0$ be odd. By the induction hypothesis, and since $D'-1$ is even, for all $i\in [d_{D'}]$
    $${\sf C}_{\sigma i}^- = {\sf L}^-\prod_{j=1}^{(D'-1)/2}d_{2j-1}.$$
    Since $D'$ is odd, $v(\sigma)=\wedge$, and so 
    \begin{align*}
        {\sf C}_\sigma^- = {\sum_{i=1}^{d_{D'}}{\sf C}_{\sigma i}^-}
        = d_{D'}{\sf L}^-\prod_{j=1}^{(D'-1)/2}d_{2j-1} = {\sf L}^-\prod_{j=1}^{\lceil D'/2\rceil}d_{2j-1}.
    \end{align*}

    \paragraph{Even Case:} Let $D'>0$ be even. By the induction hypothesis, for all $i\in [d_{D'}]$
    $${\sf C}_{\sigma i}^+ = \frac{{\sf C}_{\sigma i}^2}{{\sf C}_{\sigma i}^-}=\frac{{\sf C}_{\sigma i}^2}{{\sf L}^-\prod_{j=1}^{\lceil(D'-1)/2\rceil}d_{2j-1}}=\frac{{\sf C}_{\sigma i}^2}{{\sf L}^-\prod_{j=1}^{\lceil D'/2\rceil}d_{2j-1}}.$$
    Since $D'$ is even, $v(\sigma)=\vee$, and so 
    \begin{align*}
        {\sf C}_\sigma^+ = {\sum_{i=1}^{d_{D'}}{\sf C}_{\sigma i}^+}
        = \frac{\sum_{i=1}^{d_{D'}}{\sf C}_{\sigma,i}^2}{{\sf L}^-\prod_{j=1}^{\lceil D'/2\rceil}d_{2j-1}}
        =\frac{{\sf C}_\sigma^2}{{\sf L}^-\prod_{j=1}^{\lceil D'/2\rceil}d_{2j-1}}.
    \end{align*}
    Thus 
    \begin{equation*}{\sf C}_\sigma^- = \frac{{\sf C}_\sigma^2}{{\sf C}_\sigma^+} = {\sf L}^-{\prod_{j=1}^{\lceil D'/2\rceil}d_{2j-1}}.\qedhere\end{equation*}
\end{proof}

\begin{corollary}\label{cor:boolean-comp}
    Let $\varphi$ be a symmetric formula (\defin{balanced}) on $\{0,1\}^\Sigma$.    
    Let $\{f_\sigma\}_{\sigma\in\Sigma}$ be Boolean functions, and for each $\sigma\in \Sigma$, let $G_\sigma$ be an $st$-composable subspace graph (\defin{st-composable-subspace-graph}) computing $f_\sigma$, with working bases that can be generated in time at most $T$ with scaling factor $\r^\sigma$, and $\log\dim H_{G_\sigma}\leq S$. 
    Suppose that for each $\sigma\in\Sigma$, ${\sf L}^-$ is a known upper bound on $\hat{W}_-(G_\sigma)/\r^\sigma$, and ${\sf C}_\sigma^+$ is a known upper bound on $\r^\sigma \hat{W}_+(G_\sigma)$.
    Let ${\sf C}_\sigma=\sqrt{{\sf C}_\sigma^+{\sf C}^-}$, which is then a known upper bound on $\hat{\cal C}(G_\sigma)$. For any $\sigma\in\overline{\Sigma}\setminus{\Sigma}$ (i.e. proper prefixes of the strings in $\Sigma$), if the children of $\sigma$ in $\varphi$ are labelled by $[d]$, define 
    $${\sf C}_{\sigma}=\sqrt{\sum_{i\in [d]}{\sf C}_{\sigma i}^2},$$
    and suppose a state proportional to $\sum_{i\in [d]}{\sf C}_{\sigma i}\ket{i}$ can be generating in cost $O(\log d)$.     
    Then there is an $st$-composable subspace graph $G^\circ$ computing $\varphi\circ (f_\sigma)_{\sigma\in \Sigma}$ with 
    $\log\dim H_{G^\circ}=S+O(\log|\Sigma|)$ and
    $$\r^\circ \hat{W}_+(G^\circ)\leq \frac{\sum_{\sigma\in\Sigma}{\sf C}_\sigma^2}{{\sf L}^-\prod_{j=1}^{\lceil D/2\rceil}d_{2j-1}}
    \quad\mbox{and}\quad
    \hat{W}_-(G^\circ)/\r^\circ \leq {\sf L}^-\prod_{j=1}^{\lceil D/2\rceil}d_{2j-1}$$
    so
    $$\hat{\cal C}(G^\circ)^2\leq \sum_{\sigma\in\Sigma}{\sf C}_\sigma^2.$$
    Furthermore, the working bases of $G^\circ$ can be generated in time $T+O(\log |\Sigma|)$.
\end{corollary}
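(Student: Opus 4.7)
The plan is to reduce to \lem{boolean-comp-gen} and then evaluate the resulting bound explicitly using \lem{sym-C}. A symmetric formula is in particular balanced (with constant $c=1$ in \defin{balanced2}), so the structural hypothesis on $\varphi$ carries over immediately; the only work is to verify the logarithmic reflection cost assumption of \lem{boolean-comp-gen} and then to translate the single complexity bound it gives into the two separate bounds on $\r^\circ\hat{W}_+(G^\circ)$ and $\hat{W}_-(G^\circ)/\r^\circ$ claimed in the corollary.

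First I check that the state-generation hypothesis of the corollary implies the logarithmic reflection cost hypothesis of \lem{boolean-comp-gen}. By symmetry of $\varphi$, all subtrees rooted at the same depth are isomorphic as labelled trees up to the identities of the leaves, so the recursion defining ${\sf C}_\sigma^\pm$ produces a value ${\sf C}_{\sigma i}^-$ that depends only on $|\sigma|+1$, not on the sibling index $i$ (this is exactly what the inductive computation in \lem{sym-C} shows, since it expresses ${\sf C}_{\sigma i}^-$ as a function of the depth alone). Consequently, for an internal $\wedge$-node $\sigma$ with $d_\sigma=d$ children, the state $\sum_{i\in[d]}\sqrt{{\sf C}_{\sigma i}^-}\ket{i}$ is proportional to the uniform superposition on $[d]$, and is preparable in $O(\log d)$ time. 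For an internal $\vee$-node $\sigma$, the identity ${\sf C}_{\sigma i}^+={\sf C}_{\sigma i}^2/{\sf C}_{\sigma i}^-$ together with the $i$-independence of ${\sf C}_{\sigma i}^-$ shows that $\sum_i\sqrt{{\sf C}_{\sigma i}^+}\ket{i}$ is proportional to $\sum_i {\sf C}_{\sigma i}\ket{i}$, whose $O(\log d)$-time preparation is exactly what the corollary assumes.

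With the hypotheses verified, \lem{boolean-comp-gen} produces an $st$-composable subspace graph $G^\circ$ computing $\varphi\circ (f_\sigma)_{\sigma\in\Sigma}$, with working bases generable in time $T+O(\log|\Sigma|)$, $\log\dim H_{G^\circ}=S+O(\log|\Sigma|)$, and $\hat{\cal C}(G^\circ)^2\leq\sum_{\sigma\in\Sigma}{\sf C}_\sigma^2$. The last step is a direct application of \lem{sym-C} to the root (taking $D'=D$), which gives
\[
{\sf C}_\emptyset^- \;=\; {\sf L}^-\prod_{j=1}^{\lceil D/2\rceil}d_{2j-1}
\qquad\text{and}\qquad
{\sf C}_\emptyset^+ \;=\; \frac{{\sf C}_\emptyset^2}{{\sf C}_\emptyset^-} \;=\; \frac{\sum_{\sigma\in\Sigma}{\sf C}_\sigma^2}{{\sf L}^-\prod_{j=1}^{\lceil D/2\rceil}d_{2j-1}},
\]
where I used the telescoping identity ${\sf C}_\emptyset^2=\sum_{\sigma\in\Sigma}{\sf C}_\sigma^2$ from \defin{reflection-cost}. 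Since \lem{boolean-comp-gen} guarantees that ${\sf C}_\emptyset^+$ upper bounds $\r^\circ\hat{W}_+(G^\circ)$ and ${\sf C}_\emptyset^-$ upper bounds $\hat{W}_-(G^\circ)/\r^\circ$, this yields the two displayed bounds of the corollary, whose product recovers $\hat{\cal C}(G^\circ)^2\leq\sum_\sigma {\sf C}_\sigma^2$.

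There is essentially no obstacle; the only thing to watch is the WLOG assumption in \lem{sym-C} that parents of leaves are $\wedge$-gates. In a symmetric formula this either holds as stated, or is arranged by dualising (swapping the roles of $\vee$ and $\wedge$ and negating inputs before passing them to the $f_\sigma$). In the dualised case, the product $\prod_{j=1}^{\lceil D/2\rceil} d_{2j-1}$ in the statement should be read with the indexing placing the leaf layer at depth $D'=0$, which is purely notational and does not affect the bound. Everything else is bookkeeping: the per-level scaling factors $\r^\sigma$ of the leaf graphs are absorbed into the definitions of ${\sf C}_\sigma^\pm$, and the recursive construction of \lem{boolean-comp-gen} delivers a single scaling factor $\r^\circ$ at the root against which the bounds are stated.
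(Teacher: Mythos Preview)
Your proposal is correct and follows essentially the same approach as the paper: verify via \lem{sym-C} that the assumed state-generation condition gives logarithmic reflection cost, then invoke \lem{boolean-comp-gen}. You spell out more detail than the paper's one-line proof---in particular the observation that ${\sf C}_{\sigma i}^-$ depends only on depth so the $\wedge$-state is uniform and the $\vee$-state is proportional to $\sum_i{\sf C}_{\sigma i}\ket{i}$---and you correctly note that the separate bounds on $\r^\circ\hat W_+(G^\circ)$ and $\hat W_-(G^\circ)/\r^\circ$ come from the inductive hypothesis inside the proof of \lem{boolean-comp-gen} rather than its stated conclusion, combined with the root-level evaluation from \lem{sym-C}.
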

\begin{proof}
By \lem{sym-C}, since we can generate $\sum_{i\in [d]}{\sf C}_{\sigma i}\ket{i}$ in $\log d$ complexity, our upper bounds ${\sf C}_\sigma^{\pm}$ incur logarithmic reflection cost, so we can apply \lem{boolean-comp-gen}.    
\end{proof}

\subsection{Quantum Divide \& Conquer}\label{sec:D-and-C}

In this section, we state our time-efficient quantum divide-\&-conquer results. For an instance $x$ of a function $f_{\ell,n}$, we say that a sub-instance $x'$ of $f_{\ell',n'}$ for some $\ell'\leq \ell$ and $n'\leq n$ is \emph{unit-time computable} if 
any bit $x_i'$ of $x'$ can be computed as a function of $x$ and $i$ in unit time. 
The following is a time-efficient version of Strategy 1 in \cite{childs2022Divide}. Note that it is fairly natural to assume $\varphi = \varphi'(z_1,\dots,z_a)\vee z_{a+1}$ for some formula $\varphi'$, since often $z_{a+1}=f_{\mathrm{aux}}(x)$ simply handles the base case of the recursion.
We prove the statement for symmetric formulas, however, using \lem{boolean-comp-gen} instead of \cor{boolean-comp}, a similar statement, possibly with extra time overhead, holds for any balanced formula $\varphi$, as long as we can account for the possibly more complicated reflection costs.

\begin{theorem}\label{thm:strategy_1}
    Fix a function family $f_{\ell,n}:D_{\ell,n}\rightarrow\{0,1\}$. Fix unit-time-computable functions $\lambda_1,\lambda_2:\mathbb{N}\rightarrow\mathbb{N}$, and a formula $\varphi$ on $\{0,1\}^{a+1}$ such that $\varphi=\varphi'(z_1,\dots,z_a)\vee z_{a+1}$ for some symmetric formula $\varphi'$.
    Suppose $\{{\cal P}_{\mathrm{aux},\ell,n}\}_{\ell,n\in\mathbb{N}}$ is a family of quantum algorithms such that:
    \begin{itemize}
        \item ${\cal P}_{\mathrm{aux},\ell,n}$ decides some $f_{\mathrm{aux},\ell,n}:D_{\ell,n}\rightarrow\{0,1\}$ with time and space complexities $T_{\mathrm{aux}}(\ell,n)$ and  $S_{\mathrm{aux}}(\ell,n)$;
        \item if $\ell\leq \ell_0$, $f_{\ell,n}(x)=f_{\mathrm{aux},\ell,n}(x)$;
        \item if $\ell>\ell_0$, $f_{\ell,n}=\varphi\circ (f_i)_{i\in [a+1]}$ 
        where each $f_i$ for $i\in [a]$ is such that $f_i(x)=f_{\lambda_1(\ell),\lambda_2(n)}(x^i)$ for some unit-time-computable instance $x^i$ of $f_{\lambda_1(\ell),\lambda_2(n)}$, and $f_{a+1}(x)=f_{\mathrm{aux},\ell,n}(x^{a+1})$ for some unit-time computable instance $x^{a+1}$ of $f_{\mathrm{aux},\ell,n}$.
    \end{itemize}
    Then there is a bounded-error quantum algorithm that compute $f_{\ell,n}$ with time complexity $\widetilde{O}(T(\ell,n))$,
    and space complexity $O(S_{\mathrm{aux}}(\ell,n)+\log T(\ell,n))$, 
    where for all $\ell>\ell_0$:
    $$T(\ell,n)\colonequals\sqrt{aT(\lambda_1(\ell),\lambda_2(n))^2+4T_{\mathrm{aux}}(\ell,n)^2}\leq \sqrt{a}T(\lambda_1(\ell),\lambda_2(n))+2T_{\mathrm{aux}}(\ell,n)$$
and for $\ell\leq\ell_0$, $T(\ell,n)=2T_{\mathrm{aux}}(\ell,n)$. 
\end{theorem}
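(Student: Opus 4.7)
The plan is to prove the result by induction on $\ell$, constructing at each stage an $st$-composable subspace graph $G_{\ell,n}$ computing $f_{\ell,n}$ with complexity $\hat{\cal C}(G_{\ell,n})\le T(\ell,n)$ (up to constants), and then converting to an actual algorithm via \thm{subspace-graph-to-alg} exactly once at the outermost level. This one-time conversion is the key feature of the subspace-graph framework here: composing at the level of subspace graphs avoids paying the $\widetilde{O}(\cdot)$ polylog-in-$T$ overhead once per recursive call, which would otherwise compound over $O(\log\ell)$ levels.

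For the base case $\ell\le\ell_0$, apply \lem{alg-subspace-graph} to ${\cal P}_{\mathrm{aux},\ell,n}$ with uniform weights $\alpha_r=1$. This yields an $st$-composable subspace graph with $\hat{W}_+,\hat{W}_-\le 2T_{\mathrm{aux}}(\ell,n)$, hence $\hat{\cal C}\le 2T_{\mathrm{aux}}(\ell,n)=T(\ell,n)$, and with $\log\dim H=O(S_{\mathrm{aux}}(\ell,n)+\log T_{\mathrm{aux}}(\ell,n))$, bases generated in time $O(\log T_{\mathrm{aux}}(\ell,n))$.

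For the inductive step $\ell>\ell_0$, write $\varphi=\varphi'\vee z_{a+1}$ with $\varphi'$ symmetric on $a$ variables. The induction hypothesis supplies an $st$-composable subspace graph $G_{\lambda_1(\ell),\lambda_2(n)}$ of complexity at most $T(\lambda_1(\ell),\lambda_2(n))$, and \lem{alg-subspace-graph} supplies an $st$-composable $G_{\mathrm{aux}}$ for $f_{\mathrm{aux},\ell,n}$ of complexity at most $2T_{\mathrm{aux}}(\ell,n)$. Because each sub-instance $x^i$ is unit-time-computable from $x$, query access to $x$ implements query access to every $x^i$ in unit time, so the same graph $G_{\lambda_1(\ell),\lambda_2(n)}$ serves as each of the $a$ recursive leaves. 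Applying \cor{boolean-comp} to $\varphi'$ with these $a$ identical children gives a subspace graph of squared complexity at most $a\,T(\lambda_1(\ell),\lambda_2(n))^2$ (after, if necessary, rescaling via \cor{scaling} so the children's $\hat{W}_+$ and $\hat{W}_-$ are in the right regime). Composing this with $G_{\mathrm{aux}}$ via the top-level $\vee$ from \lem{OR-gadget} (a two-leaf instance of \cor{boolean-comp}) then produces $G_{\ell,n}$ with
\[
\hat{\cal C}(G_{\ell,n})^2 \;\le\; a\,T(\lambda_1(\ell),\lambda_2(n))^2 \;+\; 4\,T_{\mathrm{aux}}(\ell,n)^2 \;=\; T(\ell,n)^2.
\]
Because $\varphi'$ is symmetric and all $a$ recursive leaves have identical complexity, the internal weights at every node of $\varphi'$ are uniform, so the state preparations required by the reflection-cost hypothesis of \cor{boolean-comp} reduce to Hadamards on $O(\log a)$ qubits. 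The dimension bound in \cor{boolean-comp} together with the base-case dimension yields $\log\dim H_{G_{\ell,n}}=O(S_{\mathrm{aux}}(\ell,n)+\log T(\ell,n))$, and the total basis-generation cost accumulates to at most $O(\log T(\ell,n))$ across the whole recursion, since each level contributes only $O(\log|\Sigma|)$.

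Finally, one application of \thm{subspace-graph-to-alg} to the outermost $G_{\ell,n}$ yields a bounded-error quantum algorithm for $f_{\ell,n}$ in time $\widetilde{O}(\hat{\cal C}(G_{\ell,n}))=\widetilde{O}(T(\ell,n))$ and space $O(\log\dim H_{G_{\ell,n}}+\log\hat{W}_-)=O(S_{\mathrm{aux}}(\ell,n)+\log T(\ell,n))$. The main obstacle I foresee is bookkeeping: tracking the scaling factors $\r^\sigma$ across the recursion, confirming that at every internal node of $\varphi'$ the state-preparation oracle required by \cor{boolean-comp} really is available in $O(\log d)$ time (a clean consequence of uniformity, but worth checking), and confirming that the top-level $\varphi'\vee z_{a+1}$ decomposition fits the symmetric-formula machinery even though globally $\varphi$ is only weakly symmetric. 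The emphasis throughout is that all composition is done at the subspace-graph level so the polylog factors materialize only at the final conversion.
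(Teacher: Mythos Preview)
Your proposal is correct and follows essentially the same route as the paper: build $G_{\ell,n}$ inductively, handle the base case and $f_{\mathrm{aux}}$ via \lem{alg-subspace-graph}, compose the $a$ identical recursive children through $\varphi'$ via \cor{boolean-comp}, attach the auxiliary branch with a top-level OR, and convert once at the end with \thm{subspace-graph-to-alg}. One small refinement: for the top-level $\vee$ the paper invokes \lem{boolean-comp-gen} rather than \cor{boolean-comp}, because the two children ($G^{\varphi'}_{\ell,n}$ and $G_{\mathrm{aux}}$) do not share a common bound ${\sf L}^-$ on $\hat{W}_-/\r$; this is exactly the ``tracking the scaling factors'' bookkeeping you flagged, and the paper handles it by carrying an explicit recursive definition of ${\sf C}^-(\ell,n)$ through the induction so that the required two-amplitude state at the top OR is a single fixed unitary.
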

We remark that, while we assume the subroutine ${\cal P}_{\mathrm{aux},\ell,n}$ has no error, if it has sufficiently small error inversely proportional to the number of times it is called, the algorithm must still work, as this cannot be distinguished from having no error. Amplifying a bounded-error quantum algorithm to have such small error incurs factors logarithmic in the number of times it is called, so $2T_{\mathrm{aux}}(\ell,n)$ becomes $O(T_{\mathrm{aux}}(\ell,n)\log T(\ell,n))$. We stress that these log factors are only acceptable because we do not \emph{recursively} call ${\cal P}_{\mathrm{aux},\ell,n}$.
\begin{proof} We will use the shorthand $\lambda(\ell,n)=(\lambda_1(\ell),\lambda_2(n))$.
Let $D_{\varphi'}$ be the depth of $\varphi'$, and for each $j\in [D_{\varphi'}]$, let $d_j$ be the number of children of any node at distance $j$ from the leaves. Define $\bar{d}=\prod_{j=1}^{\lceil D_\varphi/2\rceil}d_j$ for $D_{\varphi'}$. 
Let $T_{\mathrm{aux},0}$ be an upper bound on $T_{\mathrm{aux}}(\ell,n)$ when $\ell\leq \ell_0$. For all $(\ell,n)$, let ${\sf C}^-(\ell,n)$ be defined recursively, as follows.
$${\sf C}^-(\ell,n)\colonequals\left\{\begin{array}{ll}
    T_{\mathrm{aux},0} & \mbox{if } \ell\leq \ell_0\\
    \frac{T(\ell,n)^2}{\frac{aT(\lambda(\ell,n))^2}{\bar{d}{\sf C}^-(\lambda(\ell,n))}+4T_{\mathrm{aux}}(\ell,n)} & \mbox{else.}
\end{array}\right.$$

The proof will be by induction. Specifically, we will show that there is an $st$-composable subspace graph $G_{\ell,n}$ that computes $f_{\ell,n}$ with 
\begin{enumerate}
    \item $\dim H_G \leq 2^{S(\ell,n)}$;
    \item $\hat{W}_-(G_{\ell,n})/r_{\ell,n} \leq {\sf C}^-(\ell,n)$, and $\hat{\cal C}(G_{\ell,n})\leq T({\ell,n})\equalscolon {\sf C}(\ell,n)$;
    \item $\log\dim H_{G_{\ell,n}}=S_{\ell,n}\leq c D\log(2a)+cS_{\mathrm{aux}}(\ell,n)+c\log T_{\mathrm{aux}}(\ell,n)$;
    \item and basis generation cost $L_{\ell,n}\colonequals Dc'\log(2a)+c\log T_{\mathrm{aux}}(\ell,n)$,
\end{enumerate}
where $D$ is the depth of recursion, and $c$ and $c'$ are sufficiently large constants.

\paragraph{Base case:} For the base case, suppose $\ell\leq\ell_0$. Then we let $G_{\ell,n}$ be the subspace graph from \lem{alg-subspace-graph}
derived from the algorithm ${\cal P}_{\mathrm{aux},\ell,n}$ and using $\alpha_r=1$ for all $r$. Then $G_{\ell,n}$ computes $f_{\mathrm{aux},\ell,n}$ with
$\dim H_{G_{\ell,n}}\leq c(S_{\mathrm{aux}}(\ell,n)+\log T_{\mathrm{aux}}(\ell,n))$,
and basis generation cost at most $c\log T_{\mathrm{aux}}(\ell,n)$, for sufficiently large constant $c$. By \lem{alg-subspace-graph}, we also have
$$\hat{W}_-(G_{\ell,n})/\r_{\ell,n} \leq 2{T}_{\mathrm{aux}}(\ell,n)/2 \leq T_{\mathrm{aux},0},
\quad\mbox{and}\quad
\hat{\cal C}(G_{\ell,n})\leq 2{T}_{\mathrm{aux}}(\ell,n).$$

\paragraph{Induction Case:} 
For the induction step, let $G_{\lambda(\ell,n)}$ be the subspace graph whose existence is guaranteed by the induction hypothesis. First, we will use \cor{boolean-comp}, where $G_1,\dots,G_a$ are each copies of $G_{\lambda(\ell,n)}$ -- so ${\sf C}_i\colonequals{\sf C}({\lambda(\ell,n)})=T(\lambda(\ell,n))$ and ${\sf L}^-\colonequals {\sf C}^-(\lambda(\ell,n))$ are known upper bounds on $\hat{\cal C}(G_i)$ and $\hat{W}_-(G_i)/\r^i$, by the induction hypothesis -- to get a subspace graph $G_{\ell,n}^{\varphi'}$. In order to apply \cor{boolean-comp}, we have used the fact that ${\sf L}^-$ does not depend on $i$. By \cor{boolean-comp}, we have
\begin{align*}
\hat{W}_-(G_{\ell,n}^{\varphi'})/\r &\leq {\sf L}^- \bar{d}
= {\sf C}^-(\lambda(\ell,n)) \bar{d}
\equalscolon {\sf C}_0^-(\ell,n)\\
\quad\mbox{and}\quad
\hat{\cal C}(G_{\ell,n}^{\varphi'})^2 &\leq \sum_{i=1}^a {\sf C}_i^2 = a T(\lambda(\ell,n))^2\equalscolon {\sf C}_0({\ell,n})^2. 
\end{align*}
Next, we will use \lem{boolean-comp-gen}, with the formula $x_0\vee x_1$, and subspace graphs $G_0'=G^{\varphi'}_{\ell,n}$ and $G_1'$ a subspace graph computing $f_{\mathrm{aux},\ell,n}$ from \lem{alg-subspace-graph}, using the known upper bounds ${\sf C}_0(\ell,n)$ and ${\sf C}_0^-(\ell,n)$ defined above, and ${\sf C}_1^-(\ell,n)\colonequals T_{\mathrm{aux}}(\ell,n)\geq \hat{W}_-(G_1')\r_1'$ and 
${\sf C}_1(\ell,n)\colonequals 2T_{\mathrm{aux}}(\ell,n) \geq \hat{\cal C}(G_1')$ from \lem{alg-subspace-graph}. 
This gives a subspace graph $G_{\ell,n}$ computing $f_{\ell,n}$
with
\begin{align*}
\hat{\cal C}(G_{\ell,n})^2 &\leq {\sf C}_0(\ell,n)^2+{\sf C}_1(\ell,n)^2
\leq aT(\lambda(\ell,n))^2+4T_{\mathrm{aux}}(\ell,n)^2 = T(\ell,n)^2\\
\hat{W}_-(G_{\ell,n})/\r_{\ell,n}&\leq \frac{{\sf C}_0(\ell,n)^2+{\sf C}_1(\ell,n)^2}{\frac{{\sf C}_0(\ell,n)^2}{{\sf C}_0^-(\ell,n)}+\frac{{\sf C}_1(\ell,n)^2}{{\sf C}_1^-(\ell,n)}}
 \leq \frac{T(\ell,n)^2}{\frac{aT(\lambda(\ell,n))}{\bar{d}{\sf C}^-(\lambda(\ell,n))}+\frac{4T_{\mathrm{aux}}(\ell,n)^2}{T_{\mathrm{aux}}(\ell,n)}}={\sf C}^-(\ell,n).
\end{align*}
In order to apply \lem{boolean-comp-gen}, we must be able to generate a state proportional to 
$$\sqrt{\frac{{\sf C}_0(\ell,n)^2}{{\sf C}_0^-(\ell,n)}}\ket{0}+\sqrt{\frac{{\sf C}_1(\ell,n)^2}{{\sf C}_1^-(\ell,n)}}\ket{1}$$
in constant complexity. Since the amplitudes are fixed, input-independent values, we can do this with a single 2-local unitary. 

The cost of generating the bases for each of the graphs $G_1,\dots,G_{a}$ is at most 
$$L_{\lambda(\ell,n)}=(D-1)c'\log(2a)+c\log T_{\mathrm{aux}}(\ell,n)$$
by the induction hypothesis, and so by \cor{boolean-comp}, the cost to generate the bases for $G^{\varphi'}_{\ell,n}$ is at most
$$c'\log(a)+L_{\lambda(\ell,n)}.$$
The cost to generate the basis for $G_1'$ is at most $c\log T_{\mathrm{aux}}(\ell,n)\leq c'\log(a)+L_{\lambda(\ell,n)}$, by \lem{alg-subspace-graph}. Then by  \lem{boolean-comp-gen}  the cost to generate the basis for $G_{\ell,n}$ is at most:
\begin{align*}
c'\log(2) +c'\log(a)+L_{\lambda(\ell,n)} &\leq c'\log(2a)+(D-1)c'\log(2a)+c\log T_{\mathrm{aux}}(\ell,n)\\
&= Dc'\log(2a)+c\log T_{\mathrm{aux}}(\ell,n)=L_{\ell,n},
\end{align*}
as needed. 
Finally, we note that for all $i\in [a]$, by the induction hypothesis
\begin{align*}
\log\dim H_{G_i}&\leq c (D-1)\log(2a)+cS_{\mathrm{aux}}(\lambda(\ell,n))+c\log T_{\mathrm{aux}}(\lambda(\ell,n))\\
&\leq c( D-1)\log(2a)+cS_{\mathrm{aux}}(\ell,n)+c\log T_{\mathrm{aux}}(\ell,n)
\end{align*}
and so by \cor{boolean-comp}, 
$$\log\dim H_{G^{\varphi'}_{\ell,n}} \leq c( D-1)\log(2a)+cS_{\mathrm{aux}}(\ell,n)+c\log T_{\mathrm{aux}}(\ell,n)+c\log(a).$$
By \lem{alg-subspace-graph}, 
$$\log\dim H_{G_{1}'}\leq cS_{\mathrm{aux}}(\ell,n)+c\log T_{\mathrm{aux}}(\ell,n).$$
Thus, by \lem{boolean-comp-gen}, if $c$ is a sufficiently large constant, 
\begin{align*}
\log\dim H_{G_{\ell,n}} & \leq c( D-1)\log(2a)+cS_{\mathrm{aux}}(\ell,n)+c\log T_{\mathrm{aux}}(\ell,n)+ c\log(a)+c\log(2)\\
&= cD\log(2a)+cS_{\mathrm{aux}}(\ell,n)+c\log T_{\mathrm{aux}}(\ell,n)=S_{\ell,n}.
\end{align*}

This completes the induction. To complete the proof of the theorem, we first apply \cor{scaling} to get a subspace graph $G_{\ell,n}'$ with $\hat{W}_+\leq 1$, and then apply \thm{subspace-graph-to-alg}, which turns $G_{\ell,n}'$ into a bounded-error quantum algorithm for $f_{\ell,n}$ with time complexity 
$$O\left(L_{\ell,n}\sqrt{T(\ell,n)}\right),$$
and space complexity 
$O(S_{\ell,n}+\log T(\ell,n))$.
We complete the proof by noting that
$$\log T(\ell,n) \geq \log \sqrt{a}^D = \frac{D}{2}\log a \geq \frac{D}{4}\log(2a),$$
so $L_{\ell,n}=O(\log(T(\ell,n)))$, and $S_{\ell,n}=O(S_{\mathrm{aux}}(\ell,n)+\log T(\ell,n))$.
\end{proof}

\subsection{Switching Networks with Subroutines}

The following is similar to \lem{boolean-comp-gen}, except that we generalize to any switching network, not just those arising from Boolean formulas (series-parallel switching networks), and we specialize to the case where the composed subspace graphs are of the form in \sec{alg-subspace-graph} (this specialization is not necessary, but it is not clear what we gain from the more general result). 
\begin{lemma}\label{lem:switching-networks-alg}
    Let $G$ be any switching network computing a Boolean function $f$,  
    with working bases that can be generated in time $T_{\mathrm{basis}}$.
    For each $e\in E$, let $\{U_r^e\}_{r=1}^{T_e}$ be a quantum algorithm computing a Boolean function $f_e$ with bounded error. 
    Let $T_{\max}$ be an upper bound on $T_e$ for all $e$, and $F_x$ an $st$-cut-set of $G(x)$ whenever $f(x)=0$. 
    Then there is a subspace graph $G^\circ$ computing $f\circ (f_e)_{e\in E}$ with complexities $\hat{W}_+(G^\circ)=O\left(\max_{x\in f^{-1}(1)}{\cal R}_{s,t}(G(x)) \log T_{\max}\right)$ and $\hat{W}_-(G^\circ)=O\left(\max_{x\in f^{-1}(0)}\sum_{e\in F_x}\w_e T_e^2\right)$, and its bases can be generated in time $T_{\mathrm{basis}}+O(\log T_{\max})$.
\end{lemma}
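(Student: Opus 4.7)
The plan is to insert, into each switch edge $e$ of $G$, a subspace graph $G^e$ derived from the subroutine computing $f_e$, and then invoke the switch composition theorem \thm{comp}. Because $G$ is a switching network it is $st$-composable: the $\Psi_{{\cal A}}$ basis of \lem{cal-A-switching} and the assumed basis for ${\cal B}_G$ satisfy \defin{composable-basis}, and every edge of $G$ is a switch, so the hypothesis of \thm{comp} applies uniformly. To turn each bounded-error algorithm $\{U_r^e\}_{r=1}^{T_e}$ into an ingredient for \lem{alg-subspace-graph}, I would first amplify by majority vote over $O(\log T_{\max})$ parallel repetitions, producing an effectively zero-error algorithm of length $\tilde T_e = O(T_e\log T_{\max})$ whose controlled-step unitary is still accessible in unit cost under the model of \sec{model}. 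Applying \lem{alg-subspace-graph} with weights $\alpha_0^e=1$ and $\alpha_r^e=T_e$ for $r\geq 1$ then yields an $st$-composable $G^e$ with scaling factor $\r^e=2$, satisfying $\r^e\hat{W}_+(G^e)=O(\log T_{\max})$ and $\hat{W}_-(G^e)/\r^e=\widetilde{O}(T_e^2)$, with working bases generable in $O(\log T_{\max})$ time.

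Invoking \thm{comp} to replace each switch $e$ of $G$ by its $G^e$ then produces an $st$-composable $G^\circ$ computing $f\circ(f_e)_{e\in E}$ whose working bases can be generated in $T_{\mathrm{basis}}+O(\log T_{\max})$ time. For the positive analysis, when $f(x)=1$ let $\theta$ be an optimal unit $st$-flow on $G(x)$, so $\sum_e \theta(e)^2/\w_e={\cal R}_{s,t}(G(x))$. As discussed in \sec{switching-networks}, the switching network $G$ admits the cropped positive witness $\ket{\hat{w}}=\sum_e\frac{\theta(e)}{\sqrt{\w_e}}(\ket{\rightarrow,e}-\ket{\leftarrow,e})$, supported only on edges with $\varphi(e)=1$, i.e., with $f_e(x)=1$; each such $G^e$ has a cropped positive witness $\ket{\hat{w}^e}$ with $\|\hat{w}^e\|^2\leq\hat{W}_+(G^e)$. \thm{comp} then produces a composed cropped positive witness of squared norm $\sum_e\frac{\r^e}{2}\frac{\theta(e)^2}{\w_e}\|\hat{w}^e\|^2\leq\tfrac{1}{2}\max_e(\r^e\hat{W}_+(G^e))\cdot{\cal R}_{s,t}(G(x))=O({\cal R}_{s,t}(G(x))\log T_{\max})$. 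Symmetrically, when $f(x)=0$ the cut-set witness $\ket{\hat{w}_{\cal A}}=\sum_{e\in F_x}\sqrt{\w_e}(\ket{\rightarrow,e}-\ket{\leftarrow,e})$ is supported on off edges with $\varphi(e)=0$ and hence $f_e(x)=0$, so each contributing $G^e$ has a cropped negative witness, and \thm{comp} gives a composed cropped negative witness of squared norm $\sum_{e\in F_x}\frac{2\w_e}{\r^e}\|\hat{w}_{\cal A}^e\|^2\leq\sum_{e\in F_x}\w_e\cdot(2\hat{W}_-(G^e)/\r^e)=O\bigl(\sum_{e\in F_x}\w_e T_e^2\bigr)$.

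The main obstacle is the correct choice of weights $\alpha_r^e$ together with the bookkeeping around amplification. A Cauchy--Schwarz lower bound on $(\sum_r 1/\alpha_r)(\sum_r \alpha_r)$ forces $\r^e\hat{W}_+(G^e)\cdot\hat{W}_-(G^e)/\r^e\geq\Omega(\tilde T_e^2)$, so logarithmic factors unavoidably appear somewhere; the asymmetric choice $\alpha_r^e=T_e$ concentrates them on the positive side, giving the $\log T_{\max}$ in $\hat{W}_+(G^\circ)$ while keeping $\hat{W}_-(G^\circ)$ at $O(\sum_{e\in F_x}\w_e T_e^2)$ up to constants. One must additionally verify that $O(\log T_{\max})$-fold amplification is sufficient to make each $G^e$'s deviation from the idealized zero-error version invisible to the final phase-estimation algorithm --- which follows, as per the remark after \lem{alg-subspace-graph}, because the final complexity is polynomial in the relevant parameters so a large enough amplification constant makes the accumulated error negligible.
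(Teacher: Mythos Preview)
Your overall plan --- build $G^e$ via \lem{alg-subspace-graph}, glue into the switches of $G$ with \thm{comp}, and read off positive/negative witnesses from the flow/cut structure of the switching network --- is exactly what the paper does. The one substantive difference is the choice of weights $\alpha_r$ inside each $G^e$.

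The paper takes $\alpha_r=r+1$. For an algorithm of length $T_e$ this gives directly
\[
\hat W_+(G^e)\le 2\sum_{r=1}^{T_e}\frac{1}{r+1}=O(\log T_e),
\qquad
\hat W_-(G^e)\le 2\sum_{r=1}^{T_e}(r+1)=O(T_e^2),
\]
so after composition one lands on $\hat W_+(G^\circ)=O({\cal R}_{s,t}(G(x))\log T_{\max})$ and $\hat W_-(G^\circ)=O(\sum_{e\in F_x}\w_eT_e^2)$ with no hidden logarithms on the negative side. Your choice $\alpha_r=T_e$ (constant), applied to the amplified algorithm of length $\tilde T_e=\Theta(T_e\log T_{\max})$, yields $\hat W_+(G^e)=O(\log T_{\max})$ as you say, but $\hat W_-(G^e)=2\tilde T_e\cdot T_e=O(T_e^2\log T_{\max})$. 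You correctly record this as $\widetilde O(T_e^2)$ in the first paragraph, but then silently drop the log in the negative analysis when you write ``$=O\bigl(\sum_{e\in F_x}\w_e T_e^2\bigr)$'' and ``up to constants''. With your weights the bound is $O\bigl(\sum_{e\in F_x}\w_e T_e^2\log T_{\max}\bigr)$, which is weaker than the lemma as stated. The harmonic choice $\alpha_r=r+1$ is what buys the clean $O(T_e^2)$ on the negative side while still getting $O(\log T_e)$ on the positive side; your Cauchy--Schwarz remark correctly identifies that the \emph{product} cannot be improved, but the paper's weights place the entire logarithmic overhead on $\hat W_+$ without any amplification blowup feeding into $\hat W_-$.

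Your explicit handling of error amplification is something the paper glosses over (it just applies \lem{alg-subspace-graph} as if the algorithm were errorless and relies on the usual remark that sufficiently small error is indistinguishable). Your treatment is more careful in this respect; just note that if you first amplify and \emph{then} use $\alpha_r=r+1$, you would pick up $O(\tilde T_e^2)=O(T_e^2\log^2 T_{\max})$ on the negative side, so the cleanest route to the exact statement is to leave error handling to the final phase-estimation step and apply the harmonic weights to the length-$T_e$ algorithm.
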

We remark that sometimes a better negative witness for $G$ can be obtained by taking a superposition of different $st$-cut-sets. This also gives a better negative witness for $G^\circ$, but its form is more complicated, so we omit describing this. 
\begin{proof}
    For each $e\in E$, let $G^e$ be the subspace graph from \lem{alg-subspace-graph} derived from the algorithm $\{U_{r}^e\}_{r}$, using weights $\alpha_r=r+1$. Thus $G^e$ computes $f_e$. 
    We will apply \thm{comp} to $G$, and the $G^e$.

    \paragraph{Positive Analysis:} A positive input for $f\circ (f_e)_e$ gives rise to a positive input for $f$ -- let $\ket{\hat{w}}$ be a positive witness for $G$ under that input. This only uses edges that are turned on, so for each $e$ that is on (i.e. $f_e=1$), let $\ket{\hat{w}^e}$ be a cropped positive witness for $G^e$. By \thm{comp}, there is a cropped positive witness for $G^\circ$:
    $$\ket{\hat{w}^\circ}\colonequals\sum_{e\in E}\sqrt{\frac{\r^e}{2}}\braket{\leftarrow,e}{\hat{w}}\ket{\hat{w}^e},$$
    since $\overline{E}=E$, where $\r^e=2$ is the scaling factor of $G^e$'s basis. Since $G$ is a switching network, its positive witness is a unit $st$-flow, in the sense of \eq{flow-witness}, so $\braket{\leftarrow,e}{\hat{w}}=\frac{\theta(e)}{\sqrt{\w_e}}$, so we have:
    $$\ket{\hat{w}^\circ}=\sum_{e\in E}\frac{\theta(e)}{\sqrt{\w_e}}\ket{\hat{w}^e},$$
    and
    $$\hat{W}_+(G^\circ)\leq \norm{\ket{\hat{w}^\circ}}^2 = \sum_{e\in E}\frac{\theta(e)^2}{\w_e}\norm{\ket{\hat{w}^e}}^2 = \sum_{e\in E}\frac{\theta(e)^2}{\w_e}\hat{W}_+(G^e).$$
    Because we have set $\alpha_r=r+1$, by \lem{alg-subspace-graph}, we have 
    $$\hat{W}_+(G^e)\leq \sum_{r=1}^{T_e}\frac{1}{r+1}=O(\log T_e),$$
    so 
    $$\hat{W}_+(G^\circ)\leq  \sum_{e\in E}\frac{\theta(e)^2}{\w_e}\log T_{\max} = {\cal R}_{s,t}(G(x)) \log T_{\max}.$$

    \paragraph{Negative Analysis:} Similarly, for a negative input to $f\circ (f_e)_e$, let $x$ be its corresponding negative input to $f$, and let $\ket{\hat{w}_{\cal A}}$ be a negative witness for $G$ on input $x$. For each $e$ such that $f_e=0$, let $\ket{\hat{w}_{\cal A}^e}$ be a negative witness for $G^e$. By \thm{comp}, there is a cropped negative witness for $G^\circ$:
    $$\ket{\hat{w}_{\cal A}^\circ}=\sum_{e\in E}\braket{\leftarrow,e}{\hat{w}_{\cal A}}\ket{\hat{w}_{\cal A}^e}.$$
    Since $G$ is a switching network, we can always choose a negative witness of the form 
    $$\ket{\hat{w}_{\cal A}} = \sum_{e\in F_x}\sqrt{\w_e}(\ket{\rightarrow,e}-\ket{\leftarrow,e}),$$
    where $F_x$ is an $st$-cut-set of $G(x)$. Then:
    \begin{align*}
    \hat{W}_-(G^\circ) &\leq \max_{x\in f^{-1}(0)}\sum_{e\in F_x}\w_e\norm{\ket{\hat{w}_{\cal A}^e}}^2= \max_{x\in f^{-1}(0)}\sum_{e\in F_x}\w_e\hat{W}_-(G^e)\\
    &\leq \max_{x\in f^{-1}(0)}\sum_{e\in F_x}\w_e\sum_{r=1}^{T_e}(r+1) = O\left(\max_{x\in f^{-1}(0)}\sum_{e\in F_x}\w_e T_e^2\right).
    \end{align*}

    \paragraph{Basis Generation:} By \thm{comp}, since the bases of $G^e$ can be generated in time $O(\log T_e)$, the bases of $G^\circ$ can be generated in time $T_{\text{basis}}+O(\log T_{\max})$, completing the proof.    
\end{proof}

\noindent Then the main theorem of this section follows directly.
\begin{theorem}\label{thm:switching-network-comp}
    Let $G$ be any switching network computing a Boolean function $f$,  
    with working bases that can be generated in time $T_{\mathrm{basis}}$.
    For each $e\in E$, let $\{U_r^e\}_{r=1}^{T_e}$ be a quantum algorithm computing a Boolean function $f_e$ with bounded error. 
    Let $T_{\max}$ be an upper bound on $T_e$ for all $e$, and $F_x$ an $st$-cut-set of $G(x)$ whenever $f(x)=0$. 
Then there is a bounded error quantum algorithm for $f$ with complexity
$$\widetilde{O}\left(T_{\mathrm{basis}}\sqrt{\max_{x\in f^{-1}(1)}{\cal R}_{s,t}(G(x)) \max_{x\in f^{-1}(0)}\sum_{e\in F_x}\w_eT_e^2}\right).$$
\end{theorem}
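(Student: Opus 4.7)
The theorem is essentially a corollary combining the graph construction of \lem{switching-networks-alg} with the phase-estimation algorithm of \thm{subspace-graph-to-alg}, so the plan is to chain these together and bookkeep constants and logarithmic factors.

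First, I would amplify each bounded-error subroutine $\{U_r^e\}_{r=1}^{T_e}$ to have error at most $1/\mathrm{poly}(n \cdot T_{\max} \cdot |E|)$, which costs only a factor $O(\log(T_{\max}|E|))$, absorbed into the $\widetilde{O}$; after this, the subroutines are indistinguishable from exact algorithms, so I can legitimately invoke \lem{alg-subspace-graph} (and hence \lem{switching-networks-alg}) as if they were error-free. Feed these amplified subroutines, together with $G$, into \lem{switching-networks-alg}. This yields an $st$-composable subspace graph $G^{\circ}$ that computes $f\circ(f_e)_{e\in E}$, with
\[
\hat{W}_+(G^{\circ}) = O\!\left(\max_{x\in f^{-1}(1)}{\cal R}_{s,t}(G(x))\,\log T_{\max}\right),\quad
\hat{W}_-(G^{\circ}) = O\!\left(\max_{x\in f^{-1}(0)}\sum_{e\in F_x}\w_e T_e^2\right),
\]
and whose working bases can be generated in time $T_{\mathrm{basis}} + O(\log T_{\max})$.

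Next, I would apply \cor{scaling} to rescale $G^{\circ}$ to a subspace graph $G^{\circ\circ}$ satisfying $\hat{W}_+(G^{\circ\circ}) \le 1$ and $\hat{W}_-(G^{\circ\circ}) \le \hat{W}_+(G^{\circ})\,\hat{W}_-(G^{\circ})$, still with basis generation cost $T_{\mathrm{basis}}+O(\log T_{\max})$ by the cost bound in \cor{scaling}. Then I plug $G^{\circ\circ}$ directly into \thm{subspace-graph-to-alg}, producing a bounded-error quantum algorithm whose running time is
\[
O\!\left((T_{\mathrm{basis}}+\log T_{\max})\sqrt{\hat{W}_+(G^{\circ})\hat{W}_-(G^{\circ})}\right)
= \widetilde{O}\!\left(T_{\mathrm{basis}}\sqrt{\max_{x\in f^{-1}(1)}{\cal R}_{s,t}(G(x))\,\max_{x\in f^{-1}(0)}\sum_{e\in F_x}\w_e T_e^2}\right),
\]
where the $\log T_{\max}$ factor from $\hat{W}_+(G^{\circ})$ and the amplification overhead have been absorbed into the $\widetilde{O}$.

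There is no real obstacle here, since all the heavy lifting has been done in \lem{switching-networks-alg}, \cor{scaling}, and \thm{subspace-graph-to-alg}; the only point requiring care is the bounded-error assumption on the $U_r^e$. Concretely, \lem{alg-subspace-graph} is stated for exact algorithms, so one must argue, as above, that amplifying each subroutine to inverse-polynomial error makes the resulting phase-estimation procedure indistinguishable from the exact-case analysis, while only contributing a logarithmic factor that the $\widetilde{O}$ conceals. Everything else is a direct substitution.
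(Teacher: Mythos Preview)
Your proposal is correct and follows essentially the same route as the paper: apply \lem{switching-networks-alg} to obtain $G^\circ$, rescale via \cor{scaling}, and finish with \thm{subspace-graph-to-alg}, absorbing the $\log T_{\max}$ factors into the $\widetilde{O}$. Your explicit amplification step for the bounded-error subroutines is a point the paper leaves implicit (both the theorem and \lem{switching-networks-alg} are stated for bounded-error subroutines while \lem{alg-subspace-graph} assumes no error), so including it is appropriate and does not change the argument.
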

\begin{proof}
    Let $G^\circ$ be the subspace graph from \lem{switching-networks-alg}. 
    By \cor{scaling}, there is a subspace graph $G'$, also computing $f\circ (f_e)_e$, whose bases can also be generated in time $T_{\text{basis}}+O(\log T_{\max})$, and such that $\hat{W}_+(G')\leq 1$ and $\hat{W}_-(G')\leq \hat{W}_+(G^\circ)\hat{W}_-(G^\circ)$. Then we can apply \thm{subspace-graph-to-alg} to $G'$ to get a bounded error quantum algorithm for $f\circ (f_e)_e$ with complexity:
    \begin{align*}
    &O\left((T_{\text{basis}}+\log T_{\max})\sqrt{\hat{W}_+(G^\circ)\hat{W}_-(G^\circ)} \right)\\
    ={}&O\left((T_{\text{basis}}+\log T_{\max})\sqrt{\max_{x\in f^{-1}(1)}{\cal R}_{s,t}(G(x)) \max_{x\in f^{-1}(0)}\sum_{e\in F_x}\w_eT_e^2\log T_{\max}} \right).\qedhere
    \end{align*}
\end{proof}

\section{Application to DSTCON}\label{sec:dstcon}

In this section we consider the \textit{directed st-connectivity} problem. First, we specify the graph access model. We will work in the adjacency matrix model, where we assume that for a directed graph $G=(V,E)$ the input is given as an oracle $\mathcal{O}_G$ that can be queried in unit cost, where for any $u,v\in V$ 
$$\mathcal{O}_G: \ket{u}\ket{v}\ket{0}\mapsto \begin{cases}
   \ket{u}\ket{v}\ket{1} & \mbox{if } (u,v)\in E\\
    \ket{u}\ket{v}\ket{0} & \mbox{otherwise.}
\end{cases}$$
Without loss of generality, we assume $(u,u)\in E$ for any $u\in V$. This doesn't change the presence or absence of paths in $G$.\\
However, we note that the time complexity $2^{\frac{1}{2}\log^2n+O(\log n)}$ in \thm{Savitch_quantum} also holds for the edge list model, in which the algorithm can query the $i$-th out- or in-neighbour of a vertex. That is because given such access, we can implement a query to $\mathcal{O}_G$ in poly$(n)$ time and $O(\log n)$ space, and poly$(n)$ factors are suppressed in $2^{O(\log n)}$.

\begin{problem}[\textsc{dstcon}]\label{prob:dstcon}
Given access to a directed graph $G=(V,E)$ via the oracle $\mathcal{O}_G$, and two vertices $s,t\in V$, decide whether there is a directed  path from $s$ to $t$ in $G$.
\end{problem}

There is a classical recursive algorithm for \textsc{dstcon} that operates in the low-space regime due to Savitch \cite{savitch1970relationships}. We first describe the subroutine $\textsc{path}_{\ell}({\cal O}_G,u,v)$ that will be called recursively in the algorithm. This subroutine decides whether there is a path of length at most $\ell$ from $u$ to $v$ in $G$. For $\ell\geq 2$, the subroutine searches over all vertices for some $w$ such that there are paths of length at most $\ell/2$ from $s$ to $w$ and $w$ to $t$:

\begin{subroutine}
  \caption{$\textsc{path}_{\ell}({\cal O}_G,u,v)$ for $\ell\geq 2$}
  \KwIn{Oracle ${\cal O}_G$, $u,v\in V$}
  \KwOut{$1$ if there is a path of length $\leq \ell$ from $u$ to $v$ in $G$; 0 otherwise.}
  \For{$w\in V$}{
    $b_w \colonequals  \textsc{path}(\ell/2,u,w)\land\textsc{path}(\ell/2,w,v)$;
  }
  \Return{$\lor_{w\in V}b_w$}\;
\end{subroutine}

\noindent Next, we describe the base case. When $\ell=1$, the subroutine simply performs a single query ${\cal O}_G(u,v)$ and outputs:
$$\textsc{path}_1({\cal O}_G,u,v)=\begin{cases}
   1 & \mbox{if } (u,v)\in E\\
    0 & \mbox{otherwise.}
\end{cases}$$

\noindent Then Savitch's algorithm simply outputs $\textsc{path}_n({\cal O}_G,s,t)$, where $n=|V|$, to decide if there is a path from $s$ to $t$.

\begin{algorithm}
  \caption{Savitch's algorithm for \textsc{dstcon} \label{alg:Savitch}}
  \KwIn{Oracle ${\cal O}_G$ for $G=(V,E)$ with $ \abs{V}=n$, $s,t\in V$}
  \KwOut{$1$ if there is a path from $s$ to $t$ in $G$; 0 otherwise.}
  \Return{$\textsc{path}_n({\cal O}_G,s,t)$}\;
\end{algorithm}

\noindent The following result of Savitch is easy to verify.
\begin{theorem}\label{thm:Savitch}
    Algorithm \ref{alg:Savitch} decides \textsc{dstcon} in time $O((2n)^{\log n}=2^{\log^2+O(\log n)})$ and space $O(\log^2 n)$.
\end{theorem}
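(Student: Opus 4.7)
The plan is to verify correctness by induction on $\ell$, and then separately bound time and space via the standard recursion analysis.

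First I would establish by induction on $\ell$ (taken over powers of $2$, say $\ell \in \{1, 2, 4, \dots, n\}$) that $\textsc{path}_{\ell}({\cal O}_G, u, v) = 1$ if and only if there is a directed path from $u$ to $v$ in $G$ of length at most $\ell$. The base case $\ell = 1$ holds by definition of $\mathcal{O}_G$, together with the convention that $(u,u) \in E$. For the inductive step, note that any $u$-$v$ path of length at most $\ell$ can be split at its vertex at position $\lceil \ell/2 \rceil$ into two segments each of length at most $\ell/2$, so the existence of an intermediate $w \in V$ with $\textsc{path}_{\ell/2}(u,w) = \textsc{path}_{\ell/2}(w,v) = 1$ is equivalent (by the IH) to the existence of such a path. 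The self-loop convention is crucial here: it allows us to pad any shorter path by stationary steps so that the midpoint always exists exactly at position $\ell/2$. Since any simple path in $G$ has at most $n-1$ edges, applying this with $\ell = n$ gives that Algorithm~\ref{alg:Savitch} correctly decides \textsc{dstcon}.

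Next I would analyze the time complexity by setting up a recurrence. Let $T(\ell)$ denote the running time of $\textsc{path}_{\ell}$. For $\ell \geq 2$, the loop over $w$ performs $n$ iterations, each making two recursive calls to $\textsc{path}_{\ell/2}$ and a constant number of additional Boolean operations, and then the final OR over the $n$ stored bits takes $O(n)$ time. Thus $T(\ell) \leq 2n\,T(\ell/2) + O(n)$ with $T(1) = O(1)$. Unrolling the recurrence over $\log n$ levels gives $T(n) = O((2n)^{\log n}) = 2^{\log^2 n + O(\log n)}$, as claimed.

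For the space bound, I would observe that the two recursive calls inside the AND can be executed sequentially, reusing the same workspace, and likewise the $n$ iterations of the OR reuse a single one-bit accumulator plus a $\log n$-bit counter for $w$. Hence each recursive invocation of $\textsc{path}_{\ell}$ uses only $O(\log n)$ bits of local storage (to hold $w$, the two child bits, and the accumulated OR), and the depth of recursion is $\log n$. Summing the local contributions across the recursion stack yields the claimed $O(\log^2 n)$ space. I don't anticipate a real obstacle: the entire argument is the classical Savitch analysis, and the only subtlety worth flagging is the role of the self-loop assumption in making the midpoint argument clean and in justifying $\ell = n$ rather than $\ell = n - 1$ in the top-level call.
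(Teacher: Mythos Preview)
Your proposal is correct and is precisely the standard Savitch argument. The paper does not actually give a proof of this theorem: it simply states the result as ``easy to verify'' and attributes it to Savitch, so your write-up supplies exactly the routine verification the paper omits.
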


\noindent Next, we show a quantum speedup for Savitch's algorithm via application of \thm{strategy_1}.

\begin{theorem}\label{thm:Savitch_quantum}
    Let $G=(V,E)$ be a directed graph, $\abs{V}=n$. Then there exists a recursive quantum algorithm that decides \textsc{dstcon} on $G$ with bounded error in time $\widetilde{O}((\sqrt{2n})^{\log n})=2^{\frac{1}{2}\log^2n+O(\log n)}$ and space $O(\log^2 n)$.
\end{theorem}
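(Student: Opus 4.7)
The plan is to apply \thm{strategy_1} directly to the recursive structure of Savitch's algorithm. Define $f_{\ell,n}$ to be the function that, given implicit access to the graph oracle $\mathcal{O}_G$ on $n$ vertices and an explicit vertex pair $(u,v)$ as input, outputs $1$ iff there is a directed path of length at most $\ell$ from $u$ to $v$ in $G$; the full \textsc{dstcon} problem is then $f_{n,n}(s,t)$. For $\ell\geq 2$, Savitch's recursion reads
$$f_{\ell,n}(u,v) \;=\; \bigvee_{w\in V}\bigl(f_{\ell/2,n}(u,w)\wedge f_{\ell/2,n}(w,v)\bigr),$$
which matches the template of \thm{strategy_1} with $\lambda_1(\ell)=\ell/2$, $\lambda_2(n)=n$, $a=2n$, and depth-$2$ symmetric formula $\varphi'(z_{1,L},z_{1,R},\dots,z_{n,L},z_{n,R})=\bigvee_{w=1}^n (z_{w,L}\wedge z_{w,R})$. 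To fit the exact form $\varphi=\varphi'(z_1,\dots,z_a)\vee z_{a+1}$, take $f_{\mathrm{aux},\ell,n}(u,v)$ to be the single-edge query ``is $(u,v)\in E$?'', which also serves as the base case at $\ell_0=1$, with $T_{\mathrm{aux}}=O(1)$ and $S_{\mathrm{aux}}=O(\log n)$.

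The hypotheses of \thm{strategy_1} are then routine to check: $\varphi'$ is symmetric in the sense of \defin{balanced} since the root OR has $n$ identical $\wedge$-of-two-leaves subtrees; and each sub-instance $x^i$ handed to a recursive call is just a relabelling of the vertex-pair argument (either $(u,w)$ or $(w,v)$ for the intermediate vertex $w$ indexed by $i$) on the same shared oracle $\mathcal{O}_G$, so any bit of $x^i$ is computable from $(u,v)$, $i$, and a single query to $\mathcal{O}_G$ in unit time. Feeding these parameters into \thm{strategy_1} yields the recurrence
$$T(\ell,n)\;\leq\;\sqrt{2n}\,T(\ell/2,n)+2T_{\mathrm{aux}}(\ell,n)\;=\;\sqrt{2n}\,T(\ell/2,n)+O(1),$$
which, unrolled across the $\log n$ halvings from $\ell=n$ down to $\ell=1$, gives $T(n,n)=O\bigl((\sqrt{2n})^{\log n}\bigr)=2^{\frac{1}{2}\log^2 n+O(\log n)}$, matching the claimed time bound.

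For the space, \thm{strategy_1} guarantees $O(S_{\mathrm{aux}}(n,n)+\log T(n,n))=O(\log n+\log^2 n)=O(\log^2 n)$, preserving Savitch's classical space complexity. There is no real obstacle here: the divide-\&-conquer framework does all the heavy lifting, and the switching-network treatment of the OR-of-ANDs formula is precisely what avoids the constant-factor blowup per recursion level that would otherwise destroy the $\sqrt{2n}^{\log n}$ scaling over $\log n$ levels of nesting. The only items worth flagging are that $\varphi'$ genuinely satisfies the symmetric case of \thm{strategy_1} (so we do not need to invoke the balanced-but-not-symmetric variant with extra reflection-cost bookkeeping), and that the adjacency-matrix access model suffices for the sub-instance computability assumption of \thm{strategy_1}; both observations are immediate.
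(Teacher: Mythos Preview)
Your proposal is correct and follows essentially the same route as the paper: define $f_{\ell,n}$ via the Savitch recursion, identify the depth-$2$ symmetric formula $\varphi'=\bigvee_{w}(\cdot\wedge\cdot)$ on $a=2n$ variables with $\lambda_1(\ell)=\ell/2$, $\lambda_2(n)=n$, a constant-cost auxiliary edge query, and then apply \thm{strategy_1} and solve the resulting recurrence. The only cosmetic difference is that the paper sets $f_{\mathrm{aux},\ell,n}\equiv 0$ for $\ell>1$ whereas you let it be the edge query at every level, which is harmless since $T_{\mathrm{aux}}=O(1)$ either way.
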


\begin{proof}
    To show existence of such a quantum algorithm, we will rephrase algorithm \ref{alg:Savitch} in terms of the condition of \thm{strategy_1} and analyze its complexity. We start with defining a function corresponding to \textsc{dstcon}.
    $$f_{\ell,n}:\{0,1\}^{n^2}\times\{0,1\}^{\log n}\times\{0,1\}^{\log n}\to \{0,1\}$$
    $$(G,u,v)\mapsto \begin{cases}
   1 & \mbox{if there is a path from $u$ to $v$ in $G$ of length $\le \ell$} \\
    0 & \mbox{otherwise,}
\end{cases}$$
where $G$ encodes a directed graph on $n$ vertices and $u$,$v$ encode two vertices in $G$.\\

Next, we define $\lambda_1(\ell)=\ell/2$ and $\lambda_2(n)=n$. This specifies the recursion. Finally, we set $\ell_0=1$ and define 
$$f_{aux,1,n}(G,u,v)=\begin{cases}
   1 & \mbox{if } (u,v)\in E\\
    0 & \mbox{otherwise.}
\end{cases}$$
and $f_{aux,\ell,n}= 0$ for all $\ell>1$. Then ${\cal P}_{\mathrm{aux},\ell,n}$ is a single edge query when $\ell=1$ and does nothing otherwise. This implies $T_{\mathrm{aux}}(\ell,n)=O(1)$ and $S_{\mathrm{aux}}(\ell,n)=O(\log n)$ as this is the space needed to write down a vertex name. We have everything to write down $f_{\ell,n}$ recursively, in terms of a symmetric formula $\varphi'$ in $a=2n$ variables:

$$f_{\ell,n}(G,u,v)=\underbrace{\bigvee_{w\in V}(f_{\ell/2,n}(G,u,w)\land f_{\ell/2,n}(G,w,v))}_{\varphi'}\lor f_{\mathrm{aux},\ell,n}(G,u,v).$$

Now we apply \thm{strategy_1} to everything described above and conclude that there is a recursive bounded error quantum algorithm that computes $f_{\ell,n}$ with time complexity $\widetilde{O}(T(\ell,n))$ and space complexity $O(S_{\mathrm{aux}}(\ell,n)+\log T(\ell,n))$, such that we have for all $\ell>1$:
$$T(\ell,n)=\sqrt{2n}T(\ell/2,n)+O(1),$$
and $T(1,n)=O(1)$.
By solving the recursion, we obtain $$T(n,n)=O((\sqrt{2n})^{\log n})=O\left(2^{\frac{1}{2}(\log^2n+\log n)}\right).$$
The space complexity of the algorithm for $f_{n,n}$ is thus
$$O(S_{\mathrm{aux}}(n,n)+\log T(n,n))=O(\log^2 n).$$ Since $f_{n,n}$ precisely describes the problem \textsc{dstcon}, this concludes the proof.
\end{proof}

\section{Acknowledgements}

We thank Simon Apers for useful discussions about this work, and Troy Lee for useful comments on an early draft.  

This work is supported by NWO Klein project number OCENW.Klein.061, and ARO contract no W911NF2010327. This work is funded/co-funded by the European Union (ERC, ASC-Q, 101040624). Views and opinions expressed are however those of the author(s) only and do not necessarily reflect those of the European Union or the European Research Council. Neither the European Union nor the granting authority can be held responsible for them. SJ is a CIFAR Fellow in the Quantum Information Science Program. 

This publication is part of the project Divide \& Quantum  (with project number 1389.20.241) of the research programme NWA-ORC which is (partly) financed by the Dutch Research Council (NWO).

This work is supported by the Dutch National Growth Fund (NGF), as part of the Quantum Delta NL programme.

\bibliographystyle{alpha}
\bibliography{refs}

\appendix

\section{Recovering the Local Structure}\label{app:locality-of-basis}

In this appendix, we prove that the following spaces are equal.

\begin{equation}\label{eq:cal-B-G-circ-prime}
    {\cal B}_{G^\circ} = \bigoplus_{u\in V}\widetilde\Lambda({\cal V}_u)\oplus\bigoplus_{e\in\overline{E}}\bigoplus_{u\in V^e\setminus\{s,t\}}{\cal V}_u^e+ \bigoplus_{e\in \overline{E}}\bigoplus_{e'\in E^e}\Xi_{e'}^{e\cal B}
\end{equation}

\begin{equation}\label{eq:cal-B-G-circ}
    \overline{\cal B}_{G^\circ} = \mathrm{span}\underbrace{\Lambda(\Psi_{{\cal B}_G}^-)\cup\bigcup_{e\in\overline{E}}(\Psi_{{\cal B}_{G^e}}^-\setminus\{\ket{b_0^e},\ket{b_1^e}\})}_{\equalscolon \Psi_{{\cal B}_{G^\circ}}^-} \cup \underbrace{\left\{\frac{1}{\sqrt{2}}(\ket{\rightarrow,e'}+\ket{\leftarrow,e'}):e'\in \overline{E}^\circ=\bigcup_{e\in\overline{E}}\overline{E}^e\right\}}_{\mbox{spans }\Xi_{\overline{E}^\circ}^{\circ\cal B}=\bigoplus_{e\in\overline{E}}\bigoplus_{e'\in\overline{E}^e}\Xi_{e'}^{e\cal B}}.
\end{equation}

\begin{lemma}\label{lem:b0-b1-orthog}
    For any $e\in\overline{E}$, 
    $$\mathrm{span}\Psi_{{\cal B}_{G^e}}^-\setminus\{\ket{b_0^e},\ket{b_1^e}\}\oplus \Xi_{\overline{E}^e}^{e\cal B} = {\cal B}_{G^e}\cap \{\ket{\leftarrow,s^e},\ket{\rightarrow,t^e}\}^\bot=\bigoplus_{u\in V^e\setminus\{s,t\}}{\cal V}_u^e + \Xi_{\overline{E}^e}^{e\cal B}.$$
\end{lemma}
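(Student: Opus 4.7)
The lemma equates three subspaces; denote them $A$ (the basis span on the left), $B$ (the $\mathcal{B}_{G^e}$-intersection in the middle), and $C$ (the interior-vertex-plus-switch-edge sum on the right). My plan is to prove $A = B$ by a basis argument, $C \subseteq B$ trivially, and reduce $B \subseteq C$ to a single key claim about $\ket{\psi_\star(s^e)} + \ket{\psi_\star(t^e)}$.

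For $A = B$: the direction $A \subseteq B$ is immediate since vectors in $\Psi_{\mathcal{B}_{G^e}}^- \setminus \{\ket{b_0^e}, \ket{b_1^e}\}$ are orthogonal to the boundary by property~5 of composable bases, while the switch-edge basis vectors $(\ket{\rightarrow,e'} + \ket{\leftarrow,e'})/\sqrt{2}$ lie in $\Xi_{e'} \perp \Xi_{B^e}$; all are in $\mathcal{B}_{G^e}$. For $B \subseteq A$, expand $v \in B$ in the orthonormal basis $\Psi_{\mathcal{B}_{G^e}}$: only $\ket{b_0^e}$ and $\ket{b_1^e}$ have nonzero projections onto $\mathrm{span}\{\ket{\leftarrow,s^e}, \ket{\rightarrow,t^e}\}$, and these two projections, respectively proportional to $\ket{\leftarrow,s^e} + \ket{\rightarrow,t^e}$ and $\ket{\leftarrow,s^e} - \ket{\rightarrow,t^e}$, are linearly independent; so $v \perp \ket{\leftarrow,s^e}, \ket{\rightarrow,t^e}$ forces both coefficients to vanish.

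For $C \subseteq B$, interior vertex spaces and switch-edge B-subspaces all lie in $\bigoplus_{e' \in E^e} \Xi_{e'} \perp \Xi_{B^e}$ and in $\mathcal{B}_{G^e}$ by definition. For $B \subseteq C$, decompose any $v \in B$ through the sum structure $\mathcal{B}_{G^e} = \mathcal{V}_{s^e}^e + \mathcal{V}_{t^e}^e + \mathcal{V}_{B^e} + C$, writing $v = \alpha_s V_s + \alpha_t V_t + \beta V_B + w$ with $V_s, V_t, V_B$ the one-dimensional generators and $w \in C$. Boundary orthogonality yields $\alpha_s = \alpha_t = -\beta =: \gamma$, so $v = \gamma(\ket{\psi_\star(s^e)} + \ket{\psi_\star(t^e)}) + w$. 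Hence $B \subseteq C$ reduces to the single key claim $\ket{\psi_\star(s^e)} + \ket{\psi_\star(t^e)} \in C$.

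The main obstacle is proving this key claim. My approach is to apply the analogous decomposition to $\sqrt{2+\mathsf{r}^e}\ket{b_1^e} = \ket{\leftarrow,s^e} - \ket{\rightarrow,t^e} + \sqrt{\mathsf{r}^e}\ket{\bar{b}_1^e} \in \mathcal{B}_{G^e}$, obtaining a one-parameter family of candidate residuals $w(\beta) = \sqrt{\mathsf{r}^e}\ket{\bar{b}_1^e} - (1-\beta)\ket{\psi_\star(s^e)} + (1+\beta)\ket{\psi_\star(t^e)}$ subject to $\alpha_s = 1-\beta, \alpha_t = -1-\beta$; substituting the defining identity for $\ket{\bar{b}_1^e}$ rewrites this as $w(\beta) = \sqrt{2+\mathsf{r}^e}\ket{b_1^e} - (1-\beta)V_s + (1+\beta)V_t - \beta V_B$, which is in $\mathcal{B}_{G^e}$ and, by direct projection, in $B$ for every $\beta$. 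I then exploit that orthonormality of $\Psi_{\mathcal{B}_{G^e}}$ forces $\ket{\bar{b}_1^e} \perp B$ (because $\ket{b_1^e}$ is orthogonal to every other basis vector, and those together span $B$), and in particular $\ket{\bar{b}_1^e} \perp \Xi_{\overline{E}^e}^{e\mathcal{B}}$. Combining these orthogonality constraints with the equation $a - b = \sqrt{\mathsf{r}^e}$ (obtained by inner-producting the $\ket{b_1^e}$ defining relation with $\ket{\bar{b}_1^e}$, using $a + b = \braket{\bar{b}_1^e}{\psi_\star(s^e) + \psi_\star(t^e)} = 0$ since $\psi_\star(s^e) + \psi_\star(t^e) \in B$) makes an entire one-parameter family of $\beta$-values yield valid decompositions $w(\beta) \in C$; then $w(\beta) - w(\beta') = (\beta' - \beta)(\ket{\psi_\star(s^e)} + \ket{\psi_\star(t^e)}) \in C$ for any two $\beta \neq \beta'$, proving the claim. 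The subtle step is justifying rigorously that the whole family, and not merely a single distinguished $\beta^*$, produces decompositions with $w(\beta) \in C$ — this is where the full force of $st$-composability (as opposed to mere existence of some orthonormal decomposition of $\mathcal{B}_{G^e}$) enters, ruling out the degenerate subspace-graph structures in which the key claim would fail.
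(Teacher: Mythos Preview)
Your treatment of $A=B$ and of $C\subseteq B$ is correct and matches the paper's.

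For $B\subseteq C$, however, you take a much more complicated route than the paper, and your argument for the ``key claim'' does not close. The paper simply writes any $\ket{\phi}\in B\subseteq\mathcal{B}_{G^e}$ as
\[
\ket{\phi}=a_0'(\ket{\leftarrow,s^e}+\ket{\psi_\star(s^e)})+a_1'(\ket{\rightarrow,t^e}+\ket{\psi_\star(t^e)})+\ket{\phi_V}+\ket{\phi_+'}
\]
with $\ket{\phi_V}\in\bigoplus_{u\ne s,t}\mathcal{V}_u^e$ and $\ket{\phi_+'}\in\Xi_{\overline{E}^e}^{e\mathcal{B}}$, and then observes that among all the summands only the first carries any $\ket{\leftarrow,s^e}$ component and only the second any $\ket{\rightarrow,t^e}$ component (everything else lives in $\Xi_{E^e}$). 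Hence $a_0'=\braket{\leftarrow,s^e}{\phi}=0$ and $a_1'=\braket{\rightarrow,t^e}{\phi}=0$ are forced, regardless of any non-uniqueness elsewhere in the decomposition, and $\ket{\phi}=\ket{\phi_V}+\ket{\phi_+'}\in C$. No further claim is needed.

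By inserting an additional $\mathcal{V}_{B^e}$ term with coefficient $\beta$, you destroy exactly this uniqueness: now boundary-orthogonality only yields $\alpha_s=\alpha_t=-\beta$, and you are left needing the key claim $\ket{\psi_\star(s^e)}+\ket{\psi_\star(t^e)}\in C$. Your proposed proof of that claim is circular. You want two distinct $\beta$-values for which $w(\beta)\in C$, but the decomposition $\sqrt{2+\mathsf{r}^e}\,\ket{b_1^e}=(1-\beta)V_s-(1+\beta)V_t+\beta V_B+w(\beta)$ admits a second valid $\beta$ \emph{if and only if} the kernel direction $V_s+V_t-V_B=\ket{\psi_\star(s^e)}+\ket{\psi_\star(t^e)}$ already lies in $C$, which is precisely what you are trying to prove. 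The orthogonality facts you derive about $\ket{\bar{b}_1^e}$ show only that $w(\beta)\in B$ and $w(\beta)\perp\ket{\bar{b}_1^e}$ for all $\beta$; they do not place $w(\beta)$ in $C$. You flag this yourself as ``the subtle step'' but do not actually close it, and appealing to ``the full force of $st$-composability'' is not an argument.
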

\begin{proof}
It is clear that 
\begin{align*}
    \mathrm{span}\Psi_{{\cal B}_{G^e}}^-\setminus\{\ket{b_0^e},\ket{b_1^e}\}\oplus \Xi_{\overline{E}^e}^{e\cal B} &\subseteq {\cal B}_{G^e}=\mathrm{span}\Psi_{{\cal B}_{G^e}}^-\oplus \Xi_{\overline{E}^e}^{e\cal B}\\
    \mbox{and also }     \bigoplus_{u\in V^e\setminus\{s,t\}}{\cal V}_u^e + \Xi_{\overline{E}^e}^{e\cal B} &\subseteq {\cal B}_{G^e}=\bigoplus_{u\in V^e}{\cal V}_u^e + \Xi_{\overline{E}^e}^{e\cal B}.
\end{align*}
Since we also have everything in $\mathrm{span}\Psi_{{\cal B}_{G^e}}^-\setminus\{\ket{b_0^e},\ket{b_1^e}\}$, $\bigoplus_{u\in V^e\setminus\{s,t\}}{\cal V}_u^e$ and $ \Xi_{\overline{E}^e}^{e\cal B}$ orthogonal to both $\ket{\leftarrow,s^e}$ and $\ket{\rightarrow,t^e}$, we have:
\begin{align*}
    \mathrm{span}\Psi_{{\cal B}_{G^e}}^-\setminus\{\ket{b_0^e},\ket{b_1^e}\}\oplus \Xi_{\overline{E}^e}^{e\cal B} &\subseteq {\cal B}_{G^e}\cap \{\ket{\leftarrow,s^e},\ket{\rightarrow,t^e}\}^\bot\\
    \mbox{and }     \bigoplus_{u\in V^e\setminus\{s,t\}}{\cal V}_u^e + \Xi_{\overline{E}^e}^{e\cal B} &\subseteq {\cal B}_{G^e}\cap \{\ket{\leftarrow,s^e},\ket{\rightarrow,t^e}\}^\bot.
\end{align*}

For the other direction, suppose $\ket{\phi}\in {\cal B}_{G^e}\cap \{\ket{\leftarrow,s^e},\ket{\rightarrow,t^e}\}^\bot$. Since it is in ${\cal B}_{G^e}$, for some $\ket{\phi_-}\in \mathrm{span}\Psi_{{\cal B}_{G^e}}^-\setminus\{\ket{b_0^e},\ket{b_1^e}\}$ and $\ket{\phi_+}\in\Xi_{\overline{E}^e}^{e\cal B}$, and scalars $a_0$ and $a_1$, we can express it as:
\begin{align*}
\ket{\phi} &= a_0\ket{b_0^e}+a_1\ket{b_1^e} + \ket{\phi_-}+\ket{\phi_+} \\
&= a_0 \frac{1}{\sqrt{2}}(\ket{\leftarrow,s^e}+\ket{\rightarrow,t^e}) + a_1 \frac{\ket{\leftarrow,s^e}-\ket{\rightarrow,t^e}+\sqrt{\r^e}\ket{\bar{b}_1}}{\sqrt{2+\r^e}} + \ket{\phi_-}+\ket{\phi_+}.
\end{align*}
Since $\ket{\phi_-}$, $\ket{\phi_+}$,  and $\ket{\bar{b}_1}$ are all orthogonal to both $\ket{\leftarrow,s^e}$ and $\ket{\rightarrow,t^e}$, since $\ket{\phi}$ is as well, we must have $a_0=a_1=0$. Thus $\ket{\phi}\in \mathrm{span}\Psi_{{\cal B}_{G^e}}^-\setminus\{\ket{b_0^e},\ket{b_1^e}\}\oplus \Xi_{\overline{E}^e}^{e\cal B}$.

Similarly, 
we can express $\ket{\phi}$ as 
\begin{align*}
\ket{\phi} &= a_0'(\ket{\leftarrow,s}+\ket{\psi_\star(s^e)})+a_1'(\ket{\rightarrow,t}+\ket{\psi_\star(t^e)}) + \ket{\phi_V}+\ket{\phi_+'} 
\end{align*}
for $\ket{\phi_V}\in \bigoplus_{u\in V^e\setminus\{s,t\}}{\cal V}_u^e$ and $\ket{\phi_+'}\in \Xi_{\overline{E}^e}^{e\cal B}$. By the same reasoning as above, we must have $a_0'=a_1'=0$, so $\ket{\phi}\in \bigoplus_{u\in V^e\setminus\{s,t\}}{\cal V}_u^e + \Xi_{\overline{E}^e}^{e\cal B}$.
\end{proof}

\begin{lemma}\label{lem:tilde-Lambda-e-plus}
    For any $e\in\overline{E}$, 
    $$\widetilde\Lambda(\ket{\rightarrow,e}+\ket{\leftarrow,e})\in {\cal B}_{G^e}\cap \{\ket{\leftarrow,s^e},\ket{\rightarrow,t^e}\}^\bot.$$
\end{lemma}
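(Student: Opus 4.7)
The plan is to unfold the definition of $\widetilde\Lambda$ applied to this particular vector, and then decompose the result as a sum of three vectors already known to lie in ${\cal B}_{G^e}$. First I would observe that since $e\in\overline{E}$, the map $\widetilde\Lambda$ acts on $\ket{\rightarrow,e}$ and $\ket{\leftarrow,e}$ via the explicit formula in \eq{tilde-Lambda}, giving
$$\widetilde\Lambda(\ket{\rightarrow,e}+\ket{\leftarrow,e}) \;=\; \tfrac{1}{\sqrt{\r^e}}\bigl(\ket{\psi_\star(s^e)}+\ket{\psi_\star(t^e)}\bigr).$$

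For the orthogonality part, the states $\ket{\psi_\star(s^e)}$ and $\ket{\psi_\star(t^e)}$ lie in $\Xi_{E^e(s^e)}^e$ and $\Xi_{E^e(t^e)}^e$, respectively, which are contained in the edge space $\Xi_{E^e}^e$ and thus orthogonal to $\Xi_{B^e}^e \supseteq \mathrm{span}\{\ket{\leftarrow,s^e},\ket{\rightarrow,t^e}\}$. Hence $\widetilde\Lambda(\ket{\rightarrow,e}+\ket{\leftarrow,e}) \in \{\ket{\leftarrow,s^e},\ket{\rightarrow,t^e}\}^\bot$.

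For the membership in ${\cal B}_{G^e}$, I would use three building blocks, each of which is in ${\cal B}_{G^e}$ by a different part of the definition of an $st$-composable subspace graph with canonical $st$-boundary: (i) $\ket{\leftarrow,s^e}+\ket{\psi_\star(s^e)} \in {\cal V}_{s^e}^e \subseteq {\cal B}_{G^e}$, from the hypothesis that $s^e$ is a simple boundary vertex; (ii) $\ket{\rightarrow,t^e}+\ket{\psi_\star(t^e)} \in {\cal V}_{t^e}^e \subseteq {\cal B}_{G^e}$, analogously; and (iii) $\ket{\leftarrow,s^e}+\ket{\rightarrow,t^e} \in {\cal B}_{G^e}$, which is the final bullet of \defin{canonical-boundary}. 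Subtracting (iii) from the sum of (i) and (ii) yields exactly $\ket{\psi_\star(s^e)}+\ket{\psi_\star(t^e)}$, so rescaling by $1/\sqrt{\r^e}$ gives a vector in ${\cal B}_{G^e}$, as required.

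There is no real obstacle here; the only thing to be careful about is checking that the hypotheses of canonical $st$-boundary and of $G^e$ being $st$-composable (so that its boundary vertices are simple and the vector $\ket{\leftarrow,s^e}+\ket{\rightarrow,t^e}$ lies in ${\cal B}_{G^e}$) give us precisely the three building blocks above. Once these are lined up, the identity $\ket{\psi_\star(s^e)}+\ket{\psi_\star(t^e)} = (\ket{\leftarrow,s^e}+\ket{\psi_\star(s^e)}) + (\ket{\rightarrow,t^e}+\ket{\psi_\star(t^e)}) - (\ket{\leftarrow,s^e}+\ket{\rightarrow,t^e})$ concludes the proof.
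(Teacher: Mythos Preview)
Your proof is correct and follows essentially the same approach as the paper's: compute $\widetilde\Lambda(\ket{\rightarrow,e}+\ket{\leftarrow,e})=\tfrac{1}{\sqrt{\r^e}}(\ket{\psi_\star(s^e)}+\ket{\psi_\star(t^e)})$, then write this as $(\ket{\leftarrow,s^e}+\ket{\psi_\star(s^e)})+(\ket{\rightarrow,t^e}+\ket{\psi_\star(t^e)})-(\ket{\leftarrow,s^e}+\ket{\rightarrow,t^e})$, each term lying in ${\cal B}_{G^e}$. One small terminological point: your items (i) and (ii) are justified in the paper not by $s^e,t^e$ being ``simple'' in the sense of \defin{simple-vertex}, but by the explicit assumption on the form of ${\cal V}_{s^e}^e,{\cal V}_{t^e}^e$ made just before \eq{tilde-Lambda}; the conclusion is the same.
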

\begin{proof}
Since ${\cal V}_s^e+{\cal V}_t^e\subset {\cal B}_{G^e}$, we have
$$\ket{\leftarrow,s^e}+\ket{\psi_\star(s^e)},  \ket{\rightarrow,t^e}+\ket{\psi_\star(t^e)} \in {\cal B}_{G^e}.
$$
Since $\ket{\leftarrow,s}+\ket{\rightarrow,t}\in {\cal B}_{G^e}$ (because $G^e$ is $st$-composable), 
\begin{align*}
    {\sqrt{\r^e}}\widetilde\Lambda(\ket{\rightarrow,e}+\ket{\leftarrow,e}) &= \ket{\psi_\star(s^e)}+\ket{\psi_\star(t^e)}\\
    &= \ket{\leftarrow,s^e}+\ket{\psi_\star(s^e)} +  \ket{\rightarrow,t^e}+\ket{\psi_\star(t^e)} - (\ket{\leftarrow,s}+\ket{\rightarrow,t}),
\end{align*}
so $\widetilde\Lambda(\ket{\rightarrow,e}+\ket{\leftarrow,e})\in {\cal B}_{G^e}$. Clearly $\widetilde\Lambda(\ket{\rightarrow,e}+\ket{\leftarrow,e})$ is also orthogonal to $\ket{\leftarrow,s^e}$ and $\ket{\rightarrow,t^e}$, concluding the proof.
\end{proof}

\begin{lemma}\label{lem:cal-B-tilde}
    Let $\widetilde\Lambda$ be as in \eq{tilde-Lambda}, and define
    \begin{equation*}
    \widetilde{\cal B}_{G^\circ} = \mathrm{span}\underbrace{\widetilde\Lambda(\Psi_{{\cal B}_G}^-)\cup\bigcup_{e\in\overline{E}}(\Psi_{{\cal B}_{G^e}}^-\setminus\{\ket{b_0^e},\ket{b_1^e}\})}_{\equalscolon \widetilde\Psi_{{\cal B}_{G^\circ}}^-} \oplus\bigoplus_{e\in\overline{E}}\bigoplus_{e'\in\overline{E}^e}\Xi_{e'}^{e\cal B}.
\end{equation*}
Then $\widetilde{\cal B}_{G^\circ} = \overline{\cal B}_{G^\circ}$.
\end{lemma}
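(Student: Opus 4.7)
The plan is to prove $\widetilde{\cal B}_{G^\circ} = \overline{\cal B}_{G^\circ}$ by reducing everything to a comparison of $\widetilde\Lambda(\Psi_{{\cal B}_G}^-)$ and $\Lambda(\Psi_{{\cal B}_G}^-)$ modulo a common subspace. First I would apply Lemma~\ref{lem:b0-b1-orthog} to identify the ``shared part'' appearing in both definitions, namely $\bigoplus_{e\in\overline{E}}\bigl(\mathrm{span}(\Psi_{{\cal B}_{G^e}}^-\setminus\{\ket{b_0^e},\ket{b_1^e}\}) \oplus \Xi_{\overline{E}^e}^{e\cal B}\bigr)$, with the subspace $\mathcal{S} := \bigoplus_{e\in\overline{E}}\bigl({\cal B}_{G^e}\cap\{\ket{\leftarrow,s^e},\ket{\rightarrow,t^e}\}^\perp\bigr)$. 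Since $\mathcal{S}$ is common to both $\widetilde{\cal B}_{G^\circ}$ and $\overline{\cal B}_{G^\circ}$, the lemma reduces to showing $\mathrm{span}\,\widetilde\Lambda(\Psi_{{\cal B}_G}^-) + \mathcal{S} = \mathrm{span}\,\Lambda(\Psi_{{\cal B}_G}^-) + \mathcal{S}$.

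The technical heart of the argument will be the identity
$$u_e \;:=\; \ket{\psi_\star(s^e)} - \ket{\psi_\star(t^e)} - \sqrt{\r^e}\,\ket{\bar{b}_1^e} \;\in\; \mathcal{S}_e$$
for each switch $e$, which I would derive from the $st$-composable structure of $G^e$. Specifically, the vectors $\ket{\leftarrow,s^e}+\ket{\psi_\star(s^e)} \in {\cal V}_{s^e}^e$, $\ket{\rightarrow,t^e}+\ket{\psi_\star(t^e)} \in {\cal V}_{t^e}^e$, and $\sqrt{2+\r^e}\,\ket{b_1^e} = \ket{\leftarrow,s^e}-\ket{\rightarrow,t^e}+\sqrt{\r^e}\,\ket{\bar{b}_1^e}$ all lie in ${\cal B}_{G^e}$; the combination that cancels the $\ket{\leftarrow,s^e}$ and $\ket{\rightarrow,t^e}$ contributions is exactly $u_e$, which is then in ${\cal B}_{G^e}$ and manifestly orthogonal to both boundary kets, hence in $\mathcal{S}_e$.

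Armed with this identity, for any $v \in \Psi_{{\cal B}_G}^-$ I would decompose $v = P_0 v + \sum_{e\in\overline{E}} c_e(\ket{\rightarrow,e}-\ket{\leftarrow,e})$ according to $H_G^- = (\Xi_B \oplus \Xi_{E\setminus\overline{E}}) \oplus \bigoplus_e\mathrm{span}\{\ket{\rightarrow,e}-\ket{\leftarrow,e}\}$, using that $v \in H_G^-$ since it is orthogonal to each $\Xi_e^{\cal B}$-basis vector. Both $\widetilde\Lambda$ and $\Lambda$ act as the identity on $P_0 v$, so I only need to compare their action on the switch-edge pieces, where the identity yields $\widetilde\Lambda(\ket{\rightarrow,e}-\ket{\leftarrow,e}) = \ket{\bar{b}_1^e} + u_e/\sqrt{\r^e}$ while $\Lambda(\ket{\rightarrow,e}-\ket{\leftarrow,e}) = \sqrt{2}\,\ket{\bar{b}_1^e}$. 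To show each containment, I would then express the desired image as a linear combination of the images of the distinguished basis vectors $\ket{b_0},\ket{b_1} \in \Psi_{{\cal B}_G}^-$ (which by composability properties~3--5 of Definition~\ref{def:composable-basis} are the only basis elements with support in $\mathrm{span}\{\ket{\leftarrow,s},\ket{\rightarrow,t}\}$, and whose switch-edge components are prescribed via $\ket{\bar{b}_1}$) plus elements of $\mathcal{S}$.

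The main obstacle will be bookkeeping for the scalar mismatch between the $\sqrt{2}$ appearing in $\Lambda$ and the $1/\sqrt{\r^e}$ appearing in $\widetilde\Lambda$: the two images of a switch-edge vector are not simply scalar multiples of each other modulo $\mathcal{S}$, so the span-equality has to be recovered by exploiting the coupled form of $\ket{b_0}$ and $\ket{b_1}$ forced by items~3--5 of Definition~\ref{def:composable-basis}, which relate the boundary and switch-edge components in a way that is preserved by both $\widetilde\Lambda$ and $\Lambda$ up to $\mathcal{S}$.
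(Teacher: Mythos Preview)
Your overall strategy matches the paper's proof closely: both reduce to comparing $\widetilde\Lambda(\Psi_{{\cal B}_G}^-)$ with $\Lambda(\Psi_{{\cal B}_G}^-)$ modulo the common subspace $\mathcal{S}$, and both hinge on the identity that $\ket{\psi_\star(s^e)}-\ket{\psi_\star(t^e)}-\sqrt{\r^e}\,\ket{\bar b_1^e}\in\mathcal{S}_e$ (your $u_e$; the paper writes this as $\ket{\tilde b^e}=\ket{\bar b_1^e}+\ket{d^e}$ with $\ket{d^e}\in\mathcal{S}_e$, derived the same way you propose).

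The ``scalar mismatch'' you flag is real, and your instinct that it is not mere bookkeeping is correct. In the paper's displayed chain, the step
\[
\widetilde\Lambda\Bigl(\sum_e a_e\tfrac{1}{\sqrt{2}}(\ket{\rightarrow,e}-\ket{\leftarrow,e})+\ket{\psi_{E\setminus\overline E}}\Bigr)=\sum_e a_e\ket{\tilde b^e}+\ket{\psi_{E\setminus\overline E}}
\]
silently drops a factor of $1/\sqrt{2}$ (the left side gives $\sum_e \tfrac{a_e}{\sqrt{2}}\ket{\tilde b^e}+\cdots$), and that missing factor is exactly the discrepancy you identify. With the factor restored, one obtains only $\widetilde\Lambda(w)\equiv\Lambda(Dw)\pmod{\mathcal S}$, where $D$ fixes $\Xi_B\oplus\Xi_{E\setminus\overline E}$ and scales each $\ket{\rightarrow,e}-\ket{\leftarrow,e}$ by $1/\sqrt{2}$; since $D$ need not preserve $\mathrm{span}\,\Psi_{{\cal B}_G}^-$, the argument does not close as written. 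Your proposed fix---recovering the span equality from the composable-basis structure of $\ket{b_0},\ket{b_1}$---does not work either: take $G=G_{\textsc{or},1}$ (one switch edge, all composability constraints satisfied) and compute both sides modulo $\mathcal{S}$; one finds the two-dimensional images
\[
\mathrm{span}\bigl\{\ket{\leftarrow,s}+\ket{\rightarrow,t},\ \ket{\leftarrow,s}-\ket{\rightarrow,t}+\sqrt{2\w_1}\,\ket{\bar b_1^e}\bigr\}
\quad\text{versus}\quad
\mathrm{span}\bigl\{\ket{\leftarrow,s}+\ket{\rightarrow,t},\ \ket{\leftarrow,s}-\ket{\rightarrow,t}+\sqrt{\w_1}\,\ket{\bar b_1^e}\bigr\},
\]
which are distinct since $\ket{\bar b_1^e}\notin\mathcal{S}$ (it has nonzero inner product with $\ket{b_1^e}$, while $\mathcal{S}_e\perp\ket{b_1^e}$). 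So the obstacle you noticed is genuine and is not resolvable along the route you sketch; the paper's written argument shares the same gap.
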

\begin{proof}

  Let $\ket{\tilde b^e} = \frac{1}{\sqrt{\r^e}}(\ket{\psi_\star(s^e)}-\ket{\psi_\star(t^e)})$. Then 
    $$\ket{\leftarrow,s^e}-\ket{\rightarrow,t^e}+\sqrt{\r^e}\ket{\tilde{b}^e} \in {\cal V}_s^e+{\cal V}_t^e\subseteq \mathrm{span}\{\ket{b_0^e},\ket{b_1^e},\dots,\ket{b_{\ell}^e}\}\oplus\Xi_{\overline{E}^e}^{e\cal B}.$$
    Then since the only vector that overlaps $\ket{\leftarrow,s^e}-\ket{\rightarrow,t^e}$ is $\ket{b_1^e}=\frac{1}{\sqrt{2+\r^e}}(\ket{\leftarrow,s^e}-\ket{\rightarrow,t^e})+\sqrt{\frac{\r^e}{2+\r^e}}\ket{\bar{b}_1^e}$, we must have: 
    $$\bra{b_1^e}(\ket{\leftarrow,s^e}-\ket{\rightarrow,t^e}+\sqrt{\r^e}\ket{\tilde{b}^e}) = \sqrt{2+\r^e},$$
from which it follows that $\braket{\bar{b}_1^e}{\tilde{b}^e}=1$, and we can write:
\begin{equation}
    \ket{\tilde{b}^e} = \ket{\bar{b}_1^e} + \ket{d^e}
\end{equation}
for some $\ket{d^e}\in {\cal B}_{G^e}$ that is orthogonal to both $\ket{\leftarrow,s^e}-\ket{\rightarrow,t^e}$ and $\ket{\leftarrow,s^e}+\ket{\rightarrow,t^e}$, so by \lem{b0-b1-orthog}, $\ket{d^e}\in\mathrm{span}\Psi_{{\cal B}_{G^e}}^-\setminus\{\ket{b_0^e},\ket{b_1^e}\}\oplus\Xi_{\overline{E}^e}^{e\cal B}$. 

Suppose $\ket{\tilde\psi}\in \widetilde{\cal B}_{G^\circ}$, so we can express it, for some scalars $a_e$, $\ket{\psi_{E\setminus\overline{E}}}\in \Xi_{E\setminus\overline{E}}$ and $\ket{\psi'}\in \mathrm{span}\Psi_{{\cal B}_{G^e}}^-\setminus\{\ket{b_0^e},\ket{b_1^e}\}\oplus\Xi_{\overline{E}^e}^{e\cal B}$ as:
\begin{equation}
\begin{split}
    \ket{\tilde\psi} &= \widetilde\Lambda\Bigg(\underbrace{\sum_{e\in\overline{E}}a_e\frac{1}{\sqrt{2}}(\ket{\rightarrow,e}-\ket{\leftarrow,e})+\ket{\psi_{E\setminus\overline{E}}}}_{\in\mathrm{span}\Psi_{{\cal B}_G}^-}\Bigg) + \ket{\psi'}\\
    &= \sum_{e\in\overline{E}}a_e\ket{\tilde{b}^e}+\ket{\psi_{E\setminus\overline{E}}} + \ket{\psi'}\\
    &= \sum_{e\in\overline{E}}a_e\ket{\bar{b}_1^e}+\ket{\psi_{E\setminus\overline{E}}} + \sum_{e\in\overline{E}}a_e\ket{d^e}+\ket{\psi'}\\
    &= \Lambda\Bigg(\underbrace{\sum_{e\in\overline{E}}a_e\frac{1}{\sqrt{2}}(\ket{\rightarrow,e}-\ket{\leftarrow,e})+\ket{\psi_{E\setminus\overline{E}}}}_{\in\mathrm{span}\Psi_{{\cal B}_G}^-}\Bigg) + \underbrace{\sum_{e\in\overline{E}}a_e\ket{d^e}+\ket{\psi'}}_{\in\mathrm{span}\Psi_{{\cal B}_{G^e}}^-\setminus\{\ket{b_0^e},\ket{b_1^e}\}\oplus\Xi_{\overline{E}^e}^{e\cal B}}
    \in \overline{\cal B}_{G^\circ}.
\end{split}
\end{equation}
Thus, $\widetilde{\cal B}_{G^\circ}\subseteq\overline{\cal B}_{G^\circ}$. 

For the other direction, suppose $\ket{\psi}\in\overline{\cal B}_{G^\circ}$. Then similar to above, we can express it as
\begin{equation}
\begin{split}
    \ket{\psi} &= \Lambda\Bigg(\underbrace{\sum_{e\in\overline{E}}a_e\frac{1}{\sqrt{2}}(\ket{\rightarrow,e}-\ket{\leftarrow,e})+\ket{\psi_{E\setminus\overline{E}}}}_{\in\mathrm{span}\Psi_{{\cal B}_G}^-}\Bigg) + \ket{\psi'}\\
    &= \sum_{e\in\overline{E}}a_e\ket{\bar{b}_1^e}+\ket{\psi_{E\setminus\overline{E}}} + \ket{\psi'}\\
    &= \sum_{e\in\overline{E}}a_e\ket{\tilde{b}_1^e}+\ket{\psi_{E\setminus\overline{E}}} - \sum_{e\in\overline{E}}a_e\ket{d^e}+\ket{\psi'}\\
    &= \widetilde\Lambda\Bigg(\underbrace{\sum_{e\in\overline{E}}a_e\frac{1}{\sqrt{2}}(\ket{\rightarrow,e}-\ket{\leftarrow,e})+\ket{\psi_{E\setminus\overline{E}}}}_{\in\mathrm{span}\Psi_{{\cal B}_G}^-}\Bigg) - \underbrace{\sum_{e\in\overline{E}}a_e\ket{d^e}+\ket{\psi'}}_{\in\mathrm{span}\Psi_{{\cal B}_{G^e}}^-\setminus\{\ket{b_0^e},\ket{b_1^e}\}\oplus\Xi_{\overline{E}^e}^{e\cal B}}
    \in \widetilde{\cal B}_{G^\circ}.
\end{split}
\end{equation}
Thus $\overline{\cal B}{G^\circ}\subseteq\widetilde{\cal B}_{G^\circ}$.
\end{proof}

\begin{theorem}
Let     $\overline{\cal B}_{G^\circ}=\mathrm{span}\Psi_{{\cal B}_{G^\circ}}$ where $\Psi_{{\cal B}_{G^\circ}}$ is as in \eq{cal-B-basis} (see also \eq{cal-B-G-circ}), and ${\cal B}_{G^\circ}$ as in \eq{cal-B-comp} (see also \eq{cal-B-G-circ-prime}). Then
    $\overline{\cal B}_{G^\circ}={\cal B}_{G^\circ}$.
\end{theorem}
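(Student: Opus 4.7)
The plan is to pivot through Lemma~\ref{lem:cal-B-tilde}, which already establishes $\overline{\cal B}_{G^\circ}=\widetilde{\cal B}_{G^\circ}$ where $\widetilde{\cal B}_{G^\circ}$ is built from $\widetilde\Lambda(\Psi_{{\cal B}_G}^-)$ instead of $\Lambda(\Psi_{{\cal B}_G}^-)$. It will therefore suffice to show $\widetilde{\cal B}_{G^\circ}={\cal B}_{G^\circ}$, and both inclusions will reduce to bookkeeping on top of two workhorses: Lemma~\ref{lem:b0-b1-orthog}, which gives two equivalent descriptions of ${\cal B}_{G^e}\cap\{\ket{\leftarrow,s^e},\ket{\rightarrow,t^e}\}^\bot$, and Lemma~\ref{lem:tilde-Lambda-e-plus}, which says $\widetilde\Lambda$ sends each switch cut-vector $\ket{\rightarrow,e}+\ket{\leftarrow,e}$ of $G$ into that same intersection inside $H_{G^e}$.

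For the forward inclusion $\widetilde{\cal B}_{G^\circ}\subseteq{\cal B}_{G^\circ}$, the switch-edge summands $\bigoplus_{e\in\overline{E}}\bigoplus_{e'\in\overline{E}^e}\Xi_{e'}^{e\cal B}$ appear verbatim on both sides. For each $e\in\overline{E}$, the vectors in $\Psi_{{\cal B}_{G^e}}^-\setminus\{\ket{b_0^e},\ket{b_1^e}\}$ sit inside the subspace that Lemma~\ref{lem:b0-b1-orthog} identifies with $\bigoplus_{u\in V^e\setminus\{s,t\}}{\cal V}_u^e+\Xi_{\overline{E}^e}^{e\cal B}\subseteq{\cal B}_{G^\circ}$. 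Finally, using $st$-composability of the basis of $G$, $\mathrm{span}\Psi_{{\cal B}_G}^-\subseteq{\cal B}_G=\bigoplus_{u\in V}{\cal V}_u+\bigoplus_{e\in\overline{E}}\Xi_e^{\cal B}$; I will split an arbitrary vector along this decomposition and apply $\widetilde\Lambda$ termwise. The $\bigoplus_u{\cal V}_u$ part maps into $\bigoplus_u\widetilde\Lambda({\cal V}_u)\subseteq{\cal B}_{G^\circ}$, while each $\Xi_e^{\cal B}$ part is sent by Lemma~\ref{lem:tilde-Lambda-e-plus} into ${\cal B}_{G^e}\cap\{\ket{\leftarrow,s^e},\ket{\rightarrow,t^e}\}^\bot$, which by Lemma~\ref{lem:b0-b1-orthog} equals $\bigoplus_{u\in V^e\setminus\{s,t\}}{\cal V}_u^e+\Xi_{\overline{E}^e}^{e\cal B}\subseteq{\cal B}_{G^\circ}$.

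The reverse inclusion ${\cal B}_{G^\circ}\subseteq\widetilde{\cal B}_{G^\circ}$ is handled symmetrically. The summands $\Xi_{e'}^{e\cal B}$ with $e'\in E^e\setminus\overline{E}^e$ vanish by $st$-composability of $G^e$, and those with $e'\in\overline{E}^e$ lie in $\widetilde{\cal B}_{G^\circ}$ manifestly. For $u\in V^e\setminus\{s^e,t^e\}$, the space ${\cal V}_u^e\subseteq{\cal B}_{G^e}$ is orthogonal to both $\ket{\leftarrow,s^e}$ and $\ket{\rightarrow,t^e}$, because it only overlaps non-boundary edge spaces $\Xi_{e'}^e$ (which do not contain these dangling boundary half-edges), so Lemma~\ref{lem:b0-b1-orthog} places it inside $\mathrm{span}\Psi_{{\cal B}_{G^e}}^-\setminus\{\ket{b_0^e},\ket{b_1^e}\}\oplus\Xi_{\overline{E}^e}^{e\cal B}\subseteq\widetilde{\cal B}_{G^\circ}$. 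For $u\in V$, I will decompose ${\cal V}_u\subseteq{\cal B}_G=\mathrm{span}\Psi_{{\cal B}_G}^-\oplus\bigoplus_{e\in\overline{E}}\Xi_e^{\cal B}$ exactly as in the forward direction and apply $\widetilde\Lambda$ to each summand to conclude $\widetilde\Lambda({\cal V}_u)\subseteq\widetilde{\cal B}_{G^\circ}$. The only real obstacle is the bookkeeping: ensuring that every direct summand of each description is accounted for under the other, since all the genuinely nontrivial algebra has already been done in Lemmas~\ref{lem:b0-b1-orthog}, \ref{lem:tilde-Lambda-e-plus} and~\ref{lem:cal-B-tilde}.
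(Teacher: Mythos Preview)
Your proposal is correct and follows essentially the same route as the paper: both pivot through $\widetilde{\cal B}_{G^\circ}$ via Lemma~\ref{lem:cal-B-tilde} and then prove $\widetilde{\cal B}_{G^\circ}={\cal B}_{G^\circ}$ by combining Lemmas~\ref{lem:b0-b1-orthog} and~\ref{lem:tilde-Lambda-e-plus} to handle each summand in the two descriptions. The only cosmetic difference is that for the forward inclusion the paper asserts $\Psi_{{\cal B}_G}^-\subseteq\bigoplus_{u\in V}{\cal V}_u$ directly (so Lemma~\ref{lem:tilde-Lambda-e-plus} is needed only in the reverse inclusion), whereas you decompose along ${\cal B}_G=\bigoplus_u{\cal V}_u+\bigoplus_{e\in\overline{E}}\Xi_e^{\cal B}$ and invoke Lemma~\ref{lem:tilde-Lambda-e-plus} in both directions; your treatment is arguably the more careful of the two.
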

\begin{proof}
    We will show that ${\cal B}_{G^\circ}=\widetilde{\cal B}_{G^\circ}$, which is sufficient, by \lem{cal-B-tilde}. We first show that $\widetilde{\cal B}_{G^\circ}\subseteq {\cal B}_{G^\circ}$. First, for any $e\in \overline{E}$,
    it follows from \lem{b0-b1-orthog} that
    \begin{equation}\label{eq:locality1}
       \Psi_{{\cal B}_{G^e}}^-\setminus \{\ket{b_0^e},\ket{b_1^e}\}\oplus \Xi_{\overline{E}^e}^{e\cal B} 
       = \bigoplus_{u\in V^e\setminus\{s,t\}}{\cal V}_u^e+\Xi_{\overline{E}^e}^{e\cal B}\subseteq {\cal B}_{G^\circ}'.
    \end{equation}

    Next, for any $\ket{\psi}\in\Psi_{{\cal B}_G}^-\subseteq \bigoplus_{u\in V}{\cal V}_u$, we have 
    \begin{equation}\label{eq:locality2} 
    \widetilde\Lambda (\ket{\psi}) \in \widetilde\Lambda \left(\bigoplus_{u\in V}{\cal V}_u\right) = \bigoplus_{u\in V}\widetilde\Lambda({\cal V}_u)\subseteq {\cal B}_{G^\circ}'.
    \end{equation}
From \eq{locality1} and \eq{locality2}, it follows that $\widetilde{\cal B}_{G^\circ}\subseteq{\cal B}_{G^\circ}$.
    
    We now show the other direction, ${\cal B}_{G^\circ}\subseteq\widetilde{\cal B}_{G^\circ}$. As in \eq{locality1}, for any $e\in\overline{E}$, 
        \begin{equation}\label{eq:locality3}
        \bigoplus_{u\in V^e\setminus\{s,t\}}{\cal V}_u^e+\Xi_{\overline{E}^e}^{e\cal B}=\Psi_{{\cal B}_{G^e}}^-\setminus \{\ket{b_0^e},\ket{b_1^e}\}\oplus \Xi_{\overline{E}^e}^{e\cal B} 
       \subseteq \widetilde{\cal B}_{G^\circ}.
    \end{equation}

    Next, for any $u\in V$ and $\ket{\psi}\in {\cal V}_u\subseteq{\cal B}_G=\mathrm{span}\Psi_{{\cal B}_G}^-\oplus\Xi_{\overline{E}}^{\cal B}$, we can write 
    $$\ket{\psi}=\ket{\psi_-}+\sum_{e\in\overline{E}}a_e(\ket{\rightarrow,e}+\ket{\leftarrow,e})$$
    for some $\ket{\psi_-}\in \mathrm{span}\Psi_{{\cal B}_G}^-$ and scalars $a_e$. Then:
    \begin{equation}
        \widetilde\Lambda(\ket{\psi}) = \underbrace{\widetilde\Lambda\ket{\psi_-}}_{\in\mathrm{span}\widetilde\Lambda(\Psi_{{\cal B}_G}^-)\subseteq\widetilde{\cal B}_{G^\circ}}+\sum_{e\in\overline{E}}a_e\widetilde\Lambda(\ket{\rightarrow,e}+\ket{\leftarrow,e}).
    \end{equation}
    By \lem{tilde-Lambda-e-plus}, for each $e$, $\widetilde\Lambda(\ket{\rightarrow,e}+\ket{\leftarrow,e})\in {\cal B}_{G^e}\cap \{\ket{\leftarrow,s},\ket{\rightarrow,t}\}^\bot$, which is equal to $\mathrm{span}\Psi_{{\cal B}_{G^e}}^-\setminus\{\ket{b_0^e},\ket{b_1^e}\}\oplus\Xi_{\overline{E}}^{e\cal B}$ by \lem{b0-b1-orthog}. Since $\mathrm{span}\Psi_{{\cal B}_{G^e}}^-\setminus\{\ket{b_0^e},\ket{b_1^e}\}\oplus\Xi_{\overline{E}}^{e\cal B}\subseteq \widetilde{\cal B}_{G^\circ}$, we have $\widetilde\Lambda\ket{\psi}\in \widetilde{\cal B}_{G^\circ}$, from which it follows that
    \begin{equation}\label{eq:locality4}
        \widetilde\Lambda({\cal V}_u) \subseteq \widetilde{\cal B}_{G^\circ}.
    \end{equation}
    Combining \eq{locality3} and \eq{locality4} establishes ${\cal B}_{G^\circ}\subseteq\widetilde{\cal B}_{G^\circ}$, completing the proof.
\end{proof}

\end{document}